\documentclass[11pt]{article}
\usepackage{hyperref}
\usepackage{amsmath,amssymb,amsthm}
\usepackage[margin=1in]{geometry}
\usepackage{graphicx}
\usepackage{color}
\usepackage{bbm}
\usepackage{booktabs}
\usepackage{breakcites}
\usepackage{placeins}
\usepackage{float}
\usepackage{bm}
\usepackage{thm-restate}
\usepackage{color-edits}
\usepackage{enumitem}
\usepackage{float}
\usepackage{dsfont}
\usepackage{xspace}
\usepackage[algo2e,linesnumbered,ruled,vlined]{algorithm2e}
\usepackage{algorithmic}
\usepackage{subcaption}
\usepackage{array}
\usepackage[dvipsnames]{xcolor}
\usepackage{cleveref}
\usepackage{makecell}

\theoremstyle{plain}
\newtheorem{theorem}{Theorem}[section]
\newtheorem{lemma}[theorem]{Lemma}

\newtheorem{fact}[theorem]{Fact}

\theoremstyle{definition}
\newtheorem{definition}[theorem]{Definition}

\theoremstyle{remark}

\newcommand{\paren}[1]{\left(#1 \right )}

\newcommand{\Brac}[1]{\left[#1\right]}

\newcommand{\Set}[1]{\left\{#1\right\}}
\newcommand{\set}[1]{\left\{#1\right\}}

\newcommand{\abs}[1]{\left\lvert#1\right\rvert}

\newcommand{\ceil}[1]{\lceil #1 \rceil}
\newcommand{\floor}[1]{\lfloor #1 \rfloor}

\newcommand{\norm}[1]{\left\lVert#1\right\rVert}

\newcommand{\nnz}[1]{\mathrm{nnz}(#1)}

\newcommand{\defeq}{:=}

\newcommand{\normInline}[1]{\lVert#1\rVert}

\newcommand{\R}{\mathbb R}

\newcommand{\Esymb}{\mathbb{E}}
\newcommand{\Psymb}{\mathbb{P}}

\DeclareMathOperator*{\E}{\Esymb}

\DeclareMathOperator*{\ProbOp}{\Psymb}

\newcommand{\prob}[1]{\ProbOp\Set{#1}}

\newcommand{\Prob}{\mathbb{P}}
\newcommand{\probSub}[2]{\mathbb{P}_{#2}\Set{#1}}

\newcommand{\indicator}[1]{\mathds{1}{\set{#1}}}

\newcommand{\cA}{\mathcal A}

\newcommand{\cD}{\mathcal D}

\newcommand{\cF}{\mathcal F}

\newcommand{\cM}{\mathcal M}

\newcommand{\cQ}{\mathcal Q}

\newcommand{\cS}{\mathcal S}
\newcommand{\cT}{\mathcal T}

\newcommand{\cX}{\mathcal X}
\newcommand{\cY}{\mathcal Y}
\newcommand{\cZ}{\mathcal Z}

\DeclareMathOperator*{\argmin}{argmin} 
\DeclareMathOperator*{\argmax}{argmax} 
\newcommand{\poly}{\mathrm{poly}}

\newcommand{\supp}{{\mathsf{supp}}}

\allowdisplaybreaks
\newcommand{\bv}{\bm{v}}
\newcommand{\bz}{\bm{z}}
\newcommand{\by}{\bm{y}}

\newcommand{\bx}{\bm{x}}

\newcommand{\br}{\bm{r}}
\newcommand{\bP}{\bm{P}}
\newcommand{\bu}{\bm{u}}

\newcommand{\bmp}{\bm{p}}
\newcommand{\Atot}{\cA_{\mathrm{tot}}}
\newcommand{\Aset}{\cA}

\newcommand{\tmix}{t_{\mathrm{mix}}}

\newcommand{\otilde}{\tilde{O}}

\newcommand{\code}[1]{\textnormal{\texttt{#1}}}
\SetKwInput{KwInput}{Input}
\SetKwInput{KwParameter}{Parameter}
\SetKwInput{KwSubSolver}{Sub-Solver}
\SetKwInput{KwOutput}{Output}
\SetKwInput{Return}{return}
\usepackage{algorithmic}

\usepackage{booktabs} 
\usepackage{array}

\newcommand{\tv}[2]{d_{TV}\paren{#1, #2}}

\newcommand{\bI}{\bm{I}}

\newcommand{\nOuter}{n_{\mathrm{outer}}}

\newcommand{\nBatchInner}{n_{\mathrm{batch}}}
\newcommand{\nSampleInner}{n_{\mathrm{sample}}}
\newcommand{\UniformDist}{\mathsf{Unif}}

\newcommand{\sr}{\mathrm{sr}}

\newcommand{\gap}{\mathrm{gap}(\bm{A})}

\newcommand{\Solve}{\code{Solve}}

\newcommand{\Dentry}{\cD_{\mathrm{entry}}}
\newcommand{\Drow}{\cD_{\mathrm{row}}}
\newcommand{\Dcol}{\cD_{\mathrm{col}}}



\newcommand{\Traditional}{{\code{StandardLoop}}}
\newcommand{\Noisy}{{\code{NoisyLoop}}}
\newcommand{\Reuse}{{\code{ReuseLoop}}}

\newcommand{\fsub}{f^\mathrm{sub}}
\newcommand{\fsublam}{f^\mathrm{sub}}
\newcommand{\fsubbar}{\bar{f}^\mathrm{sub}}
\newcommand{\traditional}{\code{Standard}}
\newcommand{\noisy}{\code{Noisy}}
\newcommand{\reuse}{\code{Reuse}}
\newcommand{\Outer}{\code{Outer}}
\newcommand{\disteq}{\overset{\cD}{=}}
\newcommand{\sub}{{\mathrm{sub}}}
\newcommand{\ba}{{\bm{u}}}
\newcommand{\project}{{\mathrm{proj}}}

\SetCommentSty{mycommfont}
\SetKwFor{Repeat}{repeat}{}{}
\usepackage{microtype}
\PassOptionsToPackage{numbers, compress, sort}{natbib}
\usepackage{natbib}
  \usepackage{nth}
  \usepackage{intcalc}

  \newcommand{\cAAAI}[1]{AAAI\ Conference\ on\ Artificial (AAAI)}

\title{Reusing Samples in Variance Reduction}
\author{Yujia Jin \\\texttt{yujiajin@stanford.edu}\\ Stanford University \and Ishani Karmarkar \\\texttt{ishanik@stanford.edu} \\ Stanford University \and Aaron Sidford\\\texttt{sidford@stanford.edu}\\Stanford University \and Jiayi Wang \\\texttt{jyw@stanford.edu} \\ Stanford University }
\setcounter{page}{1}

\begin{document}
\maketitle

\begin{abstract}
    We provide a general framework to improve trade-offs between the number of \emph{full batch} and \emph{sample} queries used to solve structured optimization problems. Our results apply to a broad class of randomized optimization algorithms that iteratively solve sub-problems to high accuracy. We show that such algorithms can be modified to \emph{reuse the randomness} used to query the input across sub-problems. 
    Consequently, we improve the trade-off between the number of gradient (full batch) and individual function (sample) queries for finite sum minimization, the number of matrix-vector multiplies (full batch) and random row (sample) queries for top-eigenvector computation, and the number of matrix-vector multiplies with the transition matrix (full batch) and generative model (sample) queries for optimizing Markov Decision Processes. To facilitate our analysis we introduce the notion of \emph{pseudo-independent algorithms}, a generalization of pseudo-deterministic algorithms \citep{Gat2011ProbabilisticSA}, that quantifies how independent the output of a randomized algorithm is from a randomness source.
\end{abstract}

\newpage
\tableofcontents
\newpage

\section{Introduction}\label{intro:intro}

\emph{Variance reduction} is a powerful technique for designing provably efficient algorithms for solving structured optimization problems. This technique consists of reducing the variance of stochastic or randomized access to a component of the input (what we call \emph{sample queries}) by performing occasional, more expensive queries, which involve the entire input (what we call \emph{full batch queries}). Variance reduction \citep{johnson2013accelerating} has led to improved query complexities for many problems including finite-sum minimization \citep{frostig2015regularizing, johnson2013accelerating, lin2015universal}, top eigenvector computation \citep{garber2016faster}, Markov decision processes (MDPs) \citep{sidford2023variance, sidford2018near, jin2024truncated}, and matrix games \citep{carmon2019variance}.  

\emph{Variance reduction} schemes apply an iterative method---which we refer to as an \emph{outer-solver}---to reduce an original problem to solving a sequence of \emph{sub-problems}. The outer-solver runs $\nOuter$ iterations and, in each iteration, a sub-problem is carefully constructed and solved using what we call a \emph{sub-solver}. This sub-solver is a randomized algorithm that uses $\nBatchInner$ full batch and $\nSampleInner$ \emph{fresh} sample queries to the input to solve a sub-problem. This solves the orignal problem with a total of $\nOuter \nBatchInner$ full batch and $\nOuter\nSampleInner$ sample queries. Often, it is possible to tune the outer loop to trade off how many sub-problems are solved ($\nOuter$), with how challenging each sub-problem is to solve ($\nBatchInner$ and $\nSampleInner$). The central question in this work is: \emph{Can we improve this trade-off between the number of full batch and sample queries in prominent variance reduction settings?}

Concretely, we ask whether randomness in sample queries can be \emph{reused} across \emph{all} $\nOuter$ iterations of the outer-solver to reduce the total number of sample queries from $\nOuter \nSampleInner$ to just $\nSampleInner$, without sacrificing correctness guarantees. We answer this affirmatively by designing a \emph{sample reuse framework} that reuses randomness in variance-reduction settings where (1) the outer-solver is robust to $\ell_\infty$-bounded random noise in sub-problem solutions, and (2) the sub-solver uses randomness \emph{obliviously}, i.e., sampling a random variable whose distribution is independent of the sub-problem.

Applying our sample reuse framework enables us to decrease the total number of sample queries by a factor of $\nOuter$ for finite-sum minimization, top eigenvector computation, and $\ell_2$-$\ell_2$ matrix games. We also propose a {new outer-solver framework for solving \emph{discounted Markov Decision Processes} (DMDPs) that reduces solving a DMDP to solving a sequence of DMDPs of \emph{lower} discount factor. This result---which may be of independent interest---allows us to obtain improved sample query complexities for discounted MDPs and average-reward MDPs as well as faster running times in certain cases. Finally, we show how to apply our framework to obtain sample query complexity improvements for $\ell_2$-$\ell_1$ matrix games, where prior variance-reduction schemes use a mix of oblivious and non-oblivious sampling.

\paragraph{Organization.} This introduction discusses a motivating, illustrative example of finite-sum minimization (\Cref{sec:intro:finite-motivate}), our main results for other structured optimization problems (\Cref{sec:intro:structured}), and preliminaries (Section~\ref{sec:intro:prelim}). Section~\ref{sec:pseudoindependence}, describes our sample-reuse framework and a new notion of \emph{pseudoindependent} algorithms, which generalizes the concept of \emph{pseudodeterminism} introduced in prior work \cite{Gat2011ProbabilisticSA} and enables our analysis of this sample-reuse framework. Details on how our framework can be applied to many other structured optimization problems can be found in Sections~\ref{sec:finite-sum-minimization} through~\ref{sec:topEV}. Section~\ref{sec:conclusion} concludes with a summary and some directions for future work. Omitted proofs are deferred to the Appendix. 

\subsection{Motivating example: convex finite-sum minimization (FSM)}
\label{sec:intro:finite-motivate}

\paragraph{Motivating problem.} As an illustrative, motivating example, consider the prototypical problem of \emph{finite-sum minimization (FSM)} (introduced in greater formality in Section~\ref{sec:finite-sum-minimization}). In the FSM problem, we are given convex, $L$-smooth functions $f_1,\ldots,f_n : \R^{d} \rightarrow \R$. In the standard query model, we have an oracle that, when queried at $\bm{x} \in \R^d$ and $i \in [n]$, outputs $\nabla f_i(\bm{x})$. We then consider the problem of minimizing $F : \R^d \rightarrow \R$ defined as $\smash{F(\bx) \defeq \frac{1}{n} \sum_{i \in [n]} f_i(\bx)}$ under the assumption that $F$ is $\mu$-strongly convex. (The individual $f_i$ need not be strongly convex.)

Near-optimal query complexities can be obtained for this problem by applying an outer-solver such as accelerated proximal point/Catalyst (APP) \citep{frostig2015regularizing, lin2015universal} with stochastic variance-reduced gradient descent (SVRG) as the sub-solver \citep{johnson2013accelerating}. Concretely, APP solves the FSM problem by solving $\smash{\nOuter = \otilde(\sqrt{\alpha/\mu})}$ regularized problems of the form $\smash{\min_{\bx \in \R^{d}} f(\bx) + \frac{\alpha}{2} \norm{\bx - \bm{x}_t}_2^2}$ where for each iteration $t \in [\nOuter]$, $\bm{x}_{t}$ depends on the solution to the $(t-1)$-th sub-problem \citep{frostig2015regularizing}.\footnote{As in prior work, throughout, we may use $\tilde{O}(\cdot)$ to hide poly-logarithmic factors in problem parameters.} Each sub-problem is solved using $\otilde(n + {L}/{\alpha})$ queries, e.g., using SVRG as the sub-solver \citep{johnson2013accelerating}. Taking $\smash{\alpha = \max\{L/n,\mu\}}$ yields a complexity of $\smash{\otilde(n + \sqrt{n L / \mu})}$ oracle queries which is known to be the optimal query complexity, up to logarithmic factors \citep{woodworth2016tight, agarwal2015lower}. 

\paragraph{The batch-sample model.} In light of this prior work, to obtain further improvements, we \emph{must go beyond} this standard query model. We consider a more \emph{fine-grained} \emph{batch-sample query model}, which better captures computational trade-offs that could be present in some settings.

To motivate this model, a closer inspection of the aforementioned algorithm (APP with SVRG) for solving FSM reveals that solving each sub-problem requires $\smash{\otilde(1)}$ computations of the gradient of $F$ and $\smash{\otilde(L / \alpha)}$ computations of the gradient of a component $f_i$, where $i$ is chosen uniformly at random. Since a gradient of $F$ can be computed by querying each of the $n$, $\smash{\nabla f_i}$ once, in the standard query model, each gradient query to $F$ requires $n$ queries. This yields the aforementioned near-optimal query complexity of $\smash{\otilde(n + \sqrt{nL/\mu})}$ in the standard query model.

However, treating all queries to the gradient of $f_i$ as \emph{computationally equivalent} can obscure the structure of the problem. For example, in some practical computational models, computing the gradient of $F$ could be \emph{much cheaper} than computing $\nabla f_i$ for $n$ arbitrary $f_i$---for example, due to caching behavior or memory layout \citep{fischetti2018faster}. In such settings, it could be helpful to optimally trade-off between the number of queries to a gradient of $F$ (batch-queries) and the number of queries to a gradient of a random $f_i$ (sample-queries), depending on their relative costs. 

Moreover, in some classic problems of FSM---namely, empirical risk minimization--- just $O(1)$ queries to the gradient of $f_i$ suffices to \emph{exactly recover} $f_i$ and know $\nabla f_i$ at all points \emph{without any} further queries to $f_i$! This occurs, for example in linear regression when $\smash{f_i(\bx) = \frac{1}{2}(\bm{a}_i^\top \bx - \bm{b}(i))^2}$ for feature vectors $\bm{a}_i \in \R^d$ and (explicitly known) labels $\bm{b} \in \R^n$ and for generalized linear models when $f_i(\bx) = \phi_i(\bm{a}_i^\top \bx - \bm{b}(i))$ for known $\phi_i$ (see Section~\ref{sec:glm-fsm}). Hence, \emph{repeatedly querying the gradient of the same} $f_i$, say $T$ times, can be much cheaper than querying $T$ arbitrary $f_i$. For instance, this can be the case in distributed memory layouts \cite{woodworth2020minibatch, sallinen2016high}. 

To capture these nuances, we consider a more fine-grained analysis of the complexity of FSM where we allow \emph{smoothly trading off} between two types of queries, depending on their relative costs:
\begin{itemize}
    \item \textbf{Full batch query}: for input $\bm{x}$ computes $\nabla F(\bm{x})$, and
    \item \textbf{Sample query}: for input $i \in [n]$ allows $\nabla f_i(\bm{x})$ to be computed for \emph{any} future input $\bm{x}$.
\end{itemize}
For example, recall from above that in the special case of linear regression, each $f_i(x) = \frac{n}{2}(\bm{a}_i^\top x - \bm{b}(i))^2$ where $\smash{\bm{a}_i \in \R^d}$ is row $i$ of a data matrix $\smash{\bm{A} \in \R^{n \times d}}$ and $\bm{b}$ is a label. In this case, a batch oracle query is implementable just with the appropriate \emph{matrix-vector multiply}, $\bm{A}^\top \bm{A} x$, and a sample query is implementable by outputting the appropriate row, $\bm{a}_i$ (see Section~\ref{sec:finite-sum-minimization}). 

This batch-query model captures (1) the fact that full batch queries can be cheaper than $n$ sample queries due to caching layout as well as (2) the fact that re-querying previously cached components is often cheaper than querying fresh components of $F$ due to caching and communication costs between machines. Depending on the computational environment, one can now seek to optimally \emph{trade-off} these two types of oracle queries, depending on their relative costs.

From this perspective, the state-of-the-art FSM algorithms consist of $\smash{\nOuter = \otilde(\sqrt{\alpha/\mu})}$ outer loop iterations of the APP outer-solver. The sub-solver in each iteration (SVRG) is implementable with $\smash{\nBatchInner = \otilde(1)}$ full batch queries and $\smash{\nSampleInner = \otilde(L/\alpha)}$ sample queries. Tuning $\alpha$ allows us to smoothly trade off between these two types of oracle queries.

As mentioned, prior work established that $\smash{\tilde{O}(n + \sqrt{nL/\mu})}$ sample queries is near-optimal for FSM if one \textit{only uses sample queries} \citep{woodworth2016tight}. This lower bound, when $n = 1$, also implies $\tilde{O}(\sqrt{L/\mu})$ full batch queries is optimal when using \textit{only} full batch queries. Thus, it is perhaps natural to expect that solving FSM with $\smash{\otilde(\sqrt{\alpha / \mu})}$ full batch queries and $\smash{\otilde(L/\sqrt{\mu \alpha})}$ sample queries is the optimal query complexity trade-off---after all, it recovers the optimal rate for using \textit{only} full batch queries when $\alpha =L, n=1$ and the optimal sample complexity for using \textit{only} sample queries if $\alpha = \max(L/n, \mu)$.) Still, we ask: \emph{can this {trade-off} be improved?}

\paragraph{Our FSM results.} We show that this trade-off can be improved! By developing and applying techniques for reusing randomness (Section~\ref{sec:pseudoindependence}), we provide a method that uses only $\smash{\otilde(\sqrt{\alpha / \mu})}$ full batch queries and $\smash{\otilde(L/ \alpha)}$ sample queries for any $\alpha \geq \mu$ to solve FSM. That is, we decrease the number of sample queries by $\smash{\otilde(\nOuter) = \otilde(\sqrt{\alpha/\mu})}$. This yields corresponding improvements for regression and generalized linear models (see Section~\ref{sec:glm-fsm}) and shows these problems can be solved \emph{with less information than was known previously.} Although this doesn't directly yield a worst-case asymptotic-runtime improvement that we are aware of, it sheds new light on the information needed to solve FSM and could yield faster algorithms depending on caching and memory layout. 

To obtain this improved trade-off, we observe that in the previously discussed variance-reduction approach, the sub-solver solves each sub-problem to high accuracy by querying the sample oracle for a uniformly random component $i$. Importantly, this distribution for choosing the $i$ \emph{does not depend} on the particular sub-problem, i.e., the samples are \emph{oblivious} of the point at which the gradients are queried. We refer to such queries as \emph{oblivious sample queries}. As mentioned earlier, we show that by solving the regularized sub-problems to high accuracy and adding a small amount of uniform noise to the output, it is possible to \emph{reuse the same $i$} in each sub-problem! To facilitate this proof, we introduce a notion of \emph{pseudo-independence} (Section~\ref{sec:pseudoindependence}), a generalization of pseudo-determinism \citep{Gat2011ProbabilisticSA}, and provide general theorems about pseudo-independence in Appendix~\ref{sec:bound-difference}.

\begin{figure}[t]
    \centering
    \includegraphics[width=\linewidth]{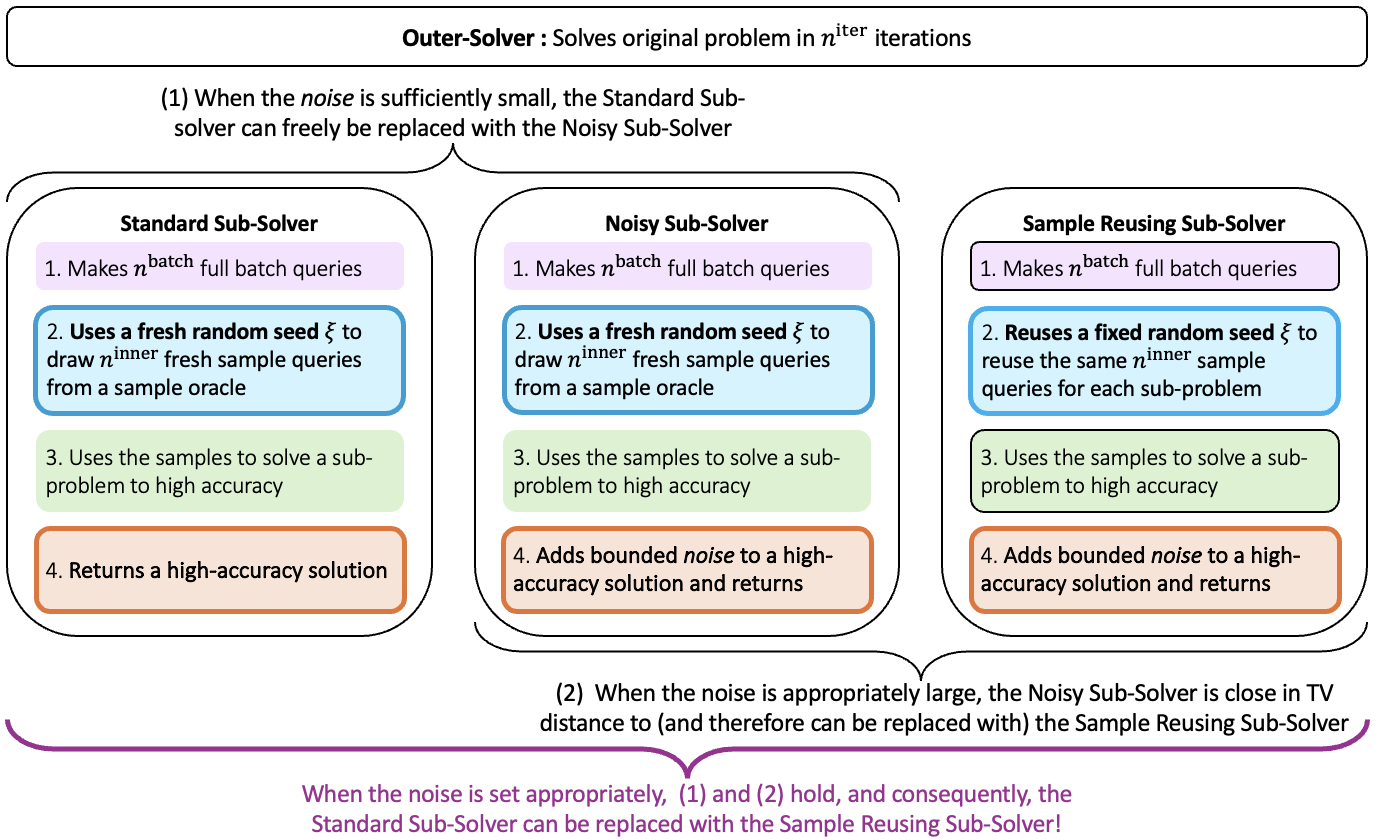}
    \caption{\emph{Sample reuse framework.} The diagram summarizes our approach for replacing Standard Sub-Solvers with a Sample Reusing Sub-Solver, which reuses the same sample queries across all $\nOuter$ iterations of the Outer-Solver. In (2), TV abbreviates total variation distance. 
    }
    \label{fig:diagram}
    \vspace{-1em}
\end{figure}

\subsection{Our results for structured optimization}
\label{sec:intro:structured}

Here we explain our results for prominent variance reduction settings (including FSM). These results follow a similar approach as in \Cref{sec:intro:finite-motivate}, but differ in oracles and sub-problem structure. For each problem we consider using an outer-solver framework (e.g., APP) to iteratively reduce the original optimization problem to a sequence of sub-problems, which are solved to high-accuracy using a sub-solver (e.g., SVRG). (This depected by \emph{Outer-Solver} and \emph{Standard Sub-Solver} in Figure~\ref{fig:diagram}.)

Two observations drive these results. First, in these problems, an outer-solver framework is guaranteed to (probably, approximately) solve the original problem, \emph{even if} at each iteration, the solution to the $t$-th sub-problem were perturbed by some bounded random noise. (This is depected by the \emph{Noisy Sub-Solver} in Figure~\ref{fig:diagram}.) 
Second, when this random noise has a sufficiently large range, the random perturbations \emph{protect} the randomness of the sample queries well enough so that \emph{even if} we were to reuse the \emph{same} sample queries across \emph{all} iterations of the outer-solver (this is depicted by \emph{Sample Reusing Sub-Solver} in Figure~\ref{fig:diagram}), then the distribution over outputs would not change by much---as measured by total variation (TV) distance. To facilitate our proof of this fact, we describe and analyze a new notion of \emph{pseudo-independence} of randomized algorithms in Section~\ref{sec:pseudoindependence}.

Combining these observations, we show that in many problems, when the noise is carefully scaled, the Outer-Solver can use the Sample-Reusing Sub-solver as the sub-problem solver (instead of the Standard Sub-Solver). This reduces the total number of sample queries by a multiplicative $\nOuter$ factor! This is perhaps surprising---it is well-known that randomized algorithms are \emph{not always} amenable to reusing randomness across sequential invocations, see e.g.,\ \citep{cherapanamjeri2020adaptive, cohen2024one}. However, our analysis shows randomness \emph{can} be reused in many applications of variance reduction. 

This sample reuse framework is formalized in Section~\ref{sec:pseudoindependence}. Although this formalism is somewhat abstract and technical---as it is intended to capture a wide range of variance-reduction settings---it enables us to obtain improved query complexity trade-offs for solving a broad range of prominent optimization and machine-learning problems. 

Our improvements are summarized in Table~\ref{table:main-notation}. Note that in certain specialized problems, e.g., TopEV and special cases of FSM such as linear regression, there might be other approaches to obtain our improved trade-offs using preconditioning \citep{cohen2020online, cohen2015uniform}. However, a strength of our sample reuse framework is its versatility--- our approach applies even in problems where there is no clear preconditioning analog, e.g., MDPs or matrix games. In the next paragraphs, we highlight additional implications of our results (beyond improved trade-offs) for MDPs and matrix games.

\paragraph{Faster algorithms for certain DMDPs.}{
Outer-solver frameworks which reduce a problem to a sequence of sub-problems are well-studied for all of the problems in Table~\ref{table:main-notation}, \emph{except} for DMDPs/AMDPs. Thus, to obtain our results for DMDPs (Table~\ref{table:main}) in Section~\ref{sec:dmdp} we derive a new outer-solver algorithm for DMDPs (Theorem~\ref{thm:dmdp-unregularizer}). This algorithm, which may be of independent interest, reduces solving a $\gamma$-discounted DMDP to solving a sequence of $\gamma'$-discounted DMDPs for $\gamma' < \gamma$. This result yields a new algorithm for solving $\gamma$-DMDPs to high-accuracy and improves the \emph{runtime} of prior work \citep{jin2024truncated} under appropriate conditions on the \emph{sparsity} of the underlying transition matrix (Theorem~\ref{thm:corollary-dmdp-unregularizer}.) To extend our results to AMDPs, we leverage a reduction of \citep{jin2021towards}.
}

\paragraph{Improved matrix-vector complexity for $\ell_2$-$\ell_2$ matrix games.}{ For $\ell_2$-$\ell_2$ matrix games, full batch queries and sample queries can \emph{both} be implemented by a (two-sided) matrix-vector oracle that outputs $\smash{(\bm{x}, \bm{y}) \mapsto (\bm{A}\bm{x}, \bm{A}^\top \bm{y})}$ for input $(\bm{x}, \bm{y}) \in \R^{d} \times \R^{n}$. With $\smash{\alpha = \norm{\bm{A}}_F^{2/3}\epsilon^{1/3}}$ we obtain an algorithm that solves $\ell_2$-$\ell_2$ matrix games in only $\smash{\tilde{O}(\norm{\bm{A}}_F^{2/3}\epsilon^{-2/3})}$-matrix-vector oracle queries. This improves the matrix-vector complexity of the problem over the $\smash{\tilde{O}(\norm{\bm{A}}_F \epsilon^{-1})}$-matrix-vector oracle queries required in prior work \citep{carmon2019variance, nemirovski2004prox} and is near-optimal, due to \citep{liu2023accelerated}. \emph{Special} cases of $\ell_2$-$\ell_2$ matrix games reduce to $\ell_2$-regression, in which case this trade-off of $\smash{\tilde{O}(\norm{\bm{A}}_F^{2/3}\epsilon^{-2/3})}$-matrix-vector oracle queries may also follow from preconditioning or Newton method \citep{cohen2020online, liu2023accelerated}. \emph{However}, our work is first to achieve near-optimal matrix-vector oracle query complexity for \emph{general} $\ell_2$-$\ell_2$ matrix games. 
}

\begin{table}
\centering
\setlength{\tabcolsep}{2.2pt}
\renewcommand{\arraystretch}{1}
\begin{tabular}{@{}>{\centering\arraybackslash}m{2.5cm} >{\arraybackslash}m{13.5cm} @{}} 
\toprule
\textbf{Problem} & \textbf{Description} \\
\midrule
Finite-sum minimization (FSM)  & 
$F(x) = \frac{1}{n} \sum_{i \in [n]} f_i(x)$ is a $\mu$-strongly convex function where each $f_i$ is $L$-smooth and convex. The goal is to reduce the error (with respect to the minimizer of $F$) of an initial point by a factor of $c > 1$ wp.\ $1-\delta$. (Section~\ref{sec:finite-sum-minimization} and discussed more generally in Section~\ref{apx:nonuniform_smoothnes}.) \\
\midrule
Discounted MDP (DMDP) & The goal is to compute an $\epsilon$-optimal policy for a $\gamma$-discounted infinite-horizon MDP wp.\ $1-\delta$. We use $\Atot$ to denote the total number of state-action pairs in the MDP and assume that rewards and $\Atot$ are bounded. (Section~\ref{sec:mdp}) \\
\midrule
Average-reward MDP (AMDP) & The goal is to compute an $\epsilon$-optimal policy for an average-reward infinite-horizon MDP wp.\ $1-\delta$. We use $\Atot$ to denote the total number of state-action pairs in the MDP and assume that rewards and $\Atot$ are bounded. (Section~\ref{sec:mdp})\\
\midrule
$\ell_2$-$\ell_2$ matrix games ($\ell_2$-$\ell_2$) & The goal is to compute an $\epsilon$-saddle point for an $\ell_2$-$\ell_2$ matrix game in matrix $\bm{A} \in \R^{d \times n}$ wp.\ $1-\delta$. \smash{$\Vert \bm{A}\Vert_F \defeq (\sum_{i} \sum_j \bm{A}_{i,j}^{2})^{1/2}$} is the Frobenius norm.  (Section~\ref{sec:minimax_unified}) \\
\midrule
Top Eigenvector (TopEV) & The goal is to compute an $\epsilon$-approximate top eigenvector of $\bm{A}^\top \bm{A} \in \R^{n \times d}$ with high probability in $d$. Here $\gap$ and $\sr(\bm{A})$ are the relative eigen-gap and stable rank of $\bm{A}$ respectively. (Section~\ref{sec:topEV}) \\
\bottomrule
\end{tabular}
\caption{Summary of structured optimization problems (and abbreviations) studied in this work. Throughout this paper, wp.\ abbreviates ``with probability.''}
\label{table:main-notation}
\end{table}

\begin{table}
\centering
\setlength{\tabcolsep}{6pt}
\begin{tabular}{@{}c c c p{1cm} c c c@{}}
\toprule
\textbf{Problem} 
  & \multicolumn{3}{c}{\textbf{Prior Work ($\tilde{O}$)}} 
  & \multicolumn{2}{c}{\textbf{Our Trade-off} ($\tilde{O}$)} 
  & \textbf{Range}\\
\cmidrule(lr){2-4} \cmidrule(lr){5-6}
 & FB & \makecell{OS} & \textbf{Paper} & FB & \makecell{OS} & \\
\midrule

FSM 
 & $\sqrt{\alpha/\mu}$ 
 & $L/\sqrt{\alpha \mu}$ 
 & \cite{frostig2015regularizing} 
 & $\sqrt{\alpha/\mu}$ 
 & $L/\alpha$ 
 & $\alpha > \mu$\\

\midrule
DMDP 
 & $1$ 
 & $\Atot(1-\gamma)^{-2}$ 
 & \small{\cite{jin2024truncated}}
 & $\alpha (1-\gamma)^{-1}$ 
 & $\Atot \alpha^{-2}$ 
 & $1 > \alpha \geq 1-\gamma$\\

\midrule
AMDP 
 & $1$ 
 & $\Atot \tmix^2 \epsilon^{-2}$ 
 & \small{\cite{jin2021towards}} 
 & $\alpha \tmix \epsilon^{-1}$ 
 & $\Atot \alpha^{-2}$ 
 & $1 > \alpha \ge \tfrac{\epsilon}{9\tmix}$\\

\midrule
$\ell_2$-$\ell_2$ 
 & $\alpha \epsilon^{-1}$ 
 & $\|\bm{A}\|_{F}^2 (\alpha\epsilon)^{-1}$ 
 & \small{\cite{carmon2019variance}}
 & $\alpha \epsilon^{-1}$ 
 & $\|\bm{A}\|_{F}^2 \alpha^{-2}$ 
 & $\alpha > 0$\\

\midrule
TopEV 
 & $\sqrt{\tfrac{\alpha}{\gap}}$ 
 & $\sr(\bm{A}) (\alpha^{3}\gap)^{-\tfrac{1}{2}}$ 
 & \small{\cite{garber2016faster}}
 & $\sqrt{\tfrac{\alpha}{\gap}}$ 
 & $\sr(\bm{A})\,\alpha^{-2}$ 
 & $\alpha > \Theta(\gap)$\\

\bottomrule
\end{tabular}
\caption{\emph{Main results}. FB and OS denote the required full batch and oblivious (non-adaptive) sample queries, respectively, with $\alpha$ tuning their trade-off. We compare our trade-offs with prior work. Importantly, the ``Our trade-off'' column always improves over the trade-off under ``Prior Work'' under the \emph{same} assumptions and problem definitions made in the prior work.}
\label{table:main}
\end{table}

\paragraph{Improved trade-offs for $\ell_2$-$\ell_1$ matrix games.}{ Beyond the results in Table~\ref{table:main}, in Section~\ref{sec:minimax_unified}, we obtain similar improved full batch versus sample query trade-offs for $\ell_2$-$\ell_1$ matrix games. Notably, in this setting, variance-reduced methods \citep{carmon2019variance} \emph{combine} oblivious sample queries along with non-oblivious sample queries to solve the sub-problems induced by the outer-solver (conceptual proximal point \citep{nemirovski2004prox, carmon2019variance}). Interestingly, our pseudo-independence framework \emph{still} applies to allow us to \emph{reuse} the oblivious (non-adaptive) sample queries across all $\nOuter$ invocations of the sub-solver. We also discuss how our results enable improve 
query complexity trade-offs for two computational geometry problems: the minimum enclosing and maximum inscribed ball problems. }

\subsection{Preliminaries}\label{sec:intro:prelim}

\paragraph{General notation.} Bold lowercase letters are vectors in $\R^d$ where $\bm{u}(i)$ is the $i$-th index of $\bm{u}$. Bold capital letters are matrices. The $\ell_p$ norm of $\bm{u}$ is $\normInline{\bu}_p$. For random variable $A$ over $(\Omega, \cF)$, $p_A$ is its probability measure. For event $E$, $p_{A|E}$ is the measure of $A$ conditioned on $E$, $\neg{E}$ is the complement of $E$, and $\indicator{E}$ is the indicator of $E$. For random variables $A, B$, $A \smash{\overset{{\cD}}{=}} B$ denotes that $A$ and $B$ are equal in distribution. For measures $p$ and $q$, $\smash{\tv{p}{q} := \max_{E \in \cF} |p_A(E) - p_B(E)|}$ is the total variation (TV) distance between $p$ and $q$. The support of distribution $\cD$ or measure $q$ is denoted $\smash{\supp(\cD)}$ or $\supp(q)$. $\mathsf{Ber}$ and $\mathsf{Unif}$ respectively denote Bernoulli and uniform distributions. $\smash{\UniformDist^d(a,b)}$ denotes a $d$-dimensional vector in which the $i$-th entry is an independently and identically distributed $\smash{\UniformDist(a, b)}$ random variable. 

\paragraph{Randomized algorithm notation.} We consider algorithms that use up to two independent sources of randomness. We use $\smash{\cA_{\xi, \chi}}$ to denote a randomized algorithm that takes an input $\ba \in \R^d$ and independent random seeds $\smash{\xi \sim \cD_{\xi}}$ and $\smash{\chi \sim \cD_{\chi}^{\ba}}$ where ${\cD_{\xi}}$ is \emph{independent of, or oblivious with respect to,} $\ba$ and ${\cD^{\ba}_{\chi}}$ might be \emph{dependent on, or be adaptive with respect to} ${\ba}$. We also use the notation ${\mathsf{D}_\chi = \{\cD_\chi^{\ba}\}_{{\ba} \in \R^d}}$ to denote the family of distributions parameterized by ${\ba} \in {\R^d}$. On input ${\ba} \in {\R^d}$, $\smash{\cA_{\xi, \chi}}$ outputs ${\cA_{{\xi}, {\chi}}({\ba})}$ for randomly drawn ${\xi \sim \cD_\xi}$ and ${\chi \sim \cD_\chi^{\ba}}$. At times, we analyze the output of $\smash{\cA_{{\xi}, {\chi}}}$ \emph{conditioned} on the value of one or both random seeds. Specifically, ${\cA_{\xi=s, \chi}({\ba})}$ and ${\cA_{\xi, \chi=c}({\ba})}$ are used to denote the randomized algorithms obtained by fixing the value of ${\xi = s \in \supp(\cD_{\xi})}$ or $\smash{\chi =c \in \supp(\cD_{\chi}^{\ba})}$, respectively. Analogously, ${\cA_{\xi=s, \chi=c}}$ is a \emph{deterministic} algorithm corresponding to conditioning on $\smash{\xi = s \in \supp(\cD_{\xi})}$ and $\smash{\chi = c \in \supp(\cD_{\chi}^{\ba})}.$ We occasionally specify a decomposition of the seed $\chi$ into two sub-seeds $\smash{\chi = (\nu, \iota)}$ where sub-seeds $\nu, \iota$ are drawn independently from $\smash{\nu \sim \cD^x_{\nu}}$ and $\smash{\iota \sim \cD_{\iota}}$. We assume algorithms that output vectors in $\R^d$ have runtime ${\Omega}(d)$. We use the term \emph{high-accuracy} to refer to families of algorithms (parameterized by error and failure probability parameters) with at most \emph{polylogarithmic} dependence on their accuracy and failure probability.

\section{Sample reuse framework}\label{sec:pseudoindependence}

Here we provide, contextualize, and analyze our sample-reuse framework. First, in Section~\Ref{sec:sample-reuse-overview} we introduce the framework, the main theorem regarding it (Theorem~\ref{thm:AtoC}) and formally define \emph{psuedoindependence} (Definition~\Ref{def:pseudo-independence}) which we use to prove the theorem. For additional context, we then compare pseudo-independence to related definitions in prior work in Section~\ref{sec:comparisons}. We then conclude this section in Section~\ref{sec:proof} by analyzing the sample-reuse framework and proving Theorem~\ref{thm:AtoC}.

\subsection{Sample-reuse framework}
\label{sec:sample-reuse-overview}

Here, we introduce our sample-reuse framework. To describe the framework we provide a general \emph{Meta-Algorithm} for solving an optimization problem. This Meta-Algorithm encompasses three different templates for variance reduction methods. Our framework relates these different templates and applying this relationship to different algorithms yields our improved query-complexity tradeoffs.

More formally, the Meta-Algorithm~\ref{alg:meta-algorithm} iteratively reduces solving a problem instance $X$ to solving a sequence of sub-problems. First, the algorithm initializes $\ba_0 \in {\R^d}$; an oblivious distribution $\cD_\xi$ for sampling a random seed $\xi$; and a \emph{family} of \emph{non-oblivious} distributions $\mathsf{D}_\chi = \{\cD_{\chi}^x\}_{x \in \R^d}$. It iteratively runs one of \emph{three} possible sub-routines (corresponding to the three different templates), which we describe below, and outputs a convex combination of the iterates. Table~\ref{table:main-notation-algos} summarizes the notation. 

\paragraph{The standard sub-routine.}{The first template is $\Outer(X; \traditional)$, i.e., Meta-Algorithm~\ref{alg:meta-algorithm} with sub-routine $\Traditional$ (in the prose, we omit the $\tau$ since it has no effect when $\code{Type} = \traditional$.) This template formalizes the ``Standard Sub-Solver'' panel in Figure~\ref{fig:diagram} and describes the standard variance reduction template that applies to many algorithms for structured optimization problems---including those cited under ``Prior Work'' in Table~\ref{table:main}. In $\Outer(X; \traditional)$, the for loop (Line~\ref{line:call-loop}) iteratively reduces the original optimization problem to solving $\nOuter$ sub-problems, where the $t$-th sub-problem is determined by the $(t-1)$-th iterate, $\ba_{t-1}$. 

The sub-problem solver, or \emph{sub-solver}, denoted  $\smash{\cA^{\mathrm{sub}}_{\xi, \chi}}$, is a randomized algorithm that solves the sub-problem at each iteration by (1) making some deterministic full batch queries to the full batch oracle; (2) using the random seed $\xi$ to make randomized \emph{oblivious} sample queries to a sample oracle; and (3) using the random seed $\chi$ to perform some additional randomization. Depending on the application (Table~\ref{table:main-notation}), (3) might involve adding some noise to the sub-problem or making additional \emph{adaptive} sample queries to a sample oracle. $\smash{\cA^{\mathrm{sub}}_{\xi, \chi}}$ then outputs an intermediate iterate $\ba_{t-1/2} \in \R^p$. 
At the end of each iteration, the routine then applies a ``post-process'' $\zeta$ which performs some \emph{deterministic} post-processing of the pair $(\ba_t, \ba_{t-1/2})$ (e.g., implementing acceleration/momentum.)}

As a illustrative, concrete, example, consider the motivating example of FSM (Section~\ref{sec:intro:finite-motivate}). Meta-Algorithm~\ref{alg:meta-algorithm} captures the previously described combination of APP and SVRG to solve FSM \citep{frostig2015regularizing, lin2015universal}. That is, APP reduces the original problem to solving a sequence of regularized sub-problems, while $\smash{\cA^{\mathrm{sub}}_{\xi, \chi}}$ represents the sub-solver SVRG \citep{johnson2013accelerating}, which is used to solve the regularized sub-problems. In this case, $\zeta$ implements acceleration as in \citep{frostig2015regularizing}, and $\bm{w}$ is set such that the algorithm will simply return the last iterate, $\bm{u}_{\nOuter}$. (See also Section~\ref{sec:finite-sum-minimization}.)

\paragraph{The noisy sub-routine.} To motivate the second template, recall our motivating observation that for many classes of structured optimization problems (Table~\ref{table:main-notation}), there is an instantiation of $\Outer(X; \traditional)$ that provably solves the problem instance $X$ \emph{provided} that: at each iteration $t \in [\nOuter]$, the sub-solver $\smash{\cA^{\mathrm{sub}}_{\xi, \chi}}$ solves the sub-problem induced by $\ba_{t-1}$ to \emph{high-accuracy} in the $\ell_\infty$ norm. We formalize this observation with the following two definitions.

\begin{definition}[Function approximation]\label{def:approx} We say that a randomized algorithm $\cA_{\xi, \chi}$ is an \emph{$(\eta, \delta)$-approximation} of $f: {\R^d} \to {\R^p}$ if $\cA_{\xi, \chi}$ takes an input $\ba \in {\R^d}$ along with two random seeds $\xi \sim \cD_\xi$ and $\chi \sim \cD_{\chi}^{\ba}$ and outputs $\cA_{\xi, \chi}(\ba) \in {\R^p}$ such that $\smash{{\Prob_{\xi \sim \cD_{\xi}, \chi \sim \cD_{\chi}^{\ba}}\paren{\Vert \cA_{\xi, \chi}(\ba) - f(\ba) \Vert_\infty \leq \eta} \geq 1 - \delta}}$.  
\end{definition}
\begin{table}[t]
\caption{Parameter table for Meta Algorithm (Algorithm~\ref{alg:meta-algorithm}). } 
\centering
\setlength{\tabcolsep}{2pt}
\renewcommand{\arraystretch}{1} 
\begin{tabular}{@{}>{\centering\arraybackslash}m{2cm} >{\arraybackslash}m{14cm} @{}}
\toprule
\textbf{Symbol} & \textbf{Description } \\
\midrule
$\cD_\xi$ & This is an oblivious distribution governing the random variable $\xi$. \\
\midrule
$\mathsf{D}_\chi$ & This is a family of distributions parameterized by $\bx \in \R^d$. It governs the \emph{adaptive} random variable $\chi$. We use $\cD_\chi^{\bx}$ to denote the distribution in $\mathsf{D}_\chi$ corresponding to an $\bx \in \R^d$. \\
\midrule 
$\cA^{\mathrm{sub}}_{\xi, \chi}$ & $\cA^{\mathrm{sub}}_{\xi, \chi}$ is a sub-problem solver (sub-solver) that takes in $\bu \in \R^d$ and seeds $\xi$ and $\chi$ and outputs $\cA^{\mathrm{sub}}_{\xi, \chi}(\bu) \in \R^p$. $\cA^{\mathrm{sub}}_{\xi, \chi}$ may make full batch queries and sample queries to $X$. Batch queries depend \emph{only} on $\bu$. The seed $\xi$ is used to make \emph{oblivious} sample queries. Meanwhile, the seed $\chi$ may optionally be used to make additional \emph{adaptive} sample queries. \\
\midrule 
$\zeta$ & $\zeta: \R^d \times \R^p \to \R^d$ is a deterministic post-processing function; we call it an \emph{outer-process} since it captures the role of the \emph{outer-solver} discussed in Section~\ref{intro:intro} (Figure~\ref{fig:diagram}). 
\\
\midrule
$\bm{w}$ & The algorithm outputs a convex combination of the $\bu_t$ given by coefficient vector $\bm{w}$. \\
\midrule
$\tau$ & Noise parameter $\tau > 0$ controls the amount of noise to be added.\\
\bottomrule
\end{tabular}
\label{table:main-notation-algos}
\vspace{-1em}
\end{table}

\setcounter{algocf}{0}
\begin{algorithm2e}[H]
\SetKwProg{function}{}{:}{}
\DontPrintSemicolon
\caption{Variance-Reduction Meta-Algorithm $\Outer(X; \code{Type}, \tau)$ }\label{alg:meta-algorithm}
\KwParameter{Loop specification:  $\code{Type} \in \{\traditional, \noisy, \reuse\}$ \tcp*{Types defined below}} 
Initialize ${\ba}_0 \in \R^d$\; 
\tcp{If $S = \traditional$, $\tau$ may be omitted, in which case Line~\ref{line:initialize-dist-noise} is skipped and $\tau$ is omitted in Line~\ref{line:call-loop}}
\tcp{The next line builds a noisy version of $\cA^{\mathrm{sub}}_{\xi, \chi}$, which is denoted ${\cA'}^{\mathrm{sub}}_{\xi, \chi'}$}
Let $\cD^{\ba_{t-1}}_{\chi'} \defeq \cD^{\ba_{t-1}}_{\chi} \times \UniformDist^p(-\tau, \tau)$ \label{line:initialize-dist-noise} and ${\cA'}^{\mathrm{sub}}_{\xi=s, \chi'=(c, \bm{e})} (\ba_{t-1}) \defeq {\cA}^{\mathrm{sub}}_{\xi=s, \chi=c}(\ba_{t-1})+\bm{e}$\; 
\textbf{Draw} $s_1, ..., s_{\nOuter} \sim \cD_\xi$ \tcp*{Draw seeds from the oblivious distribution}
\BlankLine
\lFor{each $t \in [\nOuter]$}{
    \textbf{call} $\langle\code{Type}\rangle\code{Loop}_{\tau}$ \label{line:call-loop}
}
\Return{${\sum_{t \in [\nOuter]}} \bm{w}(t) \cdot \bm{u}_{t}$}
\BlankLine
\tcp{Different loops that can be called on Line~\ref{line:call-loop}}
\lfunction{$\Traditional_\tau$}{
    Draw $\smash{\chi_t \sim \cD^{\ba_{t-1}}_{\chi}}$, 
    $\smash{\ba_{t-\frac{1}{2}} \gets \cA^{\mathrm{sub}}_{\xi=s_t, \chi=c_t}(\ba_{t-1})}$, 
    $\smash{\ba_{t} \gets \zeta(\ba_{t-1}, \bm{u}_{t-\frac{1}{2}})}$}
\lfunction{$\Noisy_{\tau}$}{
    Draw $\smash{{(c_t, \bm{e}_t) \sim \cD^{\ba_{t-1}}_{\chi'}}}$, 
    $\smash{\ba_{t-\frac{1}{2}} \gets \textcolor{ForestGreen}{\cA'}^{\mathrm{sub}}_{\xi=s_t, \textcolor{ForestGreen}{\chi'=(c_t, \bm{e}_t)}}(\ba_{t-1})}$, 
    $\smash{\ba_{t} \gets \zeta(\ba_{t-1}, \bm{u}_{t-\frac{1}{2}})}$
}
\lfunction{$\Reuse_\tau$}{
    Draw $\smash{{(c_t, \bm{e}_t) \sim \cD^{\ba_{t-1}}_{\chi'}}}$, 
    $\smash{\ba_{t-\frac{1}{2}} \gets {\cA'}^{\mathrm{sub}}_{\textcolor{ForestGreen}{\xi=s_1}, \chi'=(c_t, \bm{e}_t)}(\ba_{t-1})}$, 
    $\smash{\ba_{t} \gets \zeta(\ba_{t-1}, \bm{u}_{t-\frac{1}{2}})}$
}
\end{algorithm2e}
\vspace{1em}

\begin{definition}[$\ell_\infty$-robust]\label{assumption:robustness} We say $\Outer(X; \traditional)$ is $(\eta, \beta)$-\emph{robust} with respect to $\fsub$ if there exists a deterministic function $\fsub: {\R^d} \to {\R^p}$ such that $\Outer(X; \traditional)$ is guaranteed to output a $\beta$-accurate solution for $X$ \emph{whenever}, for all $\smash{t \in [\nOuter]}$, $\smash{\normInline{\ba_{t-1/2} -  \fsub(\ba_{t-1})}_\infty \leq \eta}$.
\end{definition}

To interpret these definitions, suppose that $\Outer(X; \traditional)$ is $(\eta, \beta)$-\emph{robust}, $\tau \leq \eta/2$, and the sub-solver $\cA^{\mathrm{sub}}_{\xi, \chi}$ is an $(\eta/2, \delta)$-approximation of $\fsub$. Then \emph{even if}, at every iteration $t \in [\nOuter]$, $\ba_{t-1/2}$ were to be randomly perturbed by some $\tau$-bounded random noise, the output would \emph{still} be a $\beta$-accurate solution to $X$ wp.\ $1-\nOuter\delta$! 

Correspondingly our second sub-routine $\Noisy_\tau$ in Meta-Algorithm~\ref{alg:meta-algorithm} (and the corresponding template $\Outer(X; \noisy)$) is \emph{identical} to $\Traditional$ (correspondingly, the template $\Outer(X; \traditional)$) \emph{except} that at each iteration, $\ba_{t-1/2}$ is \emph{perturbed} by a small amount of $\smash{\UniformDist^p(-\tau, \tau)}$ random noise. This formalizes the ``Noisy Sub-Solver'' in Figure~\ref{fig:diagram}. Following the previous intuition, the following lemma shows that when $\tau$ is sufficiently small, we can freely interchange $\smash{\Outer(X; \traditional)}$ with $\smash{\Outer(X; \noisy, \tau)}$, without losing correctness guarantees.

\begin{restatable}{lemma}{AtoB}\label{lemma:AtoB} Suppose $\Outer(X; \traditional)$ is $(\eta, \beta)$-robust with respect to $\fsub$, $\tau \in (0, \eta/2)$, and $\cA^{\mathrm{sub}}_{\xi, \chi}$ is an $(\eta/2, \delta)$-approximation of $\fsub$. Then $\Outer(X; \traditional)$ and 
$\Outer(X; \tau, \noisy)$, wp.\ $1-\nOuter\delta$, output  a $\beta$-accurate solution to $X$. 
\end{restatable}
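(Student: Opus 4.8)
The plan is to prove both assertions by a union bound over the $\nOuter$ outer iterations, with the adaptivity of the iterates handled through a filtration argument. I would begin with the structural observation that the two templates $\Outer(X;\traditional)$ and $\Outer(X;\tau,\noisy)$ generate their iterates by the \emph{same} recursion $\ba_t \gets \zeta(\ba_{t-1},\ba_{t-1/2})$ and return the \emph{same} convex combination $\sum_{t\in[\nOuter]}\bm{w}(t)\ba_t$; they differ only in how $\ba_{t-1/2}$ is produced (via $\cA^{\mathrm{sub}}_{\xi,\chi}$ versus its noised version ${\cA'}^{\mathrm{sub}}_{\xi,\chi'}$). Because the $(\eta,\beta)$-robustness hypothesis (Definition~\ref{assumption:robustness}) is a purely deterministic statement about \emph{any} trajectory for which $\normInline{\ba_{t-1/2}-\fsub(\ba_{t-1})}_\infty\le\eta$ for all $t$, it applies verbatim to both templates. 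Hence it suffices to show that, under each template's randomness, the event $\cG:=\{\forall t\in[\nOuter]:\ \normInline{\ba_{t-1/2}-\fsub(\ba_{t-1})}_\infty\le\eta\}$ occurs with probability at least $1-\nOuter\delta$.

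For $\Outer(X;\traditional)$, I would let $\cF_{t-1}$ be the $\sigma$-algebra generated by $s_1,\dots,s_{t-1}$ and $c_1,\dots,c_{t-1}$, and note that $\ba_{t-1}$ is $\cF_{t-1}$-measurable while the seeds $s_t\sim\cD_\xi$ and $\chi_t\sim\cD_\chi^{\ba_{t-1}}$ drawn in iteration $t$ are, conditionally on $\cF_{t-1}$, fresh independent draws with exactly the marginals required by Definition~\ref{def:approx}. Applying the $(\eta/2,\delta)$-approximation property of $\cA^{\mathrm{sub}}_{\xi,\chi}$ at the (now fixed) input $\ba_{t-1}$ gives $\Pr\big[\normInline{\ba_{t-1/2}-\fsub(\ba_{t-1})}_\infty>\eta/2\mid\cF_{t-1}\big]\le\delta$, and taking expectations removes the conditioning. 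A union bound over $t\in[\nOuter]$ shows that $\{\forall t:\ \normInline{\ba_{t-1/2}-\fsub(\ba_{t-1})}_\infty\le\eta/2\}\subseteq\cG$ holds with probability at least $1-\nOuter\delta$, and then Definition~\ref{assumption:robustness} yields a $\beta$-accurate output.

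For $\Outer(X;\tau,\noisy)$ the argument is the same except that the filtration also records the noise vectors, $\ba_{t-1/2}=\cA^{\mathrm{sub}}_{\xi=s_t,\chi=c_t}(\ba_{t-1})+\bm{e}_t$ with $\bm{e}_t\sim\UniformDist^p(-\tau,\tau)$ drawn independently, and the relevant bound splits by the triangle inequality as $\normInline{\ba_{t-1/2}-\fsub(\ba_{t-1})}_\infty\le\normInline{\cA^{\mathrm{sub}}_{\xi=s_t,\chi=c_t}(\ba_{t-1})-\fsub(\ba_{t-1})}_\infty+\normInline{\bm{e}_t}_\infty$. Conditioning on $\cF_{t-1}$, the first term is at most $\eta/2$ with probability at least $1-\delta$ by the approximation guarantee, while the second is at most $\tau<\eta/2$ deterministically --- this is exactly where $\tau\in(0,\eta/2)$ is used --- so $\Pr[\normInline{\ba_{t-1/2}-\fsub(\ba_{t-1})}_\infty>\eta\mid\cF_{t-1}]\le\delta$. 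The same union bound gives $\Pr[\cG]\ge1-\nOuter\delta$, and robustness again gives a $\beta$-accurate output.

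I expect the only genuine subtlety --- and the step I would write most carefully --- is the adaptivity: $\ba_{t-1}$ is a random quantity depending on the seeds from earlier iterations, so Definition~\ref{def:approx} cannot be invoked ``globally'' but must be applied conditionally on the history, using that each outer iteration draws fresh randomness with the prescribed (possibly $\ba_{t-1}$-dependent) distribution independent of $\cF_{t-1}$. Beyond that, the proof is a routine union bound together with the triangle inequality for the noise term and the observation that the deterministic post-processing $\zeta$ and output weights $\bm{w}$ are shared across templates.
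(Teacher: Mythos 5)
Your proof is correct and follows essentially the same route as the paper, which simply invokes the $(\eta,\beta)$-robustness definition together with a union bound over the $\nOuter$ iterations (treating the noise via $\eta/2+\tau\le\eta$ exactly as you do). Your write-up just makes explicit the conditional-on-history application of the approximation guarantee, which the paper leaves implicit in its one-line proof.
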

\begin{proof} The proof is immediate from  Definition~\ref{assumption:robustness} and union bound over all $\nOuter$ iterations. 
\end{proof}

\paragraph{The sample-reuse sub-routine.}
At this point, it is perhaps natural to wonder: \emph{why} is it helpful to work with the ${\Noisy}_\tau$~subroutine as opposed to the standard, \Traditional~sub-routine in Meta-Algorithm~\ref{alg:meta-algorithm}? After all, for any $\tau$, both \Traditional~and $\Noisy_\tau$~require the same number of batch and sample queries, so there is no obvious computational advantage to using $\Noisy_\tau$. 

As discussed in our second observation in Section~\ref{sec:intro:structured}, the key idea is the following: the random {noise} at each iteration ensures that the iterates $\ba_{t}$ in $\Noisy_\tau$ are \emph{almost independent} of the randomness in the sample queries induced by $\xi \sim \cD_{\xi}$ in the following sense: \emph{even if we were to reuse the same realization of $\xi$} in all iterations $t \in [\nOuter]$ of $\Noisy_\tau$, the returned output would be close---in total variation (TV) distance---to the output produced by $\Outer(X; \Noisy, \tau)$. This brings us to our third and final sub-routine.

 The third and final sub-routine $\Reuse_\tau$ in Meta-Algorithm~\ref{alg:meta-algorithm} (and corresponding template $\Outer(X; \reuse)$) is \emph{identical} to the $\Noisy_\tau$~sub-routine (correspondingly, $\Outer(X; \noisy)$) \emph{except that} $\Reuse_\tau$ \emph{reuses} a single realization $s_1 \sim \cD_{\xi}$ in all iterations $t \in [\nOuter]$. This formalizes the ``Sample Reusing Sub-Solver'' in Figure~\ref{fig:diagram}. Reusing this realization across all iterations corresponds to reusing oblivious sample queries to $X$. Consequently, for any $\tau > 0$, $\Outer(X; \reuse, \tau)$ has lower sample query complexity (by an $\nOuter$ multiplicative factor) than the original $\Outer(X; \noisy, \tau)$ or $\Outer(X; \traditional)$! 

\paragraph{Pseudo-independence.}
To obtain improved query-complexity tradeoffs, our goal is to derive conditions when $\Outer(X; \reuse, \tau)$ can be used in place of $\Outer(X; \noisy, \tau)$---without sacrificing correctness. We achieve this goal by introducing a new notion of \emph{pseudo-independence}, which we use to bound the total variation (TV) distance between the outputs of $\Outer(X; \noisy, \tau)$ and $\Outer(X; \reuse, \tau)$. Indeed, if this TV distance is small then as long as $\Outer(X; \noisy, \tau)$ is correct, with high probability, so is $\Outer(X; \reuse, \tau)$.

To define pseudo-independence we first define a \emph{smoothing} of a randomized algorithm as follows

\begin{definition}[$(\epsilon, \delta)$-smoothing]\label{def:epsdelta-smoothing} Let $\cA_{\xi, \chi}$ be a randomized algorithm which takes an input $\ba \in \R^d$ and two random seeds $\xi \sim \cD_{\xi}, \chi \sim  \cD_{\chi}^{\ba}$. We say that algorithm $\bar{\cA}_\chi$ is an \emph{$(\epsilon, \delta)$-smoothing of $\cA_{\xi, \chi}$ with respect to $\xi$} if it takes as an input $\ba \in {\R^d}$ along with one random seed $\chi \sim \cD_{\chi}^\ba$ and for all $\ba \in \R^d$, $\smash{\mathbb{P}_{s \sim \cD_{\xi}} [d_{\mathrm(TV)}(p_{\cA_{\xi=s, \chi}(\ba)}, {p_{\bar{\cA}_{\chi}(\ba)}}) \leq \epsilon]\geq 1-\delta}$.
\end{definition}

We say a randomized algorithm is $(\epsilon, \delta)$-pseudo-independent if we can guarantee \emph{existence} of an $(\epsilon, \delta)$-smoothing. Importantly, this smoothing need not be implementable or even explicitly known. 

\begin{definition}[$(\epsilon, \delta)$-pseudo-independence]\label{def:pseudo-independence} Let $\cA_{\xi, \chi}$ be as in Definition~\ref{def:epsdelta-smoothing}. $\cA_{\xi, \chi}$ is \emph{$(\epsilon, \delta)$-pseudo-independent of $\xi$} if it admits an $(\epsilon, \delta)$-smoothing with respect to $\xi$.
\end{definition}

Intuitively, an algorithm $\smash{\cA_{\xi, \chi}}$ is $\smash{(\epsilon, \delta)}$-pseudo-independent of $\xi$ if, wp.\ $1-\delta$ over the draw of $\smash{s \sim \cD_{\xi}}$, $\smash{\cA_{\xi, \chi}}$ is \emph{almost} independent of the first source of randomness---in the sense that wp.\ $1-\epsilon$ over the draw of $\chi$, $\cA_{\xi=s, \chi}(\ba)$ is equal in distribution to a random variable that is \emph{independent} of $\xi$ (see also Fact~\ref{lemma:tv_decomposition} for further discussion.)

We use this notion of pseudo-independence to bound the TV distance between $\Outer(X; \noisy, \tau)$ and $\Outer(X; \traditional)$. First, in Appendix~\ref{apx:stable_proof} we show that when the noise parameter $\tau$ is set appropriately the sub-solver algorithm $\smash{{\cA'}^{\mathrm{sub}}_{\xi, \chi'}}$ in Meta-Algorithm~\ref{alg:meta-algorithm} is $(\epsilon, \delta)$-pseudo-independent of $\xi$. Intuitively, this means that $\smash{{\cA'}^{\mathrm{sub}}_{\xi, \chi'}}$ is \emph{almost independent} of $\xi$ and consequently, it should be possible to reuse the same realization of the seed $\xi$ in each invocation of $\smash{{\cA'}^{\mathrm{sub}}_{\xi, \chi'}}$. 

To formalize this, we observe that $\Noisy_\tau$ repeatedly composes $\zeta$ with $\smash{{\cA'}^{\mathrm{sub}}_{\xi, \chi'}}$ where $\xi_t \sim \cD_{\xi}$ is drawn fresh in each iteration $t \in [\nOuter]$. Meanwhile, $\Reuse_\tau$ repeatedly composes $\zeta$ with $\smash{{\cA'}^{\mathrm{sub}}_{\xi, \chi'}}$ where the same realization $s_1 \sim \cX_{\xi}$ is \emph{reused} in each iteration $t \in [\nOuter]$. In Appendix~\ref{sec:bound-difference} we provide a general theorems about pseueodindependence, which allow us to leverage the fact that  $\smash{{\cA'}^{\mathrm{sub}}_{\xi, \chi'}}$ is $(\epsilon, \delta)$-pseudoindepent of $\xi$ to bound the TV distance between these repeated compositions as a function of $\nOuter, \delta, \epsilon$. Thus,  when $\tau, \epsilon, \delta$ are set appropriately, the TV distance between the outputs of $\Outer(X; \noisy)$ and $\Outer(X; \reuse)$ is small. This TV distance bound allows us to prove the following theorem, which in turn yields all the results in Table~\ref{table:main}.

\begin{restatable}{theorem}{atoc}\label{thm:AtoC} Suppose $\Outer(X; \traditional)$ is $(\eta, \beta)$-robust with respect to $\fsub$ and $\delta \in (0, 1)$. Let $\eta' \defeq \min(\eta/2, \eta\delta)$. Suppose $\cA^{\mathrm{sub}}_{\xi, \chi}$ is an $(\eta', \delta)$-approximation of $\fsub$. Then wp.\ $1 - 5\nOuter^2\delta$, $\Outer(X; \reuse, \eta'/(2\delta))$ outputs a $\beta$-accurate solution to $X$.
\end{restatable}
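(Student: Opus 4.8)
\ The plan is to relate $\Outer(X;\reuse,\tau)$ first to $\Outer(X;\noisy,\tau)$ and then back to $\Outer(X;\traditional)$, with the noise level fixed to $\tau \defeq \eta'/(2\delta)$. It suffices to establish two facts: (i) $\Outer(X;\noisy,\tau)$ returns a $\beta$-accurate solution to $X$ with probability at least $1-\nOuter\delta$; and (ii) the distribution of the output of $\Outer(X;\reuse,\tau)$ is within total variation distance at most $4\nOuter^2\delta$ of the distribution of the output of $\Outer(X;\noisy,\tau)$. Since ``$\beta$-accurate solution to $X$'' is a property of the returned object, (i) and (ii) together imply $\Outer(X;\reuse,\tau)$ is $\beta$-accurate with probability at least $1-\nOuter\delta-4\nOuter^2\delta \ge 1-5\nOuter^2\delta$. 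Note first that $\eta' = \min(\eta/2,\eta\delta)$ gives both $\eta'\le\eta/2$ and $\tau=\eta'/(2\delta)\le\eta/2$, so all noise-level hypotheses used below are satisfied.

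For (i), I would argue as in Lemma~\ref{lemma:AtoB}. Since $\eta'\le\eta/2$, the sub-solver $\cA^{\mathrm{sub}}_{\xi,\chi}$ is an $(\eta/2,\delta)$-approximation of $\fsub$, and since $\tau\le\eta/2$, adding $\UniformDist^p(-\tau,\tau)$ noise makes ${\cA'}^{\mathrm{sub}}_{\xi,\chi'}$ an $(\eta,\delta)$-approximation of $\fsub$ (on the complement of an event of probability at most $\delta$ the error is at most $\eta/2+\tau\le\eta$). The loop $\Noisy_\tau$ applies the same deterministic post-processing $\zeta$ and the same final convex combination as $\Traditional$; hence by the $(\eta,\beta)$-robustness of $\Outer(X;\traditional)$ with respect to $\fsub$ (Definition~\ref{assumption:robustness}) and a union bound over the $\nOuter$ iterations, $\Outer(X;\noisy,\tau)$ is $\beta$-accurate with probability at least $1-\nOuter\delta$.

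For (ii), the first step is to show that for $\tau=\eta'/(2\delta)$ the noisy sub-solver ${\cA'}^{\mathrm{sub}}_{\xi,\chi'}$ is $(\epsilon,\delta)$-pseudo-independent of $\xi$ with a small $\epsilon=O(\delta)$ (the content of Appendix~\ref{apx:stable_proof}). The natural witnessing smoothing (Definition~\ref{def:epsdelta-smoothing}) recenters the injected noise at the deterministic target: on input $\ba$ it outputs $\fsub(\ba)+\UniformDist^p(-\tau,\tau)$. Conditioned on $\chi=c$, the output ${\cA'}^{\mathrm{sub}}_{\xi=s,\chi'}(\ba)$ is uniform on a hypercube of half-width $\tau$ centered at $\cA^{\mathrm{sub}}_{\xi=s,\chi=c}(\ba)$, which with probability at least $1-\delta$ over $(s,c)$ lies within $\eta'$ of $\fsub(\ba)$ in $\ell_\infty$; two such hypercubes whose centers differ by at most $\eta'=2\tau\delta$ are close in total variation, since per coordinate the centers differ by at most a $\delta$-fraction of the side length. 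Averaging over $c$ and converting the joint ``$(s,c)$-good'' statement into one that holds for all but a $\delta$-fraction of seeds $s$ yields the desired pseudo-independence.

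The second step of (ii) feeds this into the general composition theorems of Appendix~\ref{sec:bound-difference}. Both $\Noisy_\tau$ and $\Reuse_\tau$ alternate $\nOuter$ times between the random map ${\cA'}^{\mathrm{sub}}_{\xi,\chi'}$ and the deterministic map $\zeta$, differing only in that $\Noisy_\tau$ draws a fresh $s_t\sim\cD_\xi$ each round while $\Reuse_\tau$ reuses $s_1$ throughout; pseudo-independence lets each use of ${\cA'}^{\mathrm{sub}}_{\xi,\chi'}$ be coupled with an application of the $\xi$-independent smoothing at total-variation cost $O(\epsilon+\delta)$, and a hybrid argument over the $\nOuter$ rounds (swapping one seed-use at a time for the smoothing, in each chain) together with the triangle inequality for total variation bounds the distance between the two loop outputs by $O(\nOuter^2\delta)$, which one checks is at most $4\nOuter^2\delta$ for the constants at hand. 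I expect this composition step to be the main obstacle: pseudo-independence is a statement about a \emph{fresh} random seed, but in $\Reuse_\tau$ the single seed $s_1$ is reused in all $\nOuter$ rounds and so is correlated with the iterate $\ba_{t-1}$ fed into the $t$-th invocation of ${\cA'}^{\mathrm{sub}}_{\xi,\chi'}$. This is handled precisely because the smoothing guarantee in Definition~\ref{def:epsdelta-smoothing} is \emph{uniform over inputs} $\ba$, so the hybrid argument still goes through; it is also what forces each of the $\nOuter$ hybrids to cost on the order of $\nOuter\delta$, producing the $\nOuter^2$ (rather than $\nOuter$) dependence in the final failure probability. The remaining steps are routine total-variation bookkeeping.
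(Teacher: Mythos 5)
Your proposal follows the paper's own proof essentially step for step: correctness of $\Outer(X;\noisy,\tau)$ via Lemma~\ref{lemma:AtoB}, pseudo-independence of the noise-perturbed sub-solver witnessed by the smoothing $\fsub(\ba)+\UniformDist^p(-\tau,\tau)$ (Theorem~\ref{thm:conversion-thm} with $\epsilon=\delta$), the composition bound of Theorem~\ref{lem:pseudoindependencemain} to control the total-variation distance between the reuse and noisy loops, and the same final accounting $1-\nOuter\delta-4\nOuter^2\delta\ge 1-5\nOuter^2\delta$. The argument is correct and matches the paper's route, including the key observation that the smoothing guarantee is uniform over inputs, which is what handles the correlation between the reused seed and the iterates.
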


In the problems described in Table~\ref{table:main}, the standard variance-reduced algorithms are instantiations of $\Outer(X; \traditional)$ and $(\eta, \beta)$-robust, for $\eta$ scaling polynomially in $\beta$ and the problem parameters. This ensures that the blowups in error and failure probability in Theorem~\ref{thm:AtoC} are at most \emph{polylogarithmic} in all problem parameters. Moreover, for all of our applications, the sub-solvers (e.g., SVRG) are \emph{high-accuracy} algorithms, meaning that they have query- and time-complexity \emph{polylogarithmic} in the accuracy and failure probability parameters. Thus, the polynomial blow-ups in accuracy and failure probability in Theorem~\ref{thm:AtoC} imply at most \emph{polylogarithmic} growth in complexity. 
\subsection{Comparisons of pseudoindependence to prior work}\label{sec:comparisons}

In this section, we discuss the notion of \emph{pseudo-independence}, which generalizes the well-studied notion of pseudo-determinism (Definition~\ref{def:pseudo-det}) \citep{goldwasser2017pseudo, goldwasser2017bipartite, braverman2023lower, grossman2023tight, dixon2022pseudodeterminism} and is also related to \emph{reproducibility} \citep{impagliazzo2022reproducibility}. 

First, we state the following Fact~\ref{lemma:tv_decomposition}, which will be helpful in our analysis.

\begin{restatable}[Lemma 4.1.13 of \citep{roch_mdp_2024}, restated]{fact}{tvdecomp}\label{lemma:tv_decomposition}
Let $\epsilon > 0$ and $A$ and $B$ be random variables. Then $d_{TV}(p_A, p_B) \leq \epsilon$ if and only if there exist random variables $C, D, F$ and an independent event $E$ such that $\prob{E} = 1-\epsilon$; $A \overset{{\cD}}{=} C \indicator{E} + D \indicator{\neg{E}}$; and $B \overset{\cD}{=} C \indicator{E} + F \indicator{\neg{E}}$.
\end{restatable}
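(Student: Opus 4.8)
The stated fact is the maximal-coupling characterization of total variation distance, so I would prove the two implications separately. Throughout, I would read $C\indicator{E}+D\indicator{\neg E}$ as the random element equal to $C$ on $E$ and to $D$ on $\neg E$ (a ``switch''), so that the statement is meaningful even when $A$ and $B$ take values in an arbitrary measurable space. The direction ``$\Leftarrow$'' is then immediate and uses no independence: given $C,D,F$ and an event $E$ with $\Pr{E}=1-\epsilon$ and $A\disteq C\indicator{E}+D\indicator{\neg E}$, $B\disteq C\indicator{E}+F\indicator{\neg E}$, for every measurable $S$ the decomposition $\{(C\indicator{E}+D\indicator{\neg E})\in S\}=(\{C\in S\}\cap E)\,\dot\cup\,(\{D\in S\}\cap\neg E)$ gives $p_A(S)=\Pr{\{C\in S\}\cap E}+\Pr{\{D\in S\}\cap\neg E}$, and similarly for $p_B$ with $F$ in place of $D$; the common term cancels, leaving
\[
p_A(S)-p_B(S)=\Pr{\{D\in S\}\cap\neg E}-\Pr{\{F\in S\}\cap\neg E},
\]
whose absolute value is at most $\Pr{\neg E}=\epsilon$ since both terms lie in $[0,\Pr{\neg E}]$. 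Taking the supremum over $S$ yields $d_{TV}(p_A,p_B)\le\epsilon$.

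For ``$\Rightarrow$'' I would construct a padded maximal coupling. Assume $\epsilon\in(0,1)$ (the case $\epsilon=1$ is trivial: take $D=A$, $F=B$, $C$ arbitrary, $E=\emptyset$). Let $\delta\defeq d_{TV}(p_A,p_B)\le\epsilon$ and let $\rho\defeq p_A\wedge p_B$ be the overlap measure; concretely, with $f,g$ the densities of $p_A,p_B$ against $\lambda\defeq p_A+p_B$, $\rho$ has density $\min(f,g)$, and a standard computation gives $\rho(\Omega)=\int\min(f,g)\,d\lambda=1-\delta$. Set the probability measure $p_C\defeq\rho/(1-\delta)$. Since $0\le\delta\le\epsilon<1$ we have $\tfrac{1-\epsilon}{1-\delta}\le 1$, hence $(1-\epsilon)p_C=\tfrac{1-\epsilon}{1-\delta}\rho\le\rho\le p_A$ as measures, so $p_A-(1-\epsilon)p_C$ is a nonnegative measure of total mass $\epsilon$; define the probability measure $p_D\defeq\epsilon^{-1}(p_A-(1-\epsilon)p_C)$, so that $p_A=(1-\epsilon)p_C+\epsilon p_D$. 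Define $p_F$ analogously from $p_B$, giving $p_B=(1-\epsilon)p_C+\epsilon p_F$. Finally, on a product probability space draw $C\sim p_C$, $D\sim p_D$, $F\sim p_F$, and, independently of $(C,D,F)$, let $E$ be an event with $\Pr{E}=1-\epsilon$. Then $C\indicator{E}+D\indicator{\neg E}$ has law $(1-\epsilon)p_C+\epsilon p_D=p_A$ and $C\indicator{E}+F\indicator{\neg E}$ has law $p_B$, while $E$ is independent of $C$ (indeed of the whole triple), which is exactly the required decomposition.

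I do not expect a genuine obstacle here; this is standard coupling theory and the only real computation is the overlap identity $\rho(\Omega)=1-d_{TV}(p_A,p_B)$. The points meriting care are: (i) the inequality $\tfrac{1-\epsilon}{1-\delta}\le 1$, which is precisely where the hypothesis $d_{TV}(p_A,p_B)\le\epsilon$ enters and which is what makes $p_D$ and $p_F$ legitimate nonnegative probability measures --- this is the ``padding'' that shrinks the maximal overlap from mass $1-\delta$ down to $1-\epsilon$ so that $E$ has exactly the prescribed probability; (ii) interpreting $C\indicator{E}+D\indicator{\neg E}$ as a switch, so the statement covers non-real-valued $A,B$; and (iii) noting that the forward direction needs no independence of $E$, whereas in the converse we simply build $E$ independent of everything. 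All remaining steps (the density bookkeeping and verifying the mixture laws) are routine.
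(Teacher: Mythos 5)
Your proof is correct. Note, however, that the paper does not prove this fact at all: it is imported verbatim as a citation (Lemma 4.1.13 of Roch), so there is no internal argument to compare against; your write-up supplies the standard argument that the citation stands in for. What you do is exactly the right thing: the ``$\Leftarrow$'' direction is the elementary cancellation of the common $C$-term on $E$, bounding $|p_A(S)-p_B(S)|$ by $\prob{\neg E}=\epsilon$ (and you correctly observe independence of $E$ is not needed there), and the ``$\Rightarrow$'' direction is the maximal-coupling construction with the one genuinely non-routine twist this restated version requires, namely the padding step: the overlap measure $\rho=p_A\wedge p_B$ has mass $1-\delta$ with $\delta=d_{TV}(p_A,p_B)$, but the event $E$ must have probability exactly $1-\epsilon\le 1-\delta$, so you rescale via $(1-\epsilon)p_C\le\rho\le p_A$ to make $p_D=\epsilon^{-1}(p_A-(1-\epsilon)p_C)$ and $p_F$ legitimate probability measures and then mix independently. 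That inequality $\tfrac{1-\epsilon}{1-\delta}\le 1$ is indeed where the hypothesis enters, and your ``switch'' reading of $C\indicator{E}+D\indicator{\neg E}$ matches how the paper uses the fact (e.g., in the proofs of Lemmas~\ref{lemma:fresh-randomness} and~\ref{lemma:reuse-randomness}). Two minor observations: as stated the fact only makes sense for $\epsilon\le 1$ (for $\epsilon>1$ no event has probability $1-\epsilon$), which your restriction to $\epsilon\in(0,1]$ implicitly fixes; and your $\epsilon=1$ case with $E=\emptyset$ is fine since the empty event is independent of everything. No gaps.
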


This fact implies the following intuitive interpretation of pseudo-independence. $\smash{\cA_{\xi, \chi}}$ is $\smash{(\epsilon, \delta)}$-pseudo-independent of $\xi$ if wp.\ $1-\delta$ over the draw of $\smash{s \sim \cD_{\xi}}$, $\smash{\cA_{\xi, \chi}}$ is \emph{almost} independent of the first source of randomness---in the sense that wp.\ $1-\epsilon$ over the draw of $\chi$, $\cA_{\xi=s, \chi}(\ba)$ is equal in distribution to a random variable that is \emph{independent} of $\xi.$

Now, we briefly compare pseudo-independence to pseudo-determinism and reproducibility. The term \emph{pseudo-deterministic algorithm}---introduced by \cite{Gat2011ProbabilisticSA}---describes an algorithm that outputs a deterministic value with high probability. This is formalized with the following definition.

\begin{definition}[$\delta$-pseudo-deterministic algorithm]\label{def:pseudo-det} Let $\cA_{\xi}$ be a randomized algorithm which takes an input $\bm{u} \in \R^d$ and a random seed $\xi \sim \cD_{\xi}$. $\cA_{\xi}$ is \emph{$\delta$-pseudo-deterministic} if there exists a function $h$ over $\R^d$ such that $\probSub{\cA_{\xi=s}(\ba) = h(\ba)}{s \sim \cD_{\xi}} \geq 1-\delta.$
 
\end{definition}

Intuitively, in a $\delta$-pseudo-deterministic algorithm $\cA_\xi$, the role of $h$ is similar to that of a smoothing (Definition~\ref{def:epsdelta-smoothing}) in the definition of pseudo-independence (Definition~\ref{def:pseudo-independence}); however, $h$ must be deterministic whereas as a smoothing can be a randomized algorithm. To formalize this intuition we need to introduce some additional notation because pseudo-determinism is defined in terms of a single source of randomness, whereas pseudo-independence (Definition~\ref{def:pseudo-independence}) is defined in terms of two sources of randomness. Thus, to compare pseudo-determinism to pseudo-independence, we define a simple way to \emph{lift} a single-seed randomized algorithm $\cA_{\xi}$ to two sources of randomness. 

\begin{definition} Let $\cA_{\xi}$ be a randomized algorithm which takes an input $\ba \in \R^d$ and a random seed $\xi \sim \cD_{\xi}$. Let $\smash{\cA^{\mathrm{pd}}_{\xi, \chi}(\ba)}$ be the randomized algorithm which takes input $\ba \in \R^d$ and seeds $\xi \sim \cD_{\xi}$ and $\chi \sim \cD_{\chi}^x$ where $\supp(\cD_\chi) = \{0\}$; and maps $\ba \mapsto \cA_{\xi}(\ba)$. 
\end{definition}

That is, $\smash{\cA^{\mathrm{pd}}_{\xi, \chi}}$ is identical to $\cA_{\xi}$; however, it accepts an additional 0-bit ``dummy'' random seed $\chi$. The next lemma explains how pseudo-independence captures pseudo-determinism as a special case. 

\begin{restatable}{lemma}{connectnotions}\label{lemma:connect-notions} $\cA_{\xi}$ is $\delta$-pseudo-deterministic if and only if $\cA^{\mathrm{pd}}_{\xi, \chi}$ is $(0, \delta)$-pseudoindependent of $\xi$. 
\end{restatable}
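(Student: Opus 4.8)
The plan is to unwind both definitions and observe that the degenerate seed $\chi$ collapses everything to the pseudo-determinism setting. Recall that by construction $\supp(\cD_\chi^{\ba}) = \{0\}$ for every $\ba \in \R^d$, so the seed $\chi$ carries no randomness. Consequently, for each fixed $\ba$ and each fixed $s \in \supp(\cD_\xi)$, the algorithm $\cA^{\mathrm{pd}}_{\xi = s, \chi}(\ba)$ is deterministic and outputs $\cA_{\xi=s}(\ba)$; its law $p_{\cA^{\mathrm{pd}}_{\xi=s,\chi}(\ba)}$ is the point mass at $\cA_{\xi=s}(\ba)$. Likewise, any candidate smoothing $\bar{\cA}_\chi$ with respect to $\xi$ is, by the paper's convention (Section~\ref{sec:intro:prelim}), a randomized algorithm whose only randomness source is the now-degenerate seed $\chi$; hence it too is deterministic, and it computes some function $h\colon \R^d \to \R^p$ with $p_{\bar{\cA}_\chi(\ba)}$ equal to the point mass at $h(\ba)$. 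Since $d_{TV}$ is a metric on probability measures, the requirement $\tv{p_{\cA^{\mathrm{pd}}_{\xi=s,\chi}(\ba)}}{p_{\bar{\cA}_\chi(\ba)}} \le 0$ is equivalent to equality of these two point masses, i.e., to $\cA_{\xi=s}(\ba) = h(\ba)$.

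With this reduction in hand, the biconditional is immediate. For the ``if'' direction, suppose $\cA^{\mathrm{pd}}_{\xi,\chi}$ is $(0,\delta)$-pseudoindependent of $\xi$, and let $\bar{\cA}_\chi$ be a $(0,\delta)$-smoothing, computing the function $h$ as above. Definition~\ref{def:pseudo-independence} (via Definition~\ref{def:epsdelta-smoothing}) states that for all $\ba$, $\probSub{\tv{p_{\cA^{\mathrm{pd}}_{\xi=s,\chi}(\ba)}}{p_{\bar{\cA}_\chi(\ba)}} \le 0}{s \sim \cD_\xi} \ge 1-\delta$, which by the previous paragraph is exactly $\probSub{\cA_{\xi=s}(\ba) = h(\ba)}{s\sim\cD_\xi} \ge 1-\delta$; thus $\cA_\xi$ is $\delta$-pseudo-deterministic with witness $h$. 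For the ``only if'' direction, suppose $\cA_\xi$ is $\delta$-pseudo-deterministic with witness $h$; define $\bar{\cA}_\chi(\ba) := h(\ba)$, ignoring the seed $\chi$. Then $\bar{\cA}_\chi$ takes an input $\ba$ and a (trivial) seed $\chi \sim \cD_\chi^{\ba}$, $p_{\bar{\cA}_\chi(\ba)}$ is the point mass at $h(\ba)$, and the pseudo-determinism guarantee gives $\probSub{\tv{p_{\cA^{\mathrm{pd}}_{\xi=s,\chi}(\ba)}}{p_{\bar{\cA}_\chi(\ba)}} \le 0}{s\sim\cD_\xi} = \probSub{\cA_{\xi=s}(\ba)=h(\ba)}{s\sim\cD_\xi}\ge 1-\delta$ for every $\ba$; hence $\bar{\cA}_\chi$ is a $(0,\delta)$-smoothing and $\cA^{\mathrm{pd}}_{\xi,\chi}$ is $(0,\delta)$-pseudoindependent of $\xi$.

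The only real subtlety — and the step I would be most careful about — is the claim that a smoothing of $\cA^{\mathrm{pd}}_{\xi,\chi}$ with respect to $\xi$ is necessarily deterministic, so that ``the function $h$ computed by $\bar{\cA}_\chi$'' is well-defined. This relies on the paper's convention that the algorithms under consideration use at most the two randomness sources $\xi$ and $\chi$, so that removing the $\xi$-dependence and specializing to the degenerate $\chi$ leaves no randomness; I would state this explicitly. Everything else is bookkeeping: matching the $\epsilon = 0$ case of the TV condition to exact equality (using $d_{TV}\ge 0$ together with $d_{TV}(\delta_a,\delta_b)=\indicator{a\neq b}$), and checking that the trivial seed $\chi$ plays the role of the ``$0$-bit dummy'' seed consistently in both directions.
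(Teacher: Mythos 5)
Your proof is correct and follows essentially the same route as the paper's: both directions identify a $(0,\delta)$-smoothing with a deterministic function $h$ (the paper writes it as $\ba \mapsto \cA'_{\chi=0}(\ba)$) and translate the $\epsilon=0$ TV condition on point masses into exact equality $\cA_{\xi=s}(\ba)=h(\ba)$. Your explicit remark that the smoothing is forced to be deterministic because its only randomness source is the degenerate seed $\chi$ is the same fact the paper uses implicitly when it evaluates the smoothing at $\chi=0$.
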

\begin{proof} First, suppose that $\cA_{\xi}$ is $\delta$-pseudo-deterministic. Then, there exists a function $h$ such that 
\begin{align*}
   1-\delta \geq \probSub{{\cA}_{\xi=s}(\ba) = h(\ba)}{s \sim \cD_{\xi}} = \probSub{\tv{p_{\cA^{\mathrm{pd}}_{\xi=s, {\chi}}(\bu)}}{p_{h(\ba)}} = 0}{s \sim \cD_{\xi}}\,. 
\end{align*}
Hence, if $\cA'_{\chi}$ is the algorithm that deterministically maps $\ba \mapsto h(\ba)$ then $\cA'_{\chi}$ is a $(0, \delta)$-smoothing of $\cA^{\mathrm{pd}}_{\xi, \chi}$ with respect to $\xi$. Thus, $\cA^{\mathrm{pd}}_{\xi, \chi}$ is $(0, \delta)$- pseudo-independent with respect to $\xi$. 

Conversely, if $\cA^{\mathrm{pd}}_{\xi, \chi}$ is $(0, \delta)$-pseudo-independent with respect to $\xi$ then let $\cA'_{\chi}$ be a $(0, \delta)$-smoothing of $\cA^{\mathrm{pd}}_{\xi, \chi}$ with respect to $\xi$. Then, 
\begin{align*}
     1-\delta &\geq \probSub{\tv{p_{\cA^{\mathrm{pd}}_{\xi=s, {\chi}}(\ba)}}{p_{\cA'_{\chi}(\ba)}} = 0}{s \sim \cD_{\xi}} \\
     &= \probSub{{\cA}_{\xi=s}(\ba) = \cA'_{\chi = 0}(\ba)}{s \sim \cD_{\xi}}\,.
\end{align*}
Letting $h$ to be the function mapping $\ba \mapsto \cA'_{\chi = 0}(\ba)$, we see that $\cA_{\xi}$ is $\delta$-pseudo-deterministic. 
\end{proof}

On the other hand, \emph{reproducibility} is a related notion introduced in \citet{impagliazzo2022reproducibility}. 

\begin{definition}[$(\delta, \cD)$-reproducibile algorithm]\label{def:reproduce} Let $\cA_{\xi}$ be a randomized algorithm which takes an input $\ba \in \R^d$ and a random seed $\xi \sim \cD_{\xi}$. Let $\cD$ be a distribution over $\R^d$. $\cA_{\xi}$ is \emph{$(\delta, \cD)$-reproducible} if there is a function $h$ over $\supp(\cD_\xi)$ such that $\probSub{\cA_{\xi=s}(\ba) = h(s)}{\ba \sim \cD, s \sim \cD_{\xi}} \geq 1-\delta.$
\end{definition}

Reproducibility asks that with high probability over the draw of an input sample $\ba \sim \cD$ \emph{and} of the random seed $s \sim \cD_\xi$, the algorithm outputs a deterministic function of the realized seed: $h(s)$. On the other hand, pseudo-determinism asks that for \emph{every} input $\ba$, with high probability over the random draw $s \sim \cD_\xi$, the randomized algorithm outputs a deterministic function of the input $h(\ba)$. Finally, pseudoindependence asks that for \emph{every} input $\ba$, a randomized algorithm should be \emph{almost independent} of one of its random seeds in the sense that with high probability over the draw of $s \sim \xi$, its output is close in total-variation distance to a randomized algorithm which is completely oblivious of $s$.

\subsection{Analysis of the sample-reuse framework}\label{sec:proof}

In this section, we given an outline of our proof of Theorem~\ref{thm:AtoC}. Our discussion here expands on the intuitive explanation in Section~\ref{sec:sample-reuse-overview}. To reduce notational clutter, throughout this section, we omit the superscript $\mathrm{sub}$ from ${\cA'}^{\mathrm{sub}}_{\xi, \chi'}$ and refer to it just as ${\cA'_{\xi, \chi'}}$. 

\paragraph{Inducing pseudoindependence in the noisy solver.}
Recall that in Section~\ref{sec:sample-reuse-overview}, we said that the first step in our proof of Theorem~\ref{thm:AtoC} would be to show that $\cA'_{\xi, \chi'}$ in Meta-Algorithm~\ref{alg:meta-algorithm} is pseudo-independent of $\xi$ when the noise parameter $\tau$ is set appropriately. Concretely, we prove the following theorem in Appendix~\ref{apx:stable_proof}. 
 
\begin{restatable}{theorem}{conversionthm}\label{thm:conversion-thm} Let $\epsilon, \delta \in (0, 1)$, $\eta > 0$, $\eta' \defeq \min(\eta/2, \eta\epsilon)$, and let $\cA_{\xi, \chi}$ be a randomized algorithm that is an $(\eta', \delta)$-approximation of a function $f: {\R^d} \to {\R^p}$. Let $\cA'_{\xi, \chi'}$ be the randomized algorithm defined as follows. $\cA'_{\xi, \chi'}$ takes input $\ba \in \R^d$, random seed $\xi \sim \cD_\xi$, and $\chi'$ where $\chi' = (\chi, \bm{\nu}) \sim \cD_{\chi'}^{\ba}$ is the concatenation of an independently drawn seed $\chi \sim \cD_{\chi}^{\ba}$ and seed $\smash{\bm{\nu} \sim \cD_{\nu} \defeq \UniformDist^p(-\tau, \tau)}$ for $\tau \defeq \eta'/(2\epsilon)$.  For any realization $s, c, \bm{e}$ of $\xi, \chi', \bm{\nu}$ and any $\ba \in {\R^d}$, we define $\cA'_{\xi=s, \chi'=(c, \bm{e})}(\ba) = \cA_{\xi = s, \chi = c}(\ba) + \bm{e}$. Then, ${\cA'_{\xi, \chi'}}$ is an $(\epsilon, \delta)$-pseudo-independent of  $\xi$ and an $(\eta, \delta)$-approximation of $f$ with the \emph{same} runtime and query complexities as $\cA_{\xi, \chi}$ up to an additive $O(p)$ in runtime.
\end{restatable}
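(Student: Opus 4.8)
The statement bundles three claims about the noised algorithm $\cA'_{\xi,\chi'}$: that it is an $(\eta,\delta)$-approximation of $f$, that it is $(\epsilon,\delta)$-pseudo-independent of $\xi$, and that its complexity matches $\cA_{\xi,\chi}$ up to an additive $O(p)$ in runtime. I would dispose of two of these immediately. The complexity claim is syntactic: the only change is drawing $\bm{\nu}\in\R^p$ and adding it to the output, which costs $O(p)$ time and makes no new oracle queries. For the approximation claim, on the probability-$(1-\delta)$ event $\{\normInline{\cA_{\xi,\chi}(\ba)-f(\ba)}_\infty\le\eta'\}$ the triangle inequality together with $\normInline{\bm{\nu}}_\infty\le\tau$ gives $\normInline{\cA'_{\xi,\chi'}(\ba)-f(\ba)}_\infty\le\eta'+\tau\le\eta$, where the last step uses $\eta'\le\eta/2$ (the first branch of $\eta'=\min(\eta/2,\eta\epsilon)$) together with $\tau=\eta'/(2\epsilon)\le\eta\epsilon/(2\epsilon)=\eta/2$ (the second branch); since the noise is bounded almost surely this introduces no new failure probability.

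The heart of the proof is pseudo-independence, i.e., exhibiting an $(\epsilon,\delta)$-smoothing of $\cA'_{\xi,\chi'}$ with respect to $\xi$. The natural (non-implementable) candidate is the algorithm $\bar\cA_{\chi'}$ that, on input $\ba$ and seed $\chi'=(\chi,\bm{\nu})$, discards $\chi$ and outputs $f(\ba)+\bm{\nu}$; its output law is manifestly independent of $\xi$, so by Definition~\ref{def:epsdelta-smoothing} it remains to show that for every fixed $\ba$, with probability at least $1-\delta$ over $s\sim\cD_\xi$ one has $d_{TV}\bigl(p_{\cA'_{\xi=s,\chi'}(\ba)},\,p_{\bar\cA_{\chi'}(\ba)}\bigr)\le\epsilon$. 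Fix $\ba$ and $s$. The law of $\cA'_{\xi=s,\chi'}(\ba)$ is the law of $\cA_{\xi=s,\chi}(\ba)+\bm{\nu}$, i.e., a mixture over $\chi$ of $\UniformDist^p(-\tau,\tau)$ boxes recentered at the points $\cA_{\xi=s,\chi}(\ba)$, while the target is the single such box recentered at $f(\ba)$.

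For the per-$s$ bound I would use convexity of $d_{TV}$ in its first argument (or, equivalently, the coupling characterization of TV distance, Fact~\ref{lemma:tv_decomposition}): $d_{TV}\bigl(p_{\cA'_{\xi=s,\chi'}(\ba)},p_{\bar\cA_{\chi'}(\ba)}\bigr)\le \E_{\chi}\bigl[d_{TV}(\text{box at }\cA_{\xi=s,\chi}(\ba),\text{box at }f(\ba))\bigr]$, where each summand is either at most $1$ (when $\chi$ is a ``failure'', $\normInline{\cA_{\xi=s,\chi}(\ba)-f(\ba)}_\infty>\eta'$) or, on the complementary event, at most $1-\prod_{i\in[p]}\bigl(1-\tfrac{|w_i|}{2\tau}\bigr)\le 1-(1-\epsilon)^p$ by the exact formula for the TV distance between two equal-volume axis-aligned boxes whose centers differ by $w$, using $|w_i|\le\eta'$ and $\eta'/(2\tau)=\epsilon$. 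Thus the per-$s$ distance is at most the per-$s$ failure probability $\ProbOp_{\chi}[\normInline{\cA_{\xi=s,\chi}(\ba)-f(\ba)}_\infty>\eta']$ plus the recentering term. Finally, the $(\eta',\delta)$-approximation hypothesis is exactly the statement $\E_{s\sim\cD_\xi}\bigl[\ProbOp_{\chi}[\normInline{\cA_{\xi=s,\chi}(\ba)-f(\ba)}_\infty>\eta']\bigr]\le\delta$, so Markov's inequality controls the fraction of ``bad'' oblivious seeds $s$, and combining this with the per-$s$ bound yields the pseudo-independence guarantee.

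I expect the main obstacle to be this last accounting rather than any single clever step: arranging that the two independent sources of loss — the uniform-recentering miss and the sub-solver's own failure probability — fit simultaneously inside the $(\epsilon,\delta)$ budget, and, in the genuinely multi-coordinate case, tracking the product-of-marginals loss $1-(1-\epsilon)^p$ incurred when recentering a $p$-dimensional box. This is also precisely where the calibration $\tau=\eta'/(2\epsilon)$ with $\eta'=\min(\eta/2,\eta\epsilon)$ does its work: $\eta'$ is small enough that the added noise cannot spoil the $\eta$-accuracy, while $\tau$ is large relative to $\eta'$ so that a recentered uniform lands back in its own box with probability $1-\epsilon$ per coordinate.
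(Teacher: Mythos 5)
Your route is the same as the paper's: the identical smoothing $\bar{\cA}_{\chi'}(\ba)=f(\ba)+\bm{\nu}$, the same triangle-inequality verification of the $(\eta,\delta)$-approximation claim (your check $\eta'+\tau\leq\eta$ is the right one), and the same overlap computation for two recentered uniform boxes. Your handling of the second seed is in fact more careful than the paper's: you decompose the law of $\cA'_{\xi=s,\chi'}(\ba)$ as a mixture over $\chi$, use convexity of total variation, and invoke Markov to convert the joint $(\xi,\chi)$ failure probability into a statement about most oblivious seeds; the paper skips this step and implicitly treats the $(\eta',\delta)$ guarantee as though, for a $1-\delta$ fraction of seeds $s$, the output were within $\eta'$ of $f(\ba)$ for every $\chi$.

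However, the two issues you yourself flag as ``the main obstacle'' are genuine gaps that your outline does not close, and with the stated calibration they cannot be closed. First, with $\eta'/(2\tau)=\epsilon$, the TV between the box centered at $\cA_{\xi=s,\chi=c}(\ba)$ and the box centered at $f(\ba)$, when the centers differ by $w$ with $\normInline{w}_\infty\leq\eta'$, equals $1-\prod_{i\in[p]}\bigl(1-|w_i|/(2\tau)\bigr)$, which can be as large as $1-(1-\epsilon)^p\approx p\epsilon$; this is not at most $\epsilon$ once $p\geq 2$. (The paper's own proof asserts a bound of $(\eta'/(2\tau))^p$ at this point, which goes the wrong way; your honest accounting exposes that inaccuracy rather than inheriting it.) Second, Markov applied to $\E_{s\sim\cD_\xi}\bigl[\ProbOp_{\chi}\{\normInline{\cA_{\xi=s,\chi}(\ba)-f(\ba)}_\infty>\eta'\}\bigr]\leq\delta$ only gives that the conditional $\chi$-failure mass exceeds $\epsilon/2$ for at most a $2\delta/\epsilon$ fraction of seeds, so you cannot simultaneously keep the seed-failure probability at $\delta$ and absorb the conditional failure into the $\epsilon$ budget without degrading one of the two parameters. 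Hence $(\epsilon,\delta)$-pseudo-independence does not follow from your argument as written (nor, strictly, from the paper's) with $\tau=\eta'/(2\epsilon)$ and $\eta'=\min(\eta/2,\eta\epsilon)$. The repair is a recalibration rather than a new idea: require $\cA_{\xi,\chi}$ to be an $(\eta',\delta\epsilon/2)$-approximation with $\eta'\leq\min(\eta/2,\eta\epsilon/(2p))$ and set $\tau=p\eta'/\epsilon$; then each coordinate's recentering loss is $\epsilon/(2p)$ so the product loss is at most $\epsilon/2$, Markov makes the conditional failure at most $\epsilon/2$ outside a $\delta$-fraction of seeds, and $\eta'+\tau\leq\eta$ still holds. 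Because the sub-solvers are high-accuracy this costs only polylogarithmic factors downstream, but it does change the constants in the theorem statement; your proposal stops one step short of recognizing that such an adjustment is forced.
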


\begin{proof} Let $\cA_{\xi, \chi}$ be the randomized algorithm which takes input $\bx \in \R^d$, random seed $\xi \sim \cD_\xi$, and $\chi$ where $\chi = (\chi', \nu) \sim \cD_{\chi}^{\bx}$ is the concatenation of an independently drawn seed $\chi' \sim \cD_{\chi'}^{\bx}$ and seed $\smash{\nu \sim \cD_{\nu} \defeq \UniformDist^p(-t, t)}$, where we use $\UniformDist^p$ to denote the distribution of a $p$-dimensional independent uniform random vector, and $t$ is some parameter to be specified later in this proof.
 
For any realization $s, c, \bm{e}$ of $\xi, \chi', \bm{\nu}$, let $\cA_{\xi=s, \chi=(c, \bm{e})}(x) = \cA_{\xi = s, \chi' = c} + \bm{e}$. That is, on a given input $\bx$, $\cA_{\xi,\chi}(\bx)$ has the same distribution of a random variable which is equal to $\cA_{\xi,\chi'}(\bx)$ plus some independent $\UniformDist(-t, t)$ random noise in each coordinate. 

First, we construct a smoothing for $\cA_{\xi, \chi}$. Let $\bar{\cA}_{\chi}$ be the randomized algorithm which takes input $\bx\in \R^d$ and a random seed $\chi \sim \cD_{\chi}^{\bx}$. For any realization $(c, \bm{e})$ of $\chi = (\chi', \bm{\nu})$, let $\bar{\cA}_{\chi}(\bx) = f(\bx) + \bm{\nu}$. Now, using the fact that 
\begin{align}\label{eq:target-closeness}
    \Prob_{\xi \sim \cD_{\xi}, \chi \sim \cD_{\chi'}^{\bx}}\paren{\Vert \cA'_{\xi, \chi'}(\bx) - f(\bx) \Vert_\infty \leq \eta'} \geq 1 - \delta, 
\end{align}
for $p \geq 1$ and $\eta' < t$,(since the volume of the $p$-timensional hypercube of side length $s$ is $s^p$):
\begin{align*}
    \probSub{\tv{p_{A_{\xi = s, \chi}(x)}}{p_{\bar{A}_{\chi}(\bx)}} \leq {\frac{\eta'}{2t}}}{s \sim \cD_{\chi}} &\geq \probSub{\tv{p_{A_{\xi = s, \chi}(x)}}{p_{\bar{A}_{\chi}(\bx)}} \leq \paren{\frac{\eta'}{2t}}^p}{s \sim \cD_{\chi}} \\
    &\geq 1-\delta.
\end{align*}

Next, we need to show that $\Prob_{\xi \sim \cD_{\xi}}(\Vert \cA_{\xi, \chi}(x) - f(x) \Vert_\infty \geq \eta) \leq \delta.$ Once again, using \eqref{eq:target-closeness}, we see that wp.\ $1-\delta$ over the draw of $\xi$, $\Vert \cA_{\xi, \chi}(x) - f(x) \Vert_\infty \leq \eta' + 2t \leq \eta$ whenever $\eta', 2t \leq \eta/2$. Consequently, when $t = {\eta'}/{2\epsilon}$, $\bar{\cA}_{\chi}$ is an $(\epsilon, \delta)$-smoothing for $\cA_{\xi, \chi}$; further, when $\eta' \leq \min(\eta/2, \eta\epsilon)$, $\Prob_{\xi \sim \cD_{\xi}}(\Vert \cA_{\xi, \chi}(x) - f(x) \Vert_\infty \geq \eta) \leq \delta$ as well. This completes the proof of the first guarantee of $\cA_{\xi, \chi}$. For the second guarantee, note that $\cA_{\xi, \chi}$ has the same runtime and query complexities up to an additive $O(p)$ increase in the runtime due to the cost of adding the $p$-dimensional uniform random noise induced by $\bm{\nu}$.
\end{proof}

We discuss implications to numerical stability further in Appendix~\ref{apx:stable_proof}. 

Theorem~\ref{thm:conversion-thm} show how to convert standard high-accuracy structured optimization algorithms into \emph{pseudo-independent} high-accuracy structured optimization algorithms (Section~\ref{sec:pseudoindependence}) by adding noise to the output. Next, we will prove these {pseudo-independent} versions are amenable to \emph{reusing} samples across multiple iterations of an outer-solver (Figure~\ref{fig:diagram}.) We briefly remark that this technique of adding noise to an algorithm's output also appears in differential privacy \citep{asi2020near, ghazi2022differentially} and in the design and analysis of algorithms that are robust to adaptive adversaries \citep{chen2022maximum, lee2015efficient, beimel2022dynamic, brandmaxflow}.    

\paragraph{Analyzing repeated compositions of pseudoindependent functions}

Once we know that $\cA'_{\xi, \chi'}$ in Meta-Algorithm~\ref{alg:meta-algorithm} is \emph{pseudo-independent} of $\xi$ for appropriately chosen $\tau$, we move on to proving Theorem~\ref{thm:AtoC}. 

Recall that, as outlined in Section~\ref{sec:pseudoindependence}, our goal will be to show that if ${\cA_{\xi, \chi'}}$ in Meta-Algorithm~\ref{alg:meta-algorithm} is ($\epsilon, \delta$)-pseudo-independent of $\xi$, then the distribution over outputs of $\Outer(X; \noisy, \tau)$ and $\Outer(X; \reuse, \tau)$ are close in TV. 
To prove this, first observe that $\Noisy_\tau$ repeatedly composes $\zeta$ with $\cA'_{\xi, \chi'}$. Definition~\ref{def:composition} introduces notation for these compositions. 

In what follows, Eq. \eqref{eq:comp-dist-A} represents an algorithm that repeatedly applies $T$ iterations, where in each iteration $\xi, \chi$ are fresh random variables drawn from their respective distributions (as in $\Noisy_\tau$). Eq. \eqref{eq:comp-partial-A} is the analogous algorithm where in the $i$-th iteration a fresh $\chi'$ is drawn from $\cD_{\chi'}^{\ba}$, {but} $\xi = s$ is \emph{reused} across iterations (as in $\Reuse_\tau$.) Eq. \eqref{eq:comp-dist-Ap} is the analogous expression to \eqref{eq:comp-dist-A} with the randomized algorithm $\bar{\cA}_{\chi'}$, which takes only \emph{one} source of randomness, $\chi'$.  

\begin{definition}[Composition of randomized algorithms]\label{def:composition} Let $\cA'_{\xi, \chi'}$ be a randomized algorithm which takes an input $\ba \in {\R^d}$ and random seeds $\xi \sim \cD_{\xi}, \chi' \sim \cD_{\chi'}^x$ and outputs a point in ${\R^p}$. Let $\bar{\cA}_{\chi'}$ be a randomized algorithm that takes an input $\ba \in {\R^d}$ and one random seed $\chi' \sim \cD_{\chi'}^x$ and outputs a point in ${\R^p}$. Let $\zeta: {\R^d} \times {\R^p} \to {\R^d}$ be a deterministic function. For $T \geq 1$, let: 
\begin{equation}
\begin{split}
\Phi_{\cA'}^1(\ba; \cD_{\xi}, \mathsf{D}_{\chi'}) 
&\;\defeq\; \zeta\bigl(\ba, \cA'_{\xi, \chi'}(\ba)\bigr),\\
\Phi_{\cA'}^{T+1}(\ba; \cD_{\xi}, \mathsf{D}_{\chi'})
&\;\defeq\; \zeta\Bigl(\Phi_{\cA'}^{T}(\ba; \cD_{\xi}, \mathsf{D}_{\chi'}),\;
 \cA'_{\xi, \chi'}\bigl(\Phi_{\cA'}^{T}(\ba; \cD_{\xi}, \mathsf{D}_{\chi'})\bigr)\Bigr)
\end{split}
\label{eq:comp-dist-A}
\end{equation}
\begin{equation}
\begin{split}
\Phi_{\cA'}^1(\ba; s, \mathsf{D}_{\chi'}) 
&\;\defeq\; \zeta\bigl(\ba, \cA'_{\xi=s, \chi'}(\ba)\bigr),\\
\Phi_{\cA'}^{T+1}(\ba; s, \mathsf{D}_{\chi'}) 
&\;\defeq\; \zeta\Bigl(\Phi_{\cA'}^{T}(\ba; s, \mathsf{D}_{\chi'}),\;
  \cA'_{\xi=s, \chi'}\bigl(\Phi_{\cA'}^{T}(\ba; s, \mathsf{D}_{\chi'})\bigr)\Bigr)
\end{split}
\label{eq:comp-partial-A}
\end{equation}
\begin{equation}
\begin{split}
H_{\bar{\cA}}^1(\ba; \mathsf{D}_{\chi'}) 
&\;\defeq\; \zeta\bigl(\bar{\cA}_{\chi'}(\ba)\bigr),\\
H_{\bar{\cA}}^{T+1}(\ba; \mathsf{D}_{\chi'}) 
&\;\defeq\; \zeta\Bigl(\Phi_{\bar{\cA}}^{T}(\ba; \mathsf{D}_{\chi'}),\;
  \bar{\cA}_{\chi'}\bigl(\Phi_{\bar{\cA}}^{T}(\ba; \mathsf{D}_{\chi'})\bigr)\Bigr)
\end{split}
\label{eq:comp-dist-Ap}
\end{equation}
\end{definition}

In Section~\ref{sec:bound-difference}, we show the following general theorem about pseudoindependence and repeated composition.

\begin{restatable}{theorem}{pseudoindependencemain}\label{lem:pseudoindependencemain}Let $\cA'_{\xi, \chi'}$ be randomized algorithm which takes an input $\ba \in {\R^d}$ and two random seeds $\xi \sim \cD_{\xi}, \chi' \sim \cD_{\chi'}^\ba$ and is ($\epsilon, \delta$)-pseudo-independent of  $\xi$. Then, 
\[
d_{TV}\big({p_{\Phi^T_{\cA'}(\ba; s, \mathsf{D}_{\chi'})}}, {p_{\Phi^T_{\cA'}(\ba; \cD_\xi, \mathsf{D}_{\chi'})}}\big) \leq 2T(\delta + \epsilon)\,.
\]
\end{restatable}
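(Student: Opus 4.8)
The plan is to route both composed processes through a common ``anchor'': the composition $H^T_{\bar\cA}(\ba;\mathsf{D}_{\chi'})$ of \eqref{eq:comp-dist-Ap} built from a smoothing $\bar\cA_{\chi'}$ of $\cA'_{\xi,\chi'}$ with respect to $\xi$, which exists by $(\epsilon,\delta)$-pseudo-independence (Definition~\ref{def:pseudo-independence}). Because $\bar\cA_{\chi'}$ uses none of the $\xi$-randomness, the triangle inequality for $d_{TV}$ reduces the claim to the two bounds $d_{TV}\!\big(p_{\Phi^T_{\cA'}(\ba;\cD_\xi,\mathsf{D}_{\chi'})},\,p_{H^T_{\bar\cA}(\ba;\mathsf{D}_{\chi'})}\big)\le T(\delta+\epsilon)$ and $d_{TV}\!\big(p_{\Phi^T_{\cA'}(\ba; s,\mathsf{D}_{\chi'})},\,p_{H^T_{\bar\cA}(\ba;\mathsf{D}_{\chi'})}\big)\le T(\delta+\epsilon)$, where in the second inequality $s\sim\cD_\xi$ is the single reused seed, so its left side is the marginal law of the reuse process. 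The one quantitative ingredient for both legs is a one-step estimate: for every fixed input $\ba$, splitting on whether $s$ lies in the good set $\{s: d_{TV}(p_{\cA'_{\xi=s,\chi'}(\ba)},p_{\bar\cA_{\chi'}(\ba)})\le\epsilon\}$ (probability $\ge1-\delta$ by the smoothing guarantee) and bounding $d_{TV}$ by $1$ off it gives $\E_{s\sim\cD_\xi}\,d_{TV}(p_{\cA'_{\xi=s,\chi'}(\ba)},p_{\bar\cA_{\chi'}(\ba)})\le\delta+\epsilon$.

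For the first bound I would run a hybrid over the $T$ iterations: let $G_j$ apply $\zeta(\cdot,\bar\cA_{\chi'}(\cdot))$ on iterations $1,\dots,j$ and $\zeta(\cdot,\cA'_{\xi,\chi'}(\cdot))$ with a \emph{fresh} $\xi$ on iterations $j+1,\dots,T$, so $G_0=\Phi^T_{\cA'}(\ba;\cD_\xi,\mathsf{D}_{\chi'})$ and $G_T=H^T_{\bar\cA}(\ba;\mathsf{D}_{\chi'})$. Coupling $G_{j-1}$ and $G_j$ to share all sub-seeds, they produce the same iterate $\ba_{j-1}$, differ only in whether iteration $j$ uses $\cA'_{\xi_j,\chi'}$ (fresh $\xi_j$, hence independent of $\ba_{j-1}$) or $\bar\cA_{\chi'}$, and then apply the same randomized tail map; the one-step estimate together with joint convexity of $d_{TV}$ (to average over $\xi_j$ and over $\ba_{j-1}$) and the data-processing inequality give $d_{TV}(G_{j-1},G_j)\le\delta+\epsilon$, and summing over $j\in[T]$ finishes. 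The second bound uses the \emph{same} hybrid skeleton but with the order chosen carefully relative to the reused seed: let $\hat G_j$ apply $\zeta(\cdot,\bar\cA_{\chi'}(\cdot))$ on iterations $1,\dots,j$ and $\zeta(\cdot,\cA'_{\xi=s,\chi'}(\cdot))$ with the \emph{single reused} $s$ on iterations $j+1,\dots,T$, so $\hat G_0=\Phi^T_{\cA'}(\ba; s,\mathsf{D}_{\chi'})$ and $\hat G_T=H^T_{\bar\cA}(\ba;\mathsf{D}_{\chi'})$. The point of this ordering is that in both $\hat G_{j-1}$ and $\hat G_j$ the seed $s$ is used only from iteration $j$ on, so the iterate $\ba_{j-1}$ entering iteration $j$ is produced purely by $\bar\cA_{\chi'}$ and is therefore independent of $s$; coupling the two hybrids on the same $s$ and sub-seeds, they differ only at iteration $j$, the one-step estimate applies precisely because $\ba_{j-1}$ is independent of $s$, and the common tail map $(\ba_j,s)\mapsto(\text{run }T-j\text{ further }\cA'_{\xi=s,\chi'}\text{ steps})$ carries the bound to the outputs by data processing. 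Summing over $j\in[T]$ and combining the two legs via the triangle inequality yields $2T(\delta+\epsilon)$.

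The step I expect to be the main obstacle is exactly the reuse-versus-anchor comparison: the single seed $s$ is correlated with every iterate the reuse process visits, so one cannot apply the one-step estimate ``in place'' inside the reuse chain. The hybrid ordering above---peeling off $\bar\cA_{\chi'}$ steps from the front, so that the first use of $s$ always acts on an $s$-independent iterate---is the device that breaks this correlation; an equivalent route, using Fact~\ref{lemma:tv_decomposition} (maximal coupling), is to run an independent copy of the anchor chain and couple the reuse chain to it step by step, observing that while the coupling survives the current iterate equals the anchor iterate and is thus independent of $s$, so each step fails only through a bad $s$ (probability $\le\delta$) or a coupling miss (probability $\le\epsilon$), and a union bound over the $T$ steps recovers the same $T(\delta+\epsilon)$. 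In either presentation the moral is that each application of $\cA'_{\xi,\chi'}$ is $(\epsilon,\delta)$-indistinguishable from an $\xi$-free step, and charging the comparison against the $\xi$-free anchor makes the error accumulate only linearly in $T$.
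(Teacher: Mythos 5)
Your proposal is correct and follows essentially the same route as the paper: both pass through the $\xi$-free anchor $H^T_{\bar{\cA}}(\ba;\mathsf{D}_{\chi'})$ via the triangle inequality and charge each of the $T$ steps a cost of $\delta+\epsilon$ on each leg, exactly as in the paper's Lemmas~\ref{lemma:fresh-randomness} and~\ref{lemma:reuse-randomness}. The only difference is presentational: the paper proves the fresh-seed leg by induction using the maximal-coupling characterization (Fact~\ref{lemma:tv_decomposition}) and deterministic data processing (Fact~\ref{fact:transformation}), and proves the reuse leg by the step-by-step coupling you describe as your ``equivalent route,'' whereas your primary write-up phrases both legs as hybrid arguments with the expected-TV one-step estimate.
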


Finally, to prove Theorem~\ref{thm:AtoC}, we can combine the previous results.

\atoc*
\begin{proof} For notational convenience, set $\epsilon = \delta$ and $\tau = \eta'/(2\epsilon)$. Notice that 
\begin{align}
    \Outer(X; \noisy, \tau) &\disteq \sum_{t \in [\nOuter]} w(t) \cdot \Phi^{t}_{{\cA'}}(\ba_0; \cD_\xi, \mathsf{D}_{\chi'}), \label{eq:first-version} \\
    \Outer(X; \reuse, \tau) &\disteq \sum_{t \in [\nOuter]} w(t) \cdot \Phi^{t}_{{\cA'}}(\ba_0; s_1, \mathsf{D}_{\chi'}), \label{eq:second-version}
\end{align}
where $s_1 \sim \cD_\xi$. Since $\tau \leq \eta / 2$ and $\eta' \leq \eta / 2$, by Lemma~\ref{lemma:AtoB}, \eqref{eq:first-version} is a $\beta$-accurate solution to $X$ wp.\ $1 - \nOuter\delta$. 

Also, by Theorem~\ref{thm:conversion-thm}, ${\cA'}^\sub_{\xi, \chi'}$ is $(\epsilon, \delta)$-pseudo-independent of $\xi$. Thus, by Theorem~\ref{lem:pseudoindependencemain}, the TV distance between the $t$-th summand of \eqref{eq:first-version} and the $t$-th summand of \eqref{eq:second-version} is bounded by $2t(\delta + \epsilon)$.

As \eqref{eq:first-version} is $\beta$-accurate wp.\ $1 - \nOuter\delta$, we can apply Fact~\ref{lemma:tv_decomposition} and take union bound over all $t \in [\nOuter]$ to conclude that wp.\ $1 - 2\nOuter^2 (\delta + \epsilon) - \nOuter \delta \geq 1 - 5 \nOuter^2\delta$, \eqref{eq:second-version} is also $\beta$-accurate.
\end{proof}
\section{Application: Finite-sum minimization}\label{sec:finite-sum-minimization}

In this section, we apply our sample reuse framework to obtain improved full batch versus sample query trade-offs for finite sum minimization (FSM). As a special case, we discuss implications for generalized linear models. 

We focus on this setting for FSM due to its simplicity as it is perhaps the most illustrative and foundational setting. We consider a more general variant of FSM in Appendix~\ref{apx:nonuniform_smoothnes}, where we obtain rates dependent on the (potentially non-uniform) smoothness $L_i$ of each $f_i$.

FSM arises in many machine learning settings, such as \emph{empirical risk minimization}, where each $f_i$ might be a loss function for a sampled data point $i \in [n]$ and the goal is to minimize the average loss across all the data. In this context, a query $\bm{x}$ to a gradient oracle for $F$ corresponds to making a pass over the \emph{full batch} of data $i \in [n]$ to compute the gradient at $\bm{x}$. Meanwhile, querying a component oracle with $i$ only requires accessing information pertaining to the $i$-th data point. 

Below, we formally define the problem and the batch-sample oracle model and provide a brief sketch of how we obtain our result in Table~\ref{table:main} for FSM (Section~\ref{sec:finite-sum-minimization}).

\begin{definition}[{\color{ForestGreen}{FSM problem}}]\label{def:finite-sum-minimization} In the \emph{FSM problem} we are given $c > 1$ and $\bm{x}_0 \in \R^d$ and must output $\hat{\bm{x}} \in \R^d$ such that $F(\hat{\bm{x}}) - \min_{\bm{x} \in \R^d} F(\bm{x}) \leq 1/c \cdot (F(\bm{\bm{x}}_0) - \min_{\bm{z} \in \R^d} F(\bm{z}))$ where $\smash{F(\bm{x}) \defeq \frac{1}{n}\sum_{i\in[n]} f_i(\bm{x})}$, $F$ is $\mu$ strongly-convex, and each $f_i: \R^d \to \R$ is $L$-smooth and convex.
\end{definition}

\begin{definition}[Gradient oracle - {\color{ForestGreen}{FSM batch oracle}}]\label{def:gradient-oracle} When queried with $\bm{x} \in \R^d$, a \emph{gradient oracle} for differentiable $F : \R^d \to \R$ returns $\nabla F({\bx}) \in \R^d$.
\end{definition}

\begin{definition}[Component oracle - {\color{ForestGreen}{FSM sample oracle}}]\label{def:component-oracle} When queried with $i \in [n]$, a \emph{component oracle} for $F(\bm{x}) = \frac{1}{n} \sum_{i \in [n]} f_i(\bm{x})$ for differentiable $f_i: \R^d \to \R$ returns a \emph{gradient oracle} for $f_i$.
\end{definition}

\paragraph{The standard outer-solver and sub-solver.}

As discussed in Section~\ref{sec:intro:finite-motivate}, state-of-the-art query complexities for FSM can be achieved by with accelerated proximal point/Catalyst (APP) as an outer-solver \citep{frostig2015regularizing, lin2015universal}, $\ell_2$-regularized FSM for the sub-problems, and stochastic variance-reduced gradient descent (SVRG) as a sub-solver \citep{johnson2013accelerating}. Formally, each iteration of APP approximately solves a $\lambda$-regularized \emph{sub-problem}, as follows. 

\begin{definition}[{\color{ForestGreen} FSM sub-problem}]\label{def:fsm-subproblem} 
Let $F$ be as in Definition~\ref{def:finite-sum-minimization} and $\lambda \geq \mu$.  For any ${\ba} = (\bm{x}, \bm{v}) \in \R^d \times \R^d$, let $\rho = (\mu + 2\lambda)/\mu$ and $\bm{y}_{{\ba}} \defeq 1/(1+\rho^{-1/2}) {\bx} + \rho^{-1/2}/(1 + \rho^{-1/2}) \bv$. We define
\begin{align*}
    \fsublam({\ba}) &= \argmin_{\tilde{{\bx}} \in \R^d}  F(\tilde{{\bx}}) + \frac{\lambda}{2} \norm{\tilde{{\bx}} - {\by}_{{\ba}}}_2^2.
\end{align*}
Moreover, we say that $\bm{a'}=(\bx', \bv')$ is a solution to the $({\ba}, \lambda, c)$-sub-problem if 
\begin{align*}
 F({\bx}') - \min_{\tilde{{\bx}} \in \R^d}  F(\tilde{{\bx}}) + \frac{\lambda}{2} \norm{\tilde{{\bx}} - {\by}_{{\ba}}}_2^2 \leq \frac{1}{c} \paren{F({\by}_{{\ba}}) - \min_{\tilde{{\bx}} \in \R^d}  F(\tilde{{\bx}}) + \frac{\lambda}{2} \norm{\tilde{{\bx}} - {\by}_{{\ba}}}_2^2}\,.
\end{align*}
\end{definition}
APP uses the following post-process (recall Meta-Algorithm~\ref{alg:meta-algorithm}). This is essentially intended to implement \emph{acceleration} (sometimes called \emph{momentum}) on the iterates. 

\begin{definition}[{\color{ForestGreen} FSM post-process}]\label{def:fsm-post-process} 
Let $F$ be as in Definition~\ref{def:finite-sum-minimization} and $\lambda > 0$.  For any ${\ba} = ({\bx}, \bv) \in \R^d \times \R^d$ and ${\ba}' = ({\bx}', \bv') \in \R^d \times \R^d$, let $\rho = (\mu + 2\lambda)/\mu$ and $\iota = 2/\mu + 1/\lambda$, and define 
\begin{align*}
    \zeta({\ba}, {\ba}') \defeq (1 - \rho^{-1/2})\bv + \rho^{-1/2} \paren{ \bm{y}_{{\bx}, \bv} - \iota \lambda (\bm{y}_{{\bx}, \bv} - {\bx}')}. 
\end{align*}
\end{definition}

Now, APP is known to solve the original problem, given approximate solutions to these sub-problems, in the following sense. 
 
\begin{theorem}[APP, Theorem 1.1 of \cite{frostig2015regularizing}, restated - {\color{ForestGreen} FSM outer-solver}]\label{thm:approx-prox} Let $F$ be a $\mu$-strongly-convex function and $\lambda \geq \mu$. There is a $c' = \poly(\lambda, \mu, c, d)$ such that the following holds. Suppose that in each iteration $t \in [\nOuter]$ of Algorithm~\ref{alg:fsm}, $\bm{x}_{t-1/2}$ is a solution to the $({\ba}_{t-1}, \lambda, c')$-sub-problem; then $\bm{u}_{\nOuter}$ is a solution to the FSM problem.
\end{theorem}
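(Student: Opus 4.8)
The statement is a restatement of \cite[Theorem 1.1]{frostig2015regularizing}, so the plan is not to re-derive the convergence rate of accelerated proximal point but to verify that Algorithm~\ref{alg:fsm}---that is, Meta-Algorithm~\ref{alg:meta-algorithm} instantiated with $\code{Type} = \traditional$, the FSM sub-problem of Definition~\ref{def:fsm-subproblem}, and the FSM post-process $\zeta$ of Definition~\ref{def:fsm-post-process}---implements exactly the accelerated proximal point (APP/Catalyst) iteration analyzed there, and that the notion of ``solution to the $(\ba_{t-1}, \lambda, c')$-sub-problem'' coincides with the inexactness condition that analysis tolerates. First I would unroll one iteration of the Meta-Algorithm: with state $\ba_{t-1} = (\bx_{t-1}, \bv_{t-1})$, the sub-solver returns $\bx_{t-1/2}$ that is a relative-$1/c'$ approximate minimizer of $F(\cdot) + \tfrac{\lambda}{2}\norm{\cdot - \by_{\ba_{t-1}}}_2^2$, where $\by_{\ba_{t-1}}$ is the convex combination of $\bx_{t-1}$ and $\bv_{t-1}$ from Definition~\ref{def:fsm-subproblem}, and then the post-process sets $\ba_t = \zeta(\ba_{t-1}, \bx_{t-1/2})$ as in Definition~\ref{def:fsm-post-process}.

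Next I would match this against the APP recursion of \cite{frostig2015regularizing}, which maintains an iterate together with an auxiliary (momentum) sequence, forms an extrapolation point, takes an inexact proximal step with parameter $\lambda$, and updates the auxiliary sequence by a fixed linear rule. The point is to identify: (i) $\by_{\ba}$ with their extrapolation point, with momentum coefficient $\rho^{-1/2}$ where $\rho = (\mu + 2\lambda)/\mu$ is the effective condition number of the regularized objective; (ii) $\bv$ with their auxiliary variable; and (iii) the coefficients in Definition~\ref{def:fsm-post-process}---in particular $\iota = 2/\mu + 1/\lambda$ rescaling the proximal displacement $\by_{\ba} - \bx'$ into the appropriate gradient-step surrogate---with their auxiliary-sequence update. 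Matching coefficients term by term shows the two iterations coincide up to relabelling; and since Theorem~\ref{thm:approx-prox} is applied with $\bm{w}$ the indicator of the last coordinate, Meta-Algorithm~\ref{alg:meta-algorithm} returns $\bu_{\nOuter}$, the APP output after $\nOuter$ steps.

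Having established this correspondence, the conclusion follows by invoking \cite[Theorem 1.1]{frostig2015regularizing}: running $\nOuter = \otilde(\sqrt{\rho}) = \otilde(\sqrt{\lambda/\mu})$ outer iterations, each of whose sub-problems is solved to relative accuracy $1/c'$ for a sufficiently large $c' = \poly(\lambda, \mu, c, d)$---large enough that both the geometric contraction $(1 - \Theta(\rho^{-1/2}))^{\nOuter}$ and the accumulated sub-problem error are each at most $\tfrac{1}{2c}(F(\bx_0) - \min_{\bz \in \R^d} F(\bz))$---yields $F(\bx_{\nOuter}) - \min_{\bx \in \R^d} F(\bx) \leq \tfrac1c (F(\bx_0) - \min_{\bz \in \R^d} F(\bz))$, i.e., $\bu_{\nOuter}$ solves the FSM problem of Definition~\ref{def:finite-sum-minimization}. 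I expect the only genuine obstacle to be the bookkeeping of the second paragraph: reconciling the algebra of $\by_{\ba}$, $\zeta$, $\rho$, and $\iota$ with the precise form of the recursion and the exact inexactness tolerance in \cite{frostig2015regularizing}, and pinning down the polynomial $c'$; no new convergence analysis is needed.
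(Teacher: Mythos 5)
Your proposal matches the paper's treatment: the paper gives no independent proof of Theorem~\ref{thm:approx-prox}, but states it as a restatement of Theorem 1.1 of \cite{frostig2015regularizing}, having set up Definitions~\ref{def:fsm-subproblem} and~\ref{def:fsm-post-process} precisely so that Algorithm~\ref{alg:fsm} is the APP/Catalyst recursion with inexact proximal sub-problem solves. Your plan---verify the coefficient-level correspondence ($\rho$, $\by_{\ba}$, $\iota$, $\zeta$) and then invoke the cited convergence guarantee with $c'$ chosen polynomially large---is exactly the intended justification, so it is correct and essentially the same approach.
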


\begin{algorithm2e}[ht!]
\DontPrintSemicolon
\SetKwInput{KwParameters}{Parameters}
\caption{$\code{APP-SVRG}_{\lambda, \delta}(\bm{x}_0, F, c)$ Pseudocode}\label{alg:fsm}
\KwInput{Initial point $\bm{x}_0 \in \R^d$, gradient oracle and component oracle for $F$, error factor $c$}
\KwParameters{Failure probability $\delta$, and $\lambda \geq \mu$.}
Initialize $\bv_0 \gets \bm{0} \in\R^d$ \; 
Initialize $\ba_0 \gets (\bm{x}_0, \bm{v}_0)\in \R^d \times \R^d$ \; 
Set sufficiently large $\nOuter = \tilde{O}(\sqrt{\lambda/\mu})$\; 
Set sufficiently large $c' = \poly(\lambda, \mu, c, d)$\;
\For{each $t \in [\nOuter]$}{
    \tcp{We draw realizations of the random seed $\xi$ according to the following distribution.}
    Draw $s_t \defeq \{i_1, ..., i_T\} \sim \cD_\xi$ where each $i_j \sim \UniformDist{[n]}$ for sufficiently large $T = \tilde{O}(L/\lambda)$\; 
    \tcp{We include $\chi$, a zero-bit random bit for technical consistency with \textnormal{Section~\ref{sec:pseudoindependence}}}
    Draw $c_t \sim \cD_\chi \defeq \UniformDist{[0]}$\; 
    ${\ba}_{t-1/2} = ({\bx}_{t-1/2}, \bv_{t-1/2})\gets \cA^{\mathrm{SVRG} | \lambda, c', \delta/\nOuter}_{\xi=s_t, \chi=c_t}({\ba}_{t-1})$.\; 
    \tcp{The post-process implements momentum, as described in \textnormal{Definition~\ref{def:fsm-post-process}} }
    ${\ba}_{t} = ({\bx}_{t}, \bv_{t}) \gets \zeta({\ba}_t, {\ba}_{t-1/2})$\; 
}
\Return{$\bm{x}_{\nOuter}$}
\end{algorithm2e}

In particular, SVRG is a commonly used \emph{sub-solver} to solve the sub-problems required in APP (Theorem~\ref{thm:approx-prox}) efficiently. We restate this result below. 

\begin{theorem}[SVRG, Theorem 2.2 of \cite{frostig2015regularizing}\label{def:sub-problem}, restated - {\color{ForestGreen} FSM sub-problem solver}]\label{thm:svrg-convex} Let $F$ be as in Definition~\ref{def:finite-sum-minimization} and $\lambda \geq \mu$. There is a randomized algorithm $\cA^{\mathrm{SVRG} | \lambda, c, \delta}_{\xi, \chi}$ such that the following holds. 
\begin{itemize}[leftmargin=*, nosep]
    \item The algorithm takes in the random seeds $\xi, \chi$ distributed as follows. The first random seed $\xi \sim \{i_1, ..., i_T\}$ where each $i_j \sim \UniformDist{[n]}$ for some $T = \tilde{O}(L/\lambda)$. The second random seed $\chi \sim \UniformDist{[0]}$ is a 0-bit random seed. 
    \item If $\bm{x}' = \cA^{\mathrm{SVRG} | \lambda, c, \delta}_{\xi, \chi}({\ba})$, then wp.\ $1-\delta$ over the draw of the random seeds $\xi$ and $\chi$, ${\bx}'$ is a solution to the $({\ba}, \lambda, c)$-sub-problem.
    \item $\cA^{\mathrm{SVRG} | \lambda, c, \delta}_{\xi, \chi}$ makes $\tilde{O}(1)$ batch queries and makes sample queries \emph{only} on the indices contained within $\xi$.
\end{itemize}
\end{theorem}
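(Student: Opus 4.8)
The plan is to prove this by applying the standard SVRG analysis \citep{johnson2013accelerating} (this is essentially \cite{frostig2015regularizing}, Theorem~2.2) to the $\lambda$-regularized sub-problem, while carefully tracking which component oracles are queried. First I would note that, for fixed $\ba = (\bx,\bv)$, the sub-problem objective $g(\tilde\bx) \defeq F(\tilde\bx) + \tfrac{\lambda}{2}\norm{\tilde\bx - \by_\ba}_2^2$ is again a finite sum $g = \tfrac1n\sum_{i\in[n]} g_i$ with $g_i(\tilde\bx) \defeq f_i(\tilde\bx) + \tfrac\lambda2\norm{\tilde\bx - \by_\ba}_2^2$: each $g_i$ is convex and $(L+\lambda)$-smooth, and $g$ is $(\mu+\lambda)$-strongly convex, so $g$ has condition number $\kappa \defeq (L+\lambda)/(\mu+\lambda) \le 1 + L/\lambda$. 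The key accounting observations are that $\nabla g(\tilde\bx) = \nabla F(\tilde\bx) + \lambda(\tilde\bx - \by_\ba)$ costs one batch query plus $O(d)$ arithmetic, that $\nabla g_i(\tilde\bx) = \nabla f_i(\tilde\bx) + \lambda(\tilde\bx - \by_\ba)$ costs one component-oracle query at index $i$ plus $O(d)$ arithmetic, and that $g(\by_\ba) = F(\by_\ba)$ is exactly the ``initial gap'' appearing on the right-hand side of the sub-problem accuracy condition in Definition~\ref{def:fsm-subproblem}.

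Next I would run SVRG on $g$ anchored at $\by_\ba$ for $K$ epochs: each epoch makes one batch query to form the snapshot full gradient, then takes $m = \Theta(\kappa)$ inner steps, where inner step $j$ (with current snapshot $\bz_k$) draws $i_j \sim \UniformDist{[n]}$ and uses the variance-reduced gradient $\nabla g_{i_j}(\,\cdot\,) - \nabla g_{i_j}(\bz_k) + \nabla g(\bz_k)$ --- two component queries, both at the single index $i_j$, since the regularizer cancels in $\nabla g_{i_j}(\tilde\bx) - \nabla g_{i_j}(\bz_k) = \nabla f_{i_j}(\tilde\bx) - \nabla f_{i_j}(\bz_k)$. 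The textbook SVRG bound supplies a universal $\rho_0 \in (0,1)$ with $\E[g(\bz_{k+1})] - g^\star \le \rho_0(\E[g(\bz_k)] - g^\star)$ for a suitable constant step size and $m = \Theta(\kappa)$, hence $\E[g(\hat\bx)] - g^\star \le \rho_0^K(g(\by_\ba) - g^\star)$. Choosing $K = O(\log(c/\delta))$ so that $\rho_0^K \le \delta/c$ and applying Markov's inequality gives $g(\hat\bx) - g^\star \le \tfrac1c(g(\by_\ba) - g^\star)$ with probability $1-\delta$, i.e.\ $\hat\bx$ solves the $(\ba,\lambda,c)$-sub-problem. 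Reading off the costs: $K = \tilde{O}(1)$ batch queries and $2Km = O(\kappa\log(c/\delta)) = \tilde{O}(L/\lambda)$ component queries; setting $T \defeq Km = \tilde{O}(L/\lambda)$ and letting the first seed $\xi = \{i_1,\dots,i_T\}$ record the inner-step indices, every component-oracle query is made at an index contained in $\xi$, and these indices are i.i.d.\ $\UniformDist{[n]}$ and independent of $\ba$. SVRG carries no other randomness, so the second seed $\chi$ is the vacuous $0$-bit seed, matching the statement.

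The main obstacle --- really the only subtlety --- is upgrading SVRG's in-expectation guarantee to the claimed high-probability guarantee; I would do this by running an extra $O(\log(1/\delta))$ epochs and invoking Markov's inequality (so the cost is absorbed into $\tilde{O}(\cdot)$), though one could instead cite a direct high-probability SVRG analysis. The conceptually important point to get right, since it is what the rest of the paper depends on, is that SVRG's sampling law is the \emph{fixed} uniform distribution on $[n]$ and is never adapted to the current iterate; this is precisely what makes the component queries oblivious in the sense of Section~\ref{sec:pseudoindependence}, lets them all be packaged into the oblivious seed $\xi$, and leaves $\chi$ empty.
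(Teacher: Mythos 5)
Your proposal is correct in substance, but note that the paper does not prove this statement at all: Theorem~\ref{thm:svrg-convex} is imported as a black box, a restatement of Theorem~2.2 of \cite{frostig2015regularizing}, so what you have written is a self-contained derivation of the cited guarantee rather than an alternative to an argument in the paper. Your derivation follows the standard route (the same one underlying the citation): view the regularized objective as a finite sum of $(L+\lambda)$-smooth convex components with a $(\mu+\lambda)$-strongly convex average, run SVRG anchored at $\by_{\ba}$ with $m=\Theta(1+L/\lambda)$ inner steps per epoch, boost the in-expectation contraction to a $1-\delta$ guarantee by running $O(\log(c/\delta))$ epochs and applying Markov, and observe that the inner-step indices are i.i.d.\ $\UniformDist{[n]}$ and independent of $\ba$, so they can be drawn up front and packaged into the oblivious seed $\xi$ with $\chi$ vacuous --- which is exactly the property the rest of the paper needs. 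Two small corrections: the identity you invoke is not an exact cancellation, since $\nabla g_{i_j}(\tilde\bx)-\nabla g_{i_j}(\bz_k)=\nabla f_{i_j}(\tilde\bx)-\nabla f_{i_j}(\bz_k)+\lambda(\tilde\bx-\bz_k)$; the residual term costs no oracle access, so your query accounting is unaffected, and in fact under Definition~\ref{def:component-oracle} a single sample query at index $i_j$ already yields a gradient oracle for $f_{i_j}$ usable at both points, so ``two component queries'' can be tightened to one. Also, your epoch length is really $\Theta(1+L/\lambda)$, so the bound $T=\tilde O(L/\lambda)$ should be read with the implicit $+1$ (or the implicit assumption $\lambda\le L$), matching the convention of the theorem statement.
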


By combining Theorem~\ref{thm:approx-prox} (APP) with Theorem~\ref{thm:svrg-convex} (SVRG) and taking a union bound over all $\nOuter$ iterations in Algorithm~\ref{alg:fsm}, we obtain an algorithm for solving FSM problems. This algorithm  obtains a trade-off of $\tilde{O}(\sqrt{\lambda/\mu})$ full batch (gradient oracle) queries and $\tilde{O}(L/\sqrt{\lambda \mu})$ sample (component oracle) queries for any $\lambda \geq \mu$, as reported in the ``Prior Work'' column of Table~\ref{table:main}. 

However, using our sample reuse framework from Section~\ref{sec:pseudoindependence}, we can improve this trade-off!

\paragraph{The sample reusing sub-solver} 

Observe that Algorithm~\ref{alg:fsm} \emph{exactly} fits into the Meta-Algorithm~\ref{alg:meta-algorithm} framework from Section~\ref{sec:pseudoindependence}. In particular, Algorithm~\ref{alg:fsm} implements $\Outer(X, \traditional)$ where $X = (\bm{x}_0, F, c)$ is an FSM problem instance as per Definition~\ref{def:finite-sum-minimization}. To show that we can replace the standard sub-solver with the sample-reusing sub-solver (i.e., $\Outer(X, \reuse)$), we need to invoke Theorem~\ref{thm:AtoC}. 
To do so, we need to show (1) that APP satisfies robustness with respect to $\fsublam$ in the sense of Definition~\ref{assumption:robustness} and (2) that SVRG can compute an approximation in the $\ell_\infty$ norm, as per Definition~\ref{def:approx}. We prove this below 

\begin{lemma}[{\color{ForestGreen} FSM outer-solver robustness}]\label{lemma:fsm-robust} Let $F$ be as in Definition~\ref{def:finite-sum-minimization} and $\lambda > 0$. There is a $c' = \poly(\lambda, \mu, c, d)$ such that the following holds. Suppose that in each iteration $t \in [\nOuter]$ of the algorithm, 
\begin{align*}
    \norm{{\bu}_{t-1/2} - \fsublam(\bm{y}_{\ba_{t-1}})}_\infty^2 \leq \frac{1}{c'} \paren{F({\by}_{{\ba}_{t-1}}) - \min_{\tilde{{\bx}} \in \R^d}  F(\tilde{{\bx}}) + \frac{\lambda}{2} \norm{\tilde{{\bx}} - {\by}_{{\ba}_{t-1}}}_2^2}, 
\end{align*}
Then $\bm{u}_{\nOuter}$ is a solution to the FSM problem.
\end{lemma}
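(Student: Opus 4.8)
The plan is to deduce Lemma~\ref{lemma:fsm-robust} directly from the APP guarantee of Theorem~\ref{thm:approx-prox} by upgrading the hypothesized $\ell_\infty$-closeness of the iterate to the sub-problem minimizer into the \emph{function-value} accuracy demanded by Definition~\ref{def:fsm-subproblem}. For $t \in [\nOuter]$, write $g_t(\tilde{\bx}) \defeq F(\tilde{\bx}) + \tfrac{\lambda}{2}\norm{\tilde{\bx} - \by_{\ba_{t-1}}}_2^2$ for the $t$-th regularized sub-objective and $\bx_t^\star \defeq \fsublam(\by_{\ba_{t-1}}) = \argmin_{\tilde{\bx}} g_t(\tilde{\bx})$ for its minimizer (Definition~\ref{def:fsm-subproblem}). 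Since every $f_i$ is $L$-smooth, $F$ is $L$-smooth, so $g_t$ is $(L+\lambda)$-smooth and (as $F$ is $\mu$-strongly convex) $(\mu+\lambda)$-strongly convex, hence $\bx_t^\star$ is unique; also $g_t(\by_{\ba_{t-1}}) = F(\by_{\ba_{t-1}})$, so the right-hand side of the lemma's hypothesis equals $\tfrac{1}{c'}\bigl(g_t(\by_{\ba_{t-1}}) - g_t(\bx_t^\star)\bigr)$, the warm-start suboptimality of the $t$-th sub-problem.

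First I would apply smoothness of $g_t$ at its minimizer together with $\norm{\cdot}_2^2 \le d\,\norm{\cdot}_\infty^2$ to obtain
\[
g_t(\bu_{t-1/2}) - g_t(\bx_t^\star) \;\le\; \tfrac{L+\lambda}{2}\,\norm{\bu_{t-1/2} - \bx_t^\star}_2^2 \;\le\; \tfrac{(L+\lambda)d}{2}\,\norm{\bu_{t-1/2} - \bx_t^\star}_\infty^2,
\]
where $\bu_{t-1/2}$ is read on the primal component $\bx_{t-1/2}$ (the only component the post-process $\zeta$ of Definition~\ref{def:fsm-post-process} and the APP analysis depend on). Substituting the lemma's hypothesis gives $g_t(\bu_{t-1/2}) - g_t(\bx_t^\star) \le \tfrac{(L+\lambda)d}{2c'}\bigl(g_t(\by_{\ba_{t-1}}) - g_t(\bx_t^\star)\bigr)$. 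Let $c_{\mathrm{APP}} = \poly(\lambda,\mu,c,d)$ denote the accuracy parameter from Theorem~\ref{thm:approx-prox} (its ``$c'$''). Choosing the lemma's constant to be any $c' \ge \tfrac{(L+\lambda)d}{2}\,c_{\mathrm{APP}}$ — still polynomial in the problem parameters — yields $g_t(\bu_{t-1/2}) - g_t(\bx_t^\star) \le \tfrac{1}{c_{\mathrm{APP}}}\bigl(g_t(\by_{\ba_{t-1}}) - g_t(\bx_t^\star)\bigr)$, i.e.\ $\bx_{t-1/2}$ is a solution to the $(\ba_{t-1},\lambda,c_{\mathrm{APP}})$-sub-problem in the sense of Definition~\ref{def:fsm-subproblem}. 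Since this holds for every $t \in [\nOuter]$, Theorem~\ref{thm:approx-prox} then gives that $\bu_{\nOuter}$ is a solution to the FSM problem, completing the proof.

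The computation itself is routine; the only point needing care is the accounting of accuracy parameters. One must check that the multiplicative blow-up $\tfrac{(L+\lambda)d}{2}$ coming from the smoothness step and the $\ell_2$-to-$\ell_\infty$ conversion is harmless: it can be absorbed into $c'$ without disturbing the outer-loop count $\nOuter = \tilde{O}(\sqrt{\lambda/\mu})$, because $c'$ enters only polylogarithmically into the sub-solver's complexity; and one must verify that the ``initial suboptimality'' factors on the two sides of the chain genuinely coincide, which they do precisely because the $t$-th sub-problem is warm-started at $\by_{\ba_{t-1}}$ and $g_t(\by_{\ba_{t-1}}) = F(\by_{\ba_{t-1}})$. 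This is the only subtlety; everything else is a direct invocation of Theorem~\ref{thm:approx-prox}.
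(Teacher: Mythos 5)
Your proposal is correct and follows essentially the same route as the paper's proof: convert the $\ell_\infty$ hypothesis into a function-value guarantee for the regularized sub-objective via smoothness at its minimizer and the bound $\norm{\cdot}_2^2 \le d\norm{\cdot}_\infty^2$, then absorb the resulting $\poly$ factor into $c'$ and invoke Theorem~\ref{thm:approx-prox}. If anything, your accounting (using $(L+\lambda)$-smoothness and tracking the constants explicitly) is slightly more careful than the paper's terse version.
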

\begin{proof} We prove the lemma by invoking Theorem~\ref{thm:approx-prox}. That is, it suffices to show that for any $c, \lambda$, there exists a $c' = \poly(\lambda, \mu, c, d)$ such that whenever ${\ba}'$ satisfies 
\begin{align}\label{eq:bound-fsm}
    \norm{{\bu}' - \fsublam(\bm{y}_{\ba})}_\infty^2 \leq \frac{1}{c'} \paren{F({\by}_{{\ba}}) - \min_{\tilde{{\bx}} \in \R^d}  F(\tilde{{\bx}}) + \frac{\lambda}{2} \norm{\tilde{{\bx}} - {\by}_{{\ba}}}_2^2}, 
\end{align}
we have that 
\begin{align}\label{eq:bound-fsm-2}
 F({\bu}') - \min_{\tilde{{\bx}} \in \R^d}  F(\tilde{{\bx}}) + \frac{\lambda}{2} \norm{\tilde{{\bx}} - {\by}_{{\ba}}}_2^2 \leq \frac{1}{c} \cdot \paren{F({\by}_{{\ba}}) - \min_{\tilde{{\bx}} \in \R^d}  F(\tilde{{\bx}}) + \frac{\lambda}{2} \norm{\tilde{{\bx}} - {\by}_{{\ba}}}_2^2}. 
\end{align}

To prove this we will use the smoothness of the function $F_{\lambda, {\ba}}$, which is defined as follows: 
\begin{align*}
    F_{\lambda, {\ba}}(\tilde{{\bx}}) \defeq F(\tilde{{\bx}}) + \frac{\lambda}{2} \norm{\tilde{{\bx}} - {\by}_{\bu}}_2^2. 
\end{align*}
First, we have that \eqref{eq:bound-fsm} implies 
\begin{align*}
    \norm{{\bu}' - \fsublam(\bm{y}_{\ba})}_2^2 \leq d \norm{{\bu}' - \fsublam(\bm{y}_{\ba})}_\infty^2 \leq \frac{2d}{c'} \paren{F({\by}_{{\ba}}) - \min_{\tilde{{\bx}} \in \R^d}  F(\tilde{{\bx}}) + \frac{\lambda}{2} \norm{\tilde{{\bx}} - {\by}_{{\ba}}}_2^2}. 
\end{align*}
Thus, by $L$-smoothness, whenever $c' > 2d/cL$, \eqref{eq:bound-fsm-2} holds. The result follows by Theorem~\ref{thm:approx-prox}.
\end{proof}

\begin{lemma}[{\color{ForestGreen} FSM sub-problem solver high-precision}]\label{lemma:svrg-convex-hp} Let $F$ be as in Definition~\ref{def:finite-sum-minimization}. There is a randomized algorithm $\cA^{\mathrm{SVRG-HP} | \lambda, c, \delta}_{\xi, \chi}$ such that the following holds. 
\begin{itemize}[leftmargin=*]
    \item The algorithm takes in the random seeds $\xi, \chi$ distributed as follows. The first random seed $\xi \sim \{i_1, ..., i_T\}$ where each $i_j \sim \UniformDist{[n]}$ and $T = \tilde{O}(L/\lambda)$. The second random seed $\chi \sim \UniformDist{[0]}$ is a 0-bit random seed. 
    \item If $\bm{x}'= \cA^{\mathrm{SVRG-HP} | \lambda, c, \delta}_{\xi, \chi}({\ba})$, then wp.\ $1-\delta$ over the draw of the random seeds $\xi$ and $\chi$,
    \begin{align*}
    \norm{{\bx}' - \fsublam(\bm{y}_{\ba})}_\infty^2 \leq \frac{1}{c} \cdot \paren{F({\by}_{{\ba}}) - \min_{\tilde{{\bx}} \in \R^d}  F(\tilde{{\bx}}) + \frac{\lambda}{2} \norm{\tilde{{\bx}} - {\by}_{{\ba}}}_2^2}. 
    \end{align*}
    \item $\cA^{\mathrm{SVRG-HP} | \lambda, c, \delta}_{\xi, \chi}$ makes $\tilde{O}(1)$ batch queries and makes sample queries \emph{only} on indices in $\xi$.
\end{itemize}
\end{lemma}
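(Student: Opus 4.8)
The plan is to obtain $\cA^{\mathrm{SVRG-HP} | \lambda, c, \delta}_{\xi, \chi}$ by invoking the sub-problem solver of Theorem~\ref{thm:svrg-convex} at a slightly inflated target accuracy and then \emph{translating} its function-value guarantee into the desired $\ell_\infty$-distance guarantee. The structural fact driving this translation is that the regularized objective $F_{\lambda, \ba}(\tilde{\bx}) \defeq F(\tilde{\bx}) + \tfrac{\lambda}{2}\norm{\tilde{\bx} - \by_{\ba}}_2^2$ is $\lambda$-strongly convex --- its quadratic term alone already is, and $F$ is convex --- while $\fsublam(\by_{\ba}) = \argmin_{\tilde{\bx} \in \R^d} F_{\lambda, \ba}(\tilde{\bx})$ by Definition~\ref{def:fsm-subproblem} and $F_{\lambda, \ba}(\by_{\ba}) = F(\by_{\ba})$. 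Consequently, a bound on the function-value suboptimality of $F_{\lambda, \ba}$ converts --- at the cost of a factor $2/\lambda$ --- into a bound on the squared $\ell_2$-distance to $\fsublam(\by_{\ba})$, which dominates the squared $\ell_\infty$-distance. This is exactly the ``dual'' of the smoothness-based conversion already used in the proof of Lemma~\ref{lemma:fsm-robust}.

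Concretely, I would first pick $c_0 = \poly(c, \lambda)$ large enough that $\tfrac{2}{c_0 \lambda} \le \tfrac{1}{c}$, e.g.\ $c_0 = \ceil{2c/\lambda}$, and define $\cA^{\mathrm{SVRG-HP} | \lambda, c, \delta}_{\xi, \chi} \defeq \cA^{\mathrm{SVRG} | \lambda, c_0, \delta}_{\xi, \chi}$; this inherits verbatim the seed structure asserted in the lemma. Applying Theorem~\ref{thm:svrg-convex} with accuracy $c_0$, with probability $1 - \delta$ over the draw of $\xi$ and $\chi$ the output $\bx'$ solves the $(\ba, \lambda, c_0)$-sub-problem, so $F_{\lambda, \ba}(\bx') - \min_{\tilde{\bx}} F_{\lambda, \ba}(\tilde{\bx}) \le \tfrac{1}{c_0}\paren{F(\by_{\ba}) - \min_{\tilde{\bx}} F_{\lambda, \ba}(\tilde{\bx})}$. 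Conditioning on this event and setting $\bx^\star \defeq \fsublam(\by_{\ba})$, $\lambda$-strong convexity together with first-order optimality of $\bx^\star$ gives $\tfrac{\lambda}{2}\norm{\bx' - \bx^\star}_2^2 \le F_{\lambda, \ba}(\bx') - F_{\lambda, \ba}(\bx^\star)$, and chaining this with the previous bound, with $\norm{\cdot}_\infty \le \norm{\cdot}_2$, and with the choice of $c_0$ yields
\[
\norm{\bx' - \fsublam(\by_{\ba})}_\infty^2 \;\le\; \frac{2}{c_0 \lambda}\paren{F(\by_{\ba}) - \min_{\tilde{\bx} \in \R^d} F_{\lambda, \ba}(\tilde{\bx})} \;\le\; \frac{1}{c}\paren{F(\by_{\ba}) - \min_{\tilde{\bx} \in \R^d} F_{\lambda, \ba}(\tilde{\bx})},
\]
which is the claimed guarantee. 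The batch-query count and the fact that sample queries touch only the indices in $\xi$ are then read directly off Theorem~\ref{thm:svrg-convex}.

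I do not anticipate a substantive obstacle: the argument is the textbook ``function-value error implies iterate error'' estimate for strongly convex functions, and the only thing requiring a remark is that replacing the accuracy parameter $c$ by $c_0 = \poly(c, \lambda)$ does not damage the $\tilde{O}(1)$-batch / $\tilde{O}(L/\lambda)$-sample budget. That follows because SVRG (Theorem~\ref{thm:svrg-convex}) is a \emph{high-accuracy} method, so its parameter $T$ depends only polylogarithmically on the target accuracy and hence is still $\tilde{O}(L/\lambda)$ after the inflation. A second, purely notational, point to pin down is the identification $\fsublam(\by_{\ba}) = \argmin_{\tilde{\bx}} F_{\lambda, \ba}(\tilde{\bx})$ (already implicit in Lemma~\ref{lemma:fsm-robust}), which should be stated once so that the chain of inequalities is well-typed.
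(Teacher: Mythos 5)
Your proposal is correct and follows essentially the same route as the paper's own proof: invoke Theorem~\ref{thm:svrg-convex} at an inflated accuracy parameter and use strong convexity of $F_{\lambda,\ba}$ plus $\norm{\cdot}_\infty \le \norm{\cdot}_2$ to convert the function-value guarantee into the $\ell_\infty$-distance guarantee, with the high-accuracy (polylogarithmic-in-accuracy) property of SVRG preserving the $\tilde{O}(1)$-batch and $\tilde{O}(L/\lambda)$-sample budgets. The only cosmetic difference is that you extract strong convexity from the $\lambda$-regularizer while the paper uses the $\mu$-strong convexity of $F$ (and your explicit choice $c_0 = \lceil 2c/\lambda\rceil$ is, if anything, stated more carefully than the paper's constant).
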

\begin{proof} We prove the lemma by invoking Theorem~\ref{thm:svrg-convex}. It is enough to show that there is a $c' = \poly(\lambda, \mu, c, d)$ such that whenever ${\bx}'$ satisfies 
\begin{align*}
     F({\bx}') - \min_{\tilde{{\bx}} \in \R^d}  F(\tilde{{\bx}}) + \frac{\lambda}{2} \norm{\tilde{{\bx}} - {\by}_{{\ba}}}_2^2 \leq \frac{1}{c'} \cdot \paren{F({\by}_{{\ba}}) - \min_{\tilde{{\bx}} \in \R^d}  F(\tilde{{\bx}}) + \frac{\lambda}{2} \norm{\tilde{{\bx}} - {\by}_{{\ba}}}_2^2}, 
\end{align*}
we also have that 
\begin{align*}
    \norm{{\bx}' - \fsublam(\bm{y}_{\ba})}_\infty^2 \leq \frac{1}{c} \cdot \paren{F({\by}_{{\ba}}) - \min_{\tilde{{\bx}} \in \R^d}  F(\tilde{{\bx}}) + \frac{\lambda}{2} \norm{\tilde{{\bx}} - {\by}_{{\ba}}}_2^2}, 
\end{align*}
We will use the strong convexity of the function $F_{\lambda, {\ba}}$, which is defined as follows:
\begin{align*}
    F_{\lambda, \bm{x}}(\tilde{{\bx}}) \defeq F(\tilde{{\bx}}) + \frac{\lambda}{2} \norm{\tilde{{\bx}} - {\bx}}_2^2. 
\end{align*}
Now, by $\mu$-strong convexity, 
\begin{align*}
    \norm{{\bx}' - \fsublam(\bm{y}_{\ba})}_\infty^2 \leq \norm{{\bx}' - \fsublam(\bm{y}_{\ba})}_2^2 \leq \frac{\mu}{2c} \cdot \paren{F({\by}_{{\ba}}) - \min_{\tilde{{\bx}} \in \R^d}  F(\tilde{{\bx}}) + \frac{\lambda}{2} \norm{\tilde{{\bx}} - {\by}_{{\ba}}}_2^2}, 
\end{align*}
thus it suffices to set $c' = c \mu/2$. 
\end{proof}

\begin{algorithm2e}[ht!]
\DontPrintSemicolon
\SetKwInput{KwParameters}{Parameters}
\caption{$\code{APP-SVRG-Reuse}_{\lambda, \delta}(\bm{z}_0, F, c)$ Pseudocode}\label{alg:fsm-reuse}
\KwInput{Initial point $\bm{z}_0 \in \R^d$, gradient oracle and component oracle for $F$, error factor $c$}
\KwParameters{Failure probability $\delta$, and $\lambda \geq \mu$.}
Initialize ${\ba}_0 \gets (\bm{x}_0, \bm{0})\in \R^d \times \R^d$ \; 
Set sufficiently large $\nOuter = \tilde{O}(\sqrt{\lambda/\mu})$\; 
\tcp{We draw a realization of the random seed $\xi$ according to the following distribution.} 
Draw $s_1 \defeq \{i_1, ..., i_T\} \sim \cD_\xi$ where each $i_j \sim \UniformDist{[n]}$ for sufficiently large $T = \tilde{O}(L/\lambda)$\; 
\For{each $t \in [\nOuter]$}{ 
    \tcp{Draw $c'_t$ from the noisy distribution $\cD_{\chi'}$ as in Meta-Algorithm~\ref{alg:meta-algorithm}.} 
    Draw $c'_t \sim \cD_{\chi'}$\; 

    \tcp{Implement the noisy analog of ${\cA}^{\mathrm{SVRG-HP}| \lambda, \poly(\lambda, \mu, c), \delta/(5\nOuter^2)}$ as in Meta-Algorithm~\ref{alg:meta-algorithm}.}
    ${\ba}_{t-1/2} = ({\bx}_{t-1/2}, \bv_{t-1/2})\gets {\cA'}^{\mathrm{SVRG-HP}| \lambda, \poly(\lambda, \mu, c), \delta/(5\nOuter^2)}_{\xi=s_1, \chi' = c_t}({\ba}_{t-1})$.\; 
    \tcp{The post-process implements momentum, as described in \textnormal{Definition~\ref{def:fsm-post-process}} } 
    ${\ba}_{t} = ({\bx}_{t}, \bv_{t}) \gets \zeta({\ba}_t, {\ba}_{t-1/2})$\; 
}
\Return{$\bm{x}_{\nOuter}$}
\end{algorithm2e}

By combining Lemma~\ref{lemma:svrg-convex-hp} with Lemma~\ref{lemma:fsm-robust} and applying Theorem~\ref{thm:AtoC}, we immediately obtain our improved trade-off. 

\begin{restatable}[{\color{ForestGreen} FSM improvement}]{theorem}{fsmfinal}\label{thm:improved} There is an algorithm (Algorithm~\ref{alg:fsm-reuse}) that for $F$ as in Definition~\ref{def:finite-sum-minimization}, $\lambda \geq \mu$, and $\delta \in (0, 1)$, makes $\tilde{O}(\sqrt{\lambda/\mu})$-batch queries and $\tilde{O}(L/\lambda)$-sample queries and solves the FSM problem wp.\ $1-\delta$. 
\end{restatable}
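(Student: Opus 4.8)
The plan is to recognize \Cref{alg:fsm-reuse} as the sample-reuse template $\Outer(X;\reuse,\tau)$ of \Cref{sec:pseudoindependence} instantiated on FSM, and then invoke \Cref{thm:AtoC}. Concretely, I would take $X=(\bm{x}_0,F,c)$ to be the FSM instance of \Cref{def:finite-sum-minimization}; let the outer-process $\zeta$ be the FSM post-process of \Cref{def:fsm-post-process} (which implements the APP momentum step); let the sub-solver be $\cA^{\mathrm{SVRG\text{-}HP}|\lambda,c',\delta'}_{\xi,\chi}$ from \Cref{lemma:svrg-convex-hp}, whose oblivious seed $\xi$ encodes the $T=\tilde{O}(L/\lambda)$ uniform component indices and whose second seed $\chi$ is the $0$-bit dummy; and choose the output rule (the coefficient vector $\bm{w}$, together with extracting the first block) so that the algorithm returns $\bm{x}_{\nOuter}$. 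Under these identifications the standard template $\Outer(X;\traditional)$ is exactly \code{APP-SVRG} (\Cref{alg:fsm}) and $\Outer(X;\reuse,\tau)$ is exactly \Cref{alg:fsm-reuse}, with noise level $\tau=\eta'/(2\delta')$ as prescribed by \Cref{thm:AtoC}.

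Next I would verify the two hypotheses of \Cref{thm:AtoC}. For $\ell_\infty$-robustness of $\Outer(X;\traditional)$: \Cref{lemma:fsm-robust} shows that for $c'$ a large enough $\poly(\lambda,\mu,c,d)$, a per-iteration $\ell_\infty$ guarantee of the stated relative form forces $\bm{x}_{\nOuter}$ to solve FSM. To cast this in the absolute form of \Cref{assumption:robustness} one needs a single threshold $\eta$ valid for every $t$; this is available because \code{APP-SVRG} reduces the total suboptimality by only the bounded factor $c$ over its $\nOuter=\tilde{O}(\sqrt{\lambda/\mu})$ iterations, so the sub-problem initial suboptimalities $F(\bm{y}_{\ba_{t-1}})-\min_{\tilde{\bx}}\{F(\tilde{\bx})+\tfrac{\lambda}{2}\|\tilde{\bx}-\bm{y}_{\ba_{t-1}}\|_2^2\}$ never drop below a $\poly(\lambda,\mu,c,d)^{-1}$ fraction of $F(\bm{x}_0)-\min_{\bm{z}\in\R^d}F(\bm{z})$, and I would then take $\eta=\kappa\sqrt{F(\bm{x}_0)-\min_{\bm{z}\in\R^d}F(\bm{z})}$ for a suitable $\kappa=\kappa(\lambda,\mu,c,d)$, folding any slack into $c'$. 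For the approximation hypothesis: \Cref{lemma:svrg-convex-hp} makes SVRG-HP an $(\eta',\delta')$-approximation of $\fsublam$ for \emph{any} target $\eta'$, and since SVRG is a high-accuracy method, tightening the accuracy from $\eta$ to $\eta'\defeq\min(\eta/2,\eta\delta')$ and the failure probability to $\delta'\defeq\delta/(5\nOuter^2)$ multiplies its cost by only $\polylog$ factors; thus $T$ stays $\tilde{O}(L/\lambda)$, the batch cost stays $\tilde{O}(1)$ per call, and — crucially — SVRG-HP queries the component oracle \emph{only} on the indices contained in $\xi$.

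With both hypotheses checked I would apply \Cref{thm:AtoC} with approximation-failure parameter $\delta'\defeq\delta/(5\nOuter^2)$, obtaining that $\Outer(X;\reuse,\tau)$ outputs a $\beta$-accurate (i.e.\ FSM-solving) point with probability $1-5\nOuter^2\delta'=1-\delta$. Finally I would read off the query complexity. Because $\Outer(X;\reuse,\tau)$ reuses the single realization $s_1\sim\cD_\xi$ in all $\nOuter$ iterations, every component-oracle query across the whole run lands on the same $T=\tilde{O}(L/\lambda)$ indices stored in $s_1$, so the total sample-query count is $\tilde{O}(L/\lambda)$; the gradient-oracle queries total $\nOuter\cdot\tilde{O}(1)=\tilde{O}(\sqrt{\lambda/\mu})$; and the $O(d)$-per-iteration cost of adding the $\UniformDist^p(-\tau,\tau)$ perturbation affects neither oracle count. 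This yields the claimed $\tilde{O}(\sqrt{\lambda/\mu})$-batch, $\tilde{O}(L/\lambda)$-sample trade-off.

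The main obstacle is the bookkeeping in the second paragraph: the FSM sub-problem and its solver are naturally stated with a \emph{relative} (multiplicative) accuracy, whereas the abstract machinery (\Cref{def:approx}, \Cref{assumption:robustness}, \Cref{thm:AtoC}) uses an \emph{absolute} $\ell_\infty$ tolerance. Making the bridge rigorous requires the uniform-over-iterations lower bound on the sub-problem suboptimality described above, so that one fixed $\eta$ works for all $t$, together with a careful accounting that the blow-ups $\eta\mapsto\eta'$ and $\delta\mapsto\delta/(5\nOuter^2)$ are absorbed into the $\polylog$ overhead of the high-accuracy sub-solver and into the choice $c'=\poly(\lambda,\mu,c,d)$, so that neither $T$ nor $\nOuter$ degrades beyond $\tilde{O}(L/\lambda)$ and $\tilde{O}(\sqrt{\lambda/\mu})$ respectively.
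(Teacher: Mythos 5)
Your proposal is correct and follows essentially the same route as the paper: the paper's proof simply invokes Theorem~\ref{thm:AtoC} with $\cA^{\mathrm{SVRG-HP}\,|\,c',\lambda,\delta/(5\nOuter^2)}_{\xi,\chi}$ as the sub-solver, using Lemma~\ref{lemma:fsm-robust} for robustness and the reduced failure probability $\delta/(5\nOuter^2)$ to absorb the blow-up, exactly as you do. Your additional care in bridging the relative (multiplicative) accuracy of Lemmas~\ref{lemma:fsm-robust} and~\ref{lemma:svrg-convex-hp} with the absolute $\ell_\infty$ tolerance of Definitions~\ref{def:approx} and~\ref{assumption:robustness} goes beyond the paper's own (terser) treatment and is a reasonable way to make that implicit step explicit.
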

\begin{proof} We apply Theorem~\ref{thm:AtoC} using $\cA^{\mathrm{SVRG-HP} | c', \lambda, \delta/(5\nOuter^2)}_{\xi, \chi}$ as the sub-solver to solve the $(\bu_{t-1}, \lambda, c)$ sub-problem in each iteration, where $c'$ is as required by Lemma~\ref{lemma:fsm-robust}. The lower failure probability $\delta/(5\nOuter^2)$ is used in order to counter the blowup in failure probability in Theorem~\ref{thm:AtoC}. 
\end{proof}

\subsection{Applications to regression with generalized linear models (GLM) }\label{sec:glm-fsm} Regression with generalized linear models is a special case of FSM where we have $n$ data vectors $\{{\bm{a}}_i\}_{i=1}^n \in \R^d$ with $n$ corresponding, \emph{explicitly known}, labels $\{b_i\}_{i=1}^n \in \R$, and $f_i = \phi_i({{\bm{a}}}_i^\top \bm{x})$ for some explicit function $\phi_i(z)$ (e.g., for least-squares regression, $\phi_i(z) = 1/2 (z - b_i)^2$ and for logistic regression, $\phi_i(z) = \log(1+\exp(-zb_i))$. 

Because the gradient mapping $g:z \mapsto \nabla \phi_i(z)$ is known explicitly, component-oracles for $f_i$ are easily implementable: we can simply query $i \sim \bm{p}$, lookup ${{\bm{a}}}_i$, and return the function $\nabla \phi_i({{\bm{a}}}_i^\top \bm{x}) = {{\bm{a}}}_i \cdot g({{\bm{a}}}_i^\top \bm{x})$ explicitly for \emph{any} future point $\bm{x}$. 

So, in these settings, our improved sample query complexity corresponds to an improved trade-off between the number of full passes over $\{{{\bm{a}}}_i\}_{i=1}^m$ and the number of random samples ${{\bm{a}}}_{i \sim \UniformDist[n]}$ required to train a GLM. The implications of this result are two-fold: 
\begin{itemize}[leftmargin=*]
    \item First, this improvement means that our results prove that problems such as fitting a generalized linear model can be solved with \emph{less information} than was known previously. 
    \item Second, our results show that one can \emph{reuse} the same samples $\bm{a}_{i \sim \UniformDist[n]}$ across all iterations of the outer solver. In distributed memory settings or in settings where caching significantly affects computational costs, this ability to reuse the same cached samples might be beneficial in reducing memory retrieval or communication costs.  
\end{itemize}

\subsection{Extension to non-uniform smoothness}\label{sec:non-uniform-discussion} Our method can also be extended to non-uniformly smooth $f_i$, where we relax the assumption that each $f_i$ is $L$-smooth in Definition~\ref{def:finite-sum-minimization} to assume that each $f_i$ is $L_i$-smooth. In this setting, can develop improved trade-offs that depend on the distribution of the $L_i$'s rather than the worst-case smoothness $\max_{i} L_i$ by instantiating APP~with the primal-dual FSM sub-problem solver of \cite{jin2022sharper}. We defer a discussion of this setting to Appendix~\ref{apx:nonuniform_smoothnes}.

\paragraph{A note on pseudocode in the remainder of this paper.} We include the pseudocode Algorithm~\ref{alg:fsm} and Algorithm~\ref{alg:fsm-reuse} in this section in order to provide a clear example of a concrete instantiation of Meta-Algorithm~\ref{alg:meta-algorithm} for FSM. 

However, for the sake of brevity, in later appendix sections, we will not rewrite the instantiation of Meta-Algorithm~\ref{alg:meta-algorithm} for each application. Instead, we will simply define the initializations; setting for $\nOuter$; post-processing functions; sub-solvers, distributions, and target functions $\fsublam$---as this fully characterizes the instantiations of Meta-Algorithm~\ref{alg:meta-algorithm} for each application. Additionally, we include thorough references to related work which contains application-specific pseudocodes for the outer-solvers and sub-solevrs related to each application. 

An exception is in the case of discounted MDPs (Section~\ref{sec:dmdp}), where we \emph{will} include pseudocode. This is because our outer-solver for DMDPs is a new contribution of our work, so we feel it is helpful to include the full pseudocode for completeness.  

\section{Application: Infinite-horizon Markov Decision Processes (MDPs)}\label{sec:mdp}

In this section, we discuss how our framework can be applied to infinite-horizon Markov Decision Processes (MDPs). We primarily focus on the discounted case (DMDPs) in Section~\ref{sec:dmdp} and then in Section~\ref{sec:amdp} extend our results to the average-reward case (AMDPs) using a standard reduction due to \cite{jin2021towards}. We begin with preliminaries in Section~\ref{ssubsecc:prelim}. 

\subsection{Preliminaries}\label{ssubsecc:prelim}

We denote an MDP by $\cM = (\cS, \cA, \bP)$ where $\cS$ denotes a finite state-action space, $\cA$ denotes the total set of all state-action pairs. Concretely, we use $\cA_s$ to denote the set of available actions in state $s \in \cS$ and define ${\cA = \{(s, a) : s \in \cS, a \in \cA_s\}}$ as well as $\Atot \defeq |\cA|$. The state-action-state transition matrix is given by $\bP \in \R^{\cA \times \cS}$. We use $\bmp(s, a) \in \Delta^{\cS}$ to denote the $(s, a)$-th row of $\bP$. A policy $\pi$ is a mapping from states to actions, i.e., $\pi : s \mapsto \pi(s) \in \cA_s$. We use $\Pi$ to denote the set of all possible policies. 

Throughout this section, for vectors $\bm{a}, \bm{b}$, we use $\bm{a} \geq \bm{b}$ to mean entrywise inequality (and use $ \leq, >, <$ analogously.) For $\Atot$-dimensional vectors, say $\br \in \R^\Atot$, we use the notation $\br(s, a)$ to denote the $(s,a)$-th entry of $\br$. We assume rewards are known a-priori, as in prior works in this setting (see, e.g., \cite{jin2024truncated} and references therein.)

For any policy $\pi$, we use $\br^\pi \in \R_{\geq 0}^\cS$ to denote the $|\cS|$-dimensional vector with $s$-th entry given by $\br^\pi(s, \pi(s))$. Likewise, we use $\bP^\pi \in \R^{\cS \times \cS}$ to denote the sub-matrix of $\bP$ where the $s$-th row of $\bP^\pi$ is $\bmp(s, \pi(s)).$ 

An agent interacts with the MDP in \emph{timestep} as follows. At each timestep $t \geq 0$, an agent begins in state $s_t$, chooses an available action $a_t \in \cA_{s_t}$, and collects a (finite) reward $\br(s_t, a_t)$. The agent then (stochastically) transitions to the next state $s_{t+1}$ where $s_{t+1} \sim \bmp(s, a)$. The agent's goal is to compute a \emph{policy} for selecting actions in each state that maximizes the agent's infinite-horizon expected utility over the set of all policies, denoted $\Pi$. 

This objective is often formalized in two settings: the discounted setting (DMDP) and the average-reward setting (AMDP), which we will discuss in Section~\ref{sec:dmdp} and Section~\ref{sec:amdp} respectively. However, we will first define the full batch and sample oracle access for $\cM$. 

\begin{definition}[Matrix-vector oracle - {\color{ForestGreen}{DMDP/AMDP batch oracle}}] When queried with $\bm{x} \in \R^\cS$, a \emph{matrix-vector oracle} for $\bP$ returns $\bP \bm{x}$.
\end{definition}

\begin{definition}[Simulator oracle - {\color{ForestGreen}{DMDP/AMDP sample oracle}}]\label{def:generative-model} When queried with $(s, a) \in \cA$, a \emph{simulator oracle} returns a sample $s' \sim \bmp(s, a)$. 
\end{definition}

The oracle described in Definition~\ref{def:generative-model} is often called a \textit{generative model} in the reinforcement learning theory literature (see, for example, \cite{kearns1998finite}).

\subsection{Discounted MDP}\label{sec:dmdp}

We begin by describing MDPs in the \emph{discounted setting}. 

\paragraph{Discounted MDP (DMDP).} In a DMDP, the objective is to compute an $\epsilon$-optimal policy for maximizing the infinite-horizon $\gamma$-discounted reward for some known constant $\gamma \in (0, 1)$. 

\begin{definition}[Discounted MDP (DMDP)] Let $\cM = (\cS, \cA, \bP)$ be an MDP, $\gamma \in (0, 1)$, and $\br \in [0, 1]^{\cA}$. We denote the associated $\gamma$-discounted-MDP by $\cM_{\gamma, \br} = (\cM; \gamma, \br)$.
\end{definition}

In a DMDP, the value of a policy $\pi$ is defined as follows. 

\begin{definition}[DMDP policy value] Let $\cM_{\gamma, \br}$ be a DMDP and $\pi$ be a policy. The value of policy $\pi$, denoted $\bv_{\gamma, \br}^\pi$, is defined as 
\begin{align*}
    \bv_{\gamma, \br}^\pi(s) \defeq \E\Brac{\sum_{t \geq 0} \gamma^t \br(s_t, \pi(s_t)) | s_0 = s}. 
\end{align*}
\end{definition}

We can alternatively define the value of a policy in terms of the Bellman operator. 
\begin{definition}[Bellman operator]\label{def:bellman} Let $\cM_{\gamma, \br}$ be a DMDP. Given a policy $\pi$, and vector $\bv \in \R^{\cS}$ we define the \emph{Bellman operator} associated with $\pi$ as $\cT^\pi_{\gamma, \br}[\bv] \in \R^{\cS}$ as 
\begin{align*}
    \cT^\pi_{\gamma, \br}[\bv](s) \defeq \br(s, \pi(s)) + \gamma \bmp(s, \pi(s))^\top(s,  \pi(s)). 
\end{align*} The value of policy $\pi$ is the unique vector such that $\bv^\pi_{\gamma, \br} = \cT^{\pi}_{\gamma, \br}[\bv_{\gamma, \br}^\pi]$. We also define the Bellman operator $\cT_{\gamma, \br}$ to be the operator that maps $\bv \mapsto \cT_{\gamma, \br}[\bv] \in \R^{\cS}$ where
\begin{align*}
    \cT_{\gamma, \br}[\bv](s) \defeq \max_{a \in \cA_s} \br(s,a) + \gamma \bmp(s,  a)^\top \bv.
\end{align*} 
\end{definition}

\begin{fact}[Bellman optimality conditions] The value of the optimal policy $\pi_{\gamma, \br}^\star$ for $\cM_{\gamma, \br}$ satisfies 
\begin{align*}
    \bv_{\gamma, \br}^{\pi_{\gamma, \br}^\star}(s) = \max_{\pi \in \Pi} \E\Brac{\sum_{t \geq 0} \gamma^t \br(s_t, \pi(s_t)) | s_0 = s}, 
\end{align*} 
and $\bv_{\gamma, \br}^{\pi_{\gamma, \br}^\star}$ is the unique vector such that $\bv_{\gamma, \br}^{\pi_{\gamma, \br}^\star} = \cT_{\gamma, \br}[\bv_{\gamma, \br}^{\pi_{\gamma, \br}^\star}]$. We use the shorthand $\bv_{\gamma, \br}^{\star} = \bv_{\gamma, \br}^{\pi_{\gamma, \br}^\star}$ and refer to it as the \emph{optimal value vector} for $\cM_{\gamma, r}$. 
\end{fact}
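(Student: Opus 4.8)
The plan is to prove this Fact by the classical Bellman-operator argument, recast in the notation of Definition~\ref{def:bellman}. The first step is to record two structural properties of the operators $\cT_{\gamma,\br}$ and $\cT^\pi_{\gamma,\br}$. \emph{Contraction}: for any $\bv, \bv' \in \R^{\cS}$, any $s \in \cS$, and any $a \in \cA_s$, since $\bmp(s,a)$ is a probability distribution over $\cS$ we have $\abs{\bmp(s,a)^\top(\bv - \bv')} \le \normi{\bv - \bv'}$; hence $\abs{\cT^\pi_{\gamma,\br}[\bv](s) - \cT^\pi_{\gamma,\br}[\bv'](s)} \le \gamma\normi{\bv - \bv'}$, and using $\abs{\max_a f(a) - \max_a g(a)} \le \max_a \abs{f(a) - g(a)}$ the same bound holds for $\cT_{\gamma,\br}$. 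So both operators are $\gamma$-contractions in $\normi{\cdot}$ and, by the Banach fixed-point theorem, have unique fixed points. \emph{Monotonicity}: if $\bv \le \bv'$ entrywise then, since $\bP^\pi$ has nonnegative entries, $\cT^\pi_{\gamma,\br}[\bv] \le \cT^\pi_{\gamma,\br}[\bv']$ entrywise. Writing $\cT^\pi_{\gamma,\br}$ in vector form as $\bv \mapsto \br^\pi + \gamma\bP^\pi\bv$, its fixed-point equation is $(\bI - \gamma\bP^\pi)\bv = \br^\pi$, which --- as $\gamma\bP^\pi$ has spectral radius $< 1$ --- has the unique solution $\bv^\pi_{\gamma,\br}$; thus $\bv^\pi_{\gamma,\br}$ is precisely the fixed point of $\cT^\pi_{\gamma,\br}$. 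Denote by $\bv^\star$ the unique fixed point of $\cT_{\gamma,\br}$.

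Next I would extract an optimal policy. Define $\pi^\star$ by $\pi^\star(s) \in \argmax_{a \in \cA_s}\{\br(s,a) + \gamma\bmp(s,a)^\top\bv^\star\}$ for each $s \in \cS$; then by definition of $\cT$ and $\cT^{\pi}$ we get $\cT^{\pi^\star}_{\gamma,\br}[\bv^\star] = \cT_{\gamma,\br}[\bv^\star] = \bv^\star$, so $\bv^\star$ is the fixed point of $\cT^{\pi^\star}_{\gamma,\br}$ and therefore $\bv^{\pi^\star}_{\gamma,\br} = \bv^\star$. To see that $\pi^\star$ is optimal for every starting state, fix any $\pi \in \Pi$ and note $\cT^\pi_{\gamma,\br}[\bv^\star] \le \cT_{\gamma,\br}[\bv^\star] = \bv^\star$ entrywise; applying monotonicity of $\cT^\pi_{\gamma,\br}$ repeatedly gives $(\cT^\pi_{\gamma,\br})^k[\bv^\star] \le \bv^\star$ for every $k \ge 1$, and letting $k \to \infty$ --- the left side converges to $\bv^\pi_{\gamma,\br}$ by the contraction property --- yields $\bv^\pi_{\gamma,\br} \le \bv^\star = \bv^{\pi^\star}_{\gamma,\br}$ entrywise. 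Since $\bv^\pi_{\gamma,\br}(s) = \E\Brac{\sum_{t \ge 0}\gamma^t\br(s_t,\pi(s_t)) \mid s_0 = s}$ by definition, this is exactly the first displayed identity of the Fact, and the uniqueness statement follows from the uniqueness of the fixed point of $\cT_{\gamma,\br}$ together with $\bv^{\pi^\star}_{\gamma,\br} = \bv^\star$.

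I expect the proof to be short, since all the real content is in the two structural properties above. The only point requiring care is the passage to the limit $k \to \infty$ in the inequality $(\cT^\pi_{\gamma,\br})^k[\bv^\star] \le \bv^\star$: one must invoke that $\cT^\pi_{\gamma,\br}$ is a $\gamma$-contraction (so its iterates from any starting point converge to $\bv^\pi_{\gamma,\br}$) and that the entrywise ordering is preserved under this convergence. A secondary point worth a remark is that $\Pi$ here consists of stationary deterministic policies, so the ``over all policies'' maximum in the Fact is over exactly this class; no comparison against history-dependent or randomized policies is needed, which is what makes the greedy-policy extraction from $\bv^\star$ sufficient.
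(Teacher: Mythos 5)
The paper states this Fact without proof, treating it as the classical Bellman optimality result, so there is nothing in the text to compare against; your argument is the standard contraction-plus-monotonicity proof and is correct, including the two points you flag (passing to the limit in $(\cT^\pi_{\gamma,\br})^k[\bv^\star]\le\bv^\star$ and the fact that $\Pi$ here is the class of stationary deterministic policies). The only step worth one extra line is the identification of the expected-discounted-sum definition of $\bv^\pi_{\gamma,\br}$ with the unique solution of $(\bI-\gamma\bP^\pi)\bv=\br^\pi$ (e.g.\ via the Neumann series $(\bI-\gamma\bP^\pi)^{-1}\br^\pi$), which the paper itself folds into Definition~\ref{def:bellman}.
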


Equipped with these definitions of values in DMDPs, we define the DMDP problem as follows. 

\begin{definition}[{\color{ForestGreen}{DMDP problem}}]\label{def:dmdp-problem} Let $\cM_{\gamma, \br}$ be a $\gamma$-discounted MDP and $\epsilon \in (0, (1-\gamma)^{-1}]$ be given. We must compute a value $\bv$ and a policy $\pi$ such that
\begin{align*}
    \bm{0} \leq \bv_{\gamma, \br}^\star - \bv \leq \epsilon \bm{1}, \quad \text{ and } \quad \bv_{\gamma, \br}^\star- \epsilon \bm{1} \leq \bv_{\gamma, \br}^\pi \leq \bv_{\gamma, \br}^\star.
\end{align*}
Such a policy $\pi$ and value $\bv$ is called an $\epsilon$-optimal policy and $\epsilon$-optimal value for $\cM_{\gamma, \br}$. 
\end{definition} 

\paragraph{The standard outer-solver and sub-solver.}

Now, we will show how to reduce the DMDP problem for a discount factor $\gamma > 0$ to solving a sequence of sub-problems. In this case, the sub-problem will essentially require solving a DMDP with a \textit{smaller} discount factor $\gamma'$. Note that in the sub-problems, we relax the requirement from $\br \in [0, 1]^{\cA}$ to simply $\br \in \R_{\geq 0}^{\cA}$. 

\begin{definition}[{\color{ForestGreen} DMDP sub-problem}]\label{def:dmdp-sub-problem} Let $\cM = (\cS, \cA, \bP)$ be an MDP and $\br \in \R_{\geq 0}^\cA$ be a reward vector. In the $(\cM, \br, \gamma', \bv, \epsilon)$-sub-problem, we are given $\gamma' > 0$, $\bv \in \R^\cS$ and $\epsilon \in (0, 1/(1-\gamma')]$. We define $\br' \defeq \br - (\gamma' - \gamma)\bP \bv$,  and must compute a value $\bv$ such that $\normInline{\bv_{\gamma', \br'}^\star - \bv}_\infty \leq \epsilon/2.$
\end{definition}

To enable computing approximately optimal policies in addition to approximately optimal values, we also introduce the following \emph{policy}-sub-problem as follows. 

\begin{definition}[{\color{ForestGreen}{DMDP policy-sub-problem}}]\label{def:dmdp-policy-sub-problem} Let $\cM = (\cS, \cA, \bP)$ be an MDP and $\br \in \R_{\geq 0}^\cS$ be a reward vector. In the $(\cM, \br, \gamma', \bv, \epsilon)$-policy-sub-problem, we are given $\gamma' > 0$, $\bv \in \R^\cS, \epsilon \in (0, 1/(1-\gamma')]$ 
\begin{align*}
    \br' \defeq \br - (\gamma' - \gamma)\bP \bv, 
\end{align*}
and we must compute a value $\bv$ and a policy $\pi$ such that $\bm{0} \leq \bv_{\gamma', \br'}^\star - \bv \leq \epsilon \bm{1}$ and $\bv \leq \bv_{\gamma', \br'}^{\pi}.$ That is, we must find an $\epsilon$-optimal value and policy for $\cM_{\gamma', \br'}$. 
\end{definition}

In comparison to the DMDP sub-problem (Definition~\ref{def:dmdp-sub-problem}), in the DMDP policy-sub-problem, we must not only output an $\epsilon$-optimal value for $\cM_{\gamma', \br'}$ but must \emph{also} output a policy $\pi$ that attains at least that value. For the DMDP problem, the outer process is defined (simply) as follows. 

\begin{definition}[{\color{ForestGreen} DMDP post-process}]\label{def:mdp-post-process} Fix $\epsilon \in (0, 1/(1-\gamma')]$. For any $\bv, \bv' \in \R^\cS$, we define $\zeta_\epsilon(\bv, \bv') \defeq \bv' - \epsilon \bm{1}.$
\end{definition}

The outer process is intented to adjust (shift down the entries of) $\bv'$ to ensure that it is \emph{an underestimate} of the optimal value. 

We prove the following outer-solver for DMDPs, which we call the Proximal Reward Method (PRM), since it is in spirit similar to proximal point methods in convex optimization.   

\begin{restatable}[PRM - {\color{ForestGreen} DMDP outer-solver}]{theorem}{prm}\label{thm:dmdp-unregularizer}
Let $\cM_{\gamma, \br}$ be a DMDP, $\gamma' < \gamma$, and $\epsilon \in (0, 1/(1-\gamma)]$. Let $\epsilon' = \epsilon/4 \cdot (1-\gamma)/(1-\gamma')$. Suppose that in each iteration $t \in [\nOuter]$ of Algorithm~\ref{alg:dmdp}, $\bv_{t-1/2}$ is a solution to the $(\cM, \br, \gamma', \bv_{t-1}, \epsilon')$-sub-problem. Suppose that $\bv_{\nOuter+1}, \pi_{\nOuter+1}$ is a solution to the $(\cM, \br, \gamma', \bv_{\nOuter}, \epsilon')$-policy sub-problem. Then, $\bv_{\nOuter+1}, \pi_{\nOuter+1}$ solves the DMDP problem.
\end{restatable}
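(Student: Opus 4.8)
The plan is to recognize Algorithm~\ref{alg:dmdp} as an inexact fixed-point iteration for a contractive ``reward-shift'' map whose unique fixed point is $\bv_{\gamma,\br}^\star$, to run it to high $\ell_\infty$-accuracy, and then to transfer the value/policy guarantees for the smaller-discount sub-problems back to the original $\gamma$-discounted MDP. The one conceptual ingredient is a reward-shift identity: writing $\br' = \br - (\gamma'-\gamma)\bP\bv$ as in Definition~\ref{def:dmdp-sub-problem}, for every $\bv \in \R^\cS$ and every policy $\pi$ the shifted operators coincide with the original ones,
\[
\cT^\pi_{\gamma',\br'}[\bv] = \cT^\pi_{\gamma,\br}[\bv]
\qquad\text{and}\qquad
\cT_{\gamma',\br'}[\bv] = \cT_{\gamma,\br}[\bv],
\]
since $\br'(s,a) + \gamma'\,\bmp(s,a)^\top\bv = \br(s,a) + \gamma\,\bmp(s,a)^\top\bv$. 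Hence the map $g$ sending $\bv$ to the exact $(\cM,\br,\gamma',\bv,\cdot)$-sub-problem solution $\bv_{\gamma',\,\br-(\gamma'-\gamma)\bP\bv}^\star$ has $\bv_{\gamma,\br}^\star$ as a fixed point (apply the identity at $\bv=\bv_{\gamma,\br}^\star$ and use the Bellman optimality characterization), and $g$ is monotone because $\bv\mapsto\bP\bv$ and ``more reward $\Rightarrow$ larger optimal value'' are both monotone.

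Next I would show $g$ is an $\ell_\infty$-contraction: using the standard fact that $\bw\mapsto\bv_{\gamma',\bw}^\star$ is $\tfrac1{1-\gamma'}$-Lipschitz together with $\normInline{\bP\bx}_\infty\le\normInline{\bx}_\infty$ gives $\normInline{g(\bv_1)-g(\bv_2)}_\infty\le\kappa\normInline{\bv_1-\bv_2}_\infty$ with $\kappa:=\tfrac{\gamma-\gamma'}{1-\gamma'}<1$ (and $1-\kappa=\tfrac{1-\gamma}{1-\gamma'}$). The sub-problem guarantee of Definition~\ref{def:dmdp-sub-problem} gives $\normInline{\bv_{t-1/2}-g(\bv_{t-1})}_\infty\le\epsilon'/2$, and the post-process $\zeta$ of Definition~\ref{def:mdp-post-process} subtracts a constant vector, which I would take of size $\Theta(\epsilon')$ so that — by monotonicity and $g(\bv_{\gamma,\br}^\star)=\bv_{\gamma,\br}^\star$ — the invariant $\bm{0}\le\bv_{\gamma,\br}^\star-\bv_t$ is preserved along the iteration. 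Then $\delta_t:=\normInline{\bv_{\gamma,\br}^\star-\bv_t}_\infty$ obeys $\delta_t\le\kappa\delta_{t-1}+O(\epsilon')$, so after $\nOuter=\tilde{O}(\tfrac1{1-\kappa})=\tilde{O}(\tfrac{1-\gamma'}{1-\gamma})$ iterations (which fixes the required $\nOuter$), starting from a standard initialization with $\delta_0=O(\tfrac1{1-\gamma})$, we get $\delta_{\nOuter}\le\kappa^{\nOuter}\delta_0+\tfrac{O(\epsilon')}{1-\kappa}=O(\epsilon)$; here the definition $\epsilon'=\tfrac{\epsilon}{4}\cdot\tfrac{1-\gamma}{1-\gamma'}$ is exactly what makes the geometric tail $\tfrac{\epsilon'}{1-\kappa}=\tfrac{\epsilon}{4}$, cancelling the $\tfrac1{1-\kappa}$ blow-up.

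Finally I would set $\br':=\br-(\gamma'-\gamma)\bP\bv_{\nOuter}$ and $\pi:=\pi_{\nOuter+1}$. The policy-sub-problem (Definition~\ref{def:dmdp-policy-sub-problem}) gives $\bm{0}\le\bv_{\gamma',\br'}^\star-\bv_{\nOuter+1}\le\epsilon'\bm{1}$ and $\bv_{\nOuter+1}\le\bv_{\gamma',\br'}^\pi$, and $\bv_{\gamma',\br'}^\star=g(\bv_{\nOuter})$ lies within $\kappa\delta_{\nOuter}\bm{1}$ below $\bv_{\gamma,\br}^\star$ (and below it, by monotonicity), which already delivers the value claim $\bm{0}\le\bv_{\gamma,\br}^\star-\bv_{\nOuter+1}\le\epsilon\bm{1}$ of Definition~\ref{def:dmdp-problem} after absorbing the $O(\epsilon)$ constants. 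For the policy, $\bv_{\gamma,\br}^\pi\le\bv_{\gamma,\br}^\star$ is automatic; for the matching lower bound I would use the exact consequence of the operator identity applied to the \emph{fixed} policy $\pi$,
\[
\bv_{\gamma',\br'}^\pi-\bv_{\gamma,\br}^\pi=(\gamma-\gamma')(\bI-\gamma'\bP^\pi)^{-1}\bP^\pi\big(\bv_{\nOuter}-\bv_{\gamma,\br}^\pi\big),
\]
which follows from $(\br')^\pi=(\bI-\gamma'\bP^\pi)\bv_{\gamma,\br}^\pi+(\gamma-\gamma')\bP^\pi(\bv_{\nOuter}-\bv_{\gamma,\br}^\pi)$ and the resolvent identity; rearranging it as a linear equation in $\bv_{\nOuter}-\bv_{\gamma,\br}^\pi$ and noting that $(\bI-(\gamma-\gamma')(\bI-\gamma'\bP^\pi)^{-1}\bP^\pi)^{-1}$ is entrywise nonnegative with every row summing to $\tfrac1{1-\kappa}$, combined with the bounds $-\delta_{\nOuter}\bm{1}\le\bv_{\nOuter}-\bv_{\gamma',\br'}^\pi\le(\kappa\delta_{\nOuter}+\epsilon')\bm{1}$ from the previous paragraph, yields $\bm{0}\le\bv_{\gamma,\br}^\star-\bv_{\gamma,\br}^\pi\le\tfrac{\kappa\delta_{\nOuter}+\epsilon'}{1-\kappa}\bm{1}\le\epsilon\bm{1}$, so $(\bv_{\nOuter+1},\pi_{\nOuter+1})$ solves the DMDP problem.

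The main obstacle I anticipate is precisely this last transfer together with the constant bookkeeping in the previous two steps: the reduction changes \emph{both} the discount and the reward, so an $\ell_\infty$ error from a $\gamma'$-sub-problem is only useful after a $(\bI-\gamma'\bP^\pi)^{-1}$-type amplification of gain $\tfrac1{1-\gamma'}$ and a further $\tfrac1{1-\kappa}=\tfrac{1-\gamma'}{1-\gamma}$ amplification from summing the geometric error series, and the argument closes only because these are cancelled by the calibrated choices $\epsilon'=\tfrac{\epsilon}{4}\cdot\tfrac{1-\gamma}{1-\gamma'}$ and $\nOuter=\tilde{O}(\tfrac{1-\gamma'}{1-\gamma})$; the secondary delicate point is ensuring the underestimate invariant survives the inexact sub-solves, so that monotonicity can be invoked.
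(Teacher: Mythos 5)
Your value-convergence analysis is essentially the paper's own argument in different packaging: the map $g(\bv)=\bv^\star_{\gamma',\,\br-(\gamma'-\gamma)\bP\bv}$, the fixed-point identity $g(\bv^\star_{\gamma,\br})=\bv^\star_{\gamma,\br}$ via the reward-shift/Bellman identity, the contraction factor $\kappa=\tfrac{\gamma-\gamma'}{1-\gamma'}$ obtained from reward-perturbation stability (the paper's Lemma~\ref{lemma:bellman-stability}), the underestimate invariant enforced by the subtract-$\epsilon'$ post-process, and the geometric-series cancellation with $\epsilon'=\tfrac{\epsilon}{4}\cdot\tfrac{1-\gamma}{1-\gamma'}$ are exactly the content of the paper's Lemma~\ref{lemma:main-unregularizer-dmdp}, so that part is fine and matches the paper.

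The gap is in your final policy-transfer step. You invert the implicit relation to get $\bv_{\nOuter}-\bv^{\pi}_{\gamma,\br}=(\bI-M)^{-1}\bigl(\bv_{\nOuter}-\bv^{\pi}_{\gamma',\br'}\bigr)$ with $M=(\gamma-\gamma')(\bI-\gamma'\bP^{\pi})^{-1}\bP^{\pi}$, feed in $\bv_{\nOuter}-\bv^{\pi}_{\gamma',\br'}\le(\kappa\delta_{\nOuter}+\epsilon')\bm{1}$, and use the row sums $\tfrac{1}{1-\kappa}$ to claim $\tfrac{\kappa\delta_{\nOuter}+\epsilon'}{1-\kappa}\le\epsilon$. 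This does not follow: $\delta_{\nOuter}$ cannot be driven below the steady state of your own recursion, $\tfrac{\Theta(\epsilon')}{1-\kappa}=\Theta(\epsilon)$, no matter how large $\nOuter$ is, so your bound is $\Theta\bigl(\epsilon\cdot\tfrac{\kappa}{1-\kappa}\bigr)=\Theta\bigl(\epsilon\cdot\tfrac{\gamma-\gamma'}{1-\gamma}\bigr)$, which is unbounded as $\gamma\to1$ and is not $O(\epsilon)$. The $\tfrac{1}{1-\kappa}$ amplification is only cancelled if the driving term $\bv_{\nOuter}-\bv^{\pi}_{\gamma',\br'}$ is $O(\epsilon')=O(\epsilon(1-\kappa))$, and the invariant you carry ($\normInline{\bv^\star_{\gamma,\br}-\bv_{\nOuter}}_\infty\le\delta_{\nOuter}$) is too weak to give that; you would need a one-sided statement of the form $\bv_{\nOuter}\lesssim\bv^\star_{\gamma',\br^{\nOuter+1}}=g(\bv_{\nOuter})$ up to $O(\epsilon')$ (e.g.\ by maintaining $\bv_t\le\cT_{\gamma,\br}[\bv_t]$ or monotonicity of the iterates), which your bookkeeping never establishes. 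For comparison, the paper's proof of Theorem~\ref{thm:dmdp-unregularizer} never inverts $\bI-M$: it bounds the single difference $\bv^{\pi_{\nOuter+1}}_{\gamma,\br}-\bv^{\pi_{\nOuter+1}}_{\gamma',\br^{\nOuter+1}}=M\bigl(\bv^{\pi_{\nOuter+1}}_{\gamma,\br}-\bv_{\nOuter}\bigr)$ by $\kappa\max_{s}\bigl(\bv^\star_{\gamma,\br}-\bv_{\nOuter}\bigr)(s)\le\tfrac{\epsilon}{2}$ using $\bv^{\pi}_{\gamma,\br}\le\bv^\star_{\gamma,\br}$, paying only a single factor $\kappa\le1$ rather than a resummed $\tfrac{1}{1-\kappa}$, and then combines this with $\bv^\star_{\gamma,\br}-\tfrac{\epsilon}{2}\bm{1}\le\bv_{\nOuter+1}\le\bv^{\pi_{\nOuter+1}}_{\gamma',\br^{\nOuter+1}}$ from the policy sub-problem. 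As written, your last step is therefore the one place where the proposal genuinely fails to close.
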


We prove Theorem~\ref{thm:dmdp-unregularizer} in Section~\ref{sec:thm}, and for now, focus on its application. To do so, we need to discuss high-precision algorithms for solving the DMDP sub-problem and policy-sub-problem. 

\paragraph{Subproblem solvers.} There are many high-accuracy algorithms for solving the DMDP subproblem as well as the DMDP-policy-sub-problem. These methods can broadly be divided into interior-point methods (e.g., \citep{SWWY18, lee2014path, brand2021, cohen2020solving, jiang2021faster}), classical value iteration (VI) \citep{littman2013complexity, tseng1990solving}, variance-reduced variants variants of (VI), as well as policy-based methods such as policy iteration or policy gradient and variants (see e.g., \cite{grondman2012survey, bertsekas2011approximate} for a survey.) These variants \citep{SWWY18, jin2024truncated} implement an approximate version of VI by trading-off between full batch queries (matrix-vector products in $\bP$) and sample queries (simulator oracle queries) to obtain faster runtimes. Since we are interested in trade-offs between full batch and sample queries, we consider the Truncated Variance-Reduced Value Iteration (TVRVI) outer-solver from \cite{jin2024truncated} since it achieves the best trade-off between full batch and sample queries in this setting.

\begin{theorem}[TVRVI, Theorem 1.2 of \cite{jin2024truncated}, restated - {\color{ForestGreen} DMDP sub-problem solver}]\label{thm:dmdp-sub-problem-solver} Let $\cM = (\cS, \cA, \bP)$ be an MDP, $\br \in \R_{\geq 0}^\cA$ and $0 < \gamma' < \gamma< 1$. There are randomized algorithms $\cA^{\mathrm{SVRG} | \gamma', \epsilon}_{\xi, \chi}(\bm{x})$ and $\cA^{\mathrm{SVRG-Policy} | \gamma', \epsilon}_{\xi, \chi}(\bm{x})$
such that the following hold true. 
\begin{itemize}[leftmargin=*]
    \item The algorithms take in the random seeds $\xi, \chi$ distributed as follows. The first random seed $\xi \sim \{i_1, ..., i_T\}$ where each $i_j \in \cS^\cA$ and each $i_j(s, a) \sim \bmp(s, a)$ and $T = \tilde{O}((1-\gamma')^{-2})$. The second random seed $\chi \sim \UniformDist{[0]}$ is a 0-bit random seed. 
    \item If $\bm{v}' = \cA^{\mathrm{TVRVI} | \gamma', \epsilon, \delta}_{\xi, \chi}({\bv})$, then wp.\ $1-\delta$ over the draw of the random seeds $\xi$ and $\chi$, ${\bv}'$ is a solution to the $(\cM, \br, \gamma', \bv, \epsilon)$-sub-problem.
    \item If $\bm{v}', \pi = \cA^{\mathrm{TVRVI-Policy} | \gamma', \epsilon, \delta}_{\xi, \chi}({\bv})$, then wp.\ $1-\delta$ over the draw of the random seeds $\xi$ and $\chi$, ${\bv}', \pi$ are a solution to the $(\cM, \br, \gamma', \bv, \epsilon)$-policy sub-problem.
    \item The algorithms make only $\tilde{O}(1)$ batch queries and make only the sample queries that are encoded in $\xi$. 
\end{itemize}
\end{theorem}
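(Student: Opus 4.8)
The plan is to read this statement off Theorem~1.2 of \cite{jin2024truncated}---the Truncated Variance-Reduced Value Iteration (TVRVI) guarantee---applied to the $\gamma'$-discounted MDP $\cM_{\gamma',\br'}$ with $\br' \defeq \br - (\gamma'-\gamma)\bP\bv$, whose optimal value (resp.\ an optimal value together with an attaining policy) we must approximate in $\ell_\infty$ to accuracy $\epsilon/2$ (resp.\ $\epsilon$). Recall that TVRVI is a variance-reduced (SVRG-style \cite{SWWY18}) value iteration: over $\tilde{O}(1)$ epochs it evaluates a ``reference'' vector $\bP\bv_0$ exactly via the matrix--vector oracle---one full-batch query---at each epoch's anchor $\bv_0$, and within an epoch it runs $\tilde{O}((1-\gamma')^{-2})$ inner (approximate) Bellman steps, each forming a low-variance unbiased estimate of $\bP(\bu - \bv_0)$ from a constant number of fresh samples $s' \sim \bmp(s,a)$ per relevant state--action pair. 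In the batch--sample model this is $\tilde{O}(1)$ full-batch queries and $\tilde{O}(\Atot(1-\gamma')^{-2})$ simulator queries in total; I would organize the latter as $T = \tilde{O}((1-\gamma')^{-2})$ i.i.d.\ ``snapshots'' $i_1,\dots,i_T$ with $i_j(s,a) \sim \bmp(s,a)$ drawn independently, set $\xi \defeq (i_1,\dots,i_T)$, and feed these pre-drawn samples to TVRVI; the accuracy guarantee of \cite{jin2024truncated} then yields, w.p.\ $1-\delta$, a value $\bv'$ with $\normInline{\bv' - \bv^\star_{\gamma',\br'}}_\infty \le \epsilon/2$ (the sub-problem of Definition~\ref{def:dmdp-sub-problem}), and its policy variant additionally returns a policy attaining an $\epsilon$-optimal value (the policy-sub-problem of Definition~\ref{def:dmdp-policy-sub-problem}).

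The crux---and the step I expect to be the main obstacle---is to verify that \emph{all} of TVRVI's randomness is \emph{oblivious}, so that it can be routed entirely through $\xi$ and the second seed $\chi$ is a genuine $0$-bit dummy. This holds because the only randomness used by the inner steps is the sampled next-states $s' \sim \bmp(s,a)$, and the sampling \emph{distributions} $\bmp(s,a)$ are fixed features of $\cM$ that do not depend on the value vector $\bv$ passed to the sub-solver (equivalently, on $\br'$): the dependence on $\bv$ and on the epoch anchor $\bv_0$ enters \emph{only} through the deterministic coefficients by which the sampled coordinates $\bv_0(s'),\bu(s')$ are combined. Hence the $T\Atot$ samples can be drawn before the sub-solver ever inspects $\bv$, and there is no adaptive sampling to carry in $\chi$. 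If the original implementation chooses \emph{which} $(s,a)$ to sample based on its running state, I would simply have it sample every $(s,a)$ in each inner round, which changes the sample count by at most a constant factor and restores obliviousness; this is exactly what makes the algorithm ``use only the sample queries encoded in $\xi$.''

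Finally, I would check that $\br' = \br - (\gamma'-\gamma)\bP\bv$ is admissible for TVRVI and cheap to form. Since $\gamma' < \gamma$ we have $\br' = \br + (\gamma-\gamma')\bP\bv \ge \bm{0}$ whenever $\bv \ge \bm{0}$, and in the intended use---the PRM outer-solver of Theorem~\ref{thm:dmdp-unregularizer}---the input $\bv$ is a near-underestimate of $\bv^\star_{\gamma,\br} \in [0,(1-\gamma)^{-1}]^\cS$, so $\br'$ is nonnegative with entries bounded by $\poly((1-\gamma)^{-1})$; rescaling to $\br' \in [0,1]^\cA$ affects the $\tilde{O}(\cdot)$ bounds only by constants and logarithmic factors. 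Forming the inner products needed with $\br'$ costs one extra $\bP$-product, i.e.\ $\tilde{O}(1)$ additional full-batch queries and no extra samples. Combining the three points produces the value-only algorithm $\cA^{\mathrm{SVRG} | \gamma', \epsilon}_{\xi, \chi}$ and the policy algorithm $\cA^{\mathrm{SVRG-Policy} | \gamma', \epsilon}_{\xi, \chi}$ with the claimed seed structure ($\xi$ a tuple of $T = \tilde{O}((1-\gamma')^{-2})$ snapshots, $\chi$ a $0$-bit seed), accuracy, and query complexities.
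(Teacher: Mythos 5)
The paper does not actually prove this statement---it is imported verbatim as a restatement of Theorem~1.2 of \cite{jin2024truncated}---and your sketch supplies exactly the routine verification that restatement implicitly relies on: run TVRVI on $\cM_{\gamma',\br'}$ with $\br' = \br + (\gamma-\gamma')\bP\bv \ge \bm{0}$, observe that the next-state sampling distributions $\bmp(s,a)$ are fixed by $\cM$ and independent of the input $\bv$, so all $T=\tilde{O}((1-\gamma')^{-2})$ per-state-action snapshots can be pre-drawn and packaged into $\xi$ with $\chi$ a $0$-bit dummy, and account for the reward shift via one extra matrix-vector product and a rescaling absorbed by the polylogarithmic accuracy dependence. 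So your approach is essentially the same as the paper's, with the quantitative guarantees (one-sided underestimation for the policy variant, $\tilde{O}(1)$ batch queries, high-accuracy $\epsilon$-dependence) resting, as in the paper, on the cited theorem itself.
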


By instantiating the DMDP outer-solver PRM (Theorem~\ref{thm:dmdp-unregularizer}) with TVRVI (Theorem~\ref{thm:dmdp-sub-problem-solver}) as the sub-solver, for $\gamma' \leq \gamma$, we obtain a full batch versus sample query trade-off of $\tilde{O}((1-\gamma')/(1-\gamma))$ full batch and $\tilde{O}((1-\gamma')^{-1}(1-\gamma)^{-1})$ sample queries per state-action pair. The pseudo-code is shown in Algorithm~\ref{alg:dmdp}. 

\begin{algorithm2e}[ht!]
\DontPrintSemicolon
\SetKwInput{KwParameters}{Parameters}
\caption{$\code{PRM-TVRVI}_{\gamma', \delta}(\cM, \br, \epsilon, \gamma)$ Pseudocode}\label{alg:dmdp}
\KwInput{Matrix-vector oracle for $\bP$, simulator oracle for $\bP$, reward vector $\br \in [0, 1]^\cS$, error tolerance $\epsilon \in (0, 1/(1-\gamma)]$, discount factor $\gamma \in (0, 1)$}
\KwParameters{Failure probability $\delta$, and $0 < \gamma' < \gamma< 1$.}
Initialize $\bv_0 \gets \bm{0} \in \R^\cS$ \; 
Set sufficiently large $\nOuter = \tilde{O}((1-\gamma')/(1-\gamma))$\; 
Set $\epsilon' = \epsilon/4 \cdot (1-\gamma)/(1-\gamma')$\;
\For{each $t \in [\nOuter]$}{
    \tcp{We draw realizations of the random seed $\xi$ according to the following distribution.} 
    Draw $s_t \defeq \{i_1, ..., i_T\} \sim \cD_\xi$ where each $i_j \in \cS^\cA$ and each $i_j(s, a) \sim \bmp(s, a)$ for sufficiently large $T = \tilde{O}((1-\gamma')^{-2})$\; 
    \tcp{We include a $\chi$, a zero-bit random bit for technical consistency with \textnormal{Section~\ref{sec:pseudoindependence}}} 
    Draw $c_t \sim \cD_\chi \defeq \UniformDist{[0]}$\; 
    ${\bv}_{t-1/2} \gets \cA^{\mathrm{TVRVI} | \gamma', \epsilon', \delta/\nOuter}_{\xi=s_t, \chi=c_t}({\bv_{t-1}})$.\; 
    \tcp{The post-process is as described in \textnormal{Definition~\ref{def:mdp-post-process}} } 
    ${\bv}_{t} \gets \zeta_{\epsilon'}(\bv_{t-1}, {\bv}_{t-1/2})$\; 
}
\tcp{For the final iteration, we again draw a realization of the random seed $\xi$ according to the following distribution.} 
Draw $s \defeq \{i_1, ..., i_T\} \sim \cD_\xi$ where each $i_j \in \cS^\cA$ and each $i_j(s, a) \sim \bmp(s, a)$ for sufficiently large $T = \tilde{O}((1-\gamma')^{-2})$\; 
\tcp{Again, we include a $\chi$, a zero-bit random bit for technical consistency with \textnormal{Section~\ref{sec:pseudoindependence}}} 
Draw $c_t \sim \cD_\chi \defeq \UniformDist{[0]}$\; 
${\bv}_{\nOuter+1}, \pi_{\nOuter+1} \gets \cA^{\mathrm{TVRVI-Policy} | \gamma', \epsilon', \delta/\nOuter}_{\xi=s, \chi=c}({\bv}_{\nOuter})$.\; 
\Return{$\bm{v}_{\nOuter+1}, \pi_{\nOuter+1}$}
\end{algorithm2e}

\paragraph{The sample reusing sub-solver} 

Next, we observe that Algorithm~\ref{alg:dmdp} \emph{exactly} fits into the Meta-Algorithm~\ref{alg:meta-algorithm} framework from Section~\ref{sec:pseudoindependence}. In particular, we observe that Algorithm~\ref{alg:fsm} implements $\Outer(X, \traditional)$ where $X = (\cM, \br, \gamma)$ is an DMDP problem instance as per Definition~\ref{def:dmdp-problem}. To show that we can replace the standard sub-solver with the sample-reusing sub-solver (i.e., $\Outer(X, \reuse)$), we need to invoke Theorem~\ref{thm:AtoC}.

However, we can also observe that the DMDP outer-solver PRM in Theorem~\ref{thm:dmdp-sub-problem-solver} is \emph{by definition} robust with respect to the target function $\fsub$ (defined below), in the sense of Definition~\ref{assumption:robustness}. Further, the sub-solver TVRVI \emph{already} guarantees $\ell_\infty$ accuracy guarantees in the sense of Definition~\ref{def:approx} for the target function
\begin{align*}
    \fsub(\bv) \defeq \br - \bv^\star_{\gamma', (\gamma' - \gamma) \bP \bv}. 
\end{align*}

Consequently, we can \emph{directly} apply our sample reuse framework to reuse samples across all $\nOuter$ iterations of the for loop in Algorithm~\ref{alg:dmdp} to obtain the following improved trade-off. In particular, by invoking Theorem~\ref{thm:AtoC}, we obtain the following result. 

\begin{theorem}[{\color{ForestGreen} DMDP trade-off improvement}]\label{thm:dmdp-main} Let $\cM_{\gamma, \br}$ be a DMDP and $\gamma' < \gamma$. There is an algorithm (Algorithm~\ref{alg:dmdp-reuse}) which makes $\tilde{O}((1-\gamma')/(1-\gamma)$-batch queries and $\tilde{O}(\Atot(1-\gamma')^{-2})$-sample queries and solves the DMDP problem wp.\ $1-\delta$. 
\end{theorem}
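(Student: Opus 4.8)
The plan is to mirror the argument behind the FSM improvement (Theorem~\ref{thm:improved}): exhibit Algorithm~\ref{alg:dmdp} (\code{PRM-TVRVI}) as an instantiation of $\Outer(X;\traditional)$ for the DMDP instance $X=(\cM,\br,\gamma)$, verify the two hypotheses of Theorem~\ref{thm:AtoC}, and then read off the query complexity of the sample-reusing variant $\Outer(X;\reuse,\tau)$, which is exactly Algorithm~\ref{alg:dmdp-reuse}. In this instantiation the outer-process is $\zeta_{\epsilon'}$ of Definition~\ref{def:mdp-post-process}, the sub-solver $\cA^{\sub}_{\xi,\chi}$ is TVRVI (Theorem~\ref{thm:dmdp-sub-problem-solver}), the oblivious seed $\xi$ encodes the batch $\{i_1,\dots,i_T\}$ of simulator samples with $T=\tilde{O}((1-\gamma')^{-2})$, $\chi$ is the $0$-bit dummy seed, and $\bm{w}$ selects the last iterate. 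The one structural wrinkle is the final policy-sub-problem call that Algorithm~\ref{alg:dmdp} makes after the loop: this is a single additional invocation, drawn with its own fresh seed, so it can be treated separately and does not interfere with the reuse argument over the $\nOuter$ loop iterations.

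For the first hypothesis of Theorem~\ref{thm:AtoC}, I would observe that $\Outer(X;\traditional)$ is $(\eta,\beta)$-robust with respect to the target $\fsub(\bv)\defeq\br-\bv^{\star}_{\gamma',(\gamma'-\gamma)\bP\bv}$ defined above, with $\beta=\epsilon$ (the target DMDP accuracy) and $\eta=\epsilon'/2=\Theta(\epsilon\,(1-\gamma)/(1-\gamma'))$. This is essentially immediate from Theorem~\ref{thm:dmdp-unregularizer}: that theorem produces a correct pair $(\bv_{\nOuter+1},\pi_{\nOuter+1})$ whenever each $\bv_{t-1/2}$ solves the $(\cM,\br,\gamma',\bv_{t-1},\epsilon')$-sub-problem, and by Definition~\ref{def:dmdp-sub-problem} that requirement is precisely the $\ell_\infty$-closeness bound $\normInline{\bv_{t-1/2}-\fsub(\bv_{t-1})}_\infty\le\epsilon'/2$ appearing in Definition~\ref{assumption:robustness}. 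Since $\eta$ is polynomial in $\beta$ and the problem parameters, the $\eta'\defeq\min(\eta/2,\eta\delta)$ shrinkage and the $5\nOuter^2$-factor loss in failure probability in Theorem~\ref{thm:AtoC} cost only polylogarithmic factors. For the second hypothesis, Theorem~\ref{thm:dmdp-sub-problem-solver} says TVRVI, run with accuracy $\eta'$ and failure probability $\delta/(5\nOuter^2)$, is an $(\eta',\delta/(5\nOuter^2))$-approximation of $\fsub$, makes $\tilde{O}(1)$ batch queries, and issues simulator queries \emph{only} at the indices encoded in $\xi$ — i.e.\ it uses the oblivious seed in exactly the way Meta-Algorithm~\ref{alg:meta-algorithm} requires — with time and query complexity depending only polylogarithmically on $1/\eta'$ and $\nOuter^2/\delta$.

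With both hypotheses in hand, I invoke Theorem~\ref{thm:AtoC}, applying it with failure parameter $\delta/(5\nOuter^2)$ (so the resulting success probability is $1-\delta$) and with the noise level $\tau$ prescribed there. It follows that $\Outer(X;\reuse,\tau)$ — Algorithm~\ref{alg:dmdp-reuse}, which draws a \emph{single} seed $s_1\sim\cD_\xi$ and reuses it, together with a fresh $|\cS|$-dimensional $\UniformDist(-\tau,\tau)$ perturbation, at each of the $\nOuter$ loop iterations — outputs an $\epsilon$-optimal value and policy for $\cM_{\gamma,\br}$ with probability at least $1-\delta$. For the complexity count: there are $\nOuter=\tilde{O}((1-\gamma')/(1-\gamma))$ loop iterations plus one final policy iteration, each making $\tilde{O}(1)$ matrix-vector queries, for $\tilde{O}((1-\gamma')/(1-\gamma))$ batch queries in total. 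For the sample queries, the loop now consumes only the single seed $s_1$, which encodes $T=\tilde{O}((1-\gamma')^{-2})$ simulator samples per state-action pair, i.e.\ $\tilde{O}(\Atot(1-\gamma')^{-2})$ samples, rather than $\nOuter$ times that amount; the final policy iteration adds another $\tilde{O}(\Atot(1-\gamma')^{-2})$ samples from its own seed, which does not change the bound. This yields the claimed $\tilde{O}((1-\gamma')/(1-\gamma))$ batch and $\tilde{O}(\Atot(1-\gamma')^{-2})$ sample complexity.

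The main obstacle I expect is not the complexity bookkeeping but carefully matching the PRM sub-problem formulation to the robustness template: one must check that the deterministic target $\fsub$ is consistent across iterations (the reward shift $(\gamma'-\gamma)\bP\bv_{t-1}$ depends only on the current input $\bv_{t-1}$, so TVRVI's batch queries depend only on that input, as Meta-Algorithm~\ref{alg:meta-algorithm} demands), that $\epsilon'$ remains in the admissible range $(0,1/(1-\gamma')]$ throughout, and that the $0$-bit seed $\chi$ of TVRVI genuinely plays the role of the (possibly adaptive) second seed in Meta-Algorithm~\ref{alg:meta-algorithm}, so that the $\UniformDist(-\tau,\tau)$ noise injected by $\Reuse_\tau$ is the only additional randomness introduced. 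Once these bookkeeping points are confirmed, the theorem follows by direct invocation of Theorem~\ref{thm:AtoC}, exactly as in the FSM case.
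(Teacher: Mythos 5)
Your proposal is correct and follows essentially the same route as the paper, which likewise treats Algorithm~\ref{alg:dmdp} as an instantiation of $\Outer(X;\traditional)$, reads the $(\eta,\beta)$-robustness of PRM off Theorem~\ref{thm:dmdp-unregularizer} and Definition~\ref{def:dmdp-sub-problem}, notes that TVRVI already furnishes the $\ell_\infty$ approximation of $\fsub$, and concludes by invoking Theorem~\ref{thm:AtoC} with the rescaled failure probability $\delta/(5\nOuter^2)$. If anything, you are more explicit than the paper (which gives no standalone proof) about the final policy-sub-problem call being a separate fresh-seed invocation handled by a union bound, and about the query-complexity bookkeeping.
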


\begin{algorithm2e}[ht!]
\DontPrintSemicolon
\SetKwInput{KwParameters}{Parameters}
\caption{$\code{PRM-TVRVI-Reuse}_{\gamma', \delta}(\cM, \br, \epsilon, \gamma)$ Pseudocode}\label{alg:dmdp-reuse}
\KwInput{Matrix-vector oracle for $\bP$, simulator oracle for $\bP$, reward vector $\br \in [0, 1]^\cS$, error tolerance $\epsilon \in (0, 1/(1-\gamma)]$, discount factor $\gamma \in (0, 1)$}
\KwParameters{Failure probability $\delta$, and $0 < \gamma' < \gamma< 1$.}
Initialize $\bv_0 \gets \bm{0} \in \R^\cS$ \; 
Set sufficiently large $\nOuter = \tilde{O}((1-\gamma')/(1-\gamma))$\; 
Set $\epsilon' = \epsilon/4 \cdot (1-\gamma)/(1-\gamma')$\;
\tcp{We draw realizations of the random seed $\xi$ according to the following distribution.} 
Draw $s_1 \defeq \{i_1, ..., i_T\} \sim \cD_\xi$ where each $i_j \in \cS^\cA$ and each $i_j(s, a) \sim \bmp(s, a)$ for sufficiently large $T = \tilde{O}((1-\gamma')^{-2})$\; 
\For{each $t \in [\nOuter-1]$}{
    \tcp{Draw $c'_t$ from the noisy distribution $\cD_{\chi'}$ as in Meta-Algorithm~\ref{alg:meta-algorithm}.} 
    Draw $c'_t \sim \cD_{\chi'}$\; 

    \tcp{Implement the noisy analog of ${\cA}^{\mathrm{TVRVI}| \gamma', \epsilon', \delta/(5\nOuter^2)}$ as in Meta-Algorithm~\ref{alg:meta-algorithm}.}
    ${\bv}_{t-1/2} \gets {\cA'}^{\mathrm{TVRVI} | \gamma', \epsilon', \delta/(5\nOuter^2)}_{\xi=s_t, \chi'=c'_t}({\bv_{t-1}})$.\; 
    \tcp{The post-process is as described in \textnormal{Definition~\ref{def:mdp-post-process}} } 
    ${\bv}_{t} \gets \zeta_{\epsilon'}(\bv_{t-1}, {\bv}_{t-1/2})$\; 
}
\tcp{For the final iteration, we again draw a realization of the random seed $\xi$ according to the following distribution.} 
Draw $s \defeq \{i_1, ..., i_T\} \sim \cD_\xi$ where each $i_j \in \cS^\cA$ and each $i_j(s, a) \sim \bmp(s, a)$ for sufficiently large $T = \tilde{O}((1-\gamma')^{-2})$\; 
\tcp{Again, we include a $\chi$, a zero-bit random bit for technical consistency with \textnormal{Section~\ref{sec:pseudoindependence}}}
Draw $c_t \sim \cD_\chi \defeq \UniformDist{[0]}$\; 
${\bv}_{\nOuter+1}, \pi_{\nOuter+1} \gets \cA^{\mathrm{TVRVI-Policy} | \gamma', \epsilon', \delta/(5\nOuter^2)}_{\xi=s, \chi=c}({\bv}_{\nOuter})$.\; 
\Return{$\bm{v}_{\nOuter+1}, \pi_{\nOuter+1}$}
\end{algorithm2e}

\paragraph{Faster algorithm for certain DMDPs} {Theorem~\ref{thm:dmdp-unregularizer} also yields another interesting implication regarding the runtime for solving a DMDP. In particular, Truncated Variance-Reduced Value Iteration \cite{jin2024truncated} is known to solve the $(\cM, \gamma', \br', \epsilon, \delta)$-sub-problem in $\tilde{O}(\nnz \bP + \Atot (1-\gamma')^{-2})$-time (Theorem 1.1 of \cite{jin2024truncated}) for $\br \in [0, 1]^\cS$.\footnote{We use $\nnz \bP$ to denote the number of nonzero entries in $\bP$.} Consequently, Theorem~\ref{thm:dmdp-sub-problem-solver} solves the DMDP problem in 
\begin{align*}
    \tilde{O}\paren{ \frac{(1-\gamma')}{(1-\gamma)} \cdot (\nnz \bP + \Atot (1-\gamma')^{-2})} = \tilde{O}\paren{ \frac{\nnz \bP (1-\gamma')}{(1-\gamma)} + \frac{\Atot}{(1-\gamma)(1-\gamma')}}
\end{align*}
time for $\gamma' \leq \gamma$. So, by selecting $1-\gamma' = \max\paren{\sqrt{\Atot / \nnz \bP}, 1-\gamma}$ to minimize \emph{runtime}, we obtain the following immediate corollary of Theorem~\ref{thm:dmdp-unregularizer}.}

\begin{theorem}[Faster runtime for solving certain MDPs]\label{thm:corollary-dmdp-unregularizer} Suppose $\nnz\bP \leq \Atot (1-\gamma)^{-2}$. Then, there is an algorithm that solves the DMDP problem (Definition~\ref{def:dmdp-problem}) in $\tilde{O}(\nnz \bP + \sqrt{\nnz \bP \Atot}(1-\gamma)^{-1})$-time. 
\end{theorem}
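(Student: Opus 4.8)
The plan is to instantiate the new outer-solver PRM (Theorem~\ref{thm:dmdp-unregularizer}) with Truncated Variance-Reduced Value Iteration (TVRVI) of \cite{jin2024truncated} as the sub-problem solver, bound the resulting running time as a function of the intermediate discount factor $\gamma'$, and then choose $\gamma'$ to minimize it. Concretely, fix $\gamma' < \gamma$ for now. By Theorem~\ref{thm:dmdp-unregularizer}, running Algorithm~\ref{alg:dmdp} reduces solving the $\gamma$-DMDP problem to solving $\nOuter = \tilde{O}((1-\gamma')/(1-\gamma))$ instances of the $(\cM, \br, \gamma', \cdot, \epsilon')$-sub-problem followed by a single $(\cM, \br, \gamma', \bv_{\nOuter}, \epsilon')$-policy-sub-problem, where $\epsilon' = \tfrac{\epsilon}{4}\cdot\tfrac{1-\gamma}{1-\gamma'}$. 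Each of these is itself a DMDP with discount factor $\gamma'$ and nonnegative, polynomially bounded reward vector $\br' = \br - (\gamma'-\gamma)\bP\bv$; after the standard rescaling of $\br'$ into $[0,1]^{\cA}$, Theorem~1.1 of \cite{jin2024truncated} solves each sub-problem (and the policy-sub-problem) in $\tilde{O}(\nnz\bP + \Atot(1-\gamma')^{-2})$ time, the rescaling only incurring a polynomial change in the accuracy parameter that is absorbed by $\tilde{O}(\cdot)$.

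Multiplying by $\nOuter$ and simplifying gives a total running time of $\tilde{O}\!\big(\tfrac{\nnz\bP\,(1-\gamma')}{1-\gamma} + \tfrac{\Atot}{(1-\gamma)(1-\gamma')}\big)$, exactly as computed in the paragraph preceding the theorem. The first summand increases and the second decreases in $(1-\gamma')$, so I would balance them by taking $1-\gamma' = \max\{\sqrt{\Atot/\nnz\bP},\,1-\gamma\}$. The hypothesis $\nnz\bP \leq \Atot(1-\gamma)^{-2}$ is equivalent to $\sqrt{\Atot/\nnz\bP}\ge 1-\gamma$, so this choice yields $1-\gamma' = \sqrt{\Atot/\nnz\bP}$ and hence $\gamma'\le\gamma$ as required; moreover, since every row of $\bP$ is a probability distribution we have $\nnz\bP\ge\Atot$, so $1-\gamma'\le 1$ and $\gamma'\in[0,1)$ is a legitimate discount factor. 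Substituting, both summands equal $\tilde{O}\!\big(\sqrt{\nnz\bP\,\Atot}\,(1-\gamma)^{-1}\big)$; and since the hypothesis also gives $\nnz\bP\le\sqrt{\nnz\bP\,\Atot}\,(1-\gamma)^{-1}$, the stated bound $\tilde{O}\!\big(\nnz\bP + \sqrt{\nnz\bP\,\Atot}\,(1-\gamma)^{-1}\big)$ follows. (On the boundary $\nnz\bP = \Atot(1-\gamma)^{-2}$ one instead takes $1-\gamma'$ a negligible amount larger so that $\gamma'<\gamma$ strictly; this changes only the suppressed constants.)

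The arithmetic above is routine; the part that needs care is the interface between PRM and TVRVI. One must verify that the reward vectors $\br'$ produced across all $\nOuter$ iterations of Algorithm~\ref{alg:dmdp} remain nonnegative with magnitude bounded polynomially in $(1-\gamma)^{-1}$ and the other problem parameters (so that the TVRVI runtime bound of \cite{jin2024truncated} applies after rescaling), and that $\epsilon'$ does not degrade faster than polynomially in $\epsilon$ and $(1-\gamma)^{-1}$ (so that the $\log(1/\epsilon')$ factors remain inside $\tilde{O}(\cdot)$). Both should follow from the boundedness guarantees already built into Theorem~\ref{thm:dmdp-unregularizer}, together with the fact that TVRVI makes only $\tilde{O}(1)$ full-batch (matrix-vector) queries per sub-problem (Theorem~\ref{thm:dmdp-sub-problem-solver}); writing out these bounds carefully is the main obstacle, and the rest is the optimization of $\gamma'$ described above.
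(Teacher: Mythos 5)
Your proposal is correct and follows essentially the same route as the paper: combine the PRM outer-solver (Theorem~\ref{thm:dmdp-unregularizer}) with the TVRVI per-sub-problem runtime of \cite{jin2024truncated} to get total time $\tilde{O}\!\big(\nnz\bP\,(1-\gamma')/(1-\gamma) + \Atot/((1-\gamma)(1-\gamma'))\big)$, then set $1-\gamma' = \max\{\sqrt{\Atot/\nnz\bP},\,1-\gamma\} = \sqrt{\Atot/\nnz\bP}$ under the stated hypothesis. Your added checks on reward nonnegativity/boundedness and on $\epsilon'$ are reasonable interface details that the paper leaves implicit, but they do not change the argument.
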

\begin{proof} If $\nnz \bP \leq \Atot (1-\gamma)^{-2}$, then taking 
\begin{align*}
    1-\gamma' = \max(\sqrt{\Atot/\nnz\bP}, (1-\gamma)) = \sqrt{\Atot/\nnz\bP}, 
\end{align*}
in which case the runtime becomes 
\begin{align*}
    \tilde{O}\paren{\nnz \bP + \sqrt{\nnz \bP\Atot} (1-\gamma)^{-1}}. 
\end{align*}
\end{proof}
Theorem~\ref{thm:corollary-dmdp-unregularizer} improves on \cite{jin2024truncated} in the regime where $\nnz \bP = o((1-\gamma)^{-2} \Atot)$ and improves upon vanilla value iteration (which runs in $\tilde{O}(\nnz \bP (1-\gamma)^{-1})$ in the regime where $\nnz \bP \geq \Atot$. 

\subsection{Infinite-horizon Average-reward MDPs}\label{sec:amdp}

In this section, we consider the average-reward setting, in which there is no discount factor $\gamma$.

\begin{definition}[Average-reward MDP (AMDP)] Let $\cM = (\cS, \cA, \bP)$ be an MDP and $\br \in [0, 1]^{\cA}$. We denote the associated AMDP by $\cM_{\br} = (\cM; \br)$.
\end{definition}

In the average-reward case, we define the value of a policy as follows. 

\begin{definition}[AMDP policy value] Let $\cM_{\br}$ be an AMDP and $\pi$ be a policy. The value of policy $\pi$, denoted $\bv_{\br}^\pi$, is defined as 
\begin{align*}
    \bv_{\br}^\pi(s) = \lim_{T \rightarrow \infty}\frac{1}{T}\E\Brac{\sum_{t \geq 0} \br(s_t, \pi(s_t)) | s_0 = s}. 
\end{align*}
\end{definition}

As in the DMDP case, the goal is to find an approximately optimal policy. 

\begin{definition}[{\color{ForestGreen}{AMDP problem}}]\label{def:amdp-problem} Let $\cM_{\br}$ be an AMDP and $\epsilon \in (0, 1)$ be given. We must compute, wp.\ $1-\delta$, a policy $\pi$ such that $\Vert \bv_{\br}^\pi - \bv_{ \br}^\star\Vert_\infty \leq \epsilon$. Such a policy is called $\epsilon$-optimal for $\cM_{\br}$. 
\end{definition} 

In order to extend our improved trade-offs for DMDPs to AMDPs, we leverage a result of \citet{jin2021towards}, which showed that the AMDP problem can be reduced to solving a DMDP with a sufficiently high discount factor. In the following, $\Delta$ denotes the probability simplex. 

\begin{definition}[Mixing time] Let $\cM_{\br}$ be an AMDP. $\cM_{\br}$ is said to be mixing if there exists a stationary distribution $\bm{\nu} \in \Delta^{\cS}$ such that
\begin{align*}
    \tmix \defeq \max_{\pi \in \Pi} \argmin_{t \geq 1} \sum_{s \in \cS} \max_{\bm{q} \in \Delta^{\cS}} \bmp(s, \pi(s))^\top \bm{q} - \bm{\nu} < \infty. 
\end{align*}
The quantity $\tmix$ is called the \emph{mixing time} of $\cM_{\br}$. 
\end{definition}

\begin{lemma}[Lemma 3 of \cite{jin2021towards}]\label{lemma:amdp-reduction} Let $\cM_{\br}$ be an AMDP. Suppose the mixing time of the $\cM_{\br}$ is $\tmix < \infty$. Then, for any $\epsilon > 0$ and $\gamma \in (0, 1 - \epsilon/(9 \tmix))$, an $\epsilon/(3(1-\gamma))$-optimal policy for the DMDP $\cM_{\gamma, \br}$ is also an $\epsilon$-optimal policy for the AMDP $\cM_{\br}$.
\end{lemma}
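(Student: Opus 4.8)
The plan is to bridge the discounted and average-reward objectives through the rescaled discounted value $(1-\gamma)\bv_{\gamma,\br}^\pi$, which converges to the average reward $\bv_\br^\pi$ as $\gamma\to 1$ at a rate governed by $\tmix$. First I would establish the uniform-over-policies approximation bound
\[
\norm{(1-\gamma)\bv_{\gamma,\br}^\pi - \bv_\br^\pi}_\infty \;\leq\; 3(1-\gamma)\tmix \qquad\text{for every } \pi\in\Pi,
\]
and then combine it with the $\tfrac{\epsilon}{3(1-\gamma)}$-optimality of $\pi$ for $\cM_{\gamma,\br}$ together with the maximality of $\bv_{\gamma,\br}^\star$ among discounted policy values and of $\bv_\br^\star$ among average-reward policy values.

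For the approximation bound I would use the expansion $\bv_{\gamma,\br}^\pi = \sum_{t\geq 0}\gamma^t(\bP^\pi)^t\br^\pi$ and the Cesàro representation $\bv_\br^\pi = \lim_{T\to\infty}\tfrac1T\sum_{t=0}^{T-1}(\bP^\pi)^t\br^\pi$, together with the mixing hypothesis: after $\tmix$ steps the state distribution under any policy is within total-variation distance $\tfrac12$ of the stationary distribution, so, since $\br\in[0,1]^\cA$, $\norm{(\bP^\pi)^t\br^\pi - \bv_\br^\pi}_\infty \leq 2^{1-\lfloor t/\tmix\rfloor}$. Then, using $(1-\gamma)\sum_{t\geq 0}\gamma^t = 1$,
\[
\norm{(1-\gamma)\bv_{\gamma,\br}^\pi - \bv_\br^\pi}_\infty
= \norm{(1-\gamma)\sum_{t\geq 0}\gamma^t\bigl((\bP^\pi)^t\br^\pi - \bv_\br^\pi\bigr)}_\infty
\leq (1-\gamma)\sum_{t\geq 0}\gamma^t\,2^{1-\lfloor t/\tmix\rfloor},
\]
and grouping the sum into consecutive blocks of length $\tmix$ yields the claimed bound (a routine geometric-series estimate; I would invoke the corresponding statement from \citet{jin2021towards} for the exact constant).

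Given the bound, let $\pi$ be $\tfrac{\epsilon}{3(1-\gamma)}$-optimal for $\cM_{\gamma,\br}$, so $(1-\gamma)\bigl(\bv_{\gamma,\br}^\star - \bv_{\gamma,\br}^\pi\bigr) \leq \tfrac{\epsilon}{3}\bm{1}$. Since $\bv_{\gamma,\br}^\star$ dominates entrywise the discounted value of any fixed policy, in particular of a policy $\pi^\star$ with $\bv_\br^{\pi^\star}=\bv_\br^\star$, the approximation bound gives $(1-\gamma)\bv_{\gamma,\br}^\star \geq (1-\gamma)\bv_{\gamma,\br}^{\pi^\star} \geq \bv_\br^\star - 3(1-\gamma)\tmix\,\bm{1}$. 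Chaining this with the approximation bound applied to $\pi$,
\[
\bv_\br^\pi \;\geq\; (1-\gamma)\bv_{\gamma,\br}^\pi - 3(1-\gamma)\tmix\,\bm{1}
\;\geq\; (1-\gamma)\bv_{\gamma,\br}^\star - \tfrac{\epsilon}{3}\bm{1} - 3(1-\gamma)\tmix\,\bm{1}
\;\geq\; \bv_\br^\star - \Bigl(\tfrac{\epsilon}{3} + 6(1-\gamma)\tmix\Bigr)\bm{1}.
\]
The hypothesis $\gamma < 1 - \tfrac{\epsilon}{9\tmix}$ forces $(1-\gamma)\tmix < \tfrac{\epsilon}{9}$, so the right-hand side is strictly above $\bv_\br^\star - \epsilon\,\bm{1}$; combined with $\bv_\br^\pi \leq \bv_\br^\star$ (optimality of $\bv_\br^\star$), this gives $\norm{\bv_\br^\pi - \bv_\br^\star}_\infty \leq \epsilon$, i.e. $\pi$ is $\epsilon$-optimal for $\cM_\br$.

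The hard part will be the uniform approximation bound: one must be careful about precisely what the mixing hypothesis guarantees (in particular whether all policies mix to a common stationary distribution or each to its own), about transferring geometric decay of the state-distribution TV-distance into sup-norm decay of the expected-reward vectors $(\bP^\pi)^t\br^\pi$, and about existence of the Cesàro limit; pinning down the precise constant — and hence the exact numeral $9$ in the statement — is the delicate step, though it is otherwise standard and already available in \citet{jin2021towards}.
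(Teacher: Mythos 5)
The paper offers no proof of this lemma at all---it is quoted verbatim (as ``Lemma 3'') from \citet{jin2021towards}---and your route is exactly the one in that source: a uniform-over-policies approximation bound $\norm{(1-\gamma)\bv^\pi_{\gamma,\br}-\bv^\pi_{\br}}_\infty\le 3(1-\gamma)\tmix$ (their Lemma 2, proved by the block-of-$\tmix$ geometric-series estimate you sketch) followed by the three-term chaining through $\pi^\star$ and $\pi$. That chaining, and your error accounting $\epsilon/3+6(1-\gamma)\tmix$, is correct, and deferring the exact constant $3$ (your blockwise estimate naively gives something closer to $4(1-\gamma)\tmix$ before exploiting the TV-to-reward factor of $1/2$ for $\br\in[0,1]^{\cA}$, and the per-policy stationary distributions you flag) to the cited source is reasonable.

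The one genuine flaw is the final implication: from the stated hypothesis $\gamma<1-\epsilon/(9\tmix)$ you conclude $(1-\gamma)\tmix<\epsilon/9$, but that hypothesis gives $1-\gamma>\epsilon/(9\tmix)$, i.e.\ $(1-\gamma)\tmix>\epsilon/9$---the opposite direction. What your argument needs, and what it does correctly establish the lemma under, is $\gamma\ge 1-\epsilon/(9\tmix)$, which is the hypothesis in \citet{jin2021towards}; the interval in the restatement above appears to be reversed (and the literal statement cannot be true: for small $\gamma$ and large $\tmix$ the discounted objective carries essentially no information about the average reward, so no proof could close that gap). So treat this step as a correction of the statement rather than a consequence of it: with the range replaced by $\gamma\in[1-\epsilon/(9\tmix),1)$ your proof is sound and is the same proof as the source's.
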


By combining Lemma~\ref{lemma:amdp-reduction} with Theorem~\ref{thm:dmdp-main}, we immediately obtain the following full batch and sample query trade-off for AMDPs. 

\begin{restatable}[{\color{ForestGreen} AMDP trade-off improvement}]{theorem}{amdpmain}\label{thm:amdp-main} Let $\cM_{\br}$ be an AMDP. Suppose the mixing time of $\cM_{\br}$ is $\tmix < \infty$. Then, for $\gamma' \leq 1 - \epsilon/(9 \tmix)$, there is an algorithm that solves the AMDP problem using $\tilde{O}((1-\gamma')\tmix/\epsilon)$ full batch queries and only $\otilde(\Atot (1-\gamma')^{-2})$ sample queries.
\end{restatable}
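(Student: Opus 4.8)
The plan is to compose two results already in hand: the average-reward-to-discounted reduction of \cite{jin2021towards} (Lemma~\ref{lemma:amdp-reduction}) on the outside, and our sample-reuse DMDP solver, Theorem~\ref{thm:dmdp-main}, on the inside. Concretely, I would instantiate Algorithm~\ref{alg:dmdp-reuse} on a discounted proxy $\cM_{\gamma,\br}$ for a discount factor $\gamma$ chosen close to $1$, obtain an approximately optimal policy for that DMDP with the trade-off from Theorem~\ref{thm:dmdp-main}, and then invoke Lemma~\ref{lemma:amdp-reduction} to upgrade it to an $\epsilon$-optimal policy for the AMDP.

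The first step is to pin down the proxy discount $\gamma$ so that it simultaneously (i) satisfies $\gamma' < \gamma < 1$, making $\gamma'$ a legal sub-problem discount for the PRM/TVRVI-Reuse method (Theorem~\ref{thm:dmdp-main}); (ii) is large enough for the reduction Lemma~\ref{lemma:amdp-reduction} to apply; and (iii) makes the required discounted accuracy $\epsilon_D \defeq \epsilon/(3(1-\gamma))$ a legitimate DMDP accuracy, i.e.\ $\epsilon_D \le 1/(1-\gamma)$. Taking $\gamma$ with $1-\gamma = \Theta(\epsilon/\tmix)$ — for instance $\gamma = 1 - \epsilon/(18\tmix)$ — achieves all three: the hypothesis $\gamma' \le 1-\epsilon/(9\tmix)$ forces $\gamma' < \gamma$; the choice lies in the high-discount regime required by Lemma~\ref{lemma:amdp-reduction}; and $\epsilon_D = \Theta(\tmix) \le \Theta(\tmix/\epsilon) = 1/(1-\gamma)$ since $\epsilon \in (0,1)$. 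With this choice, Lemma~\ref{lemma:amdp-reduction} guarantees that \emph{any} $\epsilon_D$-optimal policy for $\cM_{\gamma,\br}$ is $\epsilon$-optimal for $\cM_{\br}$.

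Next I would apply Theorem~\ref{thm:dmdp-main} to the DMDP $\cM_{\gamma,\br}$ with accuracy $\epsilon_D$ and sub-problem discount $\gamma'$: this yields Algorithm~\ref{alg:dmdp-reuse}, which with probability $1-\delta$ returns an $\epsilon_D$-optimal value/policy pair for $\cM_{\gamma,\br}$ using $\tilde{O}\paren{(1-\gamma')/(1-\gamma)}$ full batch (matrix–vector) queries and $\tilde{O}(\Atot(1-\gamma')^{-2})$ sample (simulator) queries, where the $\tilde{O}$ also absorbs polylogarithmic factors in $1/\epsilon_D$ and $1/\delta$. Substituting $1-\gamma = \Theta(\epsilon/\tmix)$ turns the full-batch bound into $\tilde{O}((1-\gamma')\tmix/\epsilon)$, exactly as claimed, and leaves the sample bound unchanged at $\tilde{O}(\Atot(1-\gamma')^{-2})$. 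Chaining with Lemma~\ref{lemma:amdp-reduction}: on the $1-\delta$ success event the returned policy is $\epsilon$-optimal for $\cM_{\br}$, which is precisely what Definition~\ref{def:amdp-problem} requires, completing the proof.

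I do not expect a genuine technical obstacle: this is a two-step composition of existing black boxes. The only care needed is bookkeeping. One must verify that the three constraints on $\gamma$ above are jointly feasible for \emph{every} admissible $\gamma' \le 1-\epsilon/(9\tmix)$ — the gap between $1-\epsilon/(9\tmix)$ and $1-\Theta(\epsilon/\tmix)$ is exactly the slack that makes them compatible — and one should double-check that the polylogarithmic factors hidden by $\tilde{O}$ in Theorem~\ref{thm:dmdp-main}, once re-expressed via $\epsilon_D = \Theta(\tmix)$ and $1-\gamma = \Theta(\epsilon/\tmix)$ (e.g.\ through the scale-free ratio $\log\tfrac{1/(1-\gamma)}{\epsilon_D} = O(\log(1/\epsilon))$), remain polylogarithmic in the AMDP's own parameters $\Atot$, $\tmix$, $1/\epsilon$, $1/\delta$, so that the advertised $\tilde{O}$ bounds are honest.
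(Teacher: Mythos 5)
Your proposal is correct and is essentially the paper's own argument: the paper proves Theorem~\ref{thm:amdp-main} exactly by composing the reduction of Lemma~\ref{lemma:amdp-reduction} with the sample-reuse DMDP solver of Theorem~\ref{thm:dmdp-main}, choosing the proxy discount with $1-\gamma = \Theta(\epsilon/\tmix)$ so the batch count $\tilde{O}((1-\gamma')/(1-\gamma))$ becomes $\tilde{O}((1-\gamma')\tmix/\epsilon)$ while the sample count stays $\tilde{O}(\Atot(1-\gamma')^{-2})$. Your bookkeeping (choosing $\gamma = 1-\epsilon/(18\tmix)$ to guarantee $\gamma' < \gamma$ strictly, and checking $\epsilon_D \le 1/(1-\gamma)$) just makes explicit what the paper treats as immediate, and it correctly uses the reduction in the regime $1-\gamma \le \epsilon/(9\tmix)$ intended by \citet{jin2021towards}.
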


\subsection{Proximal Reward Method for DMDPs: Proof of Theorem~\ref{thm:dmdp-unregularizer}}\label{sec:thm} 

In this section we present the proof of Theorem~\ref{thm:dmdp-unregularizer}. First, we prove a stability result regarding the optimal value of a DMDP under a reward perturbation. 

\begin{lemma}\label{lemma:bellman-stability} Let $\br, \br' \in \R^{\cA}$ such that $\br' \leq \br$, and let $\gamma > 0$. Then, for any $\bv \in \R^{\cS}$, we have that for all $s \in \cS$,
\begin{align*}
    \bm{0} \leq \bv_{\gamma, \br}^\star - \bv_{\gamma, \br'}^\star \leq \frac{1}{1-\gamma} \cdot \paren{\max_{(s, a) \in \cA}{\br(s, a) - \br'(s, a)}} \cdot \bm{1}. 
\end{align*}
\end{lemma}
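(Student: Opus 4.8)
The plan is to prove the two inequalities separately, using only two elementary properties of the discounted value function that follow directly from the definition $\bv_{\gamma, \br}^\pi(s) = \E\Brac{\sum_{t \geq 0} \gamma^t \br(s_t, \pi(s_t)) \mid s_0 = s}$ (here I implicitly take $\gamma \in (0,1)$, so the series converges and $1-\gamma > 0$): \textbf{(i) monotonicity in the reward} --- if $\br_1 \le \br_2$ entrywise, then $\bv^\pi_{\gamma, \br_1} \le \bv^\pi_{\gamma, \br_2}$ for every policy $\pi$, and hence, taking the pointwise maximum over $\pi \in \Pi$ and invoking the Bellman optimality conditions (which give $\bv^\star_{\gamma, \br} = \max_{\pi \in \Pi} \bv^\pi_{\gamma, \br}$ pointwise), $\bv^\star_{\gamma, \br_1} \le \bv^\star_{\gamma, \br_2}$; and \textbf{(ii) the constant-shift identity} --- for any scalar $c$, shifting every reward by $c$ adds $c \sum_{t \ge 0} \gamma^t = c/(1-\gamma)$ to the return along every trajectory, so $\bv^\pi_{\gamma, \br + c\bm{1}} = \bv^\pi_{\gamma, \br} + \frac{c}{1-\gamma}\bm{1}$ for every $\pi$, and taking the maximum over $\pi$, $\bv^\star_{\gamma, \br + c\bm{1}} = \bv^\star_{\gamma, \br} + \frac{c}{1-\gamma}\bm{1}$.

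Given (i), the lower bound $\bm{0} \le \bv^\star_{\gamma, \br} - \bv^\star_{\gamma, \br'}$ is immediate from the hypothesis $\br' \le \br$.

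For the upper bound, set $c \defeq \max_{(s,a) \in \cA}\bigl(\br(s,a) - \br'(s,a)\bigr)$, which is nonnegative since $\br' \le \br$. Then $\br \le \br' + c\bm{1}$ entrywise, so by (i) we get $\bv^\star_{\gamma, \br} \le \bv^\star_{\gamma, \br' + c\bm{1}}$, and by (ii) we get $\bv^\star_{\gamma, \br' + c\bm{1}} = \bv^\star_{\gamma, \br'} + \frac{c}{1-\gamma}\bm{1}$. Combining these two facts yields $\bv^\star_{\gamma, \br} - \bv^\star_{\gamma, \br'} \le \frac{c}{1-\gamma}\bm{1} = \frac{1}{1-\gamma}\bigl(\max_{(s,a) \in \cA}(\br(s,a) - \br'(s,a))\bigr)\bm{1}$, which is the claimed bound.

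There is essentially no obstacle here; the only points to be careful about are (a) justifying that the pointwise maximum over policies of a dominated family of value vectors is itself dominated --- immediate, since $\bv^\star_{\gamma, \br}(s) = \max_\pi \bv^\pi_{\gamma, \br}(s)$ and $\bv^\pi_{\gamma, \br_1}(s) \le \bv^\pi_{\gamma, \br_2}(s)$ for each fixed $\pi$ --- and (b) noting that the vector $\bv \in \R^\cS$ mentioned in the statement plays no role in the conclusion and can simply be ignored (it is a vestige from the use of this lemma in the proof of Theorem~\ref{thm:dmdp-unregularizer}, where one takes $\br' = \br - (\gamma' - \gamma)\bP\bv$). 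One could alternatively derive the $\ell_\infty$ version of the bound from the $\gamma$-contraction of the Bellman optimality operator $\cT_{\gamma, \br}$ together with $\norm{\cT_{\gamma, \br}[\bv] - \cT_{\gamma, \br'}[\bv]}_\infty \le \norm{\br - \br'}_\infty$, but that route does not by itself establish the one-sided (nonnegativity) half of the statement, so the monotonicity argument above is preferable.
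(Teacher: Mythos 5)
Your proof is correct, and it takes a somewhat different route from the paper's. The paper argues through the closed-form linear-algebraic representation $\bv^\star_{\gamma,\br}(s) = \max_{\pi}\bigl((\bm{I}-\gamma\bP^{\pi})^{-1}\br^{\pi}\bigr)(s)$: it bounds the difference of the two maxima by the maximum over $\pi$ of the per-policy differences, then uses that $(\bm{I}-\gamma\bP^{\pi})^{-1}$ is entrywise nonnegative with row sums $1/(1-\gamma)$ to get both the nonnegativity and the $\frac{1}{1-\gamma}\max_{(s,a)}(\br(s,a)-\br'(s,a))$ bound. You instead work directly with the trajectory-level definition of the value: per-policy monotonicity in the reward gives the lower bound, and the exact constant-shift identity $\bv^\pi_{\gamma,\br+c\bm{1}} = \bv^\pi_{\gamma,\br} + \frac{c}{1-\gamma}\bm{1}$ applied with $c = \max_{(s,a)}(\br(s,a)-\br'(s,a))$, combined again with monotonicity, gives the upper bound; taking the pointwise maximum over policies transfers both to the optimal values. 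The two arguments encode the same underlying fact (the geometric/Neumann series $\sum_t \gamma^t = 1/(1-\gamma)$), but yours avoids matrix inverses and also sidesteps the small "difference of maxima $\le$ maximum of differences" step that the paper's inequality chain implicitly uses, so it is arguably the more elementary presentation; the paper's matrix form has the advantage of matching the notation reused later in the proof of Theorem~\ref{thm:dmdp-unregularizer}. Your two side remarks are also right: the lemma only makes sense for $\gamma\in(0,1)$ (as everywhere in the DMDP setting), and the vector $\bv$ in the statement is indeed vestigial.
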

\begin{proof} By the Bellman optimality conditions (Definition~\ref{def:bellman}), for each $s \in \cS$ we have 
\begin{align*}
     \bv_{\gamma, \br}^\star(s) &= \max_{\pi \in \Pi} \paren{(\bm{I} - \gamma \bm{P}^{\pi})^{-1} \br^\pi}(s), \\
     \bv_{\gamma, \br'}^\star(s) &=\max_{\pi \in \Pi} \paren{(\bm{I} - \gamma \bm{P}^{\pi})^{-1} {\br'}^\pi}(s). 
\end{align*}
Since $(\bm{I} - \gamma \bP^\pi)^{-1}$ is a positive matrix, we have that for each $s \in \cS$,
\begin{align*}
    \bm{0} \leq \bv_{\gamma, \br}^\star(s) - \bv_{\gamma, \br'}^\star(s) \leq \max_{\pi \in \Pi} \paren{(\bm{I} - \gamma \bm{P}^{\pi})^{-1} (\br^\pi - \br'^\pi) }(s) \leq \frac{1}{1-\gamma} \cdot \max_{(s, a) \in \cA} \br(s, a) - \br'(s, a). 
\end{align*}
\end{proof}

Next, we show how to iteratively solve a sequence of $\gamma'$-discounted MDPs to solve a $\gamma$-discounted MDP. First, we prove the following lemma, which bounds the convergence rate of the scheme which solves $\cM_{\gamma, \br}$ by iteratively solving problems in $\cM_{\gamma', \br'}$ for a sequence of rewards $\br^{(1)}, ..., \br^{(T)}$. 

\begin{lemma}\label{lemma:main-unregularizer-dmdp} Let $\cM$ be an MDP and $\br \in \R_{\geq 0}^{\cA}$ be a reward vector. Let $0 < \gamma' \leq \gamma < 1$,  $\eta, \epsilon > 0$, and $T \geq 1$. Let $\br^{(0)} = \br$ and $\bv^{(0)} = \bm{0}$. For each $t \geq 1$, define
\begin{align}\label{eq:reward-iterate}
    \br^{(t)} \defeq \br - (\gamma' - \gamma)\bP \bv^{(t-1)}. 
\end{align}
Suppose that for each $t \geq 1$, $\bv^{(t)}$ satisfies
\begin{align*}
\bm{0} \leq \bv_{\gamma', \br^{(t)}}^\star - \eta \bm{1} \leq \bv^{(t)} \leq \bv_{\gamma', \br^{(t)}}^\star. 
\end{align*}
Then, for any $T \geq 1$, 
\begin{align*}
    \bm{0} \leq \bv_{\gamma, \br}^\star - \paren{ \paren{\frac{(\gamma-\gamma')}{(1-\gamma')}}^{T} \max_{s \in \cS}\paren{\bv_{\gamma, \br}^{\star} - \bv^{(0)}}(s) + \frac{(1-\gamma')}{(1-\gamma)} \eta} \cdot \bm{1} \leq \bv^{(T)} \leq \bv_{\gamma, \br}^\star. 
\end{align*}
Consequently, if $\eta \leq  \frac{(1-\gamma)}{2(1-\gamma')} \epsilon$ and $T = \tilde{\Omega}\paren{\frac{(1-\gamma')}{(1-\gamma)}}$ then $\bm{0} \leq \bv_{\gamma, \br}^\star -\epsilon \bm{1}\leq \bv^{(T)} \leq \bv_{\gamma, \br}^\star.$
\end{lemma}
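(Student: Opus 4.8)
The plan is to track the error vector $\bm{\Delta}^{(t)} \defeq \bv_{\gamma, \br}^\star - \bv^{(t)}$ and show it both stays nonnegative and contracts geometrically towards a floor of order $\eta$. The conceptual heart is a \emph{reward‑shifting fixed point}: set $\br^\star \defeq \br + (\gamma - \gamma')\bP\bv_{\gamma, \br}^\star = \br - (\gamma' - \gamma)\bP\bv_{\gamma, \br}^\star$ and observe, via the Bellman optimality conditions (Definition~\ref{def:bellman}), that for each $s$
\[
\max_{a \in \cA_s}\br^\star(s,a) + \gamma'\,\bmp(s,a)^\top\bv_{\gamma,\br}^\star = \max_{a\in\cA_s}\br(s,a) + \gamma\,\bmp(s,a)^\top\bv_{\gamma,\br}^\star = \bv_{\gamma,\br}^\star(s),
\]
so by uniqueness of the optimal value $\bv_{\gamma',\br^\star}^\star = \bv_{\gamma,\br}^\star$. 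Thus the iteration \eqref{eq:reward-iterate} is exactly a fixed‑point iteration whose fixed point is the desired value.

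First I would prove by induction that $\bv^{(t)} \le \bv_{\gamma,\br}^\star$ for all $t \ge 0$ (equivalently $\bm{\Delta}^{(t)} \ge \bm{0}$, the rightmost inequality of the claim). The base case holds since $\bv^{(0)} = \bm{0}$ and rewards are nonnegative. For the step, $\bv^{(t-1)} \le \bv_{\gamma,\br}^\star$ and $\bP \ge \bm{0}$ give $\br^{(t)} = \br + (\gamma-\gamma')\bP\bv^{(t-1)} \le \br^\star$ (here $\gamma > \gamma'$ is used), so the monotonicity half of Lemma~\ref{lemma:bellman-stability} together with the fixed‑point identity yields $\bv^{(t)} \le \bv_{\gamma',\br^{(t)}}^\star \le \bv_{\gamma',\br^\star}^\star = \bv_{\gamma,\br}^\star$.

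Next I would set up the contraction. From $\br^\star - \br^{(t)} = (\gamma-\gamma')\bP\bm{\Delta}^{(t-1)}$, $\bm{\Delta}^{(t-1)} \ge \bm{0}$, and the fact that each row of $\bP$ is a probability distribution, $\max_{(s,a)\in\cA}(\br^\star - \br^{(t)})(s,a) \le (\gamma-\gamma')\normi{\bm{\Delta}^{(t-1)}}$. Applying Lemma~\ref{lemma:bellman-stability} at discount $\gamma'$ to $\br^{(t)} \le \br^\star$ gives $\bm{0} \le \bv_{\gamma,\br}^\star - \bv_{\gamma',\br^{(t)}}^\star \le \tfrac{\gamma-\gamma'}{1-\gamma'}\normi{\bm{\Delta}^{(t-1)}}\,\bm{1}$, and combining with the hypothesis $\bv^{(t)} \ge \bv_{\gamma',\br^{(t)}}^\star - \eta\bm{1}$ gives the pointwise recursion $\bm{\Delta}^{(t)} \le \paren{q\,\normi{\bm{\Delta}^{(t-1)}} + \eta}\bm{1}$ with $q \defeq \tfrac{\gamma-\gamma'}{1-\gamma'}$. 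Since $1-q = \tfrac{1-\gamma}{1-\gamma'}$, unrolling the scalar bound $\normi{\bm{\Delta}^{(t)}} \le q\,\normi{\bm{\Delta}^{(t-1)}} + \eta$ yields $\normi{\bm{\Delta}^{(T)}} \le q^T\normi{\bm{\Delta}^{(0)}} + \tfrac{\eta}{1-q} = q^T\max_{s\in\cS}(\bv_{\gamma,\br}^\star - \bv^{(0)})(s) + \tfrac{1-\gamma'}{1-\gamma}\eta$ (using $\bm{\Delta}^{(0)}\ge\bm{0}$), which is the middle inequality of the claim. The final sentence then follows: $\eta \le \tfrac{1-\gamma}{2(1-\gamma')}\epsilon$ bounds the $\eta$‑term by $\epsilon/2$, and $q^T \le e^{-T(1-\gamma)/(1-\gamma')}$ together with $T = \tilde{\Omega}\paren{\tfrac{1-\gamma'}{1-\gamma}}$ (the hidden logarithm absorbing $\log(\max_s(\bv_{\gamma,\br}^\star-\bv^{(0)})(s)/\epsilon)$) bounds the first term by $\epsilon/2$.

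The step I expect to be the main obstacle is maintaining the two‑sided invariant — in particular the nonnegativity $\bm{\Delta}^{(t)} \ge \bm{0}$ at every iteration — since this is precisely what licenses replacing $\bmp(s,a)^\top\bm{\Delta}^{(t-1)}$ by $\normi{\bm{\Delta}^{(t-1)}}$ and applying Lemma~\ref{lemma:bellman-stability} in the correct direction; getting the monotonicity‑in‑reward argument and the fixed‑point identity $\bv_{\gamma',\br^\star}^\star = \bv_{\gamma,\br}^\star$ to line up is the delicate bookkeeping, whereas the geometric unrolling afterward is routine.
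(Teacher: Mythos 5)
Your proposal is correct and follows essentially the same route as the paper's proof: the Bellman fixed-point identity $\bv_{\gamma',\br-(\gamma'-\gamma)\bP\bv_{\gamma,\br}^\star}^\star=\bv_{\gamma,\br}^\star$, the reward-perturbation bound of Lemma~\ref{lemma:bellman-stability} combined with the $\eta$-accuracy hypothesis to get the one-step contraction with factor $(\gamma-\gamma')/(1-\gamma')$, and an inductive/geometric-series unrolling. Your presentation via the scalar recursion $\normi{\bm{\Delta}^{(t)}}\le q\,\normi{\bm{\Delta}^{(t-1)}}+\eta$ is just a compact rewriting of the paper's induction, so no substantive difference.
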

\begin{proof} First, we will induct on $t$ to show that for $t\geq 0$,
\begin{align}\label{eq:induction}
    \bm{0} \leq \bv_{\gamma, \br}^\star - \paren{ \paren{\frac{(\gamma-\gamma')}{(1-\gamma')}}^{t} \max_{s \in \cS}\paren{\bv_{\gamma, \br}^{\star} - \bv^{(0)}}(s) + \sum_{j = 1}^{t} \paren{\frac{(\gamma-\gamma')}{(1-\gamma')}}^{(t-j)} \eta} \cdot \bm{1} \leq \bv^{(t)} \leq \bv_{\gamma, \br}^\star. 
\end{align}

\paragraph{Inductive proof of \eqref{eq:induction}.}{In the base case, when $t = 0$ the claim reduces to $\bm{0} \leq \bv^{(0)} \leq \bv_{\gamma, \br}^\star,$ which is trivially satisfied because $\bv^{(0)} = \bm{0}$ and $\br \in \R^\cA_{\geq 0}$. For the inductive step, we have
\begin{align*}
    \bm{0} \leq \bv_{\gamma, \br}^\star - \paren{ \paren{\frac{(\gamma-\gamma')}{(1-\gamma')}}^{t-1} \max_{s \in \cS} \paren{\bv_{\gamma, \br}^{\star} - \bv^{(0)}} + \sum_{j = 1}^{t-1} \paren{\frac{(\gamma-\gamma')}{(1-\gamma')}}^{(t-1-j)} \eta } \bm{1} \leq \bv^{(t-1)} \leq \bv_{\gamma, \br}^\star. 
\end{align*}

Next, observe that 
\begin{align}\label{eq:induction-1}
     \bv_{\gamma, \br}^\star - \bv^{(t)} &= \bv^\star_{\gamma, \br} - \bv^\star_{\gamma', \br^{(t)}} + \bv^\star_{\gamma', \br^{(t)}} - \bv^{(t)}. 
\end{align}
By the assumption on $\bv^{(t)}$,
\begin{align}\label{eq:decompose-gamma}
   \bm{0} \leq \bv^\star_{\gamma', \br^{(t)}} - \bv^{(t)} \leq \eta \bm{1}. 
\end{align}

By the Bellman optimality conditions, for each $s \in \cS$, 
\begin{align*}
    \bv_{\gamma, \br}^\star(s) &= \max_{a \in \cA_s} \br(s,a)+ \gamma \bmp(s, a)^\top \bv_{\gamma, \br}^{\star} = \max_{a \in \cA_s} \br(s,a) - (\gamma'-\gamma) \bmp(s, a)^\top \bv_{\gamma, \br}^\star + \gamma' \bmp(s, a)^\top \bv_{\gamma, \br}^{\star}. 
\end{align*}
Consequently, $\bv_{\gamma, \br}^\star = \bv_{\gamma', \br - (\gamma' - \gamma) \bP \bv_{\gamma, \br}^\star}^\star$. By substituting into \eqref{eq:induction-1} and applying the definition of $\br^{(t)}$, Lemma~\ref{lemma:bellman-stability} implies that
\begin{align*}
    \bm{0} \leq \bv_{\gamma, \br}^\star - \bv^{(t)} 
    &\leq \frac{(\gamma - \gamma')}{(1-\gamma')} \cdot \max_{(s, a) \in \Aset} (\bP (\bv^{(t-1)} - \bv^{\star}_{\br, \gamma}))(s, a) \cdot \bm{1} + \eta \bm{1} \\
    &\leq \frac{(\gamma - \gamma')}{(1-\gamma')} \cdot \max_{s \in \cS } (\bv^{(t-1)}(s) - \bv^{\star}_{\br, \gamma}(s)) \cdot \bm{1} + \eta \bm{1},
\end{align*}
where in the last step we used that $\norm{\bP}_\infty = 1$ and $\bP$ is a positive matrix. Consequently, 
\begin{align*}
    \bm{0} \leq \bv_{\gamma, \br}^\star - \paren{ \paren{\frac{(\gamma-\gamma')}{(1-\gamma')}}^{t} \max_{s \in \cS}(\bv_{\gamma, \br}^{\star} - \bv^{(0)})(s) + \sum_{j = 1}^t \paren{\frac{(\gamma-\gamma')}{(1-\gamma')}}^{t-j} \eta } \cdot \bm{1} \leq \bv^{(t)} \leq \bv_{\gamma, \br}^\star. 
\end{align*}
This completes the inductive argument. }

Now, we bound the geometric series as follows 
\begin{align*}
    \sum_{j = 1}^t \paren{\frac{(\gamma-\gamma')}{(1-\gamma')}}^{t-j} \eta = \sum_{j = 1}^t \paren{1 - \frac{(1-\gamma)}{(1-\gamma')}}^{t-j} \eta \leq \frac{(1-\gamma')}{(1-\gamma)} \eta. 
\end{align*}
Finally, note that when $T = \tilde{\Omega}\paren{\frac{(1-\gamma')}{(1-\gamma)}}$ and $\eta \leq \frac{(1-\gamma)}{2(1-\gamma')} \epsilon$, 
\begin{align*}
    \bv_{\gamma, \br}^\star - \epsilon \leq \bv^{(T)} \leq \bv_{\gamma, \br}^\star. 
\end{align*}
\end{proof}

Finally, we can leverage Lemma~\ref{lemma:main-unregularizer-dmdp} to complete the proof of Theorem~\ref{thm:dmdp-unregularizer}. 

\prm* 
\begin{proof} Note that the outer process in Algorithm~\ref{alg:dmdp} (and Algorithm~\ref{alg:dmdp-reuse}) ensures the following. Suppose each $\bv_{t-1/2}$ is a solution to the $(\cM, \br, \gamma', \bv_{t-1}, \epsilon/4 \cdot (1-\gamma)/(1-\gamma'))$-sub-problem. Then due to the post-process, each $\bv_{t}$ meets the conditions on $\bv^{(t)}$ ins Lemma~\ref{lemma:main-unregularizer-dmdp} for $\eta = \epsilon/2 \cdot (1-\gamma)/(1-\gamma')$.

Consequently, by Lemma~\ref{lemma:main-unregularizer-dmdp},  we have that for sufficiently large $\nOuter = \tilde{O}((1-\gamma')/(1-\gamma))$,
\begin{align*}
    \bm{0} &\leq \bv_{\gamma, \br}^\star - \frac{\epsilon}{2} \bm{1} \leq \bv_{\nOuter} \leq \bv_{\gamma, \br}^\star, \\
    \bm{0} &\leq \bv_{\gamma, \br}^\star - \frac{\epsilon}{2} \bm{1} \leq \bv_{\nOuter+1} \leq \bv_{\gamma, \br}^\star, 
\end{align*}
and by the definition of the policy-sub-problem, we can further conclude
\begin{align}\label{eq:uselater}
    \bm{0} &\leq \bv_{\gamma', \br^{\nOuter+1}}^{\star} - \frac{\epsilon}{2} \bm{1} \leq \bv_{\nOuter+1} \leq \bv_{\gamma', {\br}^{\nOuter+1}}^{{\pi_{\nOuter+1}}} \leq \bv^\star_{\gamma, \br},
\end{align}
where 
\begin{align*}
    {\br}^{\nOuter+1} \defeq \br - (\gamma' - \gamma) \bP \bv_{\nOuter}. 
\end{align*}
Now, note that for all $s \in \cS$, 
\begin{align*}
    \bv_{\gamma, \br}^{\pi_{\nOuter+1}}(s) &= \br(s,{\pi_{\nOuter+1}}(s))+ \gamma \bmp(s, {\pi_{\nOuter+1}}(s))^\top \bv_{\gamma, \br}^{{\pi_{\nOuter+1}}} \\
    &= \br(s,{\pi_{\nOuter+1}}(s)) - (\gamma'-\gamma) \bmp(s, {\pi_{\nOuter+1}}(s))^\top \bv_{\gamma, \br}^{\pi_{\nOuter}} + \gamma' \bmp(s, {\pi_{\nOuter+1}}(s))^\top \bv_{\gamma, \br}^{{\pi_{\nOuter+1}}}. 
\end{align*}
Above, in the first line we used the Bellman formulation for values of policies (Definition~\ref{def:bellman}). 

Thus, $\bv_{\gamma, \br}^{\pi_{\nOuter+1}} = \bv_{\gamma', \br - (\gamma'-\gamma)\bP \bv^{\pi_{\nOuter+1}}_{\gamma, \br}}^{\pi_{\nOuter+1}}.$
Since $(\bm{I} - \gamma' \bP^{\pi_{\nOuter+1}})^{-1}$ and $\bP^{\pi_{\nOuter+1}}$ are positive matrices and $(\gamma-\gamma')/(1-\gamma') \leq 1$ we have that 
\begin{align*}
    \bv_{\gamma', \br - (\gamma'-\gamma)\bP \bv^{\pi_{\nOuter+1}}_{\gamma, \br}}^{\pi_{\nOuter+1}} - \bv_{\gamma', {\br}^{\nOuter+1}}^{{\pi_{\nOuter+1}}} &= (\bI - \gamma' \bP^{\pi_{\nOuter+1}})^{-1} ((\gamma - \gamma') \bP^{\pi_{\nOuter}} (\bv_{\gamma, \br}^{\pi_{\nOuter}} - \bv_{\nOuter} )) \\
    &\leq (\bI - \gamma' \bP^{\pi_{\nOuter+1}})^{-1} ((\gamma - \gamma') \bP^{\pi_{\nOuter}} (\bv_{\gamma, \br}^{\star} - \bv^{(\nOuter)} )) \\
    &\leq \max_{(s, a) \in \Aset} (\bv_{\gamma, \br}^{\star}(s, a) - \bv_{\nOuter}(s, a) ) \\
    &\leq (\gamma - \gamma')/(1-\gamma') \epsilon/2 \\
    &\leq \frac{\epsilon}{2}, 
\end{align*}
where the second to last step used that $\normInline{\bI - \gamma'\bP^{\pi}}_\infty \leq 1/(1-\gamma')$ for any policy $\pi$. Consequently, combining with \eqref{eq:uselater}, we conclude that $\bv_{\gamma, {\br}}^\star - \epsilon \leq \bv_{\gamma,\br}^{{\pi_{\nOuter}}} \leq \bv_{\gamma, \br}^{\star}$, which completes the proof.
\end{proof} 
\section{Application: Matrix games and minimax problems}\label{sec:minimax_unified}

In this section, we consider minimax problems, including $\ell_2$-$\ell_1$ and $\ell_2$-$\ell_2$ matrix-games and finite-sum minimax problems. In Section~\ref{sec:preliminaries} we discuss general preliminaries. In Section~\ref{sec:matrix-games}, we discuss $\ell_2$-$\ell_1$ matrix games in Section~\ref{subsec:l2l1} and $\ell_2$-$\ell_2$ matrix games in Section~\ref{subsec:l2l2}. 
Section~\ref{subsec:applications} discusses applications of our $\ell_2$-$\ell_1$ matrix games improvements for two computational geometry problems: maximum inscribed ball and minimum enclosing ball. 

\subsection{Preliminaries}\label{sec:preliminaries}
We first outline preliminaries of the minimax problems we consider. The notation in this subsection is consistent with that of \cite{carmon2019variance}. 

\paragraph{Problem setup.}{A \emph{setup} is the triplet $(\cZ = \cX \times \cY, \norm{\cdot}, r)$ where we use $\cZ \defeq \cX \times \cY$, where (1) $\cX$ is a compact and convex subset of $\R^n$ and $\cY$ is a compact and convex subset of $\R^m$; (2) $\norm{\cdot}$ is a norm on $\cZ$, and (3) $r$ is a 1-strongly convex function with respect to $\cZ$ and $\norm{\cdot}$. We can $r$ a \emph{distance generating function} and denote the associated Bregman divergence as 
\begin{align*}
    V_{\bz}(\bz') \defeq r(\bz') - r(\bz) - \langle \nabla r(\bz), \bz'-\bz \rangle \geq \frac{1}{2} \norm{\bz' - \bz}^2. 
\end{align*}
We also denote $\Theta \defeq \max_{\bz'} r(\bz') - \min_{\bz'} r(\bz)$ and assume it is finite. We use $\norm{\cdot}_*$ to denote the dual norm of $\norm{\cdot}$. For $\bz \in \cZ$, we often write $\bz^\cX \in \cX$, $\bz^\cY \in \cY$ to denote the first $n$ and last $m$ coordinates of $\bz$, respectively. We use $d = m + n$ throughout this section. 

We assume that $\cZ$ is sufficiently simple such that given any $\bz' \in \R^d$, one can compute the projection of $\bz'$ onto $\cZ$ with respect to $\normInline{\cdot}$, denoted $\project_\cZ(\bz')$, in $\tilde{O}(d)$-time. (This is true, for example, for the Euclidean unit ball, or the probability simplex, which are the relevant cases for $\ell_2$-$\ell_2$ matrix-games and $\ell_2$-$\ell_1$ matrix games.) Finally, we assume that $c \normInline{\cdot}_\infty \leq \normInline{\cdot} \leq C \normInline{\cdot}_\infty$ over $\cZ$, for some $c, C = \poly(d)$. (This is true, for example, for the $\ell_1$ and $\ell_2$ norms, which are the relevant cases for $\ell_2$-$\ell_2$ matrix-games and $\ell_2$-$\ell_1$ matrix games.)
} 

\paragraph{Minimax problems.} {We consider minimax (saddle-point) problems of the form 
\begin{align*}
    \min_{\bm{x} \in \cX} \max_{\bm{y} \in \cY} f(x, y), 
\end{align*}
for some setup $(\cZ = (\cX, \cY), \norm{\cdot}, r)$. We use $g(\bm{z}) \defeq (\nabla_x f(\bm{z}), -\nabla_y f(\bm{z})) \in \R^d$ to denote the \emph{gradient mapping} of $f$. We use $L$ to denote the Lipschitz constant of $g$, $D \defeq \max_{\bz, \bz' \in \cZ} \norm{\bz - \bz'}$, and $G \defeq \max_{\bz \in \cZ} \normInline{g(z)}$. We assume that $D, L, G$ are finite and hide polylogarithmic dependencies in these parameters inside of $\tilde{O}(\cdot)$ notation. 

}

Next, we define the minimax problem we consider in this Section.

\begin{definition}[{\color{ForestGreen} Minimax problem}]\label{def:minimax-def} In the minimax problem, we are given $f : \cZ = (\cX \times \cY) \to \R$, $\epsilon > 0$, and $\delta \in (0, 1).$ We must compute $\bm{x}, \bm{y}$ such that
\begin{align*}
    \max_{\bm{y}'\in\cY} f(\bm{x}, \bm{y}') - \min_{\bm{x}' \in \cX} f(\bm{x}', \bm{y}) \leq \epsilon. 
\end{align*}
\end{definition}

We will later discuss matrix-games and finite-sum minimax problems as special cases of Definition~\ref{def:minimax-def} and discuss the relevant batch and sample query models therein. However, in this section we discuss the outer-solver and sub-problem structure for general minimax problems following the conceptual proximal point framework of \cite{carmon2019variance, nemirovski2004prox}.  

\paragraph{Conceptual proximal point.}

\citet{carmon2019variance, nemirovski2004prox} showed how to solve minimax problems of the form of Definition~\ref{def:minimax-def} by iteratively solving a series of $\alpha$-regularized sub-problems. This method is inspired by Nemirovski's ``conceptual prox point method'' \citep{nemirovski2004prox}. Correspondingly, we define the minimax sub-problem as follows.

\begin{definition}[{\color{ForestGreen} Minimax sub-problem}]\label{def:minimax-subproblem} Let $f$ be as in Definition~\ref{def:minimax-def}. Let $\bm{z}_0 \in \cZ, \alpha > 0, \epsilon> 0$. Let $\bz^\alpha$ be the solution to the problem $\min_{\bx' \in \cX} \max_{\by' \in \cY} f(\bz) + \alpha V_{\bz}(\bx', \by')$. In the $(\bm{z}_0, \alpha, \epsilon)$-sub-problem we define $\fsub(\bz) \defeq \bz_\alpha$ to be the unique point in $\cZ$ such that 
\begin{align*}
    \langle g(\bz_\alpha) + \alpha \nabla V_{\bz}(\bz_\alpha), \bz_\alpha - \bu \rangle, ~~\text{ for all } \bu \in \cZ. 
\end{align*}
We must output a $\bz' \in \R^{d}$ such that $\norm{\bz' - \fsub(\bz)}_\infty \leq \epsilon$. 
\end{definition}

\citet{carmon2019variance, nemirovski2004prox} showed how to solve the minimax problem (Definition~\ref{def:minimax-def}) by solving a sequence of sub-problems of the form of Definition~\ref{def:minimax-subproblem} using Conceptual Proximal Point (CPP) as the outer-solver. Here, the outer process is a projection step (to ensure feasibility) followed by an extragradient step. 

\begin{definition}[{\color{ForestGreen} Minimax post-process}]\label{def:minimax-post-process} 
Let $f$ be as in Definition~\ref{def:minimax-subproblem}, and $\alpha >0$ For any ${\bz, \bz'} \in \R^d \times \R^d$, we define $\bz'' \defeq \project_{\cZ}(z')$
\begin{align*}
    \zeta({\bz}, {\bz}') \defeq \argmin_{\tilde{\bz} \in \cZ} \paren{\langle g(\bz''), \tilde{\bz} \rangle + \alpha V_{\bz}(\tilde{\bz})}.
\end{align*}
\end{definition}

We now specify the outer-solver framework using CPP as follows. We slightly restate a variant of Proposition 4 of \cite{carmon2019variance}. 

\begin{theorem}[CPP, Proposition 4 of \cite{carmon2019variance}, adapted - {\color{ForestGreen} Minimax outer-solver}]\label{thm:conceptual-prox} Let $f$ be as in Definition~\ref{def:minimax-def} and $\alpha > 0$. Consider Meta-Algorithm~\ref{alg:meta-algorithm} with the following instantiation of parameters. For fixed $\alpha > 0$, 
\begin{itemize}
    \item Initialize $\bm{u}_0$ with $\argmin_{\bz \in \cZ} r(\bz)$.
    \item Define $\zeta$ as in Definition~\ref{def:minimax-post-process}.
    \item Set $\bm{w}(t) \defeq 1/\nOuter$ for each $t \in [\nOuter]$. 
\end{itemize}
Then, there is an $\epsilon' = \poly(G, L, D, \Theta, \epsilon, d)$ such that the following holds. Suppose that in each iteration $t \in [\nOuter]$ of Algorithm~\ref{alg:meta-algorithm}, $\bm{u}_{t-1/2}$ is a solution to the $({\ba}_{t-1}, \alpha, \epsilon')$-sub-problem; then $\bu_{\nOuter}$ is a solution to the minimax problem.
\end{theorem}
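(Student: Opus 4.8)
The plan is to derive the theorem from the convergence analysis of Nemirovski's conceptual prox-point (mirror-prox) method, as carried out in \cite{nemirovski2004prox, carmon2019variance}, once we verify that the prescribed instantiation of Meta-Algorithm~\ref{alg:meta-algorithm} is exactly an inexact run of that method. Recall that one step of the exact conceptual prox-point method, from a current $\bm{u}_{t-1} \in \cZ$, (i) computes $\fsub(\bm{u}_{t-1})$ --- the unique $\bz_\alpha \in \cZ$ satisfying the variational inequality $\inprod{g(\bz_\alpha) + \alpha \nabla V_{\bm{u}_{t-1}}(\bz_\alpha),\, \bz_\alpha - \bu} \le 0$ for all $\bu \in \cZ$ --- and (ii) performs the extragradient update $\bm{u}_t \defeq \argmin_{\tilde{\bz} \in \cZ}\paren{\inprod{g(\fsub(\bm{u}_{t-1})), \tilde{\bz}} + \alpha V_{\bm{u}_{t-1}}(\tilde{\bz})}$, reporting $\tfrac{1}{\nOuter}\sum_{t \in [\nOuter]} \bm{u}_t$. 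Observe that the first-order optimality condition for the minimization in step (ii), evaluated at $\bz_\alpha$, is precisely the variational inequality of step (i); hence in the exact method $\bm{u}_t = \fsub(\bm{u}_{t-1})$ and the reported point is the usual average of extragradient midpoints. Comparing this with Meta-Algorithm~\ref{alg:meta-algorithm} under the stated parameters: the sub-solver returns $\bm{u}_{t-1/2}$, which --- by solving the $(\bm{u}_{t-1}, \alpha, \epsilon')$-sub-problem of Definition~\ref{def:minimax-subproblem} --- satisfies $\normi{\bm{u}_{t-1/2} - \fsub(\bm{u}_{t-1})} \le \epsilon'$; the post-process $\zeta$ of Definition~\ref{def:minimax-post-process} performs step (ii) but on the projected half-iterate $\project_\cZ(\bm{u}_{t-1/2})$ in place of $\fsub(\bm{u}_{t-1})$; and $\bm{w}(t) = 1/\nOuter$ makes the returned point $\tfrac{1}{\nOuter}\sum_t \bm{u}_t$. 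So the only discrepancies from the exact method are the approximation error $\bm{u}_{t-1/2} \ne \fsub(\bm{u}_{t-1})$ and the substitution of $\project_\cZ(\bm{u}_{t-1/2})$ in the extragradient step.

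I would then quantify these discrepancies in a form the mirror-prox telescoping argument can absorb. Since $\fsub(\bm{u}_{t-1}) \in \cZ$ and $\project_\cZ$ is the projection with respect to $\norm{\cdot}$, we have $\norm{\project_\cZ(\bm{u}_{t-1/2}) - \bm{u}_{t-1/2}} \le \norm{\fsub(\bm{u}_{t-1}) - \bm{u}_{t-1/2}}$, so by the triangle inequality and $\norm{\cdot} \le C\normi{\cdot}$ over $\cZ$, $\norm{\project_\cZ(\bm{u}_{t-1/2}) - \fsub(\bm{u}_{t-1})} \le 2C\epsilon'$. Writing $\hat\bz \defeq \project_\cZ(\bm{u}_{t-1/2})$ and $\bz_\alpha \defeq \fsub(\bm{u}_{t-1})$ and expanding $\inprod{g(\hat\bz) + \alpha \nabla V_{\bm{u}_{t-1}}(\hat\bz),\, \hat\bz - \bu}$ around the exact variational inequality satisfied by $\bz_\alpha$, each resulting cross term is controlled using $\norm{\hat\bz - \bz_\alpha} \le 2C\epsilon'$ together with the $L$-Lipschitzness of $g$, the gradient bound $G$, the diameter $D$, and the Lipschitzness of $\nabla r$ on $\cZ$ (which is $\poly(d)$ for the distance-generating functions we use). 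This shows $\hat\bz$ is an $O\paren{(\alpha + L)\, D\, \poly(d)\, \epsilon'}$-approximate solution of the $\alpha$-regularized sub-problem in the variational-inequality sense consumed per iteration in \cite{carmon2019variance}; the same estimates give $\norm{\bm{u}_t - \fsub(\bm{u}_{t-1})} = O\paren{(1 + L/\alpha)\, C\epsilon'}$, so replacing the midpoints by the $\bm{u}_t$ in the average changes the output by a comparable amount.

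Finally I would assemble the estimate. The inexact conceptual prox-point analysis of \cite{nemirovski2004prox, carmon2019variance} bounds the duality gap of the averaged iterate by $O(\alpha\Theta/\nOuter)$ plus the average of the per-iteration sub-problem errors, hence by $O(\alpha\Theta/\nOuter) + O\paren{(\alpha+L)\, D\, \poly(d)\, \epsilon'}$. Taking $\nOuter$ sufficiently large, namely $\nOuter = \tilde{O}(\alpha\Theta/\epsilon)$ --- which also recovers the full-batch count of Table~\ref{table:main} --- makes the first term at most $\epsilon/2$, and taking $\epsilon' = \poly(G, L, D, \Theta, \epsilon, d)$ small enough --- using $c, C = \poly(d)$ and that $\alpha$ is polynomially bounded in these parameters in every regime we use --- makes the second term at most $\epsilon/2$; thus the returned point has duality gap at most $\epsilon$, as required by Definition~\ref{def:minimax-def}. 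I expect the middle step to be the main obstacle: faithfully converting the coordinate-wise $\ell_\infty$ accuracy of $\bm{u}_{t-1/2}$ into the precise approximate-variational-inequality guarantee that the telescoping argument consumes --- in particular handling a half-iterate that may fall outside $\cZ$ via the projection inside $\zeta$, and tracking how the Lipschitz and diameter parameters enter --- which is what pins down the $\poly(G,L,D,\Theta,\epsilon,d)$ form of $\epsilon'$.
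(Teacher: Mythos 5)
Your overall architecture is the same as the paper's: the paper likewise reduces Theorem~\ref{thm:conceptual-prox} to the per-iteration relaxed proximal-oracle condition of Proposition 4 of \cite{carmon2019variance}, namely $\inprod{g(\project_\cZ(\bu_{t-1/2})), \project_\cZ(\bu_{t-1/2}) - \bu} - \alpha V_{\bu_{t-1}}(\bu) \le \epsilon$ for all $\bu \in \cZ$, and verifies it from the $\ell_\infty$ sub-problem guarantee using projection non-expansiveness, the norm equivalence $c\normInline{\cdot}_\infty \le \normInline{\cdot} \le C\normInline{\cdot}_\infty$, H\"older's inequality with $G$ and $D$, and the $L$-Lipschitzness of $g$; it then invokes that proposition as a black box instead of re-deriving the $O(\alpha\Theta/\nOuter)$ telescoping as you sketch. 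Your explicit handling of the fact that the algorithm averages the extragradient iterates $\bu_t$ rather than the projected half-iterates is a point the paper's proof does not spell out, and your prox-sensitivity estimate for it is reasonable.

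The genuine problem is in your middle step. You propose to establish an approximate \emph{full} variational inequality at $\hat\bz = \project_\cZ(\bu_{t-1/2})$, i.e.\ to bound $\inprod{g(\hat\bz) + \alpha\nabla V_{\bu_{t-1}}(\hat\bz), \hat\bz - \bu}$ by perturbing the exact inequality at $\bz_\alpha = \fsub(\bu_{t-1})$; controlling the resulting term $\alpha\inprod{\nabla V_{\bu_{t-1}}(\hat\bz) - \nabla V_{\bu_{t-1}}(\bz_\alpha), \hat\bz - \bu}$ rests on your claim that $\nabla r$ is Lipschitz on $\cZ$ with a $\poly(d)$ constant. That claim is false for the entropic component of $r$ in the $\ell_2$-$\ell_1$ setup (Section~\ref{subsec:l2l1}): there $\nabla r$ contains $1 + \log \bz^{\cY}(i)$, which is unbounded near the boundary of $\Delta^m$, and no general lower bound on the coordinates of $\bz_\alpha$ is available, so an $\ell_\infty$ perturbation of size $\epsilon'$ can change $\nabla r$ by an arbitrarily large amount. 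The paper sidesteps this entirely: it applies the three-point equality for Bregman divergences \emph{at the exact point} $\bz_\alpha$, turning the exact VI into $\inprod{g(\bz_\alpha), \bz_\alpha - \bu} - \alpha V_{\bu_{t-1}}(\bu) \le -\alpha V_{\bz_\alpha}(\bu) - \alpha V_{\bu_{t-1}}(\bz_\alpha) \le 0$, so that only the $g$-dependent terms need to be perturbed (Lipschitzness of $g$ plus H\"older with $G$, $D$, and the norm equivalence); $\nabla r$ is never compared at two nearby points. If you restructure the middle step to target the relaxed condition directly in this way, the rest of your argument goes through; as written, that step fails precisely for the entropy distance-generating function, which is one of the two setups this theorem is applied to.
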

\begin{proof} To prove this, we appeal to Proposition 4 of \cite{carmon2019variance}. By \citet{carmon2019variance}'s Proposition 4, it is sufficient to show that $\bu_{t-1/2}$ satisfies
\begin{align}\label{eq:sufficient-condition-mg}
    \langle g(\project_\cZ(\bu_{t-1/2})), \project_\cZ(\bu_{t-1/2}) - \bu \rangle - \alpha V_{\bu_{t-1}}(\bu) \leq \epsilon ~~\text{ for all } \bu \in \cZ.
\end{align}

Consequently, to prove the theorem, it suffices to show that whenever $\ba_{t-1/2}$ is a solution to the $({\ba}_{t-1}, \alpha, \epsilon')$-sub-problem, \eqref{eq:sufficient-condition-mg} holds. To this end, let $\bz_\alpha$ be the unique point in $\cZ$ such that 
\begin{align*}
    \langle g(\bz_\alpha) + \alpha \nabla V_{\bu_{t-1}}(\bz_\alpha), \bz_\alpha - \bu \rangle \leq 0, ~~\text{ for all } \bu \in \cZ. 
\end{align*}
By the three-point-equality for Bregman divergences, we have that 
\begin{align*}
    \langle g(\bz_\alpha), \bz_\alpha - \bu \rangle - \alpha V_{\bu_{t-1}}(\bu) \leq - \alpha V_{\bz_\alpha}(\bu) - \alpha V_{\bu_{t-1}}(\bz_\alpha), ~~\text{ for all } \bu \in \cZ. 
\end{align*}
Now, since $\normInline{\bu_{t-1/2} - \bz_{\alpha}}_\infty \leq \epsilon$ and $\bz_\alpha \in \cZ$, we have that
\begin{align*}
    \normInline{\project_\cZ(\bu_{t-1/2}), \bz_\alpha} \leq \normInline{\bu_{t-1/2}, - \bz_\alpha}
\end{align*}
Consequently, by equivalence of norms, 
\begin{align*}
    c \normInline{\project_\cZ(\bu_{t-1/2}), \bz_\alpha}_\infty \leq \normInline{\project_\cZ(\bu_{t-1/2}), \bz_\alpha} \leq \normInline{\bu_{t-1/2}, - \bz_\alpha} \leq C \normInline{\bu_{t-1/2}, - \bz_\alpha}_\infty. 
\end{align*}
Thus, 
\begin{align*}
    \normInline{\project_\cZ(\bu_{t-1/2}), \bz_\alpha}_\infty \leq \normInline{\project_\cZ(\bu_{t-1/2}), \bz_\alpha} \leq \normInline{\bu_{t-1/2}, - \bz_\alpha} \leq C/c \cdot \normInline{\bu_{t-1/2}, - \bz_\alpha}_\infty. 
\end{align*}

Thus, for any $\bu \in \cZ$ we can write 
\begin{align*}
     &\langle g(\project_\cZ(\bu_{t-1/2})), \project_\cZ(\bu_{t-1/2}) - \bu \rangle \\
     &= \langle g({\bz}_{\alpha}), \project_\cZ(\bu_{t-1/2}) - \bu \rangle + \langle g(\project_\cZ(\bu_{t-1/2})) - g({\bz}_{\alpha}), \project_\cZ(\bu_{t-1/2} - \bu \rangle \\
     &= \langle g(\project_\cZ(\bu_{t-1/2})), \bz_\alpha - \bu \rangle + \langle g({\bz}_{\alpha}), \project_\cZ(\bu_{t-1/2}) - \bz_\alpha \rangle  \\
     &~~-\langle g(\project_\cZ(\bu_{t-1/2})) - g({\bz}_{\alpha}), \project_\cZ(\bu_{t-1/2}) - \bu \rangle \\
     &\leq \langle g({\bz}_{\alpha}), \bz_\alpha - \bu \rangle + \norm{g(\bz_\alpha)}_1 C/c \cdot \epsilon' + \normInline{g(\bz_\alpha) - g(\project_\cZ(\bu_{t-1/2}))}_* D
\end{align*}
where in the last line we used Holder's inequality. Now, using the Lipschiztness of $g$ and equivalence of norms, 
\begin{align*}
    \normInline{g(\bz_\alpha) - g(\project_\cZ(\bu_{t-1/2}))}_* \leq L \normInline{\bz_\alpha - \bu_{t-1/2}} \leq L C\normInline{\bz_\alpha - \bu_{t-1/2}}_\infty \leq L C \epsilon'. 
\end{align*}
And again, using 
Consequently, taking $\epsilon'$ to be a sufficiently small polynomial in $L, D, \epsilon, d$ is enough to ensure that \eqref{eq:sufficient-condition-mg} holds, i.e., that 
\begin{align*}
    \langle g(\project_\cZ(\bu_{t-1/2})), \project_\cZ(\bu_{t-1/2}) - \bu \rangle - \alpha V_{\bu_{t-1}}(\bu) \leq GC/c \cot \epsilon' + DLC\epsilon' \leq \epsilon. 
\end{align*}
\end{proof}

In our applications, we often leverage the following fact about Bregman divergences. (Recall that $c$ is defined in Section~\ref{sec:preliminaries}).
\begin{fact}\label{fact:minimax} Let $r : \cZ \to \R$ be a 1-strongly convex function with respect to $\cZ$ and $\normInline{\cdot}$. Then, 
\begin{align*}
        V_{\bz}(\bz') \geq \frac{1}{2} \norm{\bz' - \bz}^2 \geq \frac{c^2}{2} \norm{\bz' - \bz}_\infty^2. 
\end{align*}
\end{fact}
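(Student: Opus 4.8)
The plan is to establish the two inequalities separately, both of which follow by unwinding definitions. For the first inequality, recall from the setup in Section~\ref{sec:preliminaries} that $r$ is $1$-strongly convex with respect to $\cZ$ and $\norm{\cdot}$. The standard first-order characterization of $1$-strong convexity is exactly $r(\bz') \geq r(\bz) + \inprod{\nabla r(\bz), \bz' - \bz} + \tfrac{1}{2}\norm{\bz'-\bz}^2$ for all $\bz,\bz' \in \cZ$; subtracting $r(\bz) + \inprod{\nabla r(\bz),\bz'-\bz}$ from both sides and using the definition $V_{\bz}(\bz') \defeq r(\bz') - r(\bz) - \inprod{\nabla r(\bz), \bz'-\bz}$ yields $V_{\bz}(\bz') \geq \tfrac12\norm{\bz'-\bz}^2$. (This is in fact already recorded in the displayed definition of $V_{\bz}$ in Section~\ref{sec:preliminaries}, so one may simply cite it.)

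For the second inequality, I would invoke the standing norm-equivalence assumption from the problem setup, namely the lower half of $c\normInline{\cdot}_\infty \le \normInline{\cdot} \le C\normInline{\cdot}_\infty$. Applying $c\normInline{\cdot}_\infty \le \normInline{\cdot}$ to the difference $\bz'-\bz$ gives $\norm{\bz'-\bz} \ge c\norm{\bz'-\bz}_\infty \ge 0$; squaring both nonnegative sides gives $\norm{\bz'-\bz}^2 \ge c^2\norm{\bz'-\bz}_\infty^2$. Chaining with the first inequality gives $V_{\bz}(\bz') \ge \tfrac12\norm{\bz'-\bz}^2 \ge \tfrac{c^2}{2}\norm{\bz'-\bz}_\infty^2$, which is the claim.

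There is no real obstacle here: the statement is an immediate consequence of the definition of the Bregman divergence of a strongly convex function together with the equivalence-of-norms hypothesis. The only point deserving a sentence of care is that the norm-equivalence, stated ``over $\cZ$,'' should be read as a comparison of norms on the ambient space $\R^d$ (as it is used elsewhere, e.g.\ in the proof of Theorem~\ref{thm:conceptual-prox}), so that it legitimately applies to the difference vector $\bz'-\bz$; for the concrete cases of interest ($\ell_1$ and $\ell_2$ norms) the comparison holds globally on $\R^d$, so no subtlety arises in the applications.
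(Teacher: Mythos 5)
Your proof is correct and matches the paper's (implicit) reasoning: the paper states this Fact without proof precisely because the first inequality is already recorded in the displayed definition of $V_{\bz}$ in Section~\ref{sec:preliminaries} and the second is immediate from the standing assumption $c\normInline{\cdot}_\infty \leq \normInline{\cdot}$. Your remark about reading the norm comparison on the ambient space so it applies to the difference $\bz'-\bz$ is a fair point of care, but nothing more is needed.
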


The specific sub-problem solvers will vary between $\ell_2$-$\ell_2$ matrix-games, $\ell_2$-$\ell_1$ matrix-games, and finite-sum minimax problems. However, observe that Theorem~\ref{thm:conceptual-prox} already establishes that CPP~ is robust to bounded $\ell_\infty$ error in the sub-problem solutions, in the sense of Definition~\ref{assumption:robustness}. Consequently, it is amenable to using our sample-reuse framework developed in Section~\ref{sec:pseudoindependence}. We discuss this in the following sections. 

\subsection{Matrix games}\label{sec:matrix-games}

We consider a matrix $\bm{A} \in \R^{m \times n}$ and use $\bm{a}_i$, $\bm{a}^j$ to denote the $i$-th row and $j$-th column of $\bm{A}$, respectively. We use $A_{i,j}$ to denote the $(i,j)$-th entry of $A$. We use $\smash{\norm{\bm{A}}_{2 \rightarrow \infty} \defeq \max_{i \in [m]} \norm{\bm{a}_i}_2}$ and $\norm{\bm{A}}_F$ to denote the Frobenius norm. 

To discuss the application of our pseudo-independence results for improved oracle complexity trade-offs on matrix games, we first restrict to minimax problems on functions $f$ corresponding to composite matrix games. 

\begin{definition}[{\color{ForestGreen} MG problem}]\label{def:mg-def} In the matrix-game problem, we are given a setup $(\cZ = \cX \times \cY, \norm{\cdot}, r)$; a matrix $\bm{A}\in \R^{m \times n}$; convex, differentiable functions $\phi: \cX \to \R, \psi: \cY \to \R$; $\epsilon > 0$; and $\delta \in (0, 1)$. We must solve the minimax problem (as in Definition~\ref{def:minimax-def}) for the function
\begin{align*}
    f(\bx, \by) = \by^\top \bm{A} \bx + \phi(\bx) - \psi(\by). 
\end{align*}
\end{definition}

Next, we define the relevant full batch and sample query oracles for the matrix games problems we consider. As in \cite{carmon2019variance} we assume that $\phi$ and $\psi$ are explicit and that $\bm{A}$ can be accessed via oracle queries. Concretely, we define one batch oracle and two types of sample oracles for accessing $\bm{A}$. 

\begin{definition}[Matrix-vector oracle - {\color{ForestGreen} MG batch oracle}]\label{def:batch-oracle-mg} When queried with $(\bx, \by) \in \cZ$, a matrix-vector oracle for $\bm{A}$ returns $(\bm{A}x, \bm{A}^\top \bm{y})$. 
\end{definition}

\begin{definition}[Row/column oracle - {\color{ForestGreen} MG sample oracle, Type I}]\label{def:sample-oracle-type2} When queried with $(i,)$ for $i \in [m]$, the oracle returns $\bm{a}_{i}$, the $i$-th row of $\bm{A}$. When queried with $(,j)$ for $j \in [m]$, the oracle returns $\bm{a}^{j}$, the $j$-th column of $\bm{A}$.
\end{definition}

\begin{definition}[Entry oracle - {\color{ForestGreen} MG sample oracle Type II}]\label{def:sample-oracle-type1} When queried with $(i,j) \in [m] \times [n]$, the entry oracle returns $A_{i,j}.$
\end{definition}

\subsubsection{$\ell_2$-$\ell_1$ matrix games}\label{subsec:l2l1}

In this section, we discuss the $\ell_2$-$\ell_1$ matrix games setting. Throughout this section, we use the setup $(\cZ = (\cX, \cY), \norm{\cdot}, r)$ where 
\begin{itemize}[leftmargin=*]
    \item $\cX = \mathbb{B}^{n} \defeq \{\bx \in \R^n : \norm{\bx}_2 \leq 1\}$ is the Euclidean ball of radius 1 centered at the origin; and $\cY = \Delta^m$ is the $m$-dimensional simplex. 
    \item $\norm{\cdot}: \cZ \to \R$ is given by $\norm{\bm{z}} = \sqrt{\norm{\bz^{\cX}}_2^2 + \norm{\by^{\cY}}_1^2}$. 
    \item $r: \cZ \to \R$ is given by $r(\bm{z}) = \frac{1}{2} \norm{\bz^{\cX}}_2^2 + \sum_{i \in [m]} \bz^\cY(i) \log(\bz^\cY(i))$. 
\end{itemize}

In this case, the relevant constants defined in Section~\ref{sec:preliminaries} are trivially given by 
\begin{itemize}[leftmargin=*]
    \item $L \leq \norm{\bm{A}}_{2 \rightarrow \infty}$, $G \leq \max_{i,j} |A_{i,j}|$
    \item $D = 1$
    \item $c = 1$, $C = d^2$
\end{itemize}

The function $r$ is known to be 1-strongly convex with respect to $\cZ$, $\norm{\cdot}$ (see, e.g., Section 4.2 of \cite{carmon2019variance}). To begin, we define the distributions which we will sample from in order to make sample queries. 

\begin{definition} Let $\bm{A} \in \R^{m \times n}$. For each $i \in [m]$, define the distribution $\Dentry(i)$ as follows:
\begin{align*}
    \probSub{b=j}{b \sim \Dentry(i)} = \frac{A_{i,j}^2}{\norm{\bm{a}_i}_2^2}.
\end{align*}
\end{definition}

The following theorem states the guarantees of the variance-reduced mirror descent (Algorithm 4 of \cite{carmon2019variance}, VRMD1) solves a minimax-subproblem in the $\ell_2$-$\ell_1$ matrix-games setting using $\tilde{O}(1)$ full batch queries, a number of non-oblivious, i.e., adaptive sample queries of Type I (see Definition~\ref{def:sample-oracle-type2}), and a number of \emph{oblivious} sample queries of Type II (see Definition~\ref{def:sample-oracle-type1}). 

\begin{theorem}[VRMD1, Theorem 2 of \cite{carmon2019variance}, adapted - {\color{ForestGreen} $\ell_2$-$\ell_1$ MG sub-solver}]\label{thm:sub-problem-solver-l2l1} Let $f$ be as defined in Definition~\ref{def:mg-def} and $\bm{z} \in \cZ$. There is a randomized algorithm $\cA^{\mathrm{VRMD1-HP} | \alpha, \epsilon, \delta}_{\xi, \chi}$ such that the following holds. 
\begin{itemize}[leftmargin=*]
    \item The algorithm takes in the random seeds $\xi, \chi$ distributed as follows, for some sufficiently large $T = \tilde{O}(\norm{\bm{A}}_{2 \rightarrow \infty}/\alpha^2)$. The first random seed $\xi \sim \{\{({i_1}_t, {j_1}_t) ..., ({i_m}_t, {j_m}_t)\}_{t=1}^T\} \subset [m]\times[n]$ where for each $q \in [m], t \in [T]$, ${i_q}_t = q$ and ${j_q}_t \sim \Dentry(q)$. The second random seed $\chi$ is sampled \emph{adaptively} based on $\bz$ and is used to make some additional \emph{adaptive} sample queries of Type I (row/column oracle queries). 
    \item If $\bm{z}' = \cA^{\mathrm{VRMD1-HP} | \alpha, \epsilon, \delta}_{\xi, \chi}({\ba})$, then wp.\ $1-\delta$ over the draw of the random seeds $\xi$ and $\chi$, $\bm{z}'$ solves the $(\bz, \alpha, \epsilon)$-minimax-subproblem. 
    \item $\cA^{\mathrm{VRMD1-HP} | \alpha, \epsilon, \delta}_{\xi, \chi}$ makes only $\tilde{O}(1)$ batch queries, makes sample queries of Type II only the indices encoded in the seed $\xi$, and makes sample queries of Type I \emph{only} on the indices encoded in the seed $\chi$.
\end{itemize} 
\end{theorem}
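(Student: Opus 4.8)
The statement is an adaptation of Theorem~2 of \cite{carmon2019variance} (variance-reduced mirror descent, VRMD, in the $\ell_2$-$\ell_1$ geometry), so the plan is to import that algorithm and its guarantee, reorganize its internal randomness into an oblivious seed $\xi$ and an adaptive seed $\chi$, and convert its duality-gap guarantee into the $\ell_\infty$-distance-to-saddle accuracy required by Definition~\ref{def:minimax-subproblem}. Concretely, VRMD fixes a reference point, computes the exact gradient mapping $g$ there with $\otilde(1)$ two-sided matrix-vector products (batch queries) over its $\otilde(1)$ epochs, and then runs $T = \otilde(\norm{\bm{A}}_{2 \rightarrow \infty}/\alpha^2)$ mirror-descent steps with variance-reduced stochastic gradients; the composite terms $\nabla \phi,\nabla\psi$ from Definition~\ref{def:mg-def} are explicit and are added deterministically, so they do not affect the sampling. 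In each step the $\ell_2$-block gradient estimate is built by sampling a row index $i$ from a distribution on $[m]$ that depends on the current simplex iterate and reading the row $\bm{a}_i$ (a Type~I query), while the $\ell_1$-block estimate is built, for that sampled row $i$, by drawing an entry index $j \sim \Dentry(i)$ and using the entry value $A_{i,j}$ (a Type~II query). The first task is to observe that $\Dentry(q)$ depends only on the fixed matrix $\bm{A}$ and not on the sub-problem center $\bz$; hence for every round $t\in[T]$ and every row $q\in[m]$ we may draw $j_{q,t}\sim\Dentry(q)$ in advance, packaging these entry samples together with the trivial first coordinates $i_{q,t}=q$ into the oblivious seed $\xi$, and whichever row $i$ the algorithm selects adaptively in round $t$ it simply consumes the already-drawn $j_{i,t}$. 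The remaining randomness --- the adaptive row-index choices and the induced row/column reads --- is exactly what is carried by $\chi$. This establishes the query bookkeeping claimed in the theorem.

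It remains to upgrade VRMD's optimization guarantee to the form needed here. The regularized sub-problem operator $\bm{w} \mapsto g(\bm{w}) + \alpha \nabla V_{\bz}(\bm{w})$ is $\alpha$-strongly monotone with respect to $\normInline{\cdot}$, since $g$ is monotone and $r$ (hence $V_{\bz}$) is $1$-strongly convex; by Fact~\ref{fact:minimax} this is compatible with $\normInline{\cdot}_\infty$ via $c=1$. Standard strong-monotonicity arguments then give that any point $\bz'$ whose regularized primal-dual gap is at most $\epsilon''$ satisfies $\normInline{\bz' - \fsub(\bz)} \leq \sqrt{2\epsilon''/\alpha}$, and therefore $\normInline{\bz' - \fsub(\bz)}_\infty \leq \sqrt{2\epsilon''/\alpha}$; choosing $\epsilon'' = \alpha\epsilon^2/2$ yields the required $\epsilon$-accuracy. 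Since VRMD is a high-accuracy method --- its step count depends only polylogarithmically on $1/\epsilon''$ and on the failure probability (the latter obtained, if the cited bound is stated in expectation, by running $O(\log(1/\delta))$ independent copies and taking a geometric median of their outputs, which concentrate around the unique saddle $\fsub(\bz)$) --- this substitution inflates $T$ only by polylogarithmic factors, so $T = \otilde(\norm{\bm{A}}_{2 \rightarrow \infty}/\alpha^2)$ as stated, and the $1-\delta$ guarantee follows.

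I expect the main obstacle to be the first step: carefully verifying, against Algorithm~4 of \cite{carmon2019variance}, that the stochastic gradient estimator genuinely decomposes so that the only randomness whose \emph{distribution} depends on $\bz$ is the row-index sampling (and the downstream row/column reads), while the entry sampling uses the $\bz$-independent family $\{\Dentry(q)\}_{q\in[m]}$ --- this is precisely what makes $\xi$ oblivious in the sense of Section~\ref{sec:pseudoindependence}. A related subtlety is that, because the row index is chosen adaptively, one must pre-draw $mT$ entry samples (one per row per round) rather than $T$, and it must be checked that this does not disturb the query counts asserted in the theorem. By contrast, the gap-to-distance conversion and the high-probability boosting are routine.
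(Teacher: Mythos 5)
Your proposal follows essentially the same route as the paper: the paper's proof is a one-line appeal to Theorem 2 of \cite{carmon2019variance} together with Fact~\ref{fact:minimax} (the strong-convexity/norm-equivalence conversion that you carry out via strong monotonicity of $g + \alpha \nabla V_{\bz}$), and your randomness bookkeeping---pre-drawing the $mT$ entry samples $j_{q,t} \sim \Dentry(q)$ into the oblivious seed $\xi$ while the adaptive row choices and row/column reads live in $\chi$---is exactly the reorganization implicit in the theorem statement. Your write-up is correct and simply makes explicit the details (gap-to-$\ell_\infty$ conversion, failure-probability boosting) that the paper leaves to the reader.
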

\begin{proof} The proof follows directly from Theorem 2 of \cite{carmon2019variance} and Fact~\ref{fact:minimax}. 
\end{proof} 

By instantiating CPP (Theorem~\ref{thm:conceptual-prox}) with VRMD1 (Theorem~\ref{thm:sub-problem-solver-l2l1} as the sub-problem solver in the $\ell_2$-$\ell_1$ setup, we see that we can obtain the following query complexity trade-of of $\tilde{O}(\alpha/\epsilon)$ full batch queries; along with $\tilde{O}(\norm{A}_{2 \rightarrow \infty}^2 \alpha^{-1}\epsilon^{-1})$ non-oblivious sample queries of Type I; and $\tilde{O}(m\norm{A}_{2 \rightarrow \infty}^2 \alpha^{-2})$ oblivious sample queries of Type II. This, corresponds to instantiating Meta-Algorithm~\ref{alg:meta-algorithm} with $S = \traditional$ and:
\begin{itemize}[leftmargin=*]
    \item $\bu_0 = \argmin_{z \in \cZ} r(z)$
    \item $\nOuter = \tilde{O}(\alpha/\epsilon)$
    \item $\zeta$ as defined in Definition~\ref{def:minimax-post-process}
    \item $\cA_{\xi, \chi}^{\mathrm{VRMD1-HP}}$ as the sub-solver
    \item $\cD_\xi$ and $\cD_\chi$ as in Theorem~\ref{thm:sub-problem-solver-l2l1}
    \item $\fsub(\bz)$ as defined in Definition~\ref{def:minimax-subproblem}
\end{itemize}

Moreover, observe that Theorem~\ref{thm:conceptual-prox} ensures that CPP is $\ell_\infty$ robust with respect to $\fsub$ (in the sense of Definition~\ref{assumption:robustness}), and VRMD1-HP solves sub-problems to high-precision in the $\ell_\infty$ norm (in the sense of Definition~\ref{def:approx}). Thus, using our sample-reuse framework, Theorem~\ref{thm:AtoC} implies that we can \emph{reuse} the oblivious sample queries of Type II across all $\nOuter$ iterations of CPP. We obtain the following improved trade-off. 

\begin{theorem}[{\color{ForestGreen} $\ell_2$-$\ell_1$ MG trade-off improvement}]\label{thm:mg-trade-off-l2l1-improvement} For $\alpha > 0$, there is an algorithm that wp.\ $1 - \delta$ solves the MG problem in the $\ell_2$-$\ell_1$ setup using $\tilde{O}(\alpha/\epsilon)$ full batch queries; $\tilde{O}(\norm{\bm{A}}_{2\rightarrow \infty}^2 \alpha^{-1}\epsilon^{-1})$ sample queries of Type I; 
and $\otilde(m \norm{\bm{A}}_{2\rightarrow \infty}^2 \alpha^{-2})$ sample queries of Type II.
\end{theorem}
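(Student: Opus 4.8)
The plan is to recognize the standard CPP$+$VRMD1-HP scheme as an instantiation of $\Outer(X;\traditional)$ in Meta-Algorithm~\ref{alg:meta-algorithm} --- with $\bu_0 = \argmin_{\bz\in\cZ} r(\bz)$, $\nOuter = \tilde{O}(\alpha/\epsilon)$, $\zeta$ as in Definition~\ref{def:minimax-post-process}, sub-solver $\cA^{\mathrm{VRMD1-HP}}_{\xi,\chi}$, distributions $\cD_\xi$ and $\mathsf{D}_\chi$ from Theorem~\ref{thm:sub-problem-solver-l2l1}, and $\fsub$ the proximal map of Definition~\ref{def:minimax-subproblem} --- and then to invoke Theorem~\ref{thm:AtoC} to swap the $\traditional$ sub-routine for the $\reuse$ one. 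The structural point worth flagging is that $\cD_\xi$ is genuinely \emph{oblivious}: it draws, for each $q \in [m]$, a batch of entry-oracle indices from $\Dentry(q)$, and $\Dentry(q)$ depends only on $\bm A$ (not on the current iterate), so the seed $\xi$ selects a fixed set of Type II entry queries while only the adaptive Type I (row/column) queries are governed by $\chi$.

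To apply Theorem~\ref{thm:AtoC} I must check its two hypotheses. First, $\Outer(X;\traditional)$ is $(\eta,\epsilon)$-robust with respect to $\fsub$ in the sense of Definition~\ref{assumption:robustness}: this is exactly Theorem~\ref{thm:conceptual-prox}, with $\eta \defeq \epsilon' = \poly(G,L,D,\Theta,\epsilon,d)$ the accuracy margin produced there, since solving the $(\ba_{t-1},\alpha,\eta)$-sub-problem means outputting a point within $\ell_\infty$-distance $\eta$ of $\fsub(\ba_{t-1})$. Second, setting $\delta' \defeq \delta/(5\nOuter^2)$ and $\eta' \defeq \min(\eta/2,\eta\delta')$, the sub-solver $\cA^{\mathrm{VRMD1-HP} | \alpha, \eta', \delta'}_{\xi,\chi}$ is an $(\eta',\delta')$-approximation of $\fsub$ per Definition~\ref{def:approx}; this is Theorem~\ref{thm:sub-problem-solver-l2l1} run at target accuracy $\eta'$, using Fact~\ref{fact:minimax} to pass between the Bregman term $V_{\bz}$ and $\norm{\cdot}_\infty$. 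Theorem~\ref{thm:AtoC} (with its failure-probability parameter instantiated as $\delta'$) then yields that $\Outer(X;\reuse,\eta'/(2\delta'))$ with the above instantiation outputs an $\epsilon$-accurate solution to the MG problem with probability $1 - 5\nOuter^2\delta' = 1-\delta$; this is the claimed algorithm.

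It remains to tally queries. In $\Outer(X;\reuse,\tau)$ a single realization $s_1 \sim \cD_\xi$ is reused across all $\nOuter = \tilde{O}(\alpha/\epsilon)$ iterations, so the Type II entry queries are made only on the indices encoded in $s_1$, for a total of $\tilde{O}(m\norm{\bm A}_{2\to\infty}^2\alpha^{-2})$ --- a factor $\nOuter$ fewer than in the standard scheme. The full batch queries are $\tilde{O}(1)$ per iteration, hence $\tilde{O}(\alpha/\epsilon)$ overall, and the fresh $\chi_t$ drawn each iteration contribute $\tilde{O}(\norm{\bm A}_{2\to\infty}^2\alpha^{-1}\epsilon^{-1})$ Type I queries in total. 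Because VRMD1-HP is a \emph{high-accuracy} sub-solver, running it at the polynomially-small accuracy $\eta'$ and failure probability $\delta'$ inflates its complexity only by polylogarithmic factors, absorbed into $\tilde{O}(\cdot)$. The main thing to be careful about --- and the reason this case is emphasized as showing the framework's reach --- is precisely that VRMD1 mixes oblivious ($\xi$) and non-oblivious ($\chi$) sampling: one has to confirm that Theorem~\ref{thm:AtoC} and the pseudo-independence results it rests on (Theorems~\ref{thm:conversion-thm} and~\ref{lem:pseudoindependencemain}) only ever reuse $\xi$ and leave $\chi$ resampled, which is how the $\reuse$ sub-routine is defined. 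No deeper obstacle arises; the theorem is a corollary of Theorem~\ref{thm:AtoC} once the two structural hypotheses above are verified.
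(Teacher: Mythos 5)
Your proposal is correct and follows essentially the same route as the paper: identify CPP$+$VRMD1-HP as the $\Outer(X;\traditional)$ instantiation (with exactly the initialization, $\nOuter$, post-process, sub-solver, seed distributions, and $\fsub$ the paper lists), verify $\ell_\infty$-robustness via Theorem~\ref{thm:conceptual-prox} and the $(\eta',\delta')$-approximation property via Theorem~\ref{thm:sub-problem-solver-l2l1}, and then invoke Theorem~\ref{thm:AtoC} to reuse only the oblivious Type II entry queries while the adaptive Type I queries and batch queries scale with $\nOuter$. Your explicit bookkeeping of $\delta' = \delta/(5\nOuter^2)$ and the remark that only $\xi$ is reused while $\chi$ is resampled are exactly the (implicit) details in the paper's argument, so there is no substantive difference.
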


\subsubsection{$\ell_2$-$\ell_2$ matrix games}\label{subsec:l2l2}

In this section, we discuss the $\ell_2$-$\ell_2$ matrix games setting. Throughout this section, we use the setup $(\cZ = (\cX, \cY), \norm{\cdot}, r)$ where 
\begin{itemize}[leftmargin=*]
    \item $\cX = \mathbb{B}^{n} \defeq \{\bx \in \R^n : \norm{\bx}_2 \leq 1\}$ and $\cY = \mathbb{B}^m \defeq \{\by \in \R^m : \norm{\by}_2 \leq 1\}$ are the Euclidean balls of radius 1 centered at the origin. 
    \item $\norm{\cdot}: \cZ \to \R$ is given by $\norm{\bm{z}} =\norm{\bz}_2$. 
    \item $r: \cZ \to \R$ is given by $r(\bm{z}) = \frac{1}{2} \norm{\bz}_2^2$. 
\end{itemize}

In this case, the relevant constants defined in Section~\ref{sec:preliminaries} are trivially given by 
\begin{itemize}[leftmargin=*]
    \item $L, G \leq \norm{\bm{A}}_{2}$
    \item $D = 1$
    \item $c = 1$, $C = d$
\end{itemize}

The function $r$ is known to be 1-strongly convex with respect to $\cZ$, $\norm{\cdot}$ (see, e.g., Section 4.3 of \cite{carmon2019variance}). To begin, we define the distributions which we will sample from in order to make sample queries. 

\begin{definition} Let $\bm{A} \in \R^{m \times n}$. For each $i \in [m]$, define the distributions $\Drow$ and $\Dcol$ as follows:
\begin{align*}
    \probSub{a=i}{a \sim \Drow} = \frac{\norm{\bm{a}_i}_2^2}{\norm{\bm{A}}_F^2}, \quad \text{ and } \quad \probSub{a=j}{a \sim \Dcol} = \frac{\norm{\bm{a}^j}_2^2}{\norm{\bm{A}}_F^2}. 
\end{align*}
\end{definition}

The following theorem states the guarantees of the variance-reduced mirror descent. VRMD-2 solves a subproblem usin $\tilde{O}(1)$ full batch queries and \emph{oblivious} sample queries of Type I (see Definition~\ref{def:sample-oracle-type2}). 

\begin{theorem}[VRMD2, Lemma 5 of \cite{carmon2019variance}, adapted - {\color{ForestGreen} $\ell_2$-$\ell_1$ MG sub-solver}]\label{thm:sub-problem-solver-l2l2} Let $f$ be as defined in Definition~\ref{def:mg-def} and $\bm{z} \in \cZ$. There is a randomized algorithm $\cA^{\mathrm{VRMD2-HP} | \alpha, \epsilon, \delta}_{\xi, \chi}$ such that the following holds. 
\begin{itemize}[leftmargin=*]
    \item The algorithm takes in the random seeds $\xi = (\xi_1, \xi_2)$ distributed as follows, for some sufficiently large $T = \tilde{O}(\norm{\bm{A}}_{F}^2/\alpha^2)$.  $\xi_1 \sim \{i_1, ..., i_T\} \subset [m]$ and $\xi_2 \sim \{j_1, ..., j_T\} \subset [n]$, where for each $t \in [T]$, $i_t \sim \Drow$ and $j_t \sim \Dcol$. The second random seed $\chi \sim \UniformDist{[0]}$ is a zero-bit random seed. 
    \item If $\bm{z}' \cA^{\mathrm{VRMD2-HP} | \alpha, \epsilon, \delta}_{\xi, \chi}({\ba})$, then wp.\ $1-\delta$ over the draw of the random seeds $\xi$ and $\chi$, $\bm{z}'$ solves the $(\bz, \alpha, \epsilon)$-minimax-subproblem. 
    \item $\cA^{\mathrm{VRMD2-HP} | \alpha, \epsilon, \delta}_{\xi, \chi}$ makes only $\tilde{O}(1)$ batch queries, and makes row/column queries (sample queries of Type I) only the indices encoded in the seed $\xi$. The algorithm makes \emph{no} entry oracle queries (sample queries of Type II). 
\end{itemize} 
\end{theorem}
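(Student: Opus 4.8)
The plan is to mirror the reduction used for Theorem~\ref{thm:sub-problem-solver-l2l1}, replacing the $\ell_2$-$\ell_1$ entry-sampling scheme with the row/column sampling appropriate to the $\ell_2$-$\ell_2$ setup and invoking Lemma~5 of \cite{carmon2019variance} in place of their Theorem~2. First I would check that the $(\bz, \alpha, \epsilon)$-minimax-subproblem (Definition~\ref{def:minimax-subproblem}), specialized to this setup, is exactly an instance of the composite bilinear saddle-point problem that \cite{carmon2019variance}'s variance-reduced mirror descent handles: the objective is $\by^\top \bm{A}\bx + \phi(\bx) - \psi(\by) + \alpha V_{\bz}(\cdot)$, and in the $\ell_2$-$\ell_2$ setup $V_{\bz}$ is a shifted squared Euclidean norm, which is separable across the $\cX$ and $\cY$ blocks and $1$-strongly convex with respect to $\norm{\cdot}$, so the $\alpha V_{\bz}$ term can be folded into the composite part without disturbing any structural hypothesis of their analysis. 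The regularizer additionally makes the induced monotone operator $\alpha$-strongly monotone, which is what upgrades convergence to geometric and produces the ``high-precision'' (polylogarithmic in $1/\epsilon$ and $1/\delta$) inner iteration count.

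Next I would pin down the sampling distributions and count queries. The $\ell_2$-$\ell_2$ local-norm estimator builds an unbiased gradient of the bilinear term by drawing a row $i \sim \Drow$ (to estimate $\bm{A}^\top \by$) and a column $j \sim \Dcol$ (to estimate $\bm{A}\bx$); its second moment is $O(\norm{\bm{A}}_F^2)$, giving $T = \tilde{O}(\norm{\bm{A}}_F^2/\alpha^2)$ inner steps. The crucial point is that $\Drow$ and $\Dcol$ depend only on $\bm{A}$ and not on the query point $\bz$, so all $T$ row indices and $T$ column indices can be drawn up front and packed into the oblivious seed $\xi = (\xi_1, \xi_2)$; nothing is sampled adaptively, so $\chi$ is the trivial zero-bit seed. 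Each inner step touches only its one sampled row and column plus $\tilde{O}(d)$-time vector and projection operations, and the only matrix-vector products are the $\tilde{O}(1)$ reference full-gradient computations used to set up variance reduction; this yields the claimed $\tilde{O}(1)$ full-batch queries, no entry (Type~II) queries, and row/column (Type~I) queries confined to the indices encoded in $\xi$.

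Finally I would convert the output guarantee into the $\ell_\infty$ form required by Definition~\ref{def:minimax-subproblem}. Lemma~5 of \cite{carmon2019variance} certifies a small regularized duality gap for the output $\bz'$; by Fact~\ref{fact:minimax} the $\alpha V_{\bz}$ regularizer is $\alpha$-strongly convex in $\norm{\cdot}$ and controls $\norm{\cdot}_\infty$ (the norm-equivalence constant $c$ equals $1$ here), so a small regularized gap at $\bz'$ forces $\norm{\bz' - \fsub(\bz)}_2$, and hence $\norm{\bz' - \fsub(\bz)}_\infty$, to be at most $\epsilon$ once the inner solver is run to $\ell_2$-accuracy $\epsilon$. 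The $1-\delta$ guarantee follows by the same confidence-boosting step as in the $\ell_2$-$\ell_1$ case---run $\tilde{O}(\log(1/\delta))$ independent copies and keep the iterate with the best certified gap---which costs only polylogarithmic overhead. The step I expect to be most delicate is the first one: verifying that absorbing $\alpha V_{\bz}$ into \cite{carmon2019variance}'s composite framework leaves both the $1$-strong convexity of the distance-generating function and the variance bound on the bilinear gradient estimator intact. In the $\ell_2$-$\ell_2$ setup this reduces to bookkeeping, since $V_{\bz}$ is just a translated quadratic and the regularizer never enters the bilinear part of the gradient, but it is where the ``adapted'' in the theorem statement does its work.
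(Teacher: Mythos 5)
Your proposal is correct and follows essentially the same route as the paper, whose entire proof is the one-line observation that the statement follows from Lemma~5 of \cite{carmon2019variance} together with Fact~\ref{fact:minimax} (strong convexity of $\alpha V_{\bz}$ converting the gap guarantee into $\ell_2$, hence $\ell_\infty$ with $c=1$, accuracy). Your write-up simply makes explicit the details the paper leaves implicit: obliviousness of $\Drow,\Dcol$ so the seed can be pre-drawn, the $\tilde{O}(1)$ full-batch and $T=\tilde{O}(\norm{\bm{A}}_F^2/\alpha^2)$ row/column query accounting, and the polylogarithmic boosting to the high-precision, high-probability form.
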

\begin{proof} The proof follows directly from Lemma 5 of \cite{carmon2019variance} and Fact~\ref{fact:minimax}. 
\end{proof} 

By instantiating CPP (Theorem~\ref{thm:conceptual-prox}) with VRMD2 (Theorem~\ref{thm:sub-problem-solver-l2l2}) as the sub-problem solver in the $\ell_2$-$\ell_2$ setting, we see that we can obtain the following query complexity trade-off of $\tilde{O}(\alpha/\epsilon)$ full batch queries along with $\tilde{O}(m\norm{A}_{F}^2 \alpha^{-1}\epsilon^{-1})$ oblivious sample queries of Type II. This, corresponds to instantiating Meta-Algorithm~\ref{alg:meta-algorithm} with $S = \traditional$ and:
\begin{itemize}[leftmargin=*]
    \item $\bu_0 = \argmin_{z \in \cZ} r(z)$
    \item $\nOuter = \tilde{O}(\alpha/\epsilon)$
    \item $\zeta$ as defined in Definition~\ref{def:minimax-post-process}
    \item $\cA_{\xi, \chi}^{\mathrm{VRMD2-HP}}$ as the sub-solver
    \item $\cD_\xi$ and $\cD_\chi$ as in Theorem~\ref{thm:sub-problem-solver-l2l2}
    \item $\fsub(\bz) \defeq \bz_\alpha$ as defined in Definition~\ref{def:minimax-subproblem}
\end{itemize}

Moreover, observe that Theorem~\ref{thm:conceptual-prox} ensures that CPP is $\ell_\infty$ robust with respect to $\fsub$ (in the sense of Definition~\ref{assumption:robustness}), and VRMD2-HP solves sub-problems to high-precision in the $\ell_\infty$ norm (in the sense of Definition~\ref{def:approx}). Thus, using our sample-reuse framework, Theorem~\ref{thm:AtoC} implies that we can \emph{reuse} the oblivious sample queries of Type I across all $\nOuter$ iterations of CPP. We obtain the following improved trade-off. 

\begin{theorem}[{\color{ForestGreen} $\ell_2$-$\ell_2$ MG trade-off improvement}]\label{thm:mg-trade-off-l2l2-improvement} For $\alpha > 0$, there is an algorithm that wp.\ $1 - \delta$ solves the MG problem in the $\ell_2$-$\ell_2$ setup using $\tilde{O}(\alpha/\epsilon)$ full batch queries 
and $\otilde(\norm{\bm{A}}_{F}^2 \alpha^{-2})$ sample queries of Type II.
\end{theorem}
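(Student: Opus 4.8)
The plan is to mirror the template already used for FSM (Theorem~\ref{thm:improved}) and DMDPs (Theorem~\ref{thm:dmdp-main}): exhibit the standard CPP${}+{}$VRMD2-HP algorithm in the $\ell_2$-$\ell_2$ setup as an instantiation of $\Outer(X;\traditional)$ in Meta-Algorithm~\ref{alg:meta-algorithm}, check the two hypotheses of Theorem~\ref{thm:AtoC} ($\ell_\infty$-robustness of the outer-solver and $\ell_\infty$-accuracy of the sub-solver), then invoke Theorem~\ref{thm:AtoC} to pass to $\Outer(X;\reuse,\tau)$, which reuses a single realization of the oblivious seed $\xi$ across all $\nOuter$ iterations. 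First I would record the instantiation: $X=(f,\epsilon,\delta)$ is the MG instance of Definition~\ref{def:mg-def} in the $\ell_2$-$\ell_2$ setup, $\bu_0=\argmin_{\bz\in\cZ}r(\bz)$, $\nOuter=\tilde{O}(\alpha/\epsilon)$, $\zeta$ is the projection-plus-extragradient post-process of Definition~\ref{def:minimax-post-process}, $\fsub(\bz)\defeq\bz_\alpha$ is the sub-problem target map of Definition~\ref{def:minimax-subproblem}, the sub-solver is $\cA^{\mathrm{VRMD2-HP}}_{\xi,\chi}$ from Theorem~\ref{thm:sub-problem-solver-l2l2}, $\bm{w}(t)=1/\nOuter$, and $\cD_\xi,\cD_\chi$ are as in Theorem~\ref{thm:sub-problem-solver-l2l2} (with $\chi$ a $0$-bit seed, so there is no adaptive component here).

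Next I would verify the two hypotheses. For robustness, Theorem~\ref{thm:conceptual-prox} already supplies $\epsilon'=\poly(G,L,D,\Theta,\epsilon,d)$ such that if each $\bu_{t-1/2}$ satisfies $\norm{\bu_{t-1/2}-\fsub(\ba_{t-1})}_\infty\le\epsilon'$ then the returned average solves the MG problem; this is exactly $(\epsilon',\epsilon)$-robustness with respect to $\fsub$ in the sense of Definition~\ref{assumption:robustness}, so I set $\eta\defeq\epsilon'$ and $\beta\defeq\epsilon$. In the $\ell_2$-$\ell_2$ setup the constants $L,G\le\norm{\bm{A}}_2$, $D=1$, $c=1$, $C=d$ are all controlled, so $\eta$ is a genuine polynomial in $\epsilon$ and the problem parameters. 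For the sub-solver, Theorem~\ref{thm:sub-problem-solver-l2l2} says $\cA^{\mathrm{VRMD2-HP}\,|\,\alpha,\eta',\delta'}_{\xi,\chi}$ outputs $\bz'$ with $\norm{\bz'-\fsub(\bz)}_\infty\le\eta'$ w.p.\ $1-\delta'$ — i.e.\ it is an $(\eta',\delta')$-approximation of $\fsub$ in the sense of Definition~\ref{def:approx} — using only $\tilde{O}(1)$ batch queries and sample queries supported on the $2T=\tilde{O}(\norm{\bm{A}}_F^2/\alpha^2)$ indices encoded in $\xi$, where the $\tilde{O}(\cdot)$ hides only polylogarithmic factors in $1/\eta'$ and $1/\delta'$.

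Then I would apply Theorem~\ref{thm:AtoC} with its failure-probability parameter set to $\delta'\defeq\delta/(5\nOuter^2)$ and $\eta'\defeq\min(\eta/2,\eta\delta')$: it yields that $\Outer(X;\reuse,\eta'/(2\delta'))$ outputs an $\epsilon$-accurate solution to the MG problem w.p.\ $1-5\nOuter^2\delta'=1-\delta$. It remains to count queries. Each of the $\nOuter=\tilde{O}(\alpha/\epsilon)$ iterations uses $\tilde{O}(1)$ batch queries inside VRMD2-HP plus $O(1)$ matrix-vector products for the extragradient post-process $\zeta$, which accesses $\bm{A}$ only through the gradient mapping $g$ (i.e.\ the batch oracle) and issues no sample queries; hence the total is $\tilde{O}(\alpha/\epsilon)$ full batch queries. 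Crucially, because $\Reuse$ reuses the single seed $s_1\sim\cD_\xi$ in every iteration, the total number of (oblivious Type~I) sample queries is just $2T=\tilde{O}(\norm{\bm{A}}_F^2\alpha^{-2})$, a factor $\nOuter$ below the $\tilde{O}(\norm{\bm{A}}_F^2\alpha^{-1}\epsilon^{-1})$ of the standard template. Since VRMD2-HP is high-accuracy and $\eta$ is polynomial in $\epsilon$ and the problem parameters, the substitutions $\delta\mapsto\delta/(5\nOuter^2)$ and $\eta\mapsto\eta'$ inflate $T$ by only a polylogarithmic factor, absorbed into $\tilde{O}(\cdot)$.

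I do not anticipate a serious obstacle, since the heavy lifting is done by Theorem~\ref{thm:AtoC} and by the fact that Theorem~\ref{thm:conceptual-prox} already packages CPP's guarantee in the $\ell_\infty$-robust form of Definition~\ref{assumption:robustness}. The only points that need care are bookkeeping: (i) confirming that the post-process $\zeta$ needs no \emph{sample} queries, so that the reuse genuinely buys the $\nOuter$-factor saving in sample complexity while leaving batch complexity unchanged; (ii) tracking that the $\tilde{O}$-hidden polylogs survive the error and failure-probability rescalings; and (iii) noting that the MG problem asks for a saddle point $(\bm{x},\bm{y})$, which is precisely the convex-combination output $\sum_{t}\bm{w}(t)\bu_t$ of Meta-Algorithm~\ref{alg:meta-algorithm} with $\bm{w}(t)=1/\nOuter$, matching the averaging in Theorem~\ref{thm:conceptual-prox}.
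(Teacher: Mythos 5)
Your proposal is correct and follows essentially the same route as the paper: instantiate Meta-Algorithm~\ref{alg:meta-algorithm} with CPP (Theorem~\ref{thm:conceptual-prox}) as the $(\eta,\beta)$-robust outer-solver and VRMD2-HP (Theorem~\ref{thm:sub-problem-solver-l2l2}) as the $\ell_\infty$-accurate sub-solver, then apply Theorem~\ref{thm:AtoC} with rescaled failure probability to reuse the single oblivious seed, giving $\tilde{O}(\alpha/\epsilon)$ batch and $\tilde{O}(\norm{\bm{A}}_F^2\alpha^{-2})$ sample queries. Your bookkeeping (including noting that the reused queries are the oblivious row/column, i.e.\ Type~I, queries) matches the paper's argument.
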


\paragraph{Optimal Frobenius-norm-dependent query complexities for $\ell_2$-$\ell_2$ matrix games.}

In the case of $\ell_2$-$\ell_2$ games, note that the row/column oracle queries (sample queries of Type I) are \emph{strictly} less powerful than the batch queries (matrix-vector oracle queries) in the sense that one can always use a matrix-vector oracle to implement a row/column oracle. Consequently, setting $\alpha = \norm{\bm{A}}_F^{2/3} \epsilon^{1/3}$ in Theorem~\ref{thm:mg-trade-off-l2l2-improvement}, we obtain an overall matrix-vector oracle query complexity of $\tilde{O}(\norm{\bm{A}}_F^{-2/3} \alpha^{-2/3})$. 

Note that lower bounds of \cite{liu2023accelerated} indicate that $\tilde{\Omega}(\norm{\bm{A}}_F^{-2/3} \alpha^{-2/3})$-matrix-vector oracle queries is information-theoretically necessary. To our knowledge, Theorem~\ref{thm:mg-trade-off-l2l2-improvement} is the first to get this information-theoretically near-optimal rate for general $\ell_2$-$\ell_2$ matrix-games. 

\subsection{Applications of $\ell_2$-$\ell_1$ matrix games}\label{subsec:applications}
In this section, we discuss the implications of our results for two problems in computational geometry. \citet{allen2014optimization, carmon2019variance} showed how to reduce the minimum enclosing ball problem to $\ell_2$-$\ell_1$ matrix games. Much of the notation and presentation in this section is inspired by \cite{allen2014optimization} and \cite{carmon2019variance}. 

\paragraph{Maximum inscribed ball.}{In the maximum inscribed ball problem, we are given a polyhedron specified by a matrix $\bm{A} \in \R^{m \times n}$ and vector $\bm{b} \in \R^m$ so that $P = \{\bx \in \R^n : \bm{A} \bx + \bm{b} \geq \bm{0}\}$. We make the following assumptions, as in \cite{carmon2019variance, allen2014optimization}: 
\begin{itemize}[leftmargin=*]
    \item We assume that the polytope $P$ is bounded and hence $m \geq n$. 
    \item We assume that $\norm{\bm{a}_i}_2 = 1$ for all $i \in [m]$ so that $\norm{\bm{A}_{2 \rightarrow \infty}} = 1$. 
    \item The origin is inside the polytope. This is without loss of generality, as we may always shift the polytope to satisfy this requirement.
\end{itemize}

In the maximum inscribed ball problem, we must (approximately) compute $\bx^\star \in P$ such that 
\begin{align}\label{eq:maxIB}
    \bx^\star \in \argmax_{\bx \in P} \min_{i \in [n]} \frac{\langle \bm{a}_i, \bx \rangle + b_i}{\norm{\bm{a}_i}_2}. 
\end{align}
We define
\begin{align*}
    r^\star &\defeq \max_{\bx \in P} \min_{i \in [n]} \frac{\langle \bm{a}_i, \bx \rangle + b_i}{\norm{\bm{a}_i}_2}, \text{ and } \\
    R &\defeq \min\{r > 0: P \subset \{\bx \in \R^n : \norm{\bx}_2 \leq r\}\}. 
\end{align*}
We use $\rho \defeq R/r^\star$ to denote the aspect ratio of the problem. \citet{allen2014optimization} showed that solving \eqref{eq:maxIB} is equivalent to solving the following minimax problem: 
\begin{align*}
    \max_{\bx \in \R^n}\min_{\by \in \Delta^m} \by^\top \bm{A} \bx + \by^\top \bm{b}. 
\end{align*}
and formulated the maximum inscribed ball approximation problem as follows. 

\begin{definition}[Maximum inscribed ball (Max-IB)] In the \emph{maximum inscribed ball problem}, we must compute $\hat{\bx} \in \R^n$ such that wp.\ $1-\delta$, 
\begin{align*}
    \min_{\by \in \Delta^m} \by^\top \bm{A} \hat{\bx} + \by^\top \bm{b} \geq (1-\epsilon) \max_{\bx \in \R^n}\min_{\by \in \Delta^m} \by^\top \bm{A} \bx + \by^\top \bm{b}. 
\end{align*}
\end{definition}

\citet{carmon2019variance} further showed that the Max-IB problem can be solved using a two stage approach. In the first stage, we solve $\ell_2$-$\ell_1$ matrix games of the following form for a sequence of $\mu$'s: 
\begin{align}\label{eq:stage-1}
    \max_{\bx \in \mathbb{B}^n}\min_{\by \in \Delta^m} \by^\top \bm{A} \bx + \by^\top \bm{b} + \mu \sum_{i \in [m]} \by(i) \log(\by(i)) - \frac{\mu}{2} \norm{\bx}_2^2 
\end{align}
to obtain a constant-multiplicative approximation $\hat{r}$ to $r^\star$ (Lemma 10 of \cite{carmon2019variance}). In the second stage, we solve an $\ell_2$-$\ell_1$ matrix games of the following form to $O(\epsilon\hat{r})$-accuracy (Theorem 3 of \cite{carmon2019variance}): 
\begin{align}\label{eq:stage-2}
      \max_{\bx \in \mathbb{B}^n}\min_{\by \in \Delta^m} \by^\top \tilde{\bm{A}} \bx + \by^\top \bm{b}
\end{align}
where $\tilde{\bm{A}} = 2R \cdot \bm{A}$. As our MG Problem definition in the $\ell_2$-$\ell_1$ setup (Definition~\ref{def:mg-def}) captures both \eqref{eq:stage-1} and \eqref{eq:stage-2}, our methods can be used to obtain improved full batch versus sample query complexity trade-offs for solving the Max-IB problem. 
}

\paragraph{Minimum enclosing ball.} 
In the \emph{minimum enclosing ball problem}, we are given a data matrix $\bm{A} \in \R^{m \times n}$ such that $\bm{a}_1 = \bm{0}$, and $\max_{i \in [m]} \norm{\bm{a}_i} = 1$ so that $\norm{\bm{A}}_{2 \rightarrow \infty} = 1$. We must (approximately) find $R^\star$ such that there exists a point $\bx$ with $\norm{\bx - \bm{a}_i}_2 \leq R^\star$ for all $i \in [m]$. That is,
\begin{align}\label{eq:mineb}
    R^\star \defeq \min_{\bx \in \R^n} \max_{\by \in \Delta^m} \by^\top \bm{A} \bx + \by^\top \bm{b} + \frac{1}{2} \norm{\bx}_2^2. 
\end{align}

\begin{definition}[Minimum enclosing ball (Min-EB)] In the minimum enclosing ball problem, we must solve the minimax problem (Definition~\ref{def:minimax-def}) for $f(\bx, \by) = \by^\top \bm{A} \bx + \by^\top \bm{b} + \frac{1}{2} \norm{\bx}_2^2.$
\end{definition}

\citet{carmon2019variance} showed that the Min-EB problem can be solved to accuracy $\epsilon/8$ by solving the following $\ell_2$-$\ell_1$ matrix game to accuracy $\epsilon/16$ (Lemma 11 of \cite{carmon2019variance}): 
\begin{align}\label{eq:mineb-reduction}
    \min_{\bx \in \mathbb{B}^n} \max_{\by \in \Delta^m} \by^\top \bm{A} \bx + \by^\top \bm{b} - \frac{\epsilon}{32 \log(m)} \sum_{i \in [m]} \by(i) \log(\by(i))+ \frac{1}{2} \norm{\bx}_2^2. 
\end{align}
As our MG Problem definition in the $\ell_2$-$\ell_1$ setup (Definition~\ref{def:mg-def}) captures both \eqref{eq:mineb-reduction} our methods can be used to obtain improved full batch versus sample query complexity trade-offs for solving the Min-EB problem. 
\section{Application: Finite-sum minimization with non-uniform smoothness}\label{apx:nonuniform_smoothnes}

In this section, we discuss applications of pseudo-independence for improved full batch versus sample query trade-offs for finite sum minimization (FSM) where the component functions are of non-uniform smoothness. This is a generalization of Section~\ref{sec:finite-sum-minimization}. 

\begin{definition}[{\color{ForestGreen}{Generalized FSM (GFSM) problem}}]\label{def:finite-sum-minimization-nonuniform} In the GFSM problem, we are given $c > 1$ and  $\bm{x}_0 \in \R^d$ and must output $\hat{\bm{x}} \in \R^d$ such that $F(\hat{\bm{x}}) - \min_{\bm{x}} F(\bm{x}) \leq 1/c \cdot (F(\bm{\bm{x}}_0) - \min_{\bm{z}} F(\bm{x}))$ where $F:\R^d \rightarrow \R$ with $F:\R^d \rightarrow \R$ witj $F(x) \defeq \frac{1}{n}\sum_{i\in[n]} f_i(x)$, $F$ is $\mu$ strongly-convex, and each $f_i: \R^d \to \R$ is of known smoothnes $L_i$. 
\end{definition}

State-of-the-art query complexities for non-uniform smoothness finite-sum minimization can be achieved by using a primal-dual extra-gradient method \citep{jin2022sharper}. By using this primal-dual extra-gradient method as our sub-problem solver for solving regularized problems and using APP as the outer-solver \citep{frostig2015regularizing} as in the uniform-smoothness case (Section~\ref{sec:finite-sum-minimization}), we can obtain trade-offs between batch and sample queries that depend on the distribution of the smoothness parameters $L_i$ (rather than bounds that depend only on the worst-case smoothness $L$ as in Section~\ref{sec:finite-sum-minimization}). 

The gradient (batch) oracle and component (sample) oracle are the same as for FSM (Definitions~\ref{def:gradient-oracle} and ~\ref{def:component-oracle}. The FSM sub-problems and $\fsub$ are defined exactly as in the uniform smoothness case (Definition~\ref{def:fsm-subproblem}.) The only change relative to Section~\ref{sec:finite-sum-minimization} is the choice of sub-solver. For the GFSM problem, we use the primal-dual finite-sum minimization algorithm of \cite{jin2022sharper}. 

\begin{theorem}[PDFSM, Theorem 2 of \cite{jin2022sharper}, restated - {\color{ForestGreen} GFSM sub-problem solver}]\label{thm:primal-dual} Let $F$ be as in Definition~\ref{def:finite-sum-minimization-nonuniform}. There is a randomized algorithm $\cA^{\mathrm{PDFSM} | \lambda, c, \delta}_{\xi, \chi}(\bm{x})$ such that the following holds. 
\begin{itemize}[leftmargin=*]
    \item The algorithm takes in the random seeds $\xi, \chi$ distributed as follows. The first random seed $\xi \sim \{i_1, ..., i_T\}$ where each $\mathbb{P}[i_t = j] = \sqrt{L_j}/(\sum_{k \in [n]} \sqrt{L_k})$ for some 
    \begin{align*}
        T = \tilde{O}\paren{\sum_{i \in [n]} \sqrt{\frac{L_i}{n \mu}}}. 
    \end{align*}
    The second random seed $\chi \sim \UniformDist{[0]}$ is a 0-bit random seed. 
    \item If $\bm{x}' = \cA^{\mathrm{PDFSM} | \lambda, c, \delta}_{\xi, \chi}({\ba})$, then wp.\ $1-\delta$ over the draw of the random seeds $\xi$ and $\chi$, ${\bx}'$ is a solution to the $({\ba}, \lambda, c)$-sub-problem.
    \item $\cA^{\mathrm{PDFSM}| \lambda, c, \delta}_{\xi, \chi}$ makes only $\tilde{O}(1)$ batch queries and makes sample queries \emph{only} on the indices contained in $xi$.
\end{itemize}
\end{theorem}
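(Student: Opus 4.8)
The plan is to obtain Theorem~\ref{thm:primal-dual} as a direct corollary of Theorem~2 of \cite{jin2022sharper} applied to the $\lambda$-regularized sub-problem, in exact analogy with how Theorem~\ref{thm:svrg-convex} follows from the SVRG guarantees in the uniform-smoothness case. First I would unpack the $(\ba, \lambda, c)$-sub-problem of Definition~\ref{def:fsm-subproblem}: it asks for an approximate minimizer (to relative accuracy $1/c$) of $G_{\ba}(\tilde{\bx}) \defeq F(\tilde{\bx}) + \tfrac{\lambda}{2}\norm{\tilde{\bx} - \by_{\ba}}_2^2 = \tfrac{1}{n}\sum_{i\in[n]} g_i(\tilde{\bx})$ where $g_i(\tilde{\bx}) \defeq f_i(\tilde{\bx}) + \tfrac{\lambda}{2}\norm{\tilde{\bx} - \by_{\ba}}_2^2$. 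Each $g_i$ is convex and $(L_i + \lambda)$-smooth, and $G_{\ba}$ is $(\mu + \lambda)$-strongly convex, hence in particular at least $\mu$-strongly convex. So the sub-problem is itself an instance of finite-sum minimization with non-uniform component smoothness, which is exactly the class handled by \cite{jin2022sharper}.

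Next I would invoke the primal-dual extra-gradient method of \cite{jin2022sharper}: for a $\mu'$-strongly convex finite sum with $M_i$-smooth convex components it converges linearly, using per stochastic pass $\tilde{O}(\sum_{i} \sqrt{M_i/(n\mu')})$ component-gradient accesses drawn from the distribution $\propto \sqrt{M_i}$, together with $\tilde{O}(1)$ full-gradient evaluations at its anchor/snapshot points; $O(\log c)$ passes suffice to reach relative error $1/c$, and $O(\log(1/\delta))$ independent repetitions with best-iterate selection amplify a constant success probability to $1-\delta$ — both overheads polylogarithmic and hence absorbed in $\tilde{O}(\cdot)$. Instantiating with $M_i = L_i + \lambda$ and $\mu' = \mu + \lambda \ge \mu$ yields sampling $\propto \sqrt{L_i + \lambda}$; replacing this with the stated distribution $\Prob[i_t = j] = \sqrt{L_j}/(\sum_{k}\sqrt{L_k})$ changes $T$ only by a constant factor, and any additive $\tilde{O}(n)$ term in the component-access count (which is what one gets when $\lambda$ dominates the $L_i$) is equivalent to $\tilde{O}(1)$ extra full-batch queries, since one gradient of $F$ equals $n$ component gradients; it is therefore moved into the batch count, leaving $T = \tilde{O}(\sum_{i}\sqrt{L_i/(n\mu)})$ sample queries and $\tilde{O}(1)$ batch queries. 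The gradient of the quadratic penalty is explicit, so it costs no oracle call.

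The step I expect to need the most care is the structural bookkeeping that makes this sub-solver fit the Meta-Algorithm~\ref{alg:meta-algorithm} template: verifying that (i) only $\tilde{O}(1)$ \emph{batch} (full $\nabla F$) queries are made — which holds because the variance-reduced / primal-dual scheme fixes only $\tilde{O}(1)$ snapshot points and the $\tfrac{\lambda}{2}\norm{\cdot - \by_{\ba}}^2$ term contributes an explicitly computable gradient — and (ii) every \emph{sample} query is one of the indices encoded in the seed $\xi$ and, crucially, that the list $\{i_1,\dots,i_T\}$ is drawn from a distribution \emph{independent of the input} $\ba$. Point (ii) is immediate here because the sampling distribution $\propto \sqrt{L_i}$ depends only on the known smoothness constants, not on $\by_{\ba}$, so the entire seed $\xi$ can be sampled up front exactly as the $\cD_\xi$ of Theorem~\ref{thm:svrg-convex}. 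This obliviousness is precisely the property that Theorem~\ref{thm:AtoC} will exploit to reuse $\xi$ across all $\nOuter$ outer iterations; the $0$-bit dummy seed $\chi \sim \UniformDist{[0]}$ is carried through solely for formal consistency with Section~\ref{sec:pseudoindependence} and plays no role.
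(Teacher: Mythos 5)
The paper itself contains no proof of Theorem~\ref{thm:primal-dual}: it is imported as a black-box restatement of Theorem~2 of \cite{jin2022sharper}, exactly as Theorem~\ref{thm:svrg-convex} imports the SVRG guarantee, so there is no internal argument to compare against. Your proposal is the natural reconstruction of the intended reduction, and its overall shape is right: view the $(\ba,\lambda,c)$-sub-problem as a strongly convex finite sum, invoke the primal-dual extragradient guarantee with $\sqrt{\cdot}$-proportional importance sampling, absorb the per-epoch $O(n)$ snapshot cost into the $\tilde{O}(1)$ batch queries, handle accuracy $1/c$ and failure probability $\delta$ with polylogarithmic overhead, and observe that $\cD_\xi$ is oblivious to $\ba$ (which is the property Theorem~\ref{thm:AtoC} later exploits).

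One step does not hold as you wrote it, though it is repairable. You fold the regularizer into the components, obtaining smoothness $M_i = L_i+\lambda$, and then assert that replacing sampling $\propto\sqrt{L_i+\lambda}$ by the stated $\propto\sqrt{L_i}$ ``changes $T$ only by a constant factor.'' That is false in general: the ratio $\sqrt{(L_j+\lambda)/L_j}$ is unbounded when some $L_j\ll\lambda$, and importance sampling with mismatched weights inflates the effective smoothness/variance by exactly this factor, so no constant-factor equivalence is available. The clean route is not to distribute the quadratic among the $f_i$ at all: keep $\tfrac{\lambda}{2}\norm{\tilde{\bx}-\by_{\ba}}_2^2$ as an explicit, prox-friendly composite term (its gradient costs no oracle call, as you note), so the finite-sum part retains component smoothness $L_i$, the objective is $(\mu+\lambda)$-strongly convex, and the prescribed sampling is precisely $\mathbb{P}[i_t=j]=\sqrt{L_j}/\sum_k\sqrt{L_k}$ with $T=\tilde{O}\bigl(\sum_i\sqrt{L_i/(n(\mu+\lambda))}\bigr)\le\tilde{O}\bigl(\sum_i\sqrt{L_i/(n\mu)}\bigr)$, matching the statement. (Relatedly, your comparison $\sum_i\sqrt{(L_i+\lambda)/(n(\mu+\lambda))}\le\sum_i\sqrt{L_i/(n\mu)}$ requires $L_i\ge\mu$ for every $i$, which is not assumed.) The remaining bookkeeping---$O(\log c)$ passes from linear convergence, $O(\log(1/\delta))$ amplification, batch-query accounting for snapshots, and obliviousness of $\xi$---is fine.
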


As in the nonuniform case, by combining Theorem~\ref{thm:primal-dual} with Theorem~\ref{thm:approx-prox}, we obtain a standard trade-off of $\tilde{O}(\sqrt{\lambda/\mu})$ batch (gradient oracle) queries and $\tilde{O}\paren{\sqrt{\lambda/\mu} \cdot \sum_{i \in [n]} \sqrt{\frac{L_i}{n \mu}}}$ sample (component oracle) queries for any $\lambda \geq \mu$.

However, as in Section~\ref{sec:finite-sum-minimization}, using our sample-reuse framework, we observe that we can \emph{reuse} randomness across all $\nOuter = \sqrt{\lambda/\mu}$ sub-problem solves. More concretely, using identical convexity argument as in the proof of Lemma~\ref{lemma:svrg-convex-hp}, we obtain the following analog of Theorem~\ref{thm:primal-dual}. 

\begin{lemma}[{\color{ForestGreen} GFSM sub-problem solver high-precision}]\label{lemma:primal-dual-hp} Let $F$ be as in Definition~\ref{def:finite-sum-minimization-nonuniform}. There is a randomized algorithm $\cA^{\mathrm{PDFSM-HP} | \lambda, c, \delta}_{\xi, \chi}(\bm{x})$ such that the following holds. 
\begin{itemize}[leftmargin=*]
    \item The algorithm takes in the random seeds $\xi, \chi$ distributed as follows. The first random seed $\xi \sim \{i_1, ..., i_T\}$ where each $\mathbb{P}[i_t = j] = \sqrt{L_j}/(\sum_{k \in [n]} \sqrt{L_k})$ for some 
    \begin{align*}
        T = \tilde{O}\paren{\sum_{i \in [n]} \sqrt{\frac{L_i}{n \mu}}}. 
    \end{align*}
    The second random seed $\chi \sim \UniformDist{[0]}$ is a 0-bit random seed. 
    \item If $\bm{x}' = \cA^{\mathrm{SVRG-HP} | \lambda, c, \delta}_{\xi, \chi}({\ba})$, then wp.\ $1-\delta$ over the draw of the random seeds $\xi$ and $\chi$,
    \begin{align*}
    \norm{{\bx}' - \fsubbar_{\lambda'}(\bm{y}_{\ba})}_\infty^2 \leq \frac{1}{c} \cdot \paren{F({\by}_{{\ba}}) - \min_{\tilde{{\bx}} \in \R^d}  F(\tilde{{\bx}}) + \frac{\lambda}{2} \norm{\tilde{{\bx}} - {\by}_{{\ba}}}_2^2}. 
    \end{align*}
    \item $\cA^{\mathrm{PDFSM-HP} | \lambda, c, \delta}_{\xi, \chi}$ makes only $\tilde{O}(1)$ batch queries and makes sample queries \emph{only} on the indices contained in $\xi$.
\end{itemize}
\end{lemma}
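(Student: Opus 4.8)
The plan is to follow the proof of Lemma~\ref{lemma:svrg-convex-hp} essentially verbatim, swapping the primal--dual finite-sum solver of Theorem~\ref{thm:primal-dual} in for SVRG. First I would invoke Theorem~\ref{thm:primal-dual} with a tightened accuracy factor $c' = c\mu/2$ in place of $c$. This yields, with probability $1-\delta$ over the draw of $\xi,\chi$, a point $\bx' = \cA^{\mathrm{PDFSM}\mid\lambda,c',\delta}_{\xi,\chi}(\ba)$ that is a solution to the $(\ba,\lambda,c')$-sub-problem in the sense of Definition~\ref{def:fsm-subproblem}, i.e.
\begin{align*}
 F(\bx') - \min_{\tilde\bx\in\R^d} F(\tilde\bx) + \tfrac{\lambda}{2}\norm{\tilde\bx - \bm{y}_{\ba}}_2^2 \;\le\; \tfrac{1}{c'}\Paren{F(\bm{y}_{\ba}) - \min_{\tilde\bx\in\R^d} F(\tilde\bx) + \tfrac{\lambda}{2}\norm{\tilde\bx - \bm{y}_{\ba}}_2^2}.
\end{align*}

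The key step is then the standard strong-convexity conversion from function-value suboptimality to iterate distance. Setting $F_{\lambda,\ba}(\tilde\bx) \defeq F(\tilde\bx) + \tfrac{\lambda}{2}\norm{\tilde\bx - \bm{y}_{\ba}}_2^2$, note $F_{\lambda,\ba}$ is at least $\mu$-strongly convex (the quadratic penalty only increases the modulus) and its unique minimizer is exactly the subproblem target $\fsublam(\bm{y}_{\ba})$. Hence $\tfrac{\mu}{2}\norm{\bx' - \fsublam(\bm{y}_{\ba})}_2^2$ is bounded by the function-value gap above, and combining with $\norm{\cdot}_\infty \le \norm{\cdot}_2$ and the choice $c' = c\mu/2$ gives precisely the claimed $\ell_\infty$ bound. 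For the complexity claims, I would simply note that changing the accuracy parameter affects PDFSM only polylogarithmically (it is a high-accuracy solver), so the $\tilde O(1)$ batch-query count, the nonuniform sampling distribution $\mathbb{P}[i_t = j] = \sqrt{L_j}/(\sum_k \sqrt{L_k})$, the value $T = \otilde(\sum_i \sqrt{L_i/(n\mu)})$, and the fact that sample queries touch only indices encoded in $\xi$ are all inherited directly from Theorem~\ref{thm:primal-dual}.

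I do not expect a substantive obstacle here; the only care needed is (i) matching the slightly loose notation of Definition~\ref{def:fsm-subproblem} (whose left-hand side features $F(\bx')$ rather than $F_{\lambda,\ba}(\bx')$), which is handled exactly as in Lemma~\ref{lemma:svrg-convex-hp} via the elementary convexity bound rather than any new idea, and (ii) confirming that the distribution of $\xi$ in Theorem~\ref{thm:primal-dual} is \emph{oblivious} to the query point $\ba$, so that the later pseudo-independence machinery (Theorem~\ref{thm:AtoC}) applies downstream; this is immediate since the sampling probabilities depend only on the fixed smoothness parameters $L_j$.
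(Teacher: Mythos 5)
Your proposal matches the paper's argument: the paper proves this lemma by explicitly saying it follows from "the identical convexity argument as in the proof of Lemma~\ref{lemma:svrg-convex-hp}," i.e., invoke Theorem~\ref{thm:primal-dual} with a rescaled accuracy $c'=c\mu/2$ and then use $\mu$-strong convexity of the regularized objective plus $\norm{\cdot}_\infty \le \norm{\cdot}_2$ to convert the function-value gap into the claimed $\ell_\infty$ bound, with the query/seed structure inherited directly since PDFSM is a high-accuracy solver. Your handling of the loose notation in Definition~\ref{def:fsm-subproblem} and your check that $\cD_\xi$ is oblivious are consistent with (indeed slightly more careful than) the paper's treatment, so there is nothing to correct.
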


By combining Lemma~\ref{lemma:primal-dual-hp} with Lemma~\ref{lemma:fsm-robust} and applying Theorem~\ref{thm:AtoC}, we obtain a trade-off of $\tilde{O}(\sqrt{\lambda/\mu})$ batch (gradient oracle) queries and $\tilde{O}\paren{\sqrt{\lambda/\mu} \cdot \sum_{i \in [n]} \sqrt{\frac{L_i}{n \mu}}}$ sample (component oracle) queries for any $\lambda \geq \mu$. The pseudo-code is analogous to Algorithm~\ref{alg:fsm-reuse} except that the inner-loop sub-problem solver is replaced with PDFSM-HP in place of SVRG-HP. 

\begin{theorem}[{\color{ForestGreen} Non-uniform smoothness FSM trade-off improvement}]\label{thm:fsmmain-nonuniform} For $\lambda \geq \mu$, there is an algorithm that solves FSM with non-uniform smoothness (Definition~\ref{def:finite-sum-minimization-nonuniform}) using $\tilde{O}(\sqrt{\lambda/\mu})$ full batch queries and only $\tilde{O}(\sum_{i \in [n]} \sqrt{L_i/(n \lambda)})$ sample queries. 
\end{theorem}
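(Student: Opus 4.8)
The plan is to run the same three-step argument that proves Theorem~\ref{thm:improved}, with the only change being that the sub-solver SVRG is replaced by the primal--dual finite-sum solver of \cite{jin2022sharper} (PDFSM). First I would observe that the proposed algorithm---the analog of Algorithm~\ref{alg:fsm-reuse} with the inner sub-solve replaced by $\cA^{\mathrm{PDFSM-HP}}$---is exactly an instantiation of Meta-Algorithm~\ref{alg:meta-algorithm} in $\reuse$ mode: $X=(\bm{x}_0,F,c)$ is a GFSM instance (Definition~\ref{def:finite-sum-minimization-nonuniform}), $\nOuter=\tilde{O}(\sqrt{\lambda/\mu})$, the post-process $\zeta$ is as in Definition~\ref{def:fsm-post-process}, the target function is $\fsublam$ as in Definition~\ref{def:fsm-subproblem}, and the oblivious seed distribution $\cD_\xi$ draws $T=\tilde{O}(\sum_i\sqrt{L_i/(n\lambda)})$ indices with $\mathbb{P}[i=j]\propto\sqrt{L_j}$. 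The crucial structural point is that the GFSM sub-problem and its minimizer $\fsublam$ are \emph{the same objects} as in the uniform-smoothness case; only the choice of sub-solver differs. Hence every outer-solver fact established for FSM transfers verbatim.

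Next I would check the two hypotheses of Theorem~\ref{thm:AtoC}. (i) \emph{Robustness of the outer-solver}: $\Outer(X;\traditional)$ is $(\eta,\beta)$-robust with respect to $\fsublam$ in the sense of Definition~\ref{assumption:robustness}; this is precisely Lemma~\ref{lemma:fsm-robust}, whose proof invokes only APP's guarantee (Theorem~\ref{thm:approx-prox}) and $L$-smoothness of $F$ to convert an $\ell_\infty$ (hence $\ell_2$) distance bound to $\fsublam$ into a function-value bound, and this conversion never uses uniform smoothness of the individual $f_i$. (ii) \emph{$\ell_\infty$ high-accuracy of the sub-solver}: $\cA^{\mathrm{PDFSM-HP}\,|\,\lambda,c',\delta'}_{\xi,\chi}$ is an $(\eta',\delta')$-approximation of $\fsublam$ (Definition~\ref{def:approx}) making $\tilde{O}(1)$ batch queries and sample queries confined to the indices encoded in $\xi$; this is Lemma~\ref{lemma:primal-dual-hp}, obtained from Theorem~\ref{thm:primal-dual} by the exact convexity argument of Lemma~\ref{lemma:svrg-convex-hp}, using $(\mu+\lambda)$-strong convexity of $F_{\lambda,\ba}$ to pass from PDFSM's certified function-value accuracy to an $\ell_2$, hence $\ell_\infty$, bound on $\|\bm{x}'-\fsublam(\bm{y}_{\ba})\|$. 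Since PDFSM is high-accuracy, both the inflated accuracy $c'$ and the reduced failure probability enter its complexity only polylogarithmically.

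Finally I would apply Theorem~\ref{thm:AtoC} with the sub-solver $\cA^{\mathrm{PDFSM-HP}\,|\,\lambda,c',\delta/(5\nOuter^2)}_{\xi,\chi}$, where $c'$ is as required by Lemma~\ref{lemma:fsm-robust}: the theorem then certifies that $\Outer(X;\reuse,\eta'/(2\delta))$ solves the GFSM problem with probability $1-5\nOuter^2\cdot(\delta/(5\nOuter^2))=1-\delta$. For the query count, the $\reuse$ template draws a \emph{single} realization $s_1\sim\cD_\xi$ and reuses it across all $\nOuter$ iterations, so the sample queries are exactly those encoded by $s_1$, i.e.\ $\tilde{O}(T)=\tilde{O}(\sum_i\sqrt{L_i/(n\lambda)})$ of them; each of the $\nOuter=\tilde{O}(\sqrt{\lambda/\mu})$ iterations contributes $\tilde{O}(1)$ batch queries, for a total of $\tilde{O}(\sqrt{\lambda/\mu})$, and the extra $\tilde{O}(d)$ per iteration for the uniform-noise perturbation does not change the query complexity.

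The main (and essentially only non-routine) obstacle is Lemma~\ref{lemma:primal-dual-hp}: certifying that PDFSM yields a guarantee on the $\ell_2$/$\ell_\infty$ distance to the sub-problem minimizer---not merely on the sub-problem's function value---while producing the correct per-sub-solve sample complexity. This requires carefully tracking that the regularized objective $F_{\lambda,\ba}$ is $(\mu+\lambda)$-strongly convex and has smoothness parameters $L_i+\lambda$, and verifying that the rate of \cite{jin2022sharper} specializes under this regularization to $T=\tilde{O}(\sum_i\sqrt{L_i/(n\lambda)})$ (using $\mu+\lambda\ge\lambda$), which is exactly the quantity that survives after reuse. Once this bookkeeping is in place, the robustness transfer and the invocation of Theorem~\ref{thm:AtoC} are a direct re-run of the FSM argument.
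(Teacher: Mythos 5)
Your proposal is correct and follows essentially the same route as the paper: instantiate the FSM reuse argument with the PDFSM sub-solver of \cite{jin2022sharper} converted into an $\ell_\infty$ high-accuracy solver via the strong-convexity argument of Lemma~\ref{lemma:svrg-convex-hp} (this is the paper's Lemma~\ref{lemma:primal-dual-hp}), combine with the outer-solver robustness of Lemma~\ref{lemma:fsm-robust}, and apply Theorem~\ref{thm:AtoC} with failure probability $\delta/(5\nOuter^2)$, with the single reused seed giving the $\tilde{O}(\sum_{i\in[n]}\sqrt{L_i/(n\lambda)})$ sample count. Your explicit bookkeeping that the regularized sub-problem is $(\mu+\lambda)$-strongly convex with componentwise smoothness $L_i+\lambda$, so that the rate of \cite{jin2022sharper} specializes to the stated $T$, is if anything slightly more careful than the paper's statement of Lemma~\ref{lemma:primal-dual-hp}, but it is the same argument.
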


\section{Application: Top eigenvector computation}\label{sec:topEV}

Here, we discuss the setting of top eigenvector (TopEV) computation. This is an interesting specialized setting in which our improvement for finite-sum optimization (Definition~\ref{def:finite-sum-minimization}) can be applied even if the components $f_i$ of the finite sum might not be convex. 

Throughout this section, we use $\bm{A} \in \R^{n \times d}$ to denote a matrix, and we use $\bm{a}_{1}, ..., \bm{a}_{n} \in \R^d$ to denote its rows. We use $\norm{\bm{A}}_F \defeq \sqrt{\sum_{i,j} \bm{A}_{i,j}^2}$ and $\norm{\bm{A}}_2$ to denote the spectral norm of $\bm{A}.$
We use $\sr(\bm{A}) \defeq \sum_{i} \frac{\lambda_i}{\lambda_1} = \norm{\bm{A}}_F^2/\norm{\bm{A}}_2^2$, where $\lambda_1 \geq \lambda_2 \geq \cdots \lambda_d$ are the eigenvalues of $\bm{\Sigma}$. Note that we always have $\sr(\bm{A}) \leq \text{rank}(\bm{A}).$ We define the relative eigen-gap  $\gap \defeq \frac{\lambda_1-\lambda_2}{\lambda_1}.$

\begin{definition}[{\color{ForestGreen} TopEV problem}]\label{def:topev-problem} In the TopEV problem, we are given a matrix $\bm{A} \in \R^{n \times d}$ and $\epsilon> 0$ and must compute a unit vector $\bm{x} \in \R^d$ such that $\bm{x}^\top\bm{\Sigma} \bm{x} \geq (1-\epsilon) \lambda_1$ where $\bm{\Sigma} \defeq \bm{A}^\top \bm{A} \in \R^{d \times d}$ and $\lambda_1$ is the largest eigenvalue of $\bm{\Sigma}$.
\end{definition}

In typical settings \citep{garber2016faster}, the goal is to solve this problem with probability inverse polynomial in $d$ (i.e., with high probability in $d$.) We define the batch and sample queries for solving the problem as follows. 

\begin{definition}[Matrix-vector oracle - {\color{ForestGreen} TopEV batch oracle}] When queried with $\bm{x} \in \R^d$, the \emph{matrix-vector} oracle returns $\bm{A} \bm{x}$.
\end{definition}

\begin{definition}[Row oracle - {\color{ForestGreen} TopEV sample oracle}] When queried, with $i \in [n]$, a \emph{row oracle} for $\bm{A}$ returns $\bm{a}_i \in \R^d$.
\end{definition}

\paragraph{Reducing top eigenvector computation to solving linear systems in $\lambda' \bm{I} - \bm{A}^\top \bm{A}$} \citet{garber2016faster} showed how to reduce top eigenvector computation to performing an approximate version of the classical inverse power method in the shifted matrix $\lambda \bm{I} - \bm{A}^\top \bm{A}$, where $\lambda$ is an appropriately chosen parameter. This is called the \emph{approximate shift-and-invert} power method and can be implemented given access to an oracle that approximately solves linear systems in $\lambda \bm{I} - \bm{A}^\top \bm{A}$. 

Furthermore, \citet{garber2016faster} also showed that selecting $(1+\gap/150)\lambda_1 \leq \lambda \leq (1+\gap/100)\lambda_1$ is sufficient to ensure that $\tilde{O}(1)$ iterations of approximate shift-and-invert power method is sufficient to solve the top eigenvector problem. Section 6 of \cite{garber2016faster} shows that given an algorithm for approximately solving linear systems in $\lambda' \bm{I} - \bm{A}^\top \bm{A}$ for $\lambda' > \lambda_1 + \gap/120$, there is a method for computing a valid shift parameter $\lambda$ (such that $(1+\gap/150)\lambda_1 \leq \lambda \leq (1+\gap/100)\lambda_1$) with runtime and query complexity overhead that is only lower order relative to the approximate shift-and-invert power method steps. 

Consequently, in the remainder of this section, we simply focus on the problem of solving linear systems of the form 
\begin{align*}
    (\lambda' \bm{I} - \bm{A}^\top \bm{A}) \bm{x} = \bm{b},  
\end{align*}
where $\lambda' > \lambda_1 + \Theta(\gap)$ and $\bm{b}$ is known, since this is the fundamental subroutine used in the algorithms of \cite{garber2016faster}. Concretely, we summarize their reductions as follows. 

\begin{theorem}[Theorems 5, 8, 15, and 30 of \cite{garber2016faster}]\label{thm:reduction}  Given an algorithm $\code{Solve}(\bm{x}_0 \bm{A}, \lambda', \bm{b})$ that computes $\bm{x}$ such that wp.\ $1-\poly(1/d, \gap)$, 
\begin{align*}
    \norm{\bm{x} - (\lambda' \bm{I}- \bm{A}^\top \bm{A})^{-1}\bm{b}}_2^2 \leq \poly(1/d, \gap) \norm{\bm{x}_{0} - (\lambda' \bm{I} - \bm{A}^\top \bm{A})^{-1}\bm{b}}_2^2, 
\end{align*}
for any $\lambda > \lambda_1 + \gap/120$,
there is an algorithm that solves the TopEV problem with only $\tilde{O}(1)$ calls to $\Solve$. 
\end{theorem}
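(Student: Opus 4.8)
The plan is to reconstruct the \emph{approximate shift-and-invert power method} of \citet{garber2016faster}, for which \Cref{thm:reduction} is merely a packaging of the interface. First I would fix a valid shift, i.e.\ a $\lambda$ with $(1+\gap/150)\lambda_1 \le \lambda \le (1+\gap/100)\lambda_1$; as recalled in the text above, such a $\lambda$ can be found using only $\tilde{O}(1)$ calls to $\Solve$ with accuracy parameters inverse polynomial in $d$ and $\gap$ (Section~6 of \cite{garber2016faster}), so it suffices to analyze the method for a fixed valid $\lambda$. The key structural fact is that $\bm{M} \defeq (\lambda \bm{I} - \bm{A}^\top\bm{A})^{-1}$ is positive definite, shares eigenvectors with $\bm{\Sigma} = \bm{A}^\top\bm{A}$, has top eigenvalue $1/(\lambda-\lambda_1)$ and second eigenvalue $1/(\lambda-\lambda_2)$, and --- because $\lambda - \lambda_1 \le \tfrac{\gap}{100}\lambda_1$ whereas $\lambda - \lambda_2 = (\lambda - \lambda_1) + \gap\lambda_1 \ge \gap\lambda_1$ --- has \emph{relative} eigen-gap at least a universal constant. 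Hence the exact power method on $\bm{M}$, started from a random Gaussian vector (which, with probability $1 - \poly(1/d)$, has inner product $1/\poly(d)$ with the top eigenvector of $\bm{\Sigma}$), produces after $\tilde{O}(1)$ iterations a unit vector $\bm{x}$ with $\bm{x}^\top \bm{\Sigma} \bm{x} \ge (1-\epsilon)\lambda_1$.

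The second step is to replace the exact multiplications $\bm{w} \mapsto \bm{M}\bm{w}$ by calls to $\Solve$. At the $k$-th iteration I would call $\Solve$ on right-hand side $\bm{w}_k$ warm-started at a suitable point $\bm{w}_k^{\mathrm{warm}}$ (for instance the renormalized previous iterate, together with a crude a-priori bound on $\|\bm{M}\|_2$ to control the first step), obtaining $\hat{\bm{w}}_{k+1}$ with $\|\hat{\bm{w}}_{k+1} - \bm{M}\bm{w}_k\|_2^2 \le \poly(1/d,\gap)\,\|\bm{w}_k^{\mathrm{warm}} - \bm{M}\bm{w}_k\|_2^2$. Choosing the target accuracy of $\Solve$ to be a sufficiently small polynomial in $1/d$ and $\gap$ makes each step a \emph{noisy power method} step in the sense of the standard perturbation analysis: the per-step additive error is small relative to both the spectral gap of $\bm{M}$ and the current component along its top eigenvector, so the convergence of the exact power method is preserved up to a $\poly(1/d,\gap)$ loss in final accuracy, absorbed by $\tilde{O}(1)$ extra iterations. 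Since the procedure (shift-finding plus power method) makes only $\tilde{O}(1)$ calls to $\Solve$, each succeeding with probability $1 - \poly(1/d,\gap)$, a union bound over all calls and over the random initialization gives overall success probability $1 - \poly(1/d,\gap)$, i.e.\ high probability in $d$, completing the argument.

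I expect the main obstacle --- and the part that in \cite{garber2016faster} requires real work --- to be the bookkeeping in the inexact power method: one must track how the \emph{multiplicative}, warm-start-relative accuracy of $\Solve$ (as opposed to an absolute error bound) interacts with the varying norms of the iterates, and must ensure that the component of $\bm{w}_k$ along the top eigenvector of $\bm{M}$ never shrinks enough to render the multiplicative guarantee useless --- which is precisely why the warm-starts $\bm{w}_k^{\mathrm{warm}}$ and the inter-iteration normalization must be chosen carefully. This is the content of Theorems~5, 8, 15, and~30 of \cite{garber2016faster}; \Cref{thm:reduction} isolates the interface we actually need downstream, namely a high-accuracy solver for $(\lambda'\bm{I} - \bm{A}^\top\bm{A})\bm{x} = \bm{b}$, which the sample-reuse framework (via \Cref{thm:AtoC}) will supply for TopEV.
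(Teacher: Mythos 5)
The paper does not prove this statement at all: it is imported verbatim as a packaging of Theorems 5, 8, 15, and 30 of \citet{garber2016faster}, and your sketch correctly reconstructs exactly that shift-and-invert argument (constant relative gap of $(\lambda\bm{I}-\bm{A}^\top\bm{A})^{-1}$ for a valid shift, $\tilde{O}(1)$ inexact power iterations implemented via warm-started calls to $\Solve$, plus the shift-finding procedure of their Section~6). So your proposal is consistent with, and essentially identical in approach to, the proof the paper relies on by citation, with the delicate warm-start/error bookkeeping rightly deferred to the cited theorems just as the paper does.
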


Thus, in the remainder of this section, we consider the problem of solving linear systems of the form 
\begin{align}\label{eq:linear-system}
    (\lambda' \bm{I} - \bm{A}^\top \bm{A}) \bm{x} = \bm{b},  
\end{align}
for some $\bm{b}\in \R^n$ and $\lambda' > \lambda_1 + \gap/120$. This can equivalently be viewed as the following minimization problem: 
\begin{align}\label{eq:topevsum}
    \min_{\bm{x} \in \R^d} \frac{1}{2} \bm{x}^\top (\lambda' \bm{I} - \bm{A}^\top \bm{A}) \bm{x} - \bm{b}^\top \bm{x} = \min_{\bm{x}} \frac{1}{n} \sum_{i \in [n]} \frac{1}{2}\cdot \bm{x}^\top (w_i \bm{I} - n {\bm{a}_{i}}{\bm{a}_{i}}^\top) \bm{x} -  \bm{b}^\top \bm{x} ,
\end{align}
whenever $\sum_{i} w_i/n = \lambda'.$ This is reminiscent of the FSM problem (Definition~\ref{def:finite-sum-minimization}) with $f_i(\bm{x}) = \frac{1}{2} \cdot \bm{x}^\top (w_i \bm{I} - n {\bm{a}_{i}}{\bm{a}_{i}}^\top) \bm{x} -\bm{b}^\top \bm{x}$. However, we have the caveat that the matrices $(w_i\bm{I} - {\bm{a}_{i}}{\bm{a}_{i}}^\top)$ need not be positive semi-definite; and consequently, the $f_i$'s are not necessarily convex \emph{even though} $F$ \emph{is} $\lambda' - \lambda_1 = \Theta(\gap)$-\emph{strongly convex} (and $\norm{A}_F^2$-smooth). 

Interestingly, the outer-solver procedure APP~(Theorem~\ref{thm:approx-prox}) still applies in this setting; however, the guarantees of SVRG (Theorem~\ref{thm:svrg-convex}) unfortunately do not apply \emph{directly} when the $f_i$ are potentially non-convex. Nonetheless, \citet{garber2016faster} leveraged problem structure to show that SVRG can be applied for \eqref{eq:topevsum}. Consequently, we define the {\color{ForestGreen} TopEV sub-problem} similarly as in Definition~\ref{def:sub-problem} for $F(\bx) \defeq f_i(\bx)$, where 
\begin{align}\label{eq:F-def}
    F(\bx) \defeq f_i(\bx) \text{ where } f_i(x) \defeq  \frac{1}{2} \cdot \bm{x}^\top \paren{w_i\bm{I} - n {\bm{a}^{(i)}} {\bm{a}^{(i)}}^\top} \bm{x} - \bm{b}^\top \bm{x};  
    w_i \defeq n \cdot \frac{\lambda' \norm{\bm{a}_i}_2^2}{\norm{\bm{A}}_F^2}.
\end{align}
Note that from this perspective, the batch oracle call for TopEV (Definition~\ref{def:gradient-oracle}) exactly corresponds to a gradient oracle call for $F$, and a sample oracle for call for $F$ exactly corresponds to a component oracle call for $F$ (Definition~\ref{def:component-oracle}.)

\begin{definition}[{\color{ForestGreen} TopEV sub-problem}]\label{def:topev-subproblem} Let $F$ be as in \eqref{eq:F-def} and $\rho \geq \lambda' - \lambda_1$. In the $(\bm{u}, \rho, c)$-sub-problem for TopEV, we must solve the $(\bm{u}, \rho, c)$-FSM-sub-problem (Definition~\ref{def:fsm-subproblem}.)
\end{definition}

Note the similarity between the TopEV sub-problem and the FSM sub-problem from Definition~\ref{def:fsm-subproblem}. The only difference is that the $f_i$ need not be \emph{convex}. Nonetheless, \citet{garber2016faster} provide the following guarantee, which is analogous to Theorem~\ref{thm:svrg-convex} from the FSM setting in Section~\ref{sec:finite-sum-minimization}. 

\begin{theorem}[SVRG, Theorem 2.2 of \cite{frostig2015regularizing}, restated - {\color{ForestGreen} TopEV sub-problem solver}]\label{thm:svrg-nonconvex} Let $F$ be as defined in \eqref{eq:F-def} and $\rho \geq (\lambda' - \lambda_1)$. There is a randomized algorithm $\cA^{\mathrm{SVRG-TopEV} | \rho, c, \delta}_{\xi, \chi}$ such that the following holds. 
\begin{itemize}[leftmargin=*]
    \item The algorithm takes in the random seeds $\xi, \chi$ distributed as follows. The first random seed $\xi \sim \{i_1, ..., i_T\}$ where each $\mathbb{P}(i_j = k) =\norm{\bm{a}_k}_2^2/\norm{\bm{A}}_F^2$ for some $T = \tilde{O}(L/\lambda)$. The second random seed $\chi \sim \UniformDist{[0]}$ is a 0-bit random seed. 
    \item If $\bm{x}' = \cA^{\mathrm{SVRG-TopEV} | \rho, c, \delta}_{\xi, \chi}({\ba})$, then wp.\ $1-\delta$ over the draw of the random seeds $\xi$ and $\chi$, ${\bx}'$ is a solution to the $({\ba}, \rho, c)$-sub-problem.
    \item $\cA^{\mathrm{SVRG-TopEV}| \rho, c, \delta}_{\xi, \chi}$ makes only $\tilde{O}(1)$ batch queries and makes sample queries \emph{only} on the indices encoded in $\xi$.
\end{itemize}
\end{theorem}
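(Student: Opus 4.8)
The plan is to obtain Theorem~\ref{thm:svrg-nonconvex} by re-deriving the variance-reduced eigenvector routine of \cite{garber2016faster} --- itself an adaptation of the SVRG framework of \cite{johnson2013accelerating, frostig2015regularizing} --- inside our batch--sample oracle model. The algorithm $\cA^{\mathrm{SVRG-TopEV} | \rho, c, \delta}_{\xi,\chi}$ runs importance-sampled SVRG on the regularized sub-problem objective $F_{\rho,\ba}(\tilde\bx)\defeq F(\tilde\bx)+\tfrac{\rho}{2}\norm{\tilde\bx-\bm{y}_{\ba}}_2^2$, with $F=\tfrac1n\sum_i f_i$ as in \eqref{eq:F-def}: at each inner step it draws an index $i$ with probability $p_i=\norm{\bm{a}_i}_2^2/\norm{\bm{A}}_F^2$ (exactly the distribution of $\xi$ in the statement, a fixed distribution that does not depend on $\ba$, hence an oblivious sample) and steps along $\tfrac{1}{np_i}(\nabla f_i(\bx)-\nabla f_i(\tilde\bx))+\nabla F(\tilde\bx)$, recomputing $\nabla F(\tilde\bx)$ once per epoch over $\tilde{O}(1)$ epochs of length $T$. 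First I would record the curvature facts that drive the analysis: $\nabla^2 F=\lambda'\bI-\bm{A}^\top\bm{A}$, so $F$ is strongly convex with parameter $\lambda'-\lambda_1=\Theta(\gap\,\lambda_1)$ and smooth with parameter $O(\lambda_1)\le\norm{\bm{A}}_F^2$, hence $F_{\rho,\ba}$ is $\Theta(\rho)$-strongly convex; meanwhile each $f_i$ is in general \emph{not} convex, since $w_i\bI-n\bm{a}_i\bm{a}_i^\top$ has a negative eigenvalue whenever $\lambda'<\norm{\bm{A}}_F^2$.

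The main obstacle --- and essentially the only non-routine step --- is controlling the second moment of the SVRG gradient estimator $\bm{g}$ at the sub-problem optimum $\bx^\star$ despite this non-convexity, since the standard per-component co-coercivity bound requires convex $f_i$. The remedy, following \cite{garber2016faster}, is to split $f_i=\hat f_i+\check f_i$ into a convex part $\hat f_i(\bx)=\tfrac{w_i}{2}\norm{\bx}_2^2-\bm{b}^\top\bx$ (which is $w_i$-smooth and convex) and a concave part $\check f_i(\bx)=-\tfrac n2(\bm{a}_i^\top\bx)^2$ (so $-\check f_i$ is convex with local smoothness $n\norm{\bm{a}_i}_2^2$), and bound $\norm{\bm{g}-\nabla F(\bx^\star)}_2^2$ in expectation along each piece. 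For the convex piece one uses $w_i$-smoothness together with $w_i=n\lambda'\norm{\bm{a}_i}_2^2/\norm{\bm{A}}_F^2$, so that $\tfrac1{n^2}\sum_i\tfrac{w_i^2}{p_i}$ telescopes to $(\lambda')^2$. For the concave piece the key point is that the importance weights $p_i\propto\norm{\bm{a}_i}_2^2$ exactly match the local smoothness $n\norm{\bm{a}_i}_2^2$ of $-\check f_i$, so $\tfrac1{n^2}\sum_i\tfrac1{p_i}\norm{\nabla\check f_i(\bx)-\nabla\check f_i(\bx^\star)}_2^2$ collapses to a multiple of $(\bx-\bx^\star)^\top\bm{A}^\top\bm{A}(\bx-\bx^\star)\le\lambda_1\norm{\bx-\bx^\star}_2^2$. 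Passing from $\norm{\bx-\bx^\star}_2^2$ to the function-value gap via $\Theta(\rho)$-strong convexity of $F_{\rho,\ba}$ yields $\E\norm{\bm{g}-\nabla F(\bx^\star)}_2^2=\tilde{O}(\norm{\bm{A}}_F^2\lambda_1/\rho)\cdot(F_{\rho,\ba}(\bx)-F_{\rho,\ba}(\bx^\star))$, which is precisely the expected-smoothness input the SVRG potential argument needs.

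Given this bound the remaining steps are standard. I would invoke the SVRG epoch-contraction lemma (as in \cite{johnson2013accelerating, frostig2015regularizing}) to conclude that with epoch length $T$ a $\tilde{O}(1)$ multiple of the effective SVRG condition number of the sub-problem --- the $\tilde{O}(L/\lambda)$ of the statement, here $\tilde{O}(\norm{\bm{A}}_F^2\lambda_1/\rho^2)$ --- each epoch shrinks the optimality gap of $F_{\rho,\ba}$ by a constant factor, so that $\tilde{O}(\log c+\log(1/\delta))$ epochs bring it below the $1/c$ threshold defining the $(\ba,\rho,c)$-FSM-sub-problem (Definitions~\ref{def:topev-subproblem} and~\ref{def:fsm-subproblem}) with probability $1-\delta$ (boosting by independent repetitions, or using a high-probability SVRG variant). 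Finally I would carry out the oracle accounting: forming $\nabla F(\tilde\bx)=(\lambda'\bI-\bm{A}^\top\bm{A})\tilde\bx-\bm{b}$ once per epoch costs one batch (matrix-vector) query and $O(d)$ arithmetic, giving $\tilde{O}(1)$ batch queries in total; each inner step evaluates $\nabla f_i(\bx)=w_i\bx-n(\bm{a}_i^\top\bx)\bm{a}_i-\bm{b}$, which needs only the single row $\bm{a}_i$, i.e. one sample (row) query; and because the indices $i_1,\dots,i_T$ are drawn obliviously, every sample query falls among the $T$ indices encoded in $\xi$, while $\chi$ (a $0$-bit seed) plays no role. This yields the stated $\tilde{O}(1)$ batch queries and $T$ sample queries, all on indices in $\xi$, completing the proof.
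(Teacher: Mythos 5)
Your proposal is correct, but note that the paper does not prove this statement at all: it is stated as a restatement of prior work (Theorem 2.2 of \citet{frostig2015regularizing}, with the non-convex-component issue handled as in \citet{garber2016faster}), so the relevant comparison is with those cited analyses. Your derivation --- norm-proportional importance sampling, the convex/concave split of each $f_i$ to bound the SVRG estimator's variance at the sub-problem optimum, the standard epoch-contraction argument, and the batch/sample oracle accounting with probability boosting absorbed into $\tilde{O}(\cdot)$ --- is essentially the same argument as in \citet{garber2016faster}, so you have faithfully reconstructed the proof the paper delegates to its citation.
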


By Theorem~\ref{thm:approx-prox}, we can instantiate APP using SVRG (Theorem~\ref{thm:svrg-nonconvex}) with the to solve problems of the form \eqref{eq:linear-system} as required by Theorem~\ref{thm:reduction}. The pseudocode is the same as Algorithm~\ref{alg:fsm}, replacing SVRG-HP with SVRG-TopEV. Setting $\delta$ to be inversely polynomial in $d$, this solves the TopEV problem with high probability in $d$ and obtains a trade-off of $\tilde{O}(\sqrt{\rho/(\lambda' - \lambda_1)})$ full batch (matrix-vector oracle) queries and 
\begin{align*}
    \tilde{O}\paren{\frac{\rho^2 + 12\lambda_1 \norm{\bm{A}}_F^2}{(\rho - \lambda_1 + \lambda)^2}} \cdot \sqrt{\frac{\rho}{\lambda'-\lambda_1}}
\end{align*}
sample (row oracle) queries for any $\rho \geq \lambda'-\lambda_1$. 

Now, since $F$ in this case remains smooth and strongly convex (even though the $f_i$'s may not be convex), using the \emph{exact} same arguments as in Section~\ref{sec:finite-sum-minimization} (Lemmas~\ref{lemma:fsm-robust} and Lemma~\ref{lemma:svrg-convex-hp}) we can use our sample reuse framework from Section~\ref{sec:pseudoindependence} (Theorem~\ref{thm:AtoC}) to obtain the following improved trade-off. (Again, th pseudocode is the same as Algorithm~\ref{alg:fsm-reuse}, replacing SVRG-HP with SVRG-TopEV.)

\begin{theorem}[{\color{ForestGreen} TopEV trade-off improvement}] For $\lambda \geq \mu$, there is an algorithm that solves FSM with high probability in $d$ using only $\tilde{O}(\sqrt{\rho/(\lambda' - \lambda_1)})$ full batch queries and only $\otilde\paren{\frac{\rho^2 + 12\lambda_1 \norm{\bm{A}}_F^2}{(\rho - \lambda_1 + \lambda)^2}}$ sample queries.
\end{theorem}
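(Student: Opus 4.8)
The plan is to mirror the argument of Section~\ref{sec:finite-sum-minimization} essentially verbatim, since the only structural difference here --- that the components $f_i$ of \eqref{eq:topevsum} need not be convex --- is already absorbed inside the sub-solver guarantee of Theorem~\ref{thm:svrg-nonconvex}, which holds as a black box regardless of the indefiniteness of the $w_i\bm{I} - n\bm{a}_i\bm{a}_i^\top$. First I would observe that APP instantiated with the sub-solver $\cA^{\mathrm{SVRG-TopEV} | \rho, c, \delta}_{\xi, \chi}$ for solving the linear system \eqref{eq:linear-system} is an instance of Meta-Algorithm~\ref{alg:meta-algorithm} with $\code{Type}=\traditional$: the post-process $\zeta$ is the FSM post-process (Definition~\ref{def:fsm-post-process}), the oblivious seed $\xi$ encodes the sampled row indices, $\chi$ is the zero-bit dummy seed, and $\bm{w}$ selects the last iterate. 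Hence it suffices to verify the two hypotheses of Theorem~\ref{thm:AtoC}.

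For hypothesis (i), that this instantiation of APP is $(\eta,\beta)$-robust with respect to $\fsub$ (Definition~\ref{assumption:robustness}), the proof of Lemma~\ref{lemma:fsm-robust} transfers without change: the relevant regularized objective $F_{\rho,\ba}(\tilde{\bx}) \defeq F(\tilde{\bx}) + \tfrac{\rho}{2}\norm{\tilde{\bx}-\bm{y}_{\ba}}_2^2$ is $\norm{\bm{A}}_F^2$-smooth because $F$ is (the added quadratic only increases the smoothness constant by $\rho$), so an $\ell_\infty$-accurate iterate is converted into a function-value-accurate one via $\norm{\bu'-\fsub(\bm{y}_{\ba})}_2^2 \le d\,\norm{\bu'-\fsub(\bm{y}_{\ba})}_\infty^2$ together with smoothness --- convexity of the individual $f_i$ is never invoked in that argument. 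For hypothesis (ii), that $\cA^{\mathrm{SVRG-TopEV}}$ is an $(\eta',\delta)$-approximation of $\fsub$ in $\ell_\infty$ (Definition~\ref{def:approx}), the proof of Lemma~\ref{lemma:svrg-convex-hp} transfers likewise: $F$ is $(\lambda'-\lambda_1)$-strongly convex even though the $f_i$ may be indefinite, hence so is $F_{\rho,\ba}$, and strong convexity converts the function-value accuracy guaranteed by Theorem~\ref{thm:svrg-nonconvex} into an iterate accuracy via $\norm{\bx'-\fsub(\bm{y}_{\ba})}_\infty^2 \le \norm{\bx'-\fsub(\bm{y}_{\ba})}_2^2 \le \tfrac{2}{\lambda'-\lambda_1}\big(F_{\rho,\ba}(\bx')-\min F_{\rho,\ba}\big)$, so taking the SVRG accuracy parameter to be a sufficiently small polynomial in the target suffices.

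With (i) and (ii) established, Theorem~\ref{thm:AtoC} (with failure probability rescaled to $\delta/(5\nOuter^2)$ and noise parameter $\tau=\eta'/(2\delta)$) shows that the reuse variant --- Algorithm~\ref{alg:fsm-reuse} with SVRG-TopEV in place of SVRG-HP --- still solves the $(\bm{u},\rho,c)$-sub-problem for TopEV (Definition~\ref{def:topev-subproblem}) with probability $1-\delta$ while drawing only a single realization of $\xi$ across all $\nOuter=\tilde O(\sqrt{\rho/(\lambda'-\lambda_1)})$ outer iterations. Since each SVRG-TopEV call uses $\tilde O(1)$ batch queries and only the indices encoded in $\xi$ as sample queries, this yields $\tilde O(\sqrt{\rho/(\lambda'-\lambda_1)})$ batch queries and $\otilde\big(\tfrac{\rho^2+12\lambda_1\norm{\bm{A}}_F^2}{(\rho-\lambda_1+\lambda)^2}\big)$ sample queries for solving \eqref{eq:linear-system} --- a factor-$\nOuter$ improvement on the sample count over the non-reuse instantiation, exactly as in the uniform-smoothness FSM case. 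Feeding this $\Solve$ routine into the reduction of Theorem~\ref{thm:reduction}, with $\delta=\poly(1/d,\gap)$ and a valid shift parameter $\lambda$ as supplied there, then solves the TopEV problem with high probability in $d$ at only lower-order overhead. The only point requiring care --- and the closest thing to an obstacle --- is confirming that the polynomial blow-ups in accuracy and failure probability in Theorem~\ref{thm:AtoC} translate into merely polylogarithmic overhead in complexity; this holds for the same reason as in Section~\ref{sec:finite-sum-minimization}, namely that SVRG-TopEV is a high-accuracy algorithm and the robustness tolerance $\eta$ from (i) is polynomial in the target accuracy and the problem parameters.
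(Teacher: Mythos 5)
Your proposal is correct and follows essentially the same route as the paper, which explicitly invokes the reduction of Theorem~\ref{thm:reduction}, instantiates Algorithm~\ref{alg:fsm-reuse} with SVRG-TopEV in place of SVRG-HP, and appeals to the ``exact same arguments'' as Lemmas~\ref{lemma:fsm-robust} and~\ref{lemma:svrg-convex-hp} together with Theorem~\ref{thm:AtoC}. Your only addition is to spell out why those two lemmas transfer despite the nonconvex $f_i$ (robustness uses only smoothness of the regularized objective, the $\ell_\infty$-precision conversion uses only strong convexity of $F$), which is precisely the observation the paper relies on implicitly.
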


Finally, we remark that to obtain the trade-offs reported in Table~\ref{table:main}, we simply make the change of variables $\rho = \alpha \lambda_1$. With this change of variables, 
\begin{align*}
    \sqrt{\frac{\rho}{\lambda' - \lambda_1}} = \sqrt{\frac{\alpha}{\gap}}, \quad \text{ and } \quad
    \frac{\rho^2 + 12\lambda_1 \norm{\bm{A}}_F^2}{(\rho - \lambda_1+\lambda)^2} = \tilde{O}\paren{\frac{\alpha^2\lambda_1^2 + 12 \lambda_1 \norm{\bm{A}}_F^2}{\alpha^2\lambda_1^2}} = \tilde{O}\paren{\frac{\sr(\bm{A})}{\alpha^2}}.
\end{align*}
\section{Conclusion}\label{sec:conclusion} We introduced a sample reuse framework to reuse randomness across multiple subproblem solutions in variance-reduced optimization methods without sacrificing theoretical correctness guarantees. Our results enabled improved query complexity trade-offs for a broad range of optimization problems. One limitation of our current sample reuse framework and analysis of pseudo-independence is that it only allows us to reuse samples across \emph{high-accuracy} subroutines. Although this already enables improvements for several problems, as seen in Table~\ref{table:main}, in future work it may be interesting to study whether tighter analysis allows sample reuse even for sub-routines which do not solve to high-accuracy. This could have applications, for instance, to non-convex optimization problems. As mentioned in Section~\ref{sec:intro:finite-motivate}, another limitation is that our results don't directly yield a worst-case asymptotic-runtime improvement that we are aware of; however it sheds new light on the information needed to solve FSM and could yield faster algorithms depending on caching and memory layout.

\section*{Acknowledgements}

We thank anonymous reviewers for their feedback. Yujia Jin and Ishani Karmarkar were funded in part by NSF CAREER Award CCF-1844855, NSF Grant CCF-1955039, and a PayPal research award. Aaron Sidford was funded in part by a Microsoft Research Faculty Fellowship, NSF CAREER Award CCF-1844855, NSF Grant CCF1955039, and a PayPal research award. Yujia Jin’s contributions to the project occurred while she was a graduate student at Stanford.

\bibliographystyle{plainnat}
\bibliography{ref}

\begin{thebibliography}{46}
\providecommand{\natexlab}[1]{#1}
\providecommand{\url}[1]{\texttt{#1}}
\expandafter\ifx\csname urlstyle\endcsname\relax
  \providecommand{\doi}[1]{doi: #1}\else
  \providecommand{\doi}{doi: \begingroup \urlstyle{rm}\Url}\fi

\bibitem[Agarwal and Bottou(2015)]{agarwal2015lower}
Alekh Agarwal and Leon Bottou.
\newblock A lower bound for the optimization of finite sums.
\newblock In \emph{\cICML{2015}}, 2015.

\bibitem[Allen-Zhu et~al.(2014)Allen-Zhu, Liao, and Yuan]{allen2014optimization}
Zeyuan Allen-Zhu, Zhenyu Liao, and Yang Yuan.
\newblock Optimization algorithms for faster computational geometry.
\newblock In \emph{\cICALP{2014}}, 2014.

\bibitem[Asi and Duchi(2020)]{asi2020near}
Hilal Asi and John~C Duchi.
\newblock Near instance-optimality in differential privacy.
\newblock In \emph{\cNIPS{2020}}, 2020.

\bibitem[Beimel et~al.(2022)Beimel, Kaplan, Mansour, Nissim, Saranurak, and Stemmer]{beimel2022dynamic}
Amos Beimel, Haim Kaplan, Yishay Mansour, Kobbi Nissim, Thatchaphol Saranurak, and Uri Stemmer.
\newblock Dynamic algorithms against an adaptive adversary: generic constructions and lower bounds.
\newblock In \emph{\cSTOC{2022}}, 2022.

\bibitem[Bertsekas(2011)]{bertsekas2011approximate}
Dimitri~P Bertsekas.
\newblock Approximate policy iteration: A survey and some new methods.
\newblock In \emph{Journal of Control Theory and Applications}, 2011.

\bibitem[Braverman et~al.(2023)Braverman, Krauthgamer, Krishnan, and Sapir]{braverman2023lower}
Vladimir Braverman, Robert Krauthgamer, Aditya Krishnan, and Shay Sapir.
\newblock Lower bounds for pseudo-deterministic counting in a stream.
\newblock \emph{arXiv preprint arXiv:2303.16287}, 2023.

\bibitem[Carmon et~al.(2019)Carmon, Jin, Sidford, and Tian]{carmon2019variance}
Yair Carmon, Yujia Jin, Aaron Sidford, and Kevin Tian.
\newblock Variance reduction for matrix games.
\newblock \emph{\cNIPS{2019}}, 2019.

\bibitem[Chen et~al.(2022)Chen, Kyng, Liu, Peng, Gutenberg, and Sachdeva]{chen2022maximum}
Li~Chen, Rasmus Kyng, Yang~P Liu, Richard Peng, Maximilian~Probst Gutenberg, and Sushant Sachdeva.
\newblock Maximum flow and minimum-cost flow in almost-linear time.
\newblock In \emph{\cFOCS{2022}}, 2022.

\bibitem[Cherapanamjeri and Nelson(2020)]{cherapanamjeri2020adaptive}
Yeshwanth Cherapanamjeri and Jelani Nelson.
\newblock On adaptive distance estimation.
\newblock In \emph{Advances in Neural Information Processing Systems}, 2020.

\bibitem[Cohen et~al.(2024)Cohen, Nelson, Sarl{\'o}s, Singhal, and Stemmer]{cohen2024one}
Edith Cohen, Jelani Nelson, Tam{\'a}s Sarl{\'o}s, Mihir Singhal, and Uri Stemmer.
\newblock One attack to rule them all: Tight quadratic bounds for adaptive queries on cardinality sketches.
\newblock In \emph{arXiv preprint arXiv:2411.06370}, 2024.

\bibitem[Cohen et~al.(2015)Cohen, Lee, Musco, Musco, Peng, and Sidford]{cohen2015uniform}
Michael~B Cohen, Yin~Tat Lee, Cameron Musco, Christopher Musco, Richard Peng, and Aaron Sidford.
\newblock Uniform sampling for matrix approximation.
\newblock In \emph{\cITCS{2015}}, 2015.

\bibitem[Cohen et~al.(2020{\natexlab{a}})Cohen, Lee, and Song]{cohen2020solving}
Michael~B. Cohen, Yin~Tat Lee, and Zhao Song.
\newblock Solving linear programs in the current matrix multiplication time.
\newblock In \emph{Journal of the ACM}, 2020{\natexlab{a}}.

\bibitem[Cohen et~al.(2020{\natexlab{b}})Cohen, Musco, and Pachocki]{cohen2020online}
Michael~B Cohen, Cameron Musco, and Jakub Pachocki.
\newblock Online row sampling.
\newblock In \emph{Theory of Computing}, 2020{\natexlab{b}}.

\bibitem[Dixon et~al.(2022)Dixon, Pavan, Woude, and Vinodchandran]{dixon2022pseudodeterminism}
Peter Dixon, Aduri Pavan, Jason~Vander Woude, and NV~Vinodchandran.
\newblock Pseudodeterminism: promises and lowerbounds.
\newblock In \emph{\cSTOC{2022}}, 2022.

\bibitem[Fischetti et~al.(2018)Fischetti, Mandatelli, and Salvagnin]{fischetti2018faster}
Matteo Fischetti, Iacopo Mandatelli, and Domenico Salvagnin.
\newblock Faster sgd training by minibatch persistency.
\newblock In \emph{arXiv preprint arXiv:1806.07353}, 2018.

\bibitem[Frostig et~al.(2015)Frostig, Ge, Kakade, and Sidford]{frostig2015regularizing}
Roy Frostig, Rong Ge, Sham Kakade, and Aaron Sidford.
\newblock Un-regularizing: approximate proximal point and faster stochastic algorithms for empirical risk minimization.
\newblock In \emph{\cICML{2015}}, 2015.

\bibitem[Garber et~al.(2016)Garber, Hazan, Jin, Musco, Netrapalli, Sidford, et~al.]{garber2016faster}
Dan Garber, Elad Hazan, Chi Jin, Cameron Musco, Praneeth Netrapalli, Aaron Sidford, et~al.
\newblock Faster eigenvector computation via shift-and-invert preconditioning.
\newblock In \emph{\cICML{2016}}, 2016.

\bibitem[Gat and Goldwasser(2011)]{Gat2011ProbabilisticSA}
Erann Gat and Shafi Goldwasser.
\newblock Probabilistic search algorithms with unique answers and their cryptographic applications.
\newblock In \emph{Electronic Colloquium on Computational Complexity: ECCC}, 2011.

\bibitem[Ghazi et~al.(2022)Ghazi, Kumar, Manurangsi, and Nelson]{ghazi2022differentially}
Badih Ghazi, Ravi Kumar, Pasin Manurangsi, and Jelani Nelson.
\newblock Differentially private all-pairs shortest path distances: Improved algorithms and lower bounds.
\newblock In \emph{\cSODA{2022}}, 2022.

\bibitem[Goldwasser and Grossman(2017)]{goldwasser2017bipartite}
Shafi Goldwasser and Ofer Grossman.
\newblock Bipartite perfect matching in pseudo-deterministic nc.
\newblock In \emph{\cICALP{2017}}, 2017.

\bibitem[Goldwasser et~al.(2017)Goldwasser, Grossman, and Holden]{goldwasser2017pseudo}
Shafi Goldwasser, Ofer Grossman, and Dhiraj Holden.
\newblock Pseudo-deterministic proofs.
\newblock In \emph{arXiv preprint arXiv:1706.04641}, 2017.

\bibitem[Grondman et~al.(2012)Grondman, Busoniu, Lopes, and Babuska]{grondman2012survey}
Ivo Grondman, Lucian Busoniu, Gabriel~AD Lopes, and Robert Babuska.
\newblock A survey of actor-critic reinforcement learning: Standard and natural policy gradients.
\newblock In \emph{IEEE Transactions on Systems, Man, and Cybernetics, part C (applications and reviews)}, 2012.

\bibitem[Grossman et~al.(2023)Grossman, Gupta, and Sellke]{grossman2023tight}
Ofer Grossman, Meghal Gupta, and Mark Sellke.
\newblock Tight space lower bound for pseudo-deterministic approximate counting.
\newblock In \emph{\cFOCS{2023}}, 2023.

\bibitem[Impagliazzo et~al.(2022)Impagliazzo, Lei, Pitassi, and Sorrell]{impagliazzo2022reproducibility}
Russell Impagliazzo, Rex Lei, Toniann Pitassi, and Jessica Sorrell.
\newblock Reproducibility in learning.
\newblock In \emph{Proceedings of the 54th annual ACM SIGACT symposium on theory of computing}, 2022.

\bibitem[Jiang et~al.(2021)Jiang, Song, Weinstein, and Zhang]{jiang2021faster}
Shunhua Jiang, Zhao Song, Omri Weinstein, and Hengjie Zhang.
\newblock A faster algorithm for solving general lps.
\newblock In \emph{\cSTOC{2021}}, 2021.

\bibitem[Jin and Sidford(2021)]{jin2021towards}
Yujia Jin and Aaron Sidford.
\newblock Towards tight bounds on the sample complexity of average-reward mdps.
\newblock In \emph{\cICML{2021}}, 2021.

\bibitem[Jin et~al.(2022)Jin, Sidford, and Tian]{jin2022sharper}
Yujia Jin, Aaron Sidford, and Kevin Tian.
\newblock Sharper rates for separable minimax and finite sum optimization via primal-dual extragradient methods.
\newblock In \emph{\cCOLT{2022}}, 2022.

\bibitem[Jin et~al.(2024)Jin, Karmarkar, Sidford, and Wang]{jin2024truncated}
Yujia Jin, Ishani Karmarkar, Aaron Sidford, and Jiayi Wang.
\newblock Truncated variance reduced value iteration.
\newblock In \emph{arXiv preprint arXiv:2405.12952}, 2024.

\bibitem[Johnson and Zhang(2013)]{johnson2013accelerating}
Rie Johnson and Tong Zhang.
\newblock Accelerating stochastic gradient descent using predictive variance reduction.
\newblock In \emph{\cNIPS{2013}}, 2013.

\bibitem[Kearns and Singh(1998)]{kearns1998finite}
Michael Kearns and Satinder Singh.
\newblock Finite-sample convergence rates for q-learning and indirect algorithms.
\newblock In \emph{\cNIPS{2020}}, 1998.

\bibitem[Lee and Sidford(2014)]{lee2014path}
Yin~Tat Lee and Aaron Sidford.
\newblock Path finding methods for linear programming: Solving linear programs in o (vrank) iterations and faster algorithms for maximum flow.
\newblock In \emph{\cFOCS{2014}}, 2014.

\bibitem[Lee and Sidford(2015)]{lee2015efficient}
Yin~Tat Lee and Aaron Sidford.
\newblock Efficient inverse maintenance and faster algorithms for linear programming.
\newblock In \emph{\cFOCS{2015}}, 2015.

\bibitem[Lin et~al.(2015)Lin, Mairal, and Harchaoui]{lin2015universal}
Hongzhou Lin, Julien Mairal, and Zaid Harchaoui.
\newblock A universal catalyst for first-order optimization.
\newblock In \emph{\cNIPS{2015}}, 2015.

\bibitem[Littman et~al.(1995)Littman, Dean, and Kaelbling]{littman2013complexity}
Michael~L Littman, Thomas~L Dean, and Leslie~Pack Kaelbling.
\newblock On the complexity of solving markov decision problems.
\newblock In \emph{\cUAI{1995}}, 1995.

\bibitem[Liu et~al.(2023)Liu, Zhao, Xu, Yue, and Fang]{liu2023accelerated}
Yuanshi Liu, Hanzhen Zhao, Yang Xu, Pengyun Yue, and Cong Fang.
\newblock Accelerated gradient algorithms with adaptive subspace search for instance-faster optimization.
\newblock In \emph{arXiv preprint arXiv:2312.03218}, 2023.

\bibitem[Nemirovski(2004)]{nemirovski2004prox}
Arkadi Nemirovski.
\newblock Prox-method with rate of convergence o (1/t) for variational inequalities with lipschitz continuous monotone operators and smooth convex-concave saddle point problems.
\newblock In \emph{SIAM Journal on Optimization}, 2004.

\bibitem[Roch(2024)]{roch_mdp_2024}
Sebastien Roch.
\newblock Modern discrete probability: An essential toolkit.
\newblock In \emph{Cambridge Series in Statistical and Probabilistic Mathematics}, 2024.

\bibitem[Sallinen et~al.(2016)Sallinen, Satish, Smelyanskiy, Sury, and R{\'e}]{sallinen2016high}
Scott Sallinen, Nadathur Satish, Mikhail Smelyanskiy, Samantika~S Sury, and Christopher R{\'e}.
\newblock High performance parallel stochastic gradient descent in shared memory.
\newblock In \emph{2016 IEEE International Parallel and Distributed Processing Symposium (IPDPS)}, 2016.

\bibitem[Sidford et~al.(2018{\natexlab{a}})Sidford, Wang, Wu, Yang, and Ye]{sidford2018near}
Aaron Sidford, Mengdi Wang, Xian Wu, Lin Yang, and Yinyu Ye.
\newblock Near-optimal time and sample complexities for solving markov decision processes with a generative model.
\newblock In \emph{\cNIPS{2017}}, 2018{\natexlab{a}}.

\bibitem[Sidford et~al.(2018{\natexlab{b}})Sidford, Wang, Wu, and Ye]{SWWY18}
Aaron Sidford, Mengdi Wang, Xian Wu, and Yinyu Ye.
\newblock Variance reduced value iteration and faster algorithms for solving markov decision processes.
\newblock In \emph{\cSODA{2018}}, 2018{\natexlab{b}}.

\bibitem[Sidford et~al.(2023)Sidford, Wang, Wu, and Ye]{sidford2023variance}
Aaron Sidford, Mengdi Wang, Xian Wu, and Yinyu Ye.
\newblock Variance reduced value iteration and faster algorithms for solving markov decision processes.
\newblock In \emph{Naval Research Logistics (NRL)}, 2023.

\bibitem[Tseng(1990)]{tseng1990solving}
Paul Tseng.
\newblock Solving h-horizon, stationary markov decision problems in time proportional to log (h).
\newblock In \emph{Operations Research Letters}, 1990.

\bibitem[van~den Brand et~al.(2021)van~den Brand, Lee, Liu, Saranurak, Sidford, Song, and Wang]{brand2021}
Jan van~den Brand, Yin~Tat Lee, Yang~P. Liu, Thatchaphol Saranurak, Aaron Sidford, Zhao Song, and Di~Wang.
\newblock Minimum cost flows, mdps, and l1-regression in nearly linear time for dense instances.
\newblock In \emph{\cSTOC{2021}}, 2021.

\bibitem[van~den Brand et~al.(2022)van~den Brand, Gao, Jambulapati, Lee, Liu, Peng, and Sidford]{brandmaxflow}
Jan van~den Brand, Yu~Gao, Arun Jambulapati, Yin~Tat Lee, Yang~P. Liu, Richard Peng, and Aaron Sidford.
\newblock Faster maxflow via improved dynamic spectral vertex sparsifiers.
\newblock In \emph{\cSTOC{2022}}, 2022.

\bibitem[Woodworth and Srebro(2016)]{woodworth2016tight}
Blake~E Woodworth and Nati Srebro.
\newblock Tight complexity bounds for optimizing composite objectives.
\newblock In \emph{\cNIPS{2016}}, 2016.

\bibitem[Woodworth et~al.(2020)Woodworth, Patel, and Srebro]{woodworth2020minibatch}
Blake~E Woodworth, Kumar~Kshitij Patel, and Nati Srebro.
\newblock Minibatch vs local sgd for heterogeneous distributed learning.
\newblock In \emph{Advances in Neural Information Processing Systems}, 2020.

\end{thebibliography}

\appendix
\section{Inducing pseudoindependence numerically stably}\label{apx:stable_proof}

The pseudo-independent algorithm constructed in the proof of Theorem~\ref{thm:conversion-thm} is simple; however, it may not be directly implementable in finite precision, as it requires infinite precision to directly implement the addition of uniform random noise. The proof of Theorem~\ref{thm:stable_proof} below provides an alternative construction that can be implemented in finite precision.

\begin{theorem}[Finite precision analog of Theorem~\ref{thm:conversion-thm}]\label{thm:stable_proof}
    Let $\epsilon, \delta \in (0, 1)$, $\eta > 0$, $\eta' \defeq \min(\eta/4, \eta\epsilon/4)$, and let $\cA_{\xi, \chi}$ be a randomized algorithm that is an $(\eta', \delta)$-approximation of a function $f: {\R^d} \to {\R^p}$. Then, there is a numerically stable algorithm $\cA'_{\xi, \chi'}$ such that ${\cA'_{\xi, \chi'}}$ is an $(\epsilon, \delta)$-pseudo-independent of $\xi$ and an $(\eta, \delta)$-approximation of $f$ with the \emph{same} runtime and query complexities as $\cA_{\xi, \chi}$ up to an additive $O(p)$ in runtime. 
\end{theorem}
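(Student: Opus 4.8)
# Proof Proposal for Theorem~\ref{thm:stable_proof}

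The plan is to follow the proof of Theorem~\ref{thm:conversion-thm} essentially verbatim, changing only the one step that is not finite-precision implementable: the additive noise $\bm{\nu} \sim \UniformDist^p(-\tau,\tau)$ is replaced by a \emph{discrete} uniform distribution over an evenly spaced grid inside $[-\tau,\tau]$ in each coordinate. Concretely, I would fix a representable grid spacing $\gamma_{\mathrm{grid}} \in (0,\eta']$ (at least as fine as the output precision of $\cA_{\xi,\chi}$, so that rounding to it is essentially free), set the noise range $\tau \defeq \eta'/\epsilon$, and define $\cA'_{\xi,\chi'}$ on input $\ba$ to (i) run $\cA_{\xi=s,\chi=c}(\ba)$; (ii) round the output coordinatewise to the nearest point $\widehat{\cA}(\ba)$ of the lattice $G \defeq \gamma_{\mathrm{grid}}\Z^p$; and (iii) output $\widehat{\cA}(\ba) + \bm{e}$, where $\bm{e}$ has independent coordinates, each uniform over the roughly $2\tau/\gamma_{\mathrm{grid}}$ lattice points of $G$ in $[-\tau,\tau]$. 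All three steps are finite-precision operations, and step (iii) costs only $O(p)$ additional time. The candidate smoothing $\bar{\cA}_{\chi'}$ rounds $f(\ba)$ to $G$ and adds the \emph{same} discrete noise $\bm{e}$; as in Theorem~\ref{thm:conversion-thm}, the smoothing need not be implementable, so rounding the unknown value $f(\ba)$ is unproblematic.

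The crucial observation is that, after rounding, both $\widehat{\cA}(\ba) + \bm{e}$ and the rounded $f(\ba) + \bm{e}$ are supported on shifts of the \emph{common} lattice cube $G \cap [-\tau,\tau]^p$. On the $1-\delta$ event of the $(\eta',\delta)$-approximation guarantee, $\normInline{\cA_{\xi=s,\chi=c}(\ba) - f(\ba)}_\infty \leq \eta'$, hence the shift $\widehat{\cA}(\ba) - \widehat{f}(\ba)$ is a vector of $G$ of $\ell_\infty$-norm at most $\eta' + \gamma_{\mathrm{grid}} \leq 2\eta'$. Thus in each coordinate the two discrete-uniform supports are offset by at most $(\eta'+\gamma_{\mathrm{grid}})/\gamma_{\mathrm{grid}}$ atoms out of $\approx 2\tau/\gamma_{\mathrm{grid}}$, giving a coordinatewise total variation of at most $(\eta'+\gamma_{\mathrm{grid}})/(2\tau) \leq \epsilon$ by the choice of $\tau$; multiplying the per-coordinate overlap fractions exactly as in the volume computation of Theorem~\ref{thm:conversion-thm} bounds $d_{TV}(p_{\cA'_{\xi=s,\chi'}(\ba)}, p_{\bar{\cA}_{\chi'}(\ba)})$ by the same expression. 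This establishes the $(\epsilon,\delta)$-smoothing and hence $(\epsilon,\delta)$-pseudo-independence of $\xi$.

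For the $(\eta,\delta)$-approximation guarantee, on the same $1-\delta$ event I would bound
\[
\normInline{\cA'_{\xi=s,\chi'}(\ba) - f(\ba)}_\infty \;\leq\; \underbrace{\gamma_{\mathrm{grid}}}_{\text{rounding}} \;+\; \underbrace{\tau}_{\text{noise}} \;+\; \underbrace{\eta'}_{\text{approx.\ error}} \;\leq\; \eta,
\]
using that $\eta' \defeq \min(\eta/4,\eta\epsilon/4)$ forces $\gamma_{\mathrm{grid}} \leq \eta' \leq \eta/4$, $\tau = \eta'/\epsilon \leq \eta/4$, and $\eta' \leq \eta/4$; the factor-of-two improvement over $\min(\eta/2,\eta\epsilon)$ in Theorem~\ref{thm:conversion-thm} is precisely the slack that absorbs the extra rounding term. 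Runtime and query complexities are unchanged up to the $O(p)$ overhead of rounding and generating the discrete noise.

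I expect the main obstacle — and the one point that genuinely requires care — is the lattice-alignment issue: if one discretized only the noise and added it to the (real-valued) output without first rounding that output to the noise grid, then $p_{\cA'_{\xi=s,\chi'}(\ba)}$ and $p_{\bar{\cA}_{\chi'}(\ba)}$ would generically be supported on disjoint cosets of $G$, forcing $d_{TV} = 1$ and killing pseudo-independence. Rounding both the algorithm's output and $f(\ba)$ to the same coarse grid before adding noise is what makes the supports overlap. Once that is in place, the remainder is routine bookkeeping: choosing $\gamma_{\mathrm{grid}}$ and $\tau$ so the grid is simultaneously fine enough for the approximation bound and the noise range wide enough for the TV bound, and a standard finite-precision accounting to confirm that $\cA$'s outputs are representable at the chosen resolution and that sampling $\bm{e}$ uses only finitely many random bits.
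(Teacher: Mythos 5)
Your construction is essentially the paper's own proof: the paper likewise rounds the output of $\cA_{\xi,\chi}$ to a $\beta$-covering (a grid) via a $\mathsf{round}$ operator, applies a $\mathsf{smooth}$ operator that outputs a uniformly random covering point within $\ell_\infty$ distance $t$ (your discrete noise over the lattice restricted to $[-\tau,\tau]$), defines the smoothing by rounding $f(\ba)$ and applying the same operator, bounds the TV by the shift-to-range ratio of the aligned discrete cubes, and absorbs the extra rounding term using the $\min(\eta/4,\eta\epsilon/4)$ slack. Your explicit lattice-alignment discussion and the choices $\tau=\eta'/\epsilon$, $\gamma_{\mathrm{grid}}\leq\eta'$ are just a reparametrization of the paper's $\beta, t$ bookkeeping, and your coordinatewise-to-joint TV step mirrors the paper's own overlap computation.
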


\begin{proof} Consider $\cS \subset \R^p$ to be a $\beta$-covering of $\R^p$ in the $\ell_\infty$ norm. Define ${\mathsf{round}}: \R^p \to \cS$ to be the operator that maps $\bx \in \R^p$ to some $\bx' \in \R^p$ such that $0 \leq \bx(i) - \bx'(i) \leq \beta$ for all $i \in [d]$, Define $\mathsf{smooth}_{\bm{\nu}, t}$ to be the operator which uses a random seed $\bm{\nu} \sim \cD_{\bm{\nu}}$ to map $\bx\in \cS$ to a uniformly random $\bx' \in \{y \in \cS: -t \leq (\bx(i)-\bx'(i)) \leq t\}$. Here, $t$ is a parameter that will be specified later in the proof. 

Let $\cA_{\xi, \chi}$ be the randomized algorithm which takes input $\bx \in \R^d$, random seed $\xi \sim \cD_\xi$, and $\chi$ where $\chi = (\chi', \bm{\nu}) \sim \cD_{\chi}^{\bx}$ is the concatenation of an independently drawn seed $\chi' \sim \cD_{\chi'}^{\bx}$ and seed $\bm{\nu} \sim \cD_{\bm{\nu}}$. For any realization $s, c, n$ of $\xi, \chi', \bm{\nu}$, let $\cA_{\xi=s, \chi=(c, \bm{e})}(\bx) = \mathsf{smooth}_{\bm{\nu}=\bm{e}, t}({\mathsf{round}}(\cA_{\xi=s, \chi'=c}(\bx)))$. 

First,  we'll construct a smoothing for $\cA_{\xi, \chi}$. Let $\bar{\cA}_{\chi}$ be the randomized algorithm which takes input $\bx\in \R^d$ and a random seed $\chi \sim \cD_{\chi}^{\bx}$. For any realization $c, n$ of $\chi', \bm{\nu}$, let $\bar{\cA}_{\chi = (c, \bm{e})}(\bx) = \mathsf{smooth}_{\bm{\nu}=\bm{e}, t}({\mathsf{round}}(f(\bx)))$.
Now, by \eqref{eq:target-closeness} and the fact that we have a $\beta$-covering, 
\begin{align*}
    \probSub{\tv{p_{A_{\xi = s, \chi}(\bx)}}{p_{\bar{A}_{\chi}(\bx)}} \leq \paren{\frac{ \ceil{\eta/\beta} }{\floor{2t/\beta}}}^p}{s \sim \cD_{\chi}} \geq 1-\delta.
\end{align*}
Therefore, for $p \geq 1$ and $\eta < 2t$, we have 
\begin{align*}
    \probSub{\tv{p_{A_{\xi = s, \chi}(x)}}{p_{\bar{A}_{\chi}(x)}} \leq {\frac{ \eta }{t}}}{s \sim \cD_{\chi}} &= \probSub{\tv{p_{A_{\xi = s, \chi}(x)}}{p_{\bar{A}_{\chi}(x)}} \leq \paren{\frac{\eta }{t}}^p}{s \sim \cD_{\chi}} \\
    &\geq 1-\delta.
\end{align*}

Next, we need to show that $\Prob_{\xi \sim \cD_{\xi}}(\Vert \cA_{\xi, \chi}(x) - f(x) \Vert_\infty \geq \eta) \leq \delta.$ Note that by  \eqref{eq:target-closeness}, with probability $1-\delta$ over the draw of $\xi$, $\Vert \cA_{\xi, \chi}(x) - f(x) \Vert_\infty \leq \eta' + \beta + 2t \leq \epsilon$ whenever $\beta, t, \eta' \leq \eta/4$.

Consequently, when $\tau = \eta/4$, $t = \frac{\eta}{\epsilon}$, $\bar{\cA}_{\chi}$ is an $(\epsilon, \delta)$-smoothing for $\cA_{\xi, \chi}$. And when $\eta' \leq \eta\epsilon/4$, $\Prob_{\xi \sim \cD_{\xi}}(\Vert \cA_{\xi, \chi}(x) - f(x) \Vert_\infty \geq \eta) \leq \delta$ as well. This completes the proof of the first guarantee of $\cA_{\xi, \chi}$. 

For the second guarantee, note that $\cA_{\xi, \chi}$ has the same runtime and query complexities up to an additive $O(p)$ increase in the runtime due to the cost of performing the $p$-dimensional random perturbation induced by $\bm{\nu}$. 
\end{proof}
\section{Pseudoindependence and repeated compositions}\label{sec:bound-difference}

In this section, our goal is to prove the following theorem  (see Definition~\ref{def:composition} for related notation.)

\pseudoindependencemain*

Before proving \Cref{lem:pseudoindependencemain}, we state the following fact about transformations of random variables. 

\begin{restatable}{fact}{transformation}\label{fact:transformation}
    Let $A$ and $B$ be random variables in $\cQ$ and let $\zeta$ be a deterministic function $\zeta:\cQ \to \cQ$. Then, $\smash{\tv{p_{\zeta(A)}}{p_{\zeta(B)}} \leq \tv{p_A}{p_B}}$. 
\end{restatable} 
\begin{proof} For any deterministic function $f$, let $f^{-1}(\cdot)$ denote the preimage of $f$. That is, for any $T \subset \cQ$, let $f^{-1}(T) \defeq \{\omega' \in \Omega : f(\omega') \in T\}$. Then, by the definition of the total variation distance, we have 
\begin{align*}
    \tv{p_{\zeta(A)}}{p_{\zeta(B)}} &= \sup_{S \subset \cQ} |\prob{\zeta(A) \in S} - \prob{\zeta(B) \in S}| \\
    &= \sup_{S \subset \cQ} |\prob{A \in \zeta^{-1}(S)} - \prob{B\in \zeta^{-1}(S)}| \\
    &\leq \sup_{T \subset \cQ} |\prob{A \in T} - \prob{B\in T}| = \tv{p_{A}}{p_B}. 
\end{align*}
\end{proof}

Now, to prove Theorem~\ref{lem:pseudoindependencemain}, we first bound the TV distance between $p_{\Phi^T_{\cA'}(\ba; \cD_{\xi}, \mathsf{D}_{\chi'})}$ and $p_{H^T_{\bar{A}}(\ba; \mathsf{D}_{\chi'})}$ (recall Definition~\ref{def:composition}.)

\begin{restatable}{lemma}{reuserandomnessone}\label{lemma:fresh-randomness}  Let $\cA'_{\xi, \chi'}$ be a randomized algorithm which takes an input $\ba \in \R^d$ and two random seeds $\xi \sim \cD_{\xi}, \chi' \sim \cD_{\chi'}^{\ba}$. Suppose $\cA'_{\xi, \chi'}$ is $(\epsilon, \delta)$-pseudo-independent and that $\bar{\cA}_{\chi'}$ is an $(\epsilon
, \delta)$-smoothing of $\cA'$ with respect to $\xi$. Let $s \sim \cD_{\xi}$. Then,
\begin{align*}
    \tv{p_{\Phi_{\cA'}^T(\ba; \cD_{\xi}, \mathsf{D}_{\chi'})}}{p_{H_{\bar{\cA}}^T(\ba; \mathsf{D}_{\chi'})}} \leq T(\delta + \epsilon). 
\end{align*}
\end{restatable}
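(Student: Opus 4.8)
\textbf{Proof plan for Lemma~\ref{lemma:fresh-randomness}.}
The plan is to induct on $T$, exploiting the recursive structure of $\Phi_{\cA'}^{T}$ and $H_{\bar{\cA}}^{T}$ from Definition~\ref{def:composition}. Both are built by alternately applying the deterministic post-process $\zeta$ and a sub-solver step: $\Phi_{\cA'}^{T+1}$ uses a fresh $\xi \sim \cD_\xi$ together with a fresh $\chi' \sim \cD_{\chi'}^{\ba}$, while $H_{\bar{\cA}}^{T+1}$ uses only a fresh $\chi' \sim \cD_{\chi'}^{\ba}$. The base case $T=1$ is essentially the definition of an $(\epsilon,\delta)$-smoothing: conditioned on a ``good'' draw of $\xi$ (probability $\ge 1-\delta$), we have $\tv{p_{\cA'_{\xi=s,\chi'}(\ba)}}{p_{\bar\cA_{\chi'}(\ba)}} \le \epsilon$, and on the bad event we bound TV by $1$; integrating over $\xi$ and then applying $\zeta$ (which only decreases TV by Fact~\ref{fact:transformation}) gives TV $\le \delta + \epsilon$.

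For the inductive step, I would decompose the one-step difference at level $T+1$ into two pieces via the triangle inequality for TV distance. First, compare $\Phi_{\cA'}^{T+1}(\ba;\cD_\xi,\mathsf{D}_{\chi'})$ with the ``hybrid'' random variable obtained by running $H_{\bar\cA}^{T}$ for the first $T$ rounds and then doing \emph{one} $\Phi$-style step (fresh $\xi$, fresh $\chi'$) on top of it; this difference is controlled by the inductive hypothesis $\tv{p_{\Phi^T_{\cA'}}}{p_{H^T_{\bar\cA}}} \le T(\delta+\epsilon)$, because the final $\zeta$-and-sub-solver step is a (randomized) deterministic-plus-independent-noise transformation applied to both inputs—so TV does not increase (one can think of the extra independent randomness as fixed and apply Fact~\ref{fact:transformation} pointwise, then integrate). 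Second, compare that hybrid with $H_{\bar\cA}^{T+1}(\ba;\mathsf{D}_{\chi'})$: now the first $T$ rounds are identical (both are $H^T_{\bar\cA}$), and only the last step differs—$\Phi$-style versus $\bar\cA$-style. Conditioning on the realization of $H^T_{\bar\cA}(\ba)$, this last-step difference is exactly a single application of the smoothing guarantee at the point $H^T_{\bar\cA}(\ba)$, which costs at most $\delta+\epsilon$ in TV (again using that $\zeta$ is deterministic and Fact~\ref{fact:transformation}); integrating over the realization of $H^T_{\bar\cA}(\ba)$ preserves this bound since TV of mixtures is at most the sup of the TVs. Summing the two pieces gives $T(\delta+\epsilon) + (\delta+\epsilon) = (T+1)(\delta+\epsilon)$, closing the induction.

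The main obstacle I anticipate is making the ``condition on the intermediate iterate, apply the smoothing guarantee pointwise, then integrate'' argument fully rigorous, since $\cD_{\chi'}^{\ba}$ is an \emph{adaptive} family indexed by the current iterate: when I freeze $H^T_{\bar\cA}(\ba) = \ba'$ and want to invoke that $\cA'_{\xi,\chi'}$ is $(\epsilon,\delta)$-pseudo-independent \emph{at input $\ba'$}, I need the smoothing $\bar\cA_{\chi'}$ to be a valid $(\epsilon,\delta)$-smoothing \emph{uniformly over all inputs}, which is exactly how Definition~\ref{def:epsdelta-smoothing} is stated (``for all $\ba \in \R^d$''), so this should go through—but the bookkeeping for the joint law of (intermediate iterate, fresh seeds) and the fact that the fresh seeds in the last round are drawn independently of the history needs to be spelled out carefully. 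A clean way to handle this is to first prove a helper claim: if $U$ and $V$ are random variables with $\tv{p_U}{p_V}\le c_1$, and $g(\cdot;\omega)$ is a family of maps with a shared independent randomness $\omega$ such that for every fixed point $u$ the laws $p_{g(u;\omega)}$ and $p_{g'(u;\omega)}$ (for a possibly different map $g'$) satisfy $\tv{\cdot}{\cdot}\le c_2$, then $\tv{p_{g(U;\omega)}}{p_{g'(V;\omega)}} \le c_1 + c_2$. Once this helper is in hand (it follows from the coupling characterization of TV distance, or from Fact~\ref{fact:transformation} plus a mixture bound), the inductive step is immediate with $g$ the ``$\zeta$ composed with the sub-solver'' map, $c_1 = T(\delta+\epsilon)$ from the inductive hypothesis, and $c_2 = \delta+\epsilon$ from the smoothing guarantee.
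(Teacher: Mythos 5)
Your proposal is correct and follows essentially the same route as the paper: induction on $T$, with the base case coming from the smoothing definition plus the fact that the deterministic $\zeta$ cannot increase TV, and the inductive step combining the inductive hypothesis for the first rounds with a uniform (over the intermediate iterate) one-step bound of $\delta+\epsilon$ for the last round. The only difference is packaging — you phrase the inductive step as a hybrid/triangle-inequality argument with a data-processing step, while the paper conditions directly on the coupling event from Fact~\ref{lemma:tv_decomposition} and takes a supremum over the intermediate point $\bm{c}$ — and your flagged concern about the adaptive family $\cD_{\chi'}^{\ba}$ is handled exactly as you suggest, via the uniformity over inputs in Definition~\ref{def:epsdelta-smoothing}.
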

\begin{proof} Induct on $T$. If $T = 1$, the statement essentially follows from the definition of pseudo-independence, Fact~\ref{lemma:tv_decomposition}, and a union bound, as we elaborate in the next few sentences. Concretely, we have that with probability $1-\delta$ over the draw of $s \sim \cD_{\xi}$, 
\begin{align*}
    \tv{p_{\cA'_{\xi=s, \chi'}(\ba)}}{p_{\bar{\cA}_{\chi'}(\ba)}} \leq \epsilon. 
\end{align*}
Since $\zeta$ is a deterministic function, using Fact~\ref{fact:transformation}, we have
\begin{align*}
    \tv{p_{\zeta\paren{\ba, \cA_{\xi=s, \chi}(\ba)}}}{p_{\zeta\paren{\ba, \cA_{\xi=s, \chi}(\ba)}}} \leq  \tv{p_{\cA_{\xi=s, \chi}(\ba)}}{p_{\cA_{\xi=s, \chi}(\ba)}} \leq \epsilon. 
\end{align*}
Consequently, by Fact~\ref{lemma:tv_decomposition} and union bound,
\begin{align*}
    \tv{p_{\Phi_{\cA'}^1(\ba; \cD_{\xi}, \mathsf{D}_{\chi'})}}{p_{H_{\bar{\cA}}^1(\ba; \mathsf{D}_{\chi'})}} \leq (\delta + \epsilon). 
\end{align*}
This completes the base case. 

For the inductive step, suppose the theorem holds up to $T-1$. That is, suppose that 
\begin{align*}
    \tv{p_{\Phi_{\cA'}^T(\ba; \cD_{\xi}, \mathsf{D}_{\chi'})}}{p_{H_{\bar{\cA}}^T(\ba; \mathsf{D}_{\chi'})}} \leq (T-1)(\delta + \epsilon). 
\end{align*}
Then, by Fact~\ref{lemma:tv_decomposition}, there exists an event $E_1$ and random variables $C, D, F$ such that conditioned on $E_1$, $\Phi_{\cA'}^{T-1}(\ba; \cD_{\xi}, \mathsf{D}_{\chi'}) \overset{\cD}{=} C \overset{\cD}{=} {H_{\bar{\cA}}^{T-1}(\ba; \mathsf{D}_{\chi'})}$ and $\prob{E_1} \geq 1-(T-1)(\delta + \epsilon).$ Consequently, for any event $E$, we have
\begin{align*}
    \abs{p_{\Phi^{T}_{\cA'}(\ba; \cD_{\xi}, \mathsf{D}_{\chi'})}(E) - p_{H^{T}_{\bar{\cA}}(\ba; \mathsf{D}_{\chi'})}(E)} &\leq \abs{p_{\Phi^{T}_{\cA'}(\ba; \cD_{\xi}, \mathsf{D}_{\chi'}) | E_1}(E) - p_{H^{T}_{\bar{\cA}}(\ba; \mathsf{D}_{\chi'}) | E_1}(E)} + \prob{\neg E_1} \\
    &= \abs{p_{\Phi^{T}_{\cA'}(\ba; \cD_{\xi}, \mathsf{D}_{\chi'}) | E_1}(E) - p_{H^{T}_{\bar{\cA}}(\ba; \mathsf{D}_{\chi'}) | E_1}(E)} + (T-1)(\delta + \epsilon) \\
\end{align*}
To bound the first term, we observe that by the definition of $\Phi$ and $H$, we have
\begin{align*}
&\abs{p_{\Phi^{T}_{\cA'}(\ba; \cD_{\xi}, \mathsf{D}_{\chi}) | E_1}(E) - p_{H^{T}_{\bar{\cA}}(\ba; \mathsf{D}_{\chi}) | E_1}(E)}  \\
&= \abs{p_{\zeta\paren{\Phi^{T-1}_{\cA'}(\ba; \cD_\xi, \mathsf{D}_{\chi'}), \cA'_{\xi, \chi'}\paren{\Phi^{T-1}_{\cA'}(\ba; \cD_{\xi}, \mathsf{D}_{\chi'})}} | E_1}(E) - p_{\zeta\paren{H^{T-1}_{\cA'}(\ba; \mathsf{D}_{\chi'}), \bar{\cA}_{\xi, \chi'}\paren{H^{T-1}_{\bar{\cA}}(\ba; \mathsf{D}_{\chi'})}} | E_1}(E)} \\
&\leq \sup_{\bm{c} \in \R^d} \abs{p_{\zeta\paren{\bm{c}, \cA'_{\xi, \chi'}(\bm{c})}} - p_{\zeta\paren{\bm{c}, \bar{\cA}_{\chi'}(\bm{c})}}},
\end{align*}
where in the last line, we used the fact that conditional on $E_1$, ${\Phi^{T-1}(x; \cD_{\xi}, \mathsf{D}_{\chi'})} \overset{\cD}{=} C \overset{\cD}{=} H^{T-1}(x; \mathsf{D}_{\chi'})$. By Fact~\ref{fact:transformation} and the definition of pseudo-independence (using an identical argument to the base case) we have
\begin{align*}
\sup_{\bm{c} \in \R^d} \abs{p_{\zeta\paren{\bm{c}, \cA'_{\xi, \chi'}(\bm{c})}} - p_{\zeta\paren{\bm{c}, \bar{\cA}_{\chi'}(\bm{c})}}} \leq (\delta + \epsilon).  
\end{align*}
Thus, by substitution, we conclude that for any event $E$, 
\begin{align*}
    \abs{p_{\Phi^{T}_{\cA'}(\ba; \cD_{\xi}, \mathsf{D}_{\chi}) | E_1}(E) - p_{H^{T}_{\bar{\cA}}(\ba; \mathsf{D}_{\chi}) | E_1}(E)} \leq (\delta + \epsilon), 
\end{align*}
and consequently, 
\begin{align*}
    \abs{p_{\Phi^{T}_{\cA'}(\ba; \cD_{\xi}, \mathsf{D}_{\chi'})}(E) - p_{H^{T}_{\bar{\cA}}(\ba; \mathsf{D}_{\chi'})}(E)} \leq T(\delta + \epsilon). 
\end{align*}
\end{proof}

The following lemma obtains the analogous bound as Lemma~\ref{lemma:fresh-randomness} when the same random seed $\xi$ is reused across iterations. 

\begin{restatable}{lemma}{reuserandomness}\label{lemma:reuse-randomness}  Let $\cA'_{\xi, \chi'}$ be a randomized algorithm which takes an input $x \in \R^d$ and two random seeds $\xi \sim \cD_{\xi}, \chi' \sim {\cD_{\chi'}}^{\ba}$. Let $\bar{\cA}_{\chi'}$ be a $(\epsilon
, \delta)$-smoothing of $\cA'$ with respect to $\xi$. Let $s \sim \cD_{\xi}$. Then,
\begin{align*}
    \tv{p_{\Phi^T_{\cA'}(\ba; s, \mathsf{D}_{\chi'})}}{p_{H_{\bar{\cA}}^T(\ba; \mathsf{D}_{\chi'})}} \leq T(\delta + \epsilon). 
\end{align*}
\end{restatable}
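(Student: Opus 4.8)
\textbf{Proof plan for Lemma~\ref{lemma:reuse-randomness}.} The plan is to mirror the inductive argument used in the proof of Lemma~\ref{lemma:fresh-randomness}, but carefully account for the fact that the \emph{same} seed $s$ is reused across all $T$ iterations. The subtlety is that in Lemma~\ref{lemma:fresh-randomness} each iteration used a fresh $\xi_t \sim \cD_\xi$, so closeness of $\cA'_{\xi,\chi'}(\cdot)$ to $\bar{\cA}_{\chi'}(\cdot)$ could be invoked ``in distribution'' at each step; here we have already conditioned on one realization $s$, and the good event (that $\cA'_{\xi=s,\chi'}(\ba)$ is $\epsilon$-close in TV to $\bar{\cA}_{\chi'}(\ba)$) must be shown to hold \emph{uniformly over all inputs that can arise as iterates}. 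The key observation that makes this work is that the definition of $(\epsilon,\delta)$-pseudo-independence (Definition~\ref{def:pseudo-independence} via Definition~\ref{def:epsdelta-smoothing}) gives, for \emph{each fixed} $\ba\in\R^d$, that $\Prob_{s\sim\cD_\xi}[d_{TV}(p_{\cA'_{\xi=s,\chi'}(\ba)}, p_{\bar{\cA}_{\chi'}(\ba)}) \le \epsilon] \ge 1-\delta$. So for the reuse case we will need a \emph{single} realization $s$ that is simultaneously good for all the (finitely or infinitely many) iterates encountered along the trajectory.

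First I would handle the base case $T=1$ exactly as in Lemma~\ref{lemma:fresh-randomness}: with probability $1-\delta$ over $s\sim\cD_\xi$, $d_{TV}(p_{\cA'_{\xi=s,\chi'}(\ba)}, p_{\bar{\cA}_{\chi'}(\ba)})\le\epsilon$ at the fixed starting point $\ba$, and then applying Fact~\ref{fact:transformation} with the deterministic map $\zeta(\ba,\cdot)$ and Fact~\ref{lemma:tv_decomposition} together with a union bound over the ``bad seed'' event gives $d_{TV}(p_{\Phi^1_{\cA'}(\ba;s,\mathsf{D}_{\chi'})}, p_{H^1_{\bar\cA}(\ba;\mathsf{D}_{\chi'})}) \le \delta+\epsilon$. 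For the inductive step, assume the bound $(T-1)(\delta+\epsilon)$ holds. The cleanest way to proceed is: by Fact~\ref{lemma:tv_decomposition}, couple $\Phi^{T-1}_{\cA'}(\ba;s,\mathsf{D}_{\chi'})$ and $H^{T-1}_{\bar\cA}(\ba;\mathsf{D}_{\chi'})$ so that they agree (equal in distribution to a common $C$) on an event $E_1$ with $\Prob[E_1]\ge 1-(T-1)(\delta+\epsilon)$; then for any event $E$,
\[
\bigl|p_{\Phi^T_{\cA'}(\ba;s,\mathsf{D}_{\chi'})}(E) - p_{H^T_{\bar\cA}(\ba;\mathsf{D}_{\chi'})}(E)\bigr|
\le \bigl|p_{\Phi^T_{\cA'}(\ba;s,\mathsf{D}_{\chi'})\mid E_1}(E) - p_{H^T_{\bar\cA}(\ba;\mathsf{D}_{\chi'})\mid E_1}(E)\bigr| + (T-1)(\delta+\epsilon).
\]
Conditioned on $E_1$ the two $(T-1)$-step iterates coincide with $C$, so the first term is bounded by $\sup_{\bm c}\bigl|p_{\zeta(\bm c,\cA'_{\xi=s,\chi'}(\bm c))} - p_{\zeta(\bm c,\bar\cA_{\chi'}(\bm c))}\bigr|$, which by Fact~\ref{fact:transformation} is at most $\sup_{\bm c}\, d_{TV}(p_{\cA'_{\xi=s,\chi'}(\bm c)}, p_{\bar\cA_{\chi'}(\bm c)})$.

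The main obstacle is exactly bounding that supremum: a single realization $s$ need not make $\cA'_{\xi=s,\chi'}(\bm c)$ close to $\bar\cA_{\chi'}(\bm c)$ for \emph{every} $\bm c\in\R^d$ simultaneously. I would resolve this the same way the companion proof implicitly does, by being precise that the ``$\sup$'' is really a supremum over the (at most countably describable, and in any case measurably parameterized) set of iterates $\bm c = \Phi^{t}_{\cA'}(\ba;s,\mathsf{D}_{\chi'})$ that actually arise, and that the pseudo-independence guarantee is applied pointwise and then aggregated via a union bound into the failure budget $\delta$ per step. Concretely, one defines for each reachable iterate $\bm c$ the good-seed event $G_{\bm c}=\{s: d_{TV}(p_{\cA'_{\xi=s,\chi'}(\bm c)},p_{\bar\cA_{\chi'}(\bm c)})\le\epsilon\}$; on $\bigcap_t G_{\Phi^t} $ (whose complement costs at most an additive $\delta$ per composition step, absorbed into the $T(\delta+\epsilon)$ total) each conditional step contributes at most $\epsilon$, and the recursion telescopes to $T(\delta+\epsilon)$. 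Assembling the base case and the inductive step, and taking the supremum over $E$, yields $d_{TV}(p_{\Phi^T_{\cA'}(\ba;s,\mathsf{D}_{\chi'})}, p_{H^T_{\bar\cA}(\ba;\mathsf{D}_{\chi'})}) \le T(\delta+\epsilon)$, as claimed. (This lemma, combined with Lemma~\ref{lemma:fresh-randomness} and the triangle inequality for $d_{TV}$, immediately gives Theorem~\ref{lem:pseudoindependencemain}.)
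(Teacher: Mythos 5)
Your inductive skeleton and your diagnosis of the central obstacle are both right: after conditioning on the coupling event for the first $T-1$ steps, one needs closeness of $\cA'_{\xi=s,\chi'}(\bm{c})$ to $\bar{\cA}_{\chi'}(\bm{c})$ at the realized iterate $\bm{c}$, and a single $s$ need not be good for all $\bm{c}$ simultaneously. But the patch you propose does not close this gap. You define the good-seed events $G_{\bm{c}}$ at the iterates $\bm{c}=\Phi^t_{\cA'}(\ba;s,\mathsf{D}_{\chi'})$ of the \emph{reuse} process and charge $\delta$ per step for their complements. The smoothing guarantee (Definition~\ref{def:epsdelta-smoothing}) only states $\probSubInline{s\in G_{\bm{c}}}{s\sim\cD_\xi}\ge 1-\delta$ for each input $\bm{c}$ fixed \emph{independently of} $s$; since $\Phi^t(\ba;s,\mathsf{D}_{\chi'})$ is itself a function of $s$, the claim $\Prob{s\notin G_{\Phi^t}}\le\delta$ does not follow from pseudo-independence --- this is exactly the adaptivity issue the lemma is meant to overcome, so asserting it here is circular. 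A second, related problem is that inside your induction the $\delta$-bound would have to be invoked \emph{conditionally} on the coupling event $E_1$ (and on the common value $C=\bm{c}$), but conditioning on $E_1$ tilts the law of $s$ away from $\cD_\xi$, so the pointwise $1-\delta$ guarantee cannot simply be reused at that stage either; this is why the fresh-seed induction of Lemma~\ref{lemma:fresh-randomness} does not transfer verbatim.

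The paper's proof avoids both problems by building the coupling against a trajectory that does not depend on $s$: it constructs common random variables $C^1,\dots,C^T$, distributed as the smoothed process $H^t_{\bar{\cA}}(\ba;\mathsf{D}_{\chi'})$, and pays the seed-failure probability once, up front --- with probability $1-T\delta$ over $s\sim\cD_\xi$ the seed is simultaneously good at all of the $s$-independent points along this trajectory (a legitimate union bound, precisely because those points are fixed relative to $s$). For such $s$, each step's coupling fails with probability at most $\epsilon$, giving events $E_1,\dots,E_T$ with $\Prob{E_t}\ge 1-\epsilon$ on whose intersection $\Phi^T_{\cA'}(\ba;s,\mathsf{D}_{\chi'})\disteq C^T\disteq H^T_{\bar{\cA}}(\ba;\mathsf{D}_{\chi'})$; Fact~\ref{lemma:tv_decomposition} then yields the bound $T\delta+T\epsilon$. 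To repair your argument, anchor the good-seed events at the iterates of the smoothed process $H$ (equivalently the common sequence $C^t$), not of $\Phi$, and perform the $T\delta$ union bound over $s$ before any conditioning on coupling success; with that change the per-step $\epsilon$ accounting you describe goes through and the claimed bound $T(\delta+\epsilon)$ follows.
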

\begin{proof} By Fact~\ref{lemma:tv_decomposition} and Fact~\ref{fact:transformation}, with probability $1-T\delta$ over the draw of $s$, there exist random variables  $C^t$ and events $E_t$ for $t = \{1, ..., T\}$ so that $\prob{E_t} \geq 1-\epsilon$, and conditioned on $E_1, ..., E_t$, 
\begin{align*}
    \zeta(\ba, \cA'_{\xi=s, \chi'}(\ba)) &\overset{\cD}{=} C^1 \overset{\cD}{=} \zeta(\ba, \bar{\cA}_{\chi'}(\ba)),
\end{align*}
and
\begin{align*}
    \Phi_{{\cA'}}^t(\ba; s, \mathsf{D}_{\chi'}) &\overset{\cD}{=} \zeta(C^{t-1}, \cA_{\xi=s, \chi'}(C^{t-1})) \overset{\cD}{=} C^t, \\
    H_{\bar{\cA}}^t(\ba; \mathsf{D}_{\chi'}) &\overset{\cD}{=} \zeta(C^{t-1}, \bar{\cA
    }_{\chi'}(C^{t-1})) \overset{\cD}{=} C^t. 
\end{align*}
for every $t \in [T]$. So, let $E' \defeq \wedge_{t \in [T]} E_t$, and note that $\prob{E'} \geq 1-\epsilon T$. Consequently, 
\begin{align*}
    \tv{p_{\Phi^T(x; s, \mathsf{D}_{\chi'})}}{p_{H^T(x; \mathsf{D}_{\chi'})}} \leq T\delta + (1 - \prob{E'}) &\leq T(\delta + \epsilon). 
\end{align*}
\end{proof}

\pseudoindependencemain*
\begin{proof}
    The result follows directly by applying triangle inequality, Lemma~\ref{lemma:reuse-randomness}, and Lemma~\ref{lemma:fresh-randomness}. 
\end{proof}

\end{document}